\newtheorem{thm}{Theorem}
\newtheorem{prop}{Proposition}
\newtheorem{assume}{Assumption}
\newcommand{\Ac}{\mathcal{A}}
\newcommand{\Fc}{\mathcal{F}}
\newcommand{\Tc}{\mathcal{T}}
\newcommand{\Zc}{\mathcal{Z}}
\newcommand{\E}{\mathbbm{E}}
\newcommand{\F}{\mathbbm{F}}
\newcommand{\N}{\mathbbm{N}}
\renewcommand{\P}{\mathbbm{P}}
\newcommand{\R}{\mathbbm{R}}
\newcommand{\T}{\mathbbm{T}}
\newcommand{\V}{\mathbbm{V}}
\newcommand{\Z}{\mathbbm{Z}}
\renewcommand{\epsilon}{\varepsilon}
\newcommand{\Rk}{\mathfrak{R}}
\DeclareMathOperator*{\arctanh}{arctanh}
\DeclareMathOperator*{\argmax}{arg\,max}
\DeclareMathOperator*{\softmax}{softmax}
\DeclareMathOperator*{\Tr}{Tr}
\newcommand\blankpage{%
    \null
    \thispagestyle{empty}%
    \addtocounter{page}{-1}%
    \newpage}
\definecolor{dbblue}{RGB}{10,65,155}				
\newcommand{\diff}{d}
\title{Automated Market Making and\\[1ex]
        Decentralized Finance 
	}   
\author{Marcello Monga}             
\begin{document}
	
\baselineskip=18pt plus1pt

\setcounter{secnumdepth}{3}
\setcounter{tocdepth}{3}

\maketitle                   
\afterpage{\blankpage}

\begin{abstract}
    Automated market makers (AMMs) are a new type of trading venues which are revolutionising the way market participants interact. At present, the majority of AMMs are constant function market makers (CFMMs) where a deterministic trading function determines how markets are cleared. Within CFMMs, we focus on constant product market makers (CPMMs) which implements the concentrated liquidity (CL) feature. In this thesis we formalise and study the trading mechanism of CPMMs with CL, and we develop liquidity provision and liquidity taking strategies. Our models are motivated and tested with market data.\\~
    
    We derive optimal strategies for liquidity takers (LTs) who trade orders of large size and execute statistical arbitrages. First, we consider an LT who trades in a CPMM with CL and uses the dynamics of prices in competing venues as market signals. We use Uniswap v3 data to study price, liquidity, and trading cost dynamics, and to motivate the model. Next, we consider an LT who trades a basket of crypto-currencies whose constituents co-move. We use market data to study lead-lag effects, spillover effects, and causality between trading venues. \\~

    We derive optimal strategies for strategic liquidity providers (LPs) who provide liquidity in CPMM with CL. First, we use stochastic control tools to derive a self-financing and closed-form optimal liquidity provision strategy where the width of the LP's liquidity range is determined by the profitability of the pool, the dynamics of the  LP's position, and concentration risk. Next, we use a model-free approach to solve the problem of an LP who provides liquidity in multiple CPMMs with CL. We do not specify a model for the stochastic processes observed by LPs, and use a long short-term memory (LSTM) neural network to approximate the optimal liquidity provision strategy.

    \thispagestyle{empty}
\end{abstract}

\afterpage{\blankpage}

\begin{acknowledgements}
First and foremost, I would like to thank my supervisor, Álvaro Cartea, who guided me through this journey. The endless hours spent discussing research had significant impact on my professional and personal development. He truly was a unique supervisor; he dedicated plenty of his precious time to discuss our projects, he truly cared about my research progress and he understood how the ups and downs of my personal life were impacting my studies. I will be forever grateful for his supervision.

I would like to thank Fayçal Drissi, who worked closely with me and whose dedication and talent were crucial to the results achieved by our research group. I would also like to thank Phillippe Bergault and Leandro Sánchez-Betancourt for the fruitful discussions during my PhD and my viva. I am grateful to Mihai Cucuringu, Olivier Guéant, Sebastian Jaimungal, and José Penalva, whose papers and books I have avidly read multiple times over the years. I would like to thank the director of the CDT in Random Systems, Rama Cont, and Anthony Ledford from Man AHL, who helped me step out of my comfort zone and provided immensely valuable advice.

I would like to thank Simon Davenport, Renaud Drappier, Gregoire Loeper, and Boris Oumow from BNP Paribas for their insights from the industry and for helping me bridge my research with real-world applications throughout my PhD. Moreover, during my internship at BNP Paribas, I had the privilege of working alongside these exceptional practitioners, who guided and mentored me.

I am grateful to all the people I have met during these years at the Mathematical Institute and the Oxford-Man Institute (OMI). In particular, I would like to thank Jen Desmond and Vanessa Wilkins for their wonderful work at the OMI; Álvaro Arroyo Nuñez and Patrick Chang, who were the first students I met at the Institute; Gabriel Garcia Arenas and Jonathan Plenk for the fruitful discussions we had over various cups of tea; Andrea Clini and László Mikolas for their advice; Harrison Waldon and Fernando Moreno Pino for being role models as postdocs at the OMI; Milena Vuletic for her energy; and all the participants of the Oxford Victoria seminar, including Sergio Calvo Ordo\~nez, Gerardo Durán-Martín, Qi Jin, Donggeun Kim, Yannick Limmer, Nicolas Petit, Daniel Poh, and Danni Shi.

I will never stop saying how grateful I am to my family for their endless support. Finally, I would like to thank all my friends in Italy and Switzerland who have always been there for me during these years. In particular, I would like to thank my friends from my hometown and my friends in Switzerland from the bQm analytics club.

\end{acknowledgements}

\afterpage{\blankpage}

\begin{romanpages}          
    \tableofcontents            
\end{romanpages}            

\chapter{Introduction}
\section{Motivation}
Decentralised Finance (DeFi) is a collective term for blockchain-based financial services that do not rely on intermediaries such as central authorities or banks. Over the past few years, DeFi has gained massive traction and the total value locked in DeFi platforms is as of July 2024 around 91.66 billion US dollars.\footnote{\url{https://www.stelareum.io/en/defi-tvl.html}} New powerful technologies are the engine behind the remarkable growth of DeFi, which is rapidly changing the financial landscape and is in direct competition with many traditional stakeholders. Within DeFi, automated market makers (AMMs) are a new paradigm in the design of trading venues and are revolutionising the way market participants provide and take liquidity. As of July 2024, the daily traded volume in AMMs is around 6.44 billions US dollars,\footnote{\url{https://www.coingecko.com/en/exchanges/decentralized}} which accounts for more than 8.0\% of the total spot cryptocurrency trading volume.\footnote{ \url{https://www.theblock.co/data/decentralized-finance/dex-non-custodial}} Currently, AMMs are mainly known as exchanges for cryptocurrencies; however, the core concepts of AMMs go beyond the cryptocurrency sector and they are poised to challenge traditional electronic exchanges in all asset classes.

Despite the remarkable growth of AMMs, there is little understanding of their microstructure properties. AMMs are a technology with interesting features and high potential. However, more research is needed to understand, regulate, and further develop AMMs. Indeed, little is known of what happens at a microstructural level in AMMs and it is important to provide academics and practitioners the right tools and to study in detail the advantages and disadvantages of this new type of trading venue.

\section{Contribution}
In this thesis we formalise and study the trading mechanism of AMMs, and we develop liquidity provision and liquidity taking strategies. Our models are motivated and tested with market data. For the sake of chapters being self-contained, the necessary terminology is reintroduced in each chapter.

\subsection{Optimal execution}
A classical problem in the market microstructure literature is how an agent can execute a large order while minimising intrinsic and extrinsic execution costs. This problem is extensively studied in the context of limit order book (LOB) markets. In the extant literature, there is a large number of execution models that are specifically designed for traditional LOBs; however, there are no rigorous models to trade and execute assets in AMMs.

Chapter \ref{ch:paper} presents the work in \cite{cartea2022decentralised}, which solves the problem of a liquidity taker (LT) who wishes to trade in a particular class of AMMs. The LT wishes to execute a large position in an asset and to execute statistical arbitrages based on market signals. The LT trades in the AMM and uses the exchange rate from a more liquid venue to anticipate future price movements.

We formulate the trading problem as a stochastic control problem where the LT controls the speed at which  she sends liquidity taking orders. Key to the performance of the LT's strategies, is to balance price risk and execution costs. To showcase the performance of our strategies, we use in-sample market data to estimate model parameters and  out-of-sample market data to execute the strategies in `real time' as an LT would have done. Furthermore, we use market data to derive important properties about execution costs and market impact in AMMs.

Chapter \ref{ch:ieee} presents the work in \cite{cartea2023execution}, which solves the problem of an LT who executes large orders  and executes statistical arbitrages in a basket of crypto-currencies whose constituents co-move. We use statistical tools to study lead-lag effects, spillover effects, and causality between trading venues and we study co-movements between different crypto-currencies.

We use stochastic control tools and derive a closed-form strategy that can be computed and implemented by the LT in real time. The LT uses market signals and exchange rate information from relevant AMMs and traditional venues to enhance the performance of her strategy. Finally, we use market data from two pools of Uniswap v3 and from the LOB-based exchange Binance to showcase the performance of the strategy.

\subsection{Optimal liquidity provision}
In traditional trading venues, liquidity providers quote buy and sell prices and profit from the spread between the two. This problem is extensively studied in the context of LOB markets and quote-driven markets; however, little research has been conducted on strategic liquidity provision in AMMs.

AMMs introduce a new liquidity provision mechanism where rules to clear demand and supply of liquidity are considerably different. The most popular type of AMMs are constant product market makers (CPMMs) with concentrated liquidity (CL). In CPMMs with CL, LPs specify the rate intervals (i.e., tick ranges) over which they deposit their assets, and this liquidity is counterparty to trades of LTs when the marginal exchange rate of the pool is within the liquidity range of the LPs. When LPs deposit liquidity, fees paid by LTs accrue and are paid to LPs when they withdraw their assets from the pool. The amount of fees accrued to LPs is proportional to the share of liquidity they hold in each liquidity range of the pool.

Chapter \ref{ch:pl} presents the work in \cite{cartea2022decentralised2}, which solves the problem of an LP who provides liquidity in a single pool with CL. We use stochastic control to derive a self-financing and closed-form optimal liquidity provision strategy, which the LP uses to dynamically adjust the range where she provides liquidity.

The width of the LP's liquidity range is determined by the profitability of the pool (provision fees minus gas fees), the dynamics of the LP's position, and concentration risk. Concentration risk refers to the decrease in fee revenue if the marginal exchange rate (akin to the midprice in an LOB) in the pool exits the LP's range of liquidity. When the drift in the marginal rate is stochastic, we show how to optimally skew the range of liquidity to increase fee revenue and profit from the expected changes in the marginal rate. Finally, we use Uniswap v3 data to show that, on average, LPs have traded at a significant loss, and to show that the out-of-sample performance of our strategy is superior to the historical performance of LPs in the pool we consider. 

Chapter \ref{ch:deep} uses a model-free approach to solve the problem of an LP who provides liquidity in multiple liquidity pools with CL. We do not specify a model for the stochastic processes observed by the LP, and we characterise the discrete-time dynamics of the LP's wealth, which results in fee income and the value of her holdings in the pool.

At each time step, the LP specifies the proportion of wealth to deposit in each pool and the ranges where she want to provide liquidity. The LP evaluates the performance of her strategy with a performance criterion and uses LSTM neural networks to approximate the optimal strategy. We train the LSTM network on simulated data and on market data. 

We test the model-free strategy on simulated data to showcase the performance of the strategy under different market conditions. First, we consider an LP who provides liquidity in a single liquidity pool with CL and we recover established results from Chapter \ref{ch:pl}. Then, we consider an LP who provides liquidity in two liquidity pools with CL.

Finally, we train the LSTM network on market data from Uniswap v3 and Binance. The input to the LSTM strategy is a set of features which we extract from the data in both venues. Overall, we find that gas fees, a flat fee paid by blockchain users, have a great impact on the performance of the strategy. 

\cleardoublepage
\chapter{Background \label{ch:background}}

\section{Optimal execution}

Whenever an LT has to liquidate or to acquire a large position, she may incur adversarial price movements, as a consequence of her own trades. Indeed, if she exits or enters her position using a single order, she might consume all the available liquidity at the best price  and receive worse prices to complete the order. Typically, to reduce execution costs, LTs split the parent order in smaller child orders which are executed over a longer time window. However, if the LT executes the child orders too slowly, then she is exposed to market price risk and she might face a loss.

The literature on optimal execution in traditional electronic markets is vast. It is built on the seminal works \cite{bertsimas1998optimal} and \cite{almgren2000optimal}. In their work, the authors study the problem of an LT who uses market orders (MOs) to liquidate a large position in a limit order book (LOB).  More specifically, the Almgren--Chriss model is a discrete-time model where the LT maximises a mean-variance objective function. Later on, many variations to the model framework were proposed. \cite{forsyth2012optimal} use a mean-quadratic variation approach, which can be approximated with a mean-variance type model for short trading windows. \cite{schied2009risk} consider an infinite time-horizon problem for a Von Neumann--Morgenstern LT and characterises the optimal strategy using stochastic control tools. \cite{cartea2015book} introduce various frameworks where the LT use an objective function equal to the LT's terminal wealth minus an inventory penalty to solve the optimal execution problem in the context of LOB markets. \cite{cartea2014OptimalEW} extend this framework to allow the LT to both post LOs and send MOs.

Regarding model parameters, \cite{almgren2012optimal} assumes that volatility and temporary price impact are random and compute the optimal strategy in this framework. \cite{cheridito2014optimal} also study the case of random volatility and random temporary price impact but compares three different optimal strategies obtained using three different performance criteria. \cite{gatheral2013dynamical} review dynamical market impact models and studies the regularity of their associated optimal strategies. \cite{graewe2018smooth} introduce a price-dependent market impact and considers a liquidation problem where the LT can unwind her position both in the lit market and in a dark pool. \cite{fouque2021optimal} introduce fast-mean reverting stochastic price impact to reproduce exactly how price impact moves during the day.  

Numerous works in the optimal execution literature explore different models for price impact and in LOBs. \cite{lorenz2013drift} use linear transient price impact with exponential decay and employs singular stochastic control to deal with the non-Markovian framework. \cite{bank2019optimal} introduces a price impact model which accounts for finite market depth, tightness, and resilience. \cite{obizhaeva2013optimal} introduce a dynamic model for LOBs and shows the importance of supply/demand dynamics in determining optimal trading strategies. Later on, the Obizhaeva-Wang model was extended and in \cite{alfonsi2010general} the authors consider an optimal liquidation problem where the LOB has a general shape. \cite{alfonsi2010optimal} derive explicit optimal execution strategies in a discrete-time LOB model with general shape functions and an exponentially decaying price impact. \cite{gatheral2010transient} study the problem of an LT subject to transient price impact who wishes to liquidate a large position while minimising execution costs. \cite{hey2023cost} quantify the cost that LTs face when they misspecify price impact. \cite{muhle-karbe2024stochastic} use a stochastic liquidity parameter to approximate nonlinear price impact.

The Almgren-Chriss model was further extended to allow for the use of market signals to improve performance of strategies. \cite{cartea2016incorporating} use order flow as a predictive market signal and derives closed-form solutions. \cite{bechler2015optimal} study an optimal execution framework where the LT accounts for order flow imbalance and endogenise the trading horizon to react to it. \cite{cartea2018enhancing} also employs order flow imbalance and uses real NASDAQ data to calibrate a Markov chain-modulated pure jump model of price, spread, limit orders (LOs) and MOs arrivals, and volume imbalance. In \cite{lehalle2019incorporating} the authors consider a Von Neumann-Morgenstern LT who uses a Markovian signal to trade and \cite{neuman2020optimal} extend this framework to include transient price impact. \cite{forde2022optimal} consider an LT subject to power-law resilience and zero temporary price impact who employs a Gaussian signal. \cite{cartea2018trading} and \cite{bergault2022multi} consider a multi-asset framework where the LT trades in a basket of assets which co-move. \cite{cartea2022double} use the signature of the path of a stochastic process to derive an optimal double-execution strategy.

Another strain of the optimal execution literature focuses on trying to reduce latency. Specifically, \cite{cartea2021shadow} derive the price that LTs would be willing to pay to reduce their latency in the marketplace. \cite{cartea2021latency} show how LTs can leverage marketable limit orders (MLOs) to reduce latency and  \cite{cartea2022optimal} derive an optimal trading strategy which uses MLOs to liquidate a position over a trading window when there is latency in the marketplace. 

Recent works in the literature use reinforcement learning (RL) to solve the optimal execution problem; see \cite{gueant2023reinforcement}. One of the first example of the use of RL in the algorithmic trading literature was the work of \cite{nevmyvaka2006reinforcement}, where the authors use tabular Q learning to optimise trade execution. \cite{hendricks2014reinforcement} use Q learning to outperform the model by Almgren and Chriss, and they test their strategy on market data. \cite{ning2021double} solve the optimal execution problem using a model free double deep Q learning algorithm where the input of the LT strategy are features of the LOB. \cite{cartea2023reinforcement} use double deep Q networks to derive the optimal strategies for an LT who executes statistical arbitrage in a foreign exchange (FX) triplet. \cite{cartea2023bandits} use contextual bandits and Gaussian processes to exploit trading signals and improve the Almgren--Chriss strategy. \cite{waldon2024dare} introduce a framework to trade in continuous time under uncertainty.


\section{Optimal liquidity provision}
In traditional trading venues, LPs quote buy and sell prices and profit from the spread between the two. In LOB markets, LPs send LOs that specify price and quantity they would like to buy or sell of a given financial instrument. An LO sent to the market is stored in the LOB until an LT matches it or the LO is cancelled. LTs send market orders (MOs) when they wish to buy or to sell the underlying product immediately, which are matched with LOs resting in the LOB. Most LOB markets employ a price-time-priority mechanism where incoming buy (sell) MOs are matched by sell LOs with lowest (highest) price amongst outstanding sell (buy) LOs. Moreover, if two or more outstanding sell LOs offer the same price, then the one with earliest timestamp is executed first.

In trading venues organised as over-the-counter (OTC) markets, trading happens directly between two parties. Typically, an LT interested in trading sends a formal request to various LPs specifying the size of her order. Then, LPs reply to the LT and propose bid and ask prices, and the LT decides whether to trade on the proposed quotes or not. In OTC markets, LPs propose quotes based on their own inventory, the size of the LT's order, and market conditions. 

The literature on optimal liquidity provision in traditional trading venues is vast and it is built on the seminal work of \cite{ho1981optimal}. In their work, the authors use dynamic programming to derive optimal bid and ask prices in feedback form that maximise the expected utility of the LP's final wealth. Later, their work was revived by \cite{avellaneda2008high} who use econophysics results to model arrival rates of MOs to the LOB, and derive an optimal strategy to post LOs. \cite{gueant2012dealing} provide a detailed analysis of the model proposed by Avellaneda and Stoikov, and solve the problem under inventory constraints.

Later on, variations of the model by Avellaneda and Stoikov were proposed. \cite{cartea2014buy} use self-exciting processes to model order flow, and solve the optimisation problem using an objective function equal to the LP's terminal wealth minus an inventory penalty. \cite{cartea2017algorithmic} consider the ambiguity in the specification of fill probabilities and the dynamics of intensities and prices. In the work by \cite{cartea2020market}, the LP incorporates alpha signals into her optimal posting strategy. \cite{jusselin2021optimal} uses generalised Hawkes processes to model order flow and solves the problem of and LP who provides liquidity in an LOB market. \cite{bergault2021size} consider an LP who provides liquidity in multiple assets and uses a factor model to reduce the dimension of the optimisation problem. Moreover, the authors assumes that the the LP receives liquidity taking orders of various sizes, and show how the size of the order impacts the optimal trading strategy. \cite{bergault2021closed} derive a closed-form approximation for the bid and ask quotes offered by an LP who provides liquidity in multiple assets. \cite{barzykin2021market,barzykin2022dealing,barzykin2023algorithmic} study liquidity provision in the FX market. Specifically, they consider an LP who provides liquidity in various currencies and hedges inventory risk by trading with other LPs. \cite{barzykin2024algorithmic} studies an LP who provides liquidity in spot precious metals, where liquidity is mainly provided by futures contracts.

Inspired by the seminal works of \cite{glosten1985bid} and \cite{kyle1985continuous}, various works in the optimal liquidity provision literature consider LPs who trades with informed LTs. \cite{campi2020optimal} considers an LP who provides liquidity in a financial markets where arrival rates and price dynamics depend on unobservable factors. \cite{herdegen2023liquidity} consider an informed LT who trades with multiple LPs who are balancing inventory risk and adverse selection.  In the work from \cite{cartea2022brokers}, the LP trades at a loss with an informed LT to extract trading signals. Their work was later extended by \cite{bergault2024mean} who frame the problem as a mean field game.

Recent works in the literature use RL to solve the optimal liquidity provision problem; see \cite{gasperov2021reinforcement}. One of the first example of the use of RL for liquidity provision was the work of \cite{chan2001adaptive}, where the authors use various RL algorithms to provide liquidity in a model similar to the one proposed by Glosten and Milgrom. Other notable examples in the literature include the work of \cite{spooner2018market}, where the authors use temporal difference RL to solve the optimal liquidity provision problem, and \cite{spooner2021robust} and \cite{gasperov2021market} who use adversarial RL. \cite{gueant2019deep} use a model-based approach to solve the optimal liquidity provision problem, and \cite{jerome2023mbt} introduce a Python module to solve model-based optimal liquidity provision problems in an LOB market.

\section{Automated market makers}
AMMs are a new paradigm in the design of trading venues and are revolutionising the way market participants provide and take liquidity. Instead of relying on a matching mechanism, they rely on hard-coded and immutable programs running on peer-to-peer networks such as Ethereum (\cite{vitalik2014ethereum}), Tezos (\cite{goodman2014Tezos}) and Solana (\cite{yakovenko2014solana}).

At present, the majority of AMMs are CFMMs. In CFMMs, a trading function and a set of rules determine how LTs and LPs interact, and how markets are cleared. The trading function is deterministic and known to all market participants.  CFMMs display pools of liquidity for different assets, where the relative prices assets is determined by their quantities in the pool as prescribed by the trading function. The trading function establishes the link between liquidity and prices, so liquidity takers can compute the execution costs of their trades as a function of the trade size. A key difference between CFMMs and a LOB is that execution costs in CFMMs are given by a closed-form formula, where the convexity of the trading function plays a key role. As in traditional markets that operate an LOB, the larger the size of an order, the higher are the execution costs. 

The most widely used CFMMs are CPMMs, such as Uniswap v3 \cite{uniswap2021core}, PancakeSwap,\footnote{\url{https://docs.pancakeswap.finance/}} SushiSwap,\footnote{\url{https://docs.sushi.com/}} Balancer \cite{martinelli2019balancer}, DODO,\footnote{\url{https://dodoex.github.io/docs/docs/whitepaper/}} and Bancor \cite{bancor2017}. Another important class of CFMMs is that of constant sum market makers (CSMMs) such as mStable\footnote{\url{https://docs.mstable.org/}} and StableSwap \cite{stableswap2019}. We refer the interested reader to \cite{xu2022sok} for an overview of the different types of AMMs.

AMMs are a relatively new concept and thus academic literature is still restricted. Early works on AMMs focus on their market microstructure properties and include the work of \cite{angeris2021analysis}, where the authors formalise the mathematics of Uniswap v2 and show how to take advantage of the spread between the protocol and another reference market. This work is generalised to CFMMs in \cite{angeris2020improved}. \cite{cartea2023predictable} study the losses faced by LPs in CFMMs and then introduce predictable loss, which measures the unhedgeable losses of LPs stemming from the depreciation of their holdings in the pool and from the opportunity costs from locking their assets in the pool. Predictable loss is similar to loss-versus-rebalancing in  \cite{milionis2022automated} which describes the unhedgeable losses of LPs in traditional CFMMs due to losses to arbitrageurs.\footnote{\cite{milionis2022automated} introduced loss-versus-rebalancing in August 2022 as a measure that quantifies the unhedgeable losses of LPs in CFMMs to arbitrageurs. Contemporaneously, \cite{cartea2023predictable} introduced predictable loss in November 2022 as a component in the wealth of strategic LPs that quantifies predictable loss due to the convexity of the trading function and due to opportunity costs in CFMMs and in CL markets.}  \cite{capponi2021adoption} show that rates at which users provide liquidity are negatively correlated with volatility and positively correlated with traded volume. This feature is also observed in traditional LOB markets. \cite{berg2022empirical} empirically studies inefficiencies in AMMs with particular focus on Uniswap and SushiSwap. \cite{milionis2023automated} study the adverse selection incurred by LPs in the presence of fees, and \cite{fukasawa2023model} study the hedging of the impermanent losses of LPs. 

Another literature strain on strategic liquidity provision include \cite{heimbach2022risks} which discusses the tradeoff between risks and returns that LPs face in Uniswap v3. \cite{cartea2022decentralised2} introduces a continuous-time model for optimal liquidity provision, and \cite{li2023yield} study the economics of liquidity provision. The models in \cite{fan2021strategic} and \cite{fan2022differential}  focus on fee revenue and use approximation techniques to obtain dynamic strategies. 

Finally, there is a growing literature on AMM design for fair competition between LPs and LTs. \cite{goyal2023finding} study an AMM with dynamic trading functions that incorporate beliefs of LPs, \cite{lommers2023:case} study AMMs where the LP's strategy adjusts dynamically to market information,  \cite{cartea2023automated} generalise CFMMs and propose AMM designs where LPs express their beliefs and risk preferences, and \cite{bergault2023automated} introduce an AMM design inspired by OTC markets.

Other works on AMMs focus on arbitrage opportunities arising from blockchain inefficiencies. \cite{daian2020flash} formalises front-running in AMMs and introduces priority gas auctions, which consist in competitively bidding up transaction fees in order to obtain priority ordering. \cite{park2021conceptual} discusses front-running arbitrages in AMMs, and proposes a pricing rule to prevent them. \cite{zhou2021} provides a detailed analysis of sandwich attacks and quantifies how much a LT performing sandwich attacks can earn on a daily basis.

\chapter{Execution and Speculation \label{ch:paper}}
\section{Introduction}
Decentralised Finance (DeFi) is a collective term for blockchain-based financial services that do not rely on intermediaries such as brokers or banks.  New powerful technologies are the engine behind the remarkable growth of DeFi, which is changing the financial landscape and is in direct competition with many traditional stakeholders. Within DeFi,  Automated Market Makers (AMMs) are a new paradigm in the design of trading venues and are revolutionising the way market participants provide and take liquidity. Currently, AMMs are mainly known as exchanges for cryptocurrencies; however, the core concepts of AMMs go beyond the cryptocurrency sector and they are poised to challenge traditional electronic  exchanges in all asset classes.


At present, the majority of AMMs are constant function market makers (CFMMs). In CFMMs, a trading function and a set of rules determine how liquidity takers (LTs) and liquidity providers (LPs) interact, and how markets are cleared.  The trading function is deterministic and known to all market participants.  CFMMs display pools of liquidity for pairs of assets, where the relative prices between the two assets is determined by their quantities in the pool as prescribed by the trading function. The trading function establishes the link between liquidity and prices, so LTs can compute the execution costs of their trades as a function of the trade size.  A key difference between CFMMs and limit order books (LOBs) is that execution costs in CFMMs are given by a closed-form formula, where the convexity of the trading function plays a key role. As in traditional markets that operate an LOB, the larger the size of an order, the higher are the execution costs. 

Within CFMMs, we focus on constant product market makers (CPMMs), which are the most popular type of CFMM and where the trading function uses the product of the quantities of each asset in the pool to determine clearing prices. In this chapter, we present the work of \cite{cartea2022decentralised}, who solve the problem of an LT who wishes to trade in a CPMM to execute a large position in an asset and to execute statistical arbitrages based on market signals. We formulate the trading problem as a stochastic control problem where the LT controls the speed at which  she sends liquidity taking orders. Key to the performance of the LT's strategies, is to balance price risk and execution costs. In CPMMs,  the execution costs given by the trading function are inversely proportional to the depth of the pool and proportional to a non-linear transformation of the relative prices of the two assets in the pool. 


Despite very high levels of activity for many of the pairs traded in CPMMs, price formation currently occurs in the LOBs of alternative electronic markets. In one version of our model, we assume that LTs in a CPMM inform their decisions with the prices in the CPMM and those from an alternative venue, and assume that the depth of the pool does not vary during the LT's trading horizon. In this setup, we derive a versatile trading strategy which can be used to focus on the execution of a large order or on statistical arbitrages.



When the focus is to exchange a large position in one asset for another asset, both of which are provided as a pair in the CPMM, the strategy relies on two components. One component is as in the traditional execution strategies (e.g., TWAP-like or Almgren-Chriss), and the second component is an arbitrage that takes advantage of short-lived discrepancies in the prices of the CPMM and those in the alternative venue. Instead, if the objective is speculation, the strategy relies more on the statistical arbitrage component to take advantage of the lead-follow relationship between the prices in the CPMM and the alternative venue.

In anticipation of the growth of AMMs, another version of our model assumes that prices in the CFMMs are efficient, so the discrepancies between the CFMM and LOB prices are not economically significant. The increase in the efficiency of prices in CFMMs  will be a result of an increase in the activity of LPs and LTs, which will also result in more changes in  the depth of the pool of the CPMM. Thus, we propose another model where the depth of the pool is stochastic and we solve the LT's execution problem for large orders. 




We use Uniswap v3 data for CPMMs that trade pairs of cryptocurrencies to study the empirical properties of this particular AMM, and to illustrate the performance of the proposed liquidation and speculative strategies. The efficient prices are those from Binance where LPs and LTs interact through a traditional price-time priority LOB. To showcase the performance of our strategies, we use in-sample data to estimate model parameters and  out-of-sample data to execute the strategies in `real time' as an LT would have done. In our analysis, we use rolling time windows of a few hours starting 1 July 2021 and ending 5 May 2022 to obtain the distribution of the financial performance of the strategies. We look at two pairs of assets, one that is heavily traded and one that is not as frequently traded. We show the superior performance of our execution strategy over TWAP and over a strategy that would have executed the whole inventory in one trade at the beginning of the trading horizon. Finally, we show that there are profitable opportunities to execute statistical arbitrages in Uniswap v3  when the strategy is informed by Binance prices.

The remainder of this chapter is organised as follows. Section \ref{sec:AMM} discusses how CFMMs operate and uses Uniswap v3 data to study price, liquidity, and trading cost dynamics in CPMMs. Section \ref{sec:Model} solves the optimal execution problem when the pool has constant depth during the execution window and price formation is in an alternative trading venue. Section \ref{sec:Model2} solves the optimal execution problem when the pool depth is stochastic and price formation takes place in the AMM. Finally, Section \ref{sec:Performance} showcases the performance of liquidation and statistical arbitrage strategies.
\section{Automated Market Making \label{sec:AMM}}

In this section, we discuss how CFMMs operate and how they differ from electronic markets where traders interact through an LOB. In particular, we describe the interactions of market participants with a CFMM that is in charge of a pair of assets.  We use  transaction data from Uniswap v3 to study the activity of market participants, the dynamics of liquidity, and implicit transaction costs.

\subsection{Description \label{sec:AMM1}}

AMMs are hard-coded and immutable programs running on a peer-to-peer network. They provide a venue to trade pairs of assets $X$ and $Y$, where the liquidity of the pool consists of $x$ units of $X$ and $y$ units of $Y$. The exchange rate of the pool is the price of $Y$ in terms of the price of $X$, and it is determined by the quantities $x$ and $y$.\footnote{Some AMMs also display pools with more than two assets.} Two types of market participants interact in an AMM: LPs deposit their assets in the pool and LTs trade directly with the pool.

Here, we consider a CFMM that trades the pair of assets $X$ and $Y$.  CFMMs are characterised by a deterministic trading function $f(x,y)$ that determines the rules of engagement among participants in the pool. For instance, the trading function of the CPMM is $f(x,y)=x\times y$. Other types of CFMMs are the  constant sum market maker with $f(x,y) = x+y$; the constant mean market maker with $f(x,y) = w_x \, x + w_y \, y $, where $w_x, w_y > 0 $ and $w_x + w_y = 1$; and the hybrid function market maker, which uses combinations of trading functions.

In peer-to-peer networks, participants invoke the code of the AMM smart contract to instruct market operations. LPs send messages with instructions to deposit or withdraw liquidity, and LTs send messages to exchange one asset for the other. To provide liquidity, an LP instructs the AMM with the quantities in assets $X$ and $Y$ to be deposited in a specific pool. On the other hand, LTs indicate to the AMM the pool and the quantity of the asset to be exchanged. The available liquidity in the pool and the trading function of the AMM determine the exchange rate received by the LT. For each trade, LTs pay the AMM a transaction fee, which is distributed amongst LPs in the same proportion as their contributions to the pool.\footnote{See \cite{heimbach2021behavior} and \cite{cartea2022decentralised} for an analysis on how LPs profit from their activity.}

The trading function $f\left(x, y\right)$ is increasing in $x$ and $y,$ and it ties the state of the pool before and after an LT transaction is executed. Throughout, the signs of $\Delta x$ and $\Delta y$ are the same. If $\Delta y>0,$ the LT sells asset $Y$, and if $\Delta y<0,$ the LT buys asset $Y$; for simplicity, we assume zero fees.\footnote{To take into account the fee for an LT transaction, one applies a discount to the quantity $\Delta y$ before calculations are carried out.} The condition
\begin{equation}\label{eq:trade}
    f(x-\Delta x,y+\Delta y)=f(x,y)=\kappa^2\ 
\end{equation}
determines the quantity $\Delta x$ that the agent receives (pays) when exchanging $\Delta y >0$ ($\Delta y <0$). The trading function keeps the quantity $\kappa^2$ constant before and after a trade is executed. We write $f(x,y)=\kappa^2$ as $x=\varphi(y)$ for an appropriate function $\varphi$ that depends on $\kappa$;  we refer to $\varphi$ as the \emph{level function}, and assume it is convex.\footnote{The convexity of the level function is by design. One can show that a no-arbitrage condition leads to the necessary convexity of the level function; see \cite{cartea2022decentralised}.}

If an LT wishes to sell $\Delta y$ units of asset $Y$, she receives  $\Delta x = \Delta y \times \tilde{Z}(\Delta y)$ units of asset $X$  in exchange. Here, $\tilde Z(\Delta y)$, with units $X/Y,$ is the exchange rate received by the agent when trading a quantity $\Delta y$ of asset $Y.$ Therefore,
	\begin{align*} 
		x - \Delta x = \varphi(y + \Delta y) \implies
		\varphi(y) - \Delta y \,\tilde Z(\Delta y) = \varphi(y + \Delta y) \,,
	\end{align*}
	so
	\begin{align}
		\label{eq:execprice}
		\tilde Z(\Delta y) = \frac{\varphi(y)-\varphi(y+\Delta y)}{\Delta y}\,,
	\end{align}
	and  for an infinitesimal quantity $\Delta y$ we write
	\begin{align}
		\label{eq:instantaneousprice}
		Z =-\varphi'(y)\, .
	\end{align}
	
	We refer to $Z$  as the \emph{instantaneous rate} of the AMM, which is equivalent to the midprice in an LOB. The instantaneous rate $Z$ is a reference exchange rate -- the difference between its value and the execution rate is similar to the difference between the LOB midprice and the average price obtained by a liquidity taking order that crosses the spread and walks the book when it is filled.
	

	The trading function $f(x,y)$ is increasing in the pool quantities $x$ and $y.$ Thus, when LP activity increases (decreases) the size of the pool, the value of $\kappa$ increases (decreases). We refer to $\kappa$ as the depth of the pool. A distinctive characteristic of AMMs is that liquidity provision changes the depth of the pool, but it does not change the instantaneous rate. For example, in a CPMM, the instantaneous rate is the ratio of the quantities supplied in the pool, i.e.,
	\begin{equation}
		Z=\frac xy\, ,
	\end{equation}
	and when an LP deposits quantities $\Delta x$ and $\Delta y$ in the pool, the pair $(\Delta x,\Delta y)$ must satisfy
	\begin{equation}
		\label{eq:LPconditionCPMM}
		\frac{x}{y} = \frac{x+\Delta x}{y+\Delta y} = Z\, ,
	\end{equation}
	and the value of $\kappa$ changes from $\sqrt{x \times y}\ $ to $\sqrt{\left(x+\Delta x\right)\left(y + \Delta y\right)}$.  For \eqref{eq:LPconditionCPMM} to hold, there exists $\rho$ such that $\Delta x = \rho \ x$ and $\Delta y= \rho \ y,$ i.e., liquidity provision and removal by LPs in a CPMM is performed in fractions of the pool quantities $x$ and $y,$ and the depth changes from $\kappa$ to $(1+\rho) \ \kappa.$

	The level function $\varphi$ is decreasing, thus the rate in  \eqref{eq:execprice} received by an LT deteriorates as the size of the trade increases.\footnote{The trading function $f$ is increasing in $x$ and $y$ and $\partial_y f(x,y) = \partial_y f\left(\varphi(y),y\right) = 0 \text{ so } \varphi'(y) = - \frac{\partial_y f(x,y) }{\partial_x f(x,y)} < 0.$}  The formulas \eqref{eq:execprice} and \eqref{eq:instantaneousprice} encode all the information needed by an LT to interact with an AMM. For a trade of size $\Delta y$,  the distance between the instantaneous rate $Z$ and the execution rate $\tilde Z\left(\Delta y\right)$ is the \textit{unitary execution cost} of the AMM, specifically
	\begin{align}
		\label{eq:executionCost}
		\textrm{Unitary execution cost} \ = \left|Z - \tilde Z(\Delta y)\right|\,.
	\end{align}
    The unitary execution cost is defined as the average execution cost of one unit of asset $Y$, in units of asset $X$.
	
	Another distinctive characteristic of AMMs is that liquidity taking activity may change the instantaneous rate $Z$, but does not change the depth of the pool. Furthermore, AMMs and LOBs differ in a number of other aspects including accessibility and the way LPs are compensated.  By design, AMMs are permissionless, so anyone can participate in the market.\footnote{This is true for AMMs running on permissionless peer-to-peer networks. If the same AMMs were to be implemented by a central authority for fiat currencies or stocks, then one expects stricter participation rules.} LPs in AMMs are compensated in two ways. One, they are rewarded with the proceeds from fees that LTs pay for every trade. Two, the execution costs in \eqref{eq:executionCost} incurred by LTs due to the convexity of the level function  stay in the pool. In some AMMs, the fee proceeds are put back in the pool, so the number of assets owned by LPs increases. In other AMMs, in particular those with a CL feature, fees are accumulated in a separate account and are earned  by LPs when they withdraw their liquidity from the pool.

	\subsection{Data analysis \label{sec:AMM2}}
	\subsubsection{Data description\label{sec:AMM2_0}}
	
	Uniswap v3 is considered the cornerstone of DeFi and is currently the most liquid AMM. It is a CPMM with trading function
	\begin{align}
		\label{eq:CPMMtradingfunction}
		f(x,y)=x \times y=\kappa^2,
	\end{align}
	so the level function is 
	\begin{align}
		\label{eq:CPMMlevelfunction}
		\varphi(y)= \frac{\kappa^2}{y}\,.
	\end{align}
	When an LT trades $\Delta y$, the execution rate \eqref{eq:execprice} is 
	\begin{equation*}
		\tilde Z(\Delta y) = \frac{1}{\Delta y} \left(\frac{\kappa^2}{y}-\frac{\kappa^2}{y+\Delta y}\right) \ ,
	\end{equation*} 
	and the instantaneous rate \eqref{eq:instantaneousprice} is 
	\begin{equation}\label{eq:insta_price}
		Z = -\varphi'(y) = \frac{\kappa^2}{y^2}\ .
	\end{equation} 
	
	To further study the characteristics of CPMMs and motivate our framework, we use transaction data from Uniswap v3 and the traditional LOB-based exchange Binance. Specifically, we look at the two pairs ETH/USDC and ETH/DAI. The ticker ETH represents the cryptocurrency \emph{Ether}, which is the native cryptocurrency of the Ethereum blockchain. The ticker USDC represents \emph{USD coin}, a cryptocurrency fully backed by U.S. Dollars (USD); and DAI represents the cryptocurrency \emph{Dai}, which tracks parity with the U.S. Dollar. For the pool ETH/USDC, the unit of depth $\kappa$ is $\sqrt{\textrm{ETH} \cdot \textrm{USDC}},$ of $x$ is USDC, of $y$ is ETH, and the instantaneous rate, the execution rate, and the unitary execution cost are all in $\textrm{USDC} / \textrm{ETH}$; similarly for the pool ETH/DAI. For ease of reading, in the remainder of this work we omit the units of $\kappa$.
	
	We analyse transaction data of the most liquid pools for the pairs ETH/USDC and ETH/DAI. Transaction information from decentralised exchanges is public. Between 5 May 2021 and 5 May 2022, there are 1,757,181 LT transactions and 42,403 LP transactions for the pair ETH/USDC, and 101,538 LT transactions and 12,142 LP transactions  for the pair ETH/DAI. Figures \ref{fig:volumeETHUSDC01} and \ref{fig:volumeETHDAI01} show the historical daily volumes of transactions in terms of USD value and in terms of the number of transactions for ETH/USDC and ETH/DAI, respectively. These pools are considered an alternative to LOB-based trading venues such as Binance, which is the most liquid and active venue for both pairs.

	\begin{figure}[H]\centering
		\includegraphics{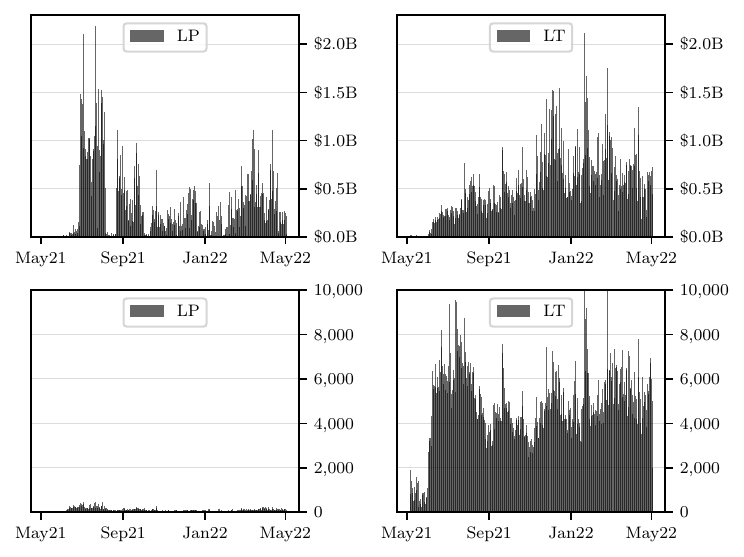}\\
		\caption{Daily transactions for the ETH/USDC pool (B is shorthand for billions), between 5 May 2021 to 5 May 2022. \textbf{Top left}: LP transaction volume in USD. \textbf{Top right}: LT transaction volume in USD. 
			\textbf{Bottom left}: number of LP transactions. \textbf{Bottom right}: number of LT transactions. }
		\label{fig:volumeETHUSDC01}
	\end{figure}
	
		
		

	\begin{figure}[H]\centering
		\includegraphics{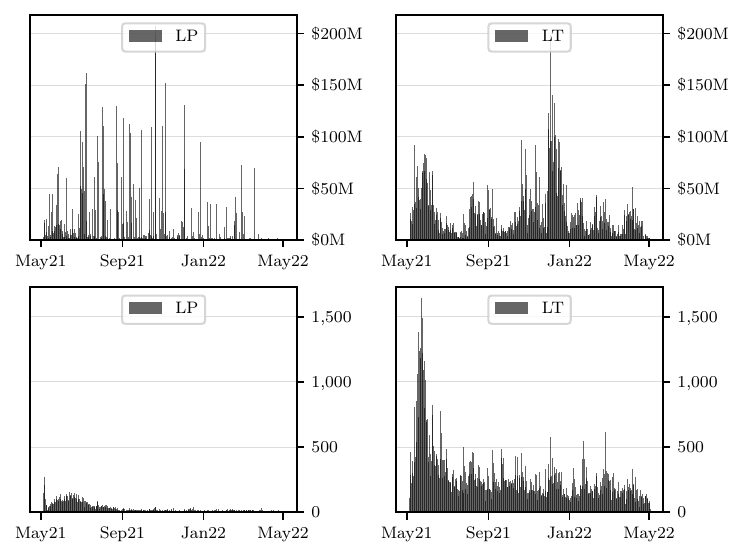}\\
		\caption{Daily transactions for the ETH/DAI pool (M is shorthand for millions), between 5 May 2021 to 5 May 2022. \textbf{Top left}: LP transaction volume in USD. \textbf{Top right}: LT transaction volume in USD. 
			\textbf{Bottom left}: number of LP transactions. \textbf{Bottom right}: number of LT transactions.}
		\label{fig:volumeETHDAI01}
	\end{figure}
	
	
	Next, we study liquidity provision and taking activity within the pools and examine their key characteristics, all of which we use to frame the agent's execution problem.
	
	
	\subsubsection{Rates and liquidity dynamics\label{sec:AMM2_1}}
	Compared with most CPMMs, Uniswap v3 operates with the concentrated liquidity (CL) feature. In CPMMs without this feature, each LP owns a percentage of the pool, and the fees paid by LTs are distributed to LPs in the same proportion; thus, LPs provide liquidity at all feasible rates. In contrast, LPs in Uniswap v3 specify the range of rates where they supply liquidity. For example, LPs can target a range around the instantaneous rate to earn more fees than those who provide liquidity at rates far from the instantaneous rate. In practice, the continuous space of possible rates is discretised and subdivided in rate intervals whose boundaries are called \emph{ticks}.\footnote{In Uniswap v3, ticks are specific rates that are used as the boundaries of an LP transaction. In contrast, ticks in LOBs represent the smallest price increment.} Two consecutive ticks define a \emph{tick range} and the rate can take values in this range with increments set by the AMM.  LPs designate two ticks between which they wish to provide liquidity. Therefore,  with CL, the pool is characterised by the available quantities in tick ranges. Finally, in Uniswap v3, fees paid by LTs are distributed among the LPs who had provided liquidity in a range that included the rate at which liquidity was taken.
	
	In CPMMs without CL, the value of $\kappa$ is the same for all the tick ranges and recall that the value of $\kappa$ can only change when LPs deposit or withdraw liquidity from the pool.   On the other hand, due to the CL feature of Uniswap v3, the depth of the pool may be different when the instantaneous rate crosses the boundary of a tick because liquidity is distributed unevenly across rates in the pool. Thus, to estimate the execution cost incurred when trading, one must track the distribution of liquidity across ranges of rates. Finally, in the extreme case where all liquidity is withdrawn from a range of rates around the current rate $Z$, this effectively constitutes a change in the instantaneous rate of the pool -- the next LT transaction will start at a rate different from $Z$.
	
	
	To analyse liquidity provision and consumption activity, we reconstruct the supply and demand of liquidity since the inception of Uniswap v3 in May 2021. For example, Figure \ref{fig:uniswap01} shows the amount of liquidity, given by the depth $\kappa_i,$ available at each rate range $i$ at 02:00, 12:00, and 18:00, on 14 April 2022 and 15 April 2022 for the ETH/USDC pool.
	
	

	\begin{figure}[H]
		\centering
		\subfloat[\textbf{Top}: ETH/USDC rates on 14 April 2022 sampled every 60 seconds. \textbf{Other plots}: Pool depth $\kappa$ at 02:00 (top), 10:00 (middle), and 18:00 (bottom). The red bar corresponds to the instantaneous rate.]{{\includegraphics{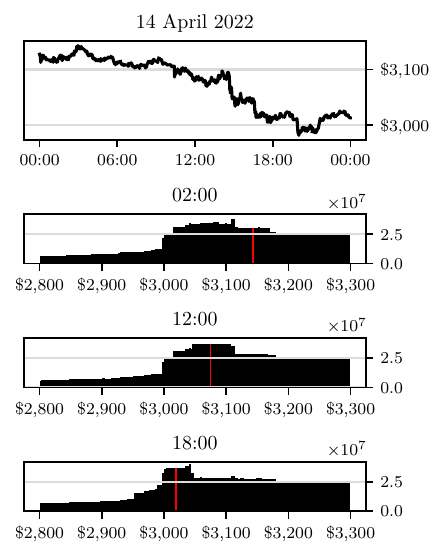} }}%
		\qquad
		\subfloat[\textbf{Top}: ETH/USDC rates on 15 April 2022 sampled every 60 seconds. \textbf{Other plots}: Pool depth $\kappa$ at 02:00 (top), 10:00 (middle), and 18:00 (bottom). The red bar corresponds to the instantaneous rate.]{{\includegraphics{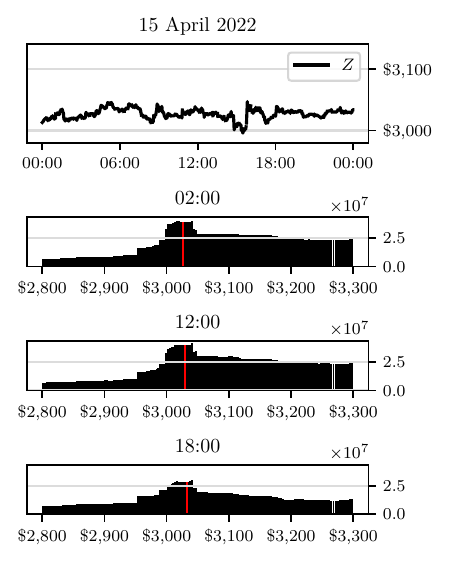} }}%
		\caption{Rate and LP dynamics in AMMs.}%
		\label{fig:uniswap01}%
	\end{figure}
	
	The bars in the panels of Figure \ref{fig:uniswap01} represent a tick range. Whenever the instantaneous rate is in a given range, the liquidity of that range defines the depth $\kappa$ used by the trading function, and hence the execution rate \eqref{eq:execprice}. When the volume of an LT transaction is large enough to make the instantaneous rate cross a tick where the level of liquidity changes, the AMM treats it as multiple trades, each with a different $\kappa_i$; similar to a market order walking the levels of an LOB. Tracking the instantaneous rate, and the liquidity around it, is critical for an agent interacting with an AMM. In Figure \ref{fig:uniswap01}, most of the liquidity is concentrated around the instantaneous rate and the depth is the same over a large range around the rate. The behaviour of LPs is different when volatility is high or low. When $Z$ is volatile, such as on 14 April on the left panel of Figure \ref{fig:uniswap01}, LPs provide liquidity over a wider range  around the rate  to earn fees that anticipate  large swings in $Z$. On the other hand, when markets are less volatile, such as on 15 April in the right panel of Figure \ref{fig:uniswap01}, LPs concentrate their liquidity provision more tightly around the rate $Z$. 
	
	
	
	Next, we use transaction data from the LOB-based exchange Binance for the same pairs to compare the dynamics in the two trading venues. Figure \ref{fig:uniswapbinance01} shows the Binance quoted rate and the AMM rate, and liquidity provision activity for both pools between 09:00 and 12:00 on 13 April 2022.
	
	\begin{figure}[H]
		\centering
		\subfloat[\textbf{Top}: ETH/DAI instantaneous rate $Z$ and oracle $S$ between April 13 00:00 and April 13, 2022 12:00. \textbf{Middle}: LT Transaction size. \textbf{Bottom}: LP transaction size.]{{\includegraphics{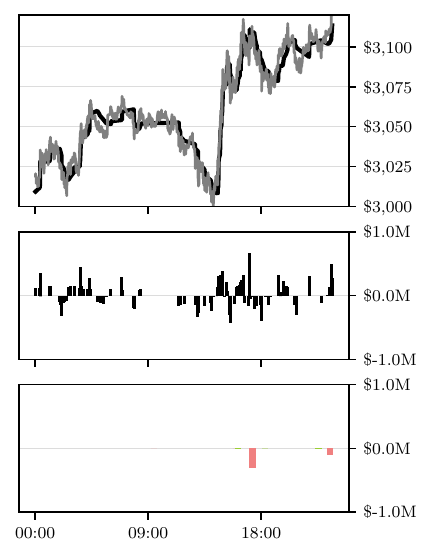} }}%
		\qquad
		\subfloat[\textbf{Top}: ETH/USDC Instantaneous rate $Z$ and oracle $S$ between April 13 09:00 and April 13, 2022 12:00. \textbf{Middle}: LT Transaction size. \textbf{Bottom}: LP transaction size.]{{\includegraphics{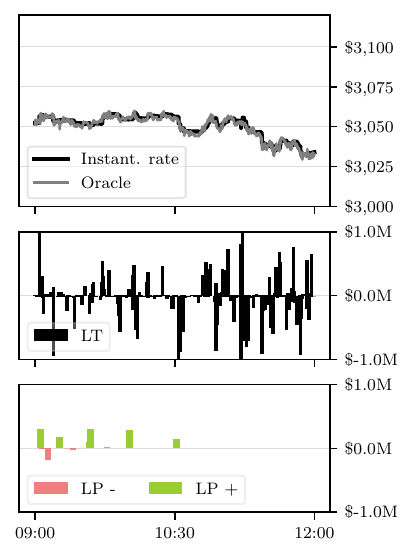} }}%
		\caption{Rate and LP dynamics in AMMs.}%
		\label{fig:uniswapbinance01}%
	\end{figure}

	Figure \ref{fig:uniswapbinance01} shows that Binance rates are more volatile and lead the rates in the AMM. This is not by design, it is a consequence of the higher liquidity in Binance. Currently, it is crucial to consider the rate from a more liquid venue when trading in an AMM. In what follows, we call the leading exchange rate from another trading venue the \emph{oracle}, which in this case is the Binance quoted rate. Similar to Figures \ref{fig:volumeETHUSDC01} and \ref{fig:volumeETHDAI01}, Figure \ref{fig:uniswapbinance01} shows that there is more LT than LP activity measured by the frequency of instructions and the size of the orders. During the periods with little trading activity in the ETH/DAI pool,  the oracle rate plays a central role to attract LT  activity in the AMM whenever the difference between the instantaneous rate and the oracle rate is significant (recall that only liquidity taking trades can change the rate of the pool). The widening of the difference between the two rates triggers LT activity which drives the two exchange rates to converge; i.e., arbitrageurs keep markets in check.

	\subsubsection{Convexity and execution costs\label{sec:AMM2_2}}
	
	Here, we analyse execution costs implied by the pool reserves and the trading function \eqref{eq:CPMMtradingfunction}.   Figure \ref{fig:convexity0} shows the historical unitary execution costs as a function of the transaction size $\Delta y$ and the liquidity for all transactions between 1 March and 31 March 2022 in the ETH/USDC pool. The figure  shows that for the same transaction size, it is cheaper to trade in a more liquid pool; clearly, as liquidity increases (higher value of $\kappa$), execution costs decrease.   The figure shows two unitary execution cost curves, i.e., $\Delta y \mapsto \left|Z - \left(\varphi(y+\Delta y) - \varphi(y)\right) \right|,$ corresponding to the same depth $\kappa \,=\, 3 \times 10^7.$ One curve assumes $Z=\,$2,975 or $y=\,$500,000, and the other assumes $Z=\,$3,600 or $y=\,$550,000.

	\begin{figure}[H]\centering
		\includegraphics{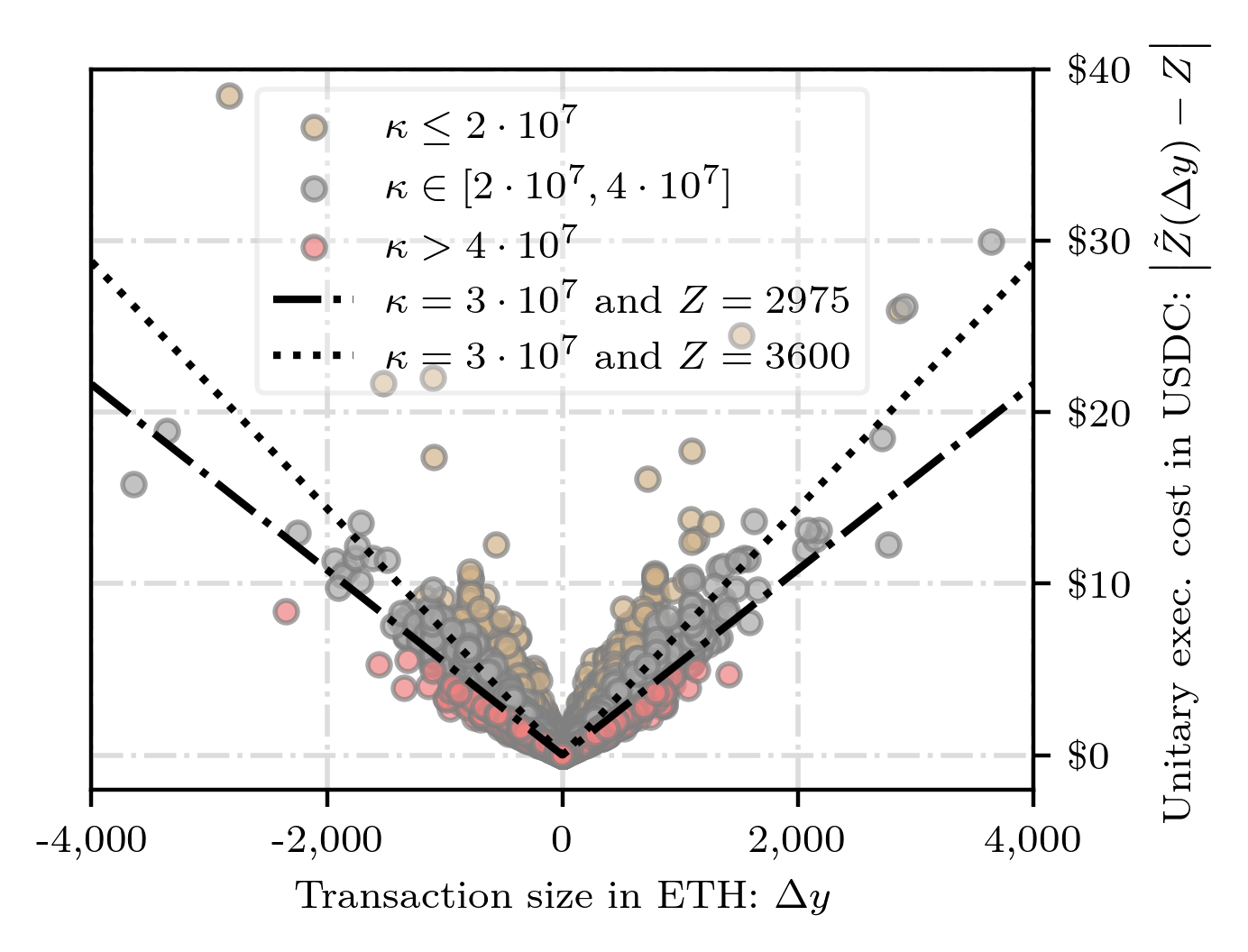}\\
		\caption{Unitary execution costs for all transactions between 1 March and 31 March, 2022 in the ETH/USDC pool. Negative transaction sizes corresponds to buying ETH, and positive sizes to selling ETH. The unitary execution cost is defined in \eqref{eq:executionCost} and is given for the following ranges of the depth $\kappa:$ $[0,2\times10^7], \ [2\times10^7,4\times10^7],$ and $[4\times10^7, +\infty].$}\label{fig:convexity0}
	\end{figure}

	
	Next, we analyse the geometry of the constant product trading function to understand how unitary execution costs relate to the depth of the pool and the instantaneous rate $Z$. Figure \ref{fig:geometry} shows the level function $\varphi$ for $\kappa = \,$2,500,000. Every point on the curve gives possible values for $x$ and $y$ that result in the same pool depth. The point $O$ corresponds to the current pool quantities $x$ and $y$. The slope of the tangent at that point gives the current instantaneous rate $Z = -\varphi'(y).$ 
	

	A change from $O$ to $A$ in Figure \ref{fig:geometry} is the result of an LT selling 2,500 ETH. A change from $O$ to $B$ is the result of an LT buying 2,500 ETH. The new rates, after these transactions, are given by the slopes at the new points $A$ and $B$, respectively. When an LT sells $\Delta y=$2,500 ETH, the unitary execution rate $\Delta x  / \Delta y$ is given by the slope of the line (OA). Similarly, the slope of the line (OB) gives the unitary rate for buying $\Delta y.$ On the other hand, the unitary execution cost is given by the difference between the slope of the lines and the slope of the tangent at point $O;$ the magnitude of this difference depends on the curvature of $\varphi$ in the neighbourhood of $O.$ This curvature is proportional to the convexity of the level function and can be approximated by the second-order Taylor polynomial $\tfrac12 \, \varphi ''(y_\textrm{O})\, \Delta y^2.$ A higher degree of convexity, i.e., more curvature around point $O,$ does not change the slope of the tangent at point $O,$ but changes the slopes of the lines $(OA)$ and $(OB).$ The convexity of the level function is given by
	\begin{equation}\label{eqn: convexity varphi prime prime}
		\varphi ''(y) = \frac{2\,\kappa^2}{y^3} = \frac{2\,Z^{3 / 2}}{\kappa}\, .
	\end{equation}
	Therefore, the execution rate obtained for buying or selling $\Delta y$ is always less advantageous than the instantaneous rate $Z$ because the level function $\varphi$ is convex. Clearly, as the depth $\kappa$ increases, the convexity of the level function is less pronounced. For orders of ``small size''  one can approximate the unitary execution cost in \eqref{eq:executionCost} with 
	\begin{align}
		\label{eq:execcostsApproxCPMM}
		| Z- \tilde Z(\Delta y)|=   \frac{1}{\kappa}\,Z^{3/2}\, |\Delta y|\,,
	\end{align}
	or equivalently approximate the execution rate with 
	\begin{align}
		\label{eq:execratesApproxCPMM}
		\tilde Z(\Delta y) = Z - \frac{1}{\kappa}\,Z^{3/2}\, \Delta y\,.
	\end{align}

	{\footnotesize
		\begin{figure}
			\footnotesize
			\parbox{.45\linewidth}{
				\centering
				\includegraphics[width=0.53\textwidth]{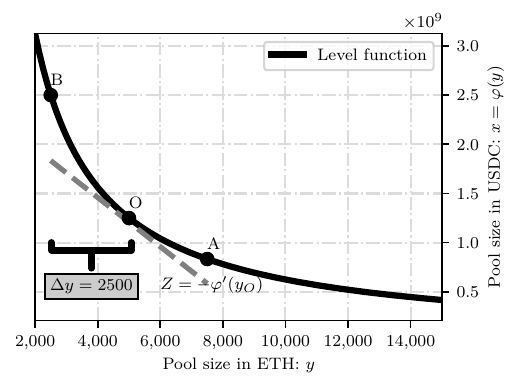}\\
				\caption{Geometry of the constant product trading function. The figure shows the function $\varphi\left(y\right) = x$ where $x$ is the quantity of USDC, and $y$ is the quantity of ETH in the pool.}
				\label{fig:geometry}
			}
			\hfill
			\parbox{.45\linewidth}{
				\centering
				\vspace{2.5mm}
				\includegraphics[width=0.51\textwidth]{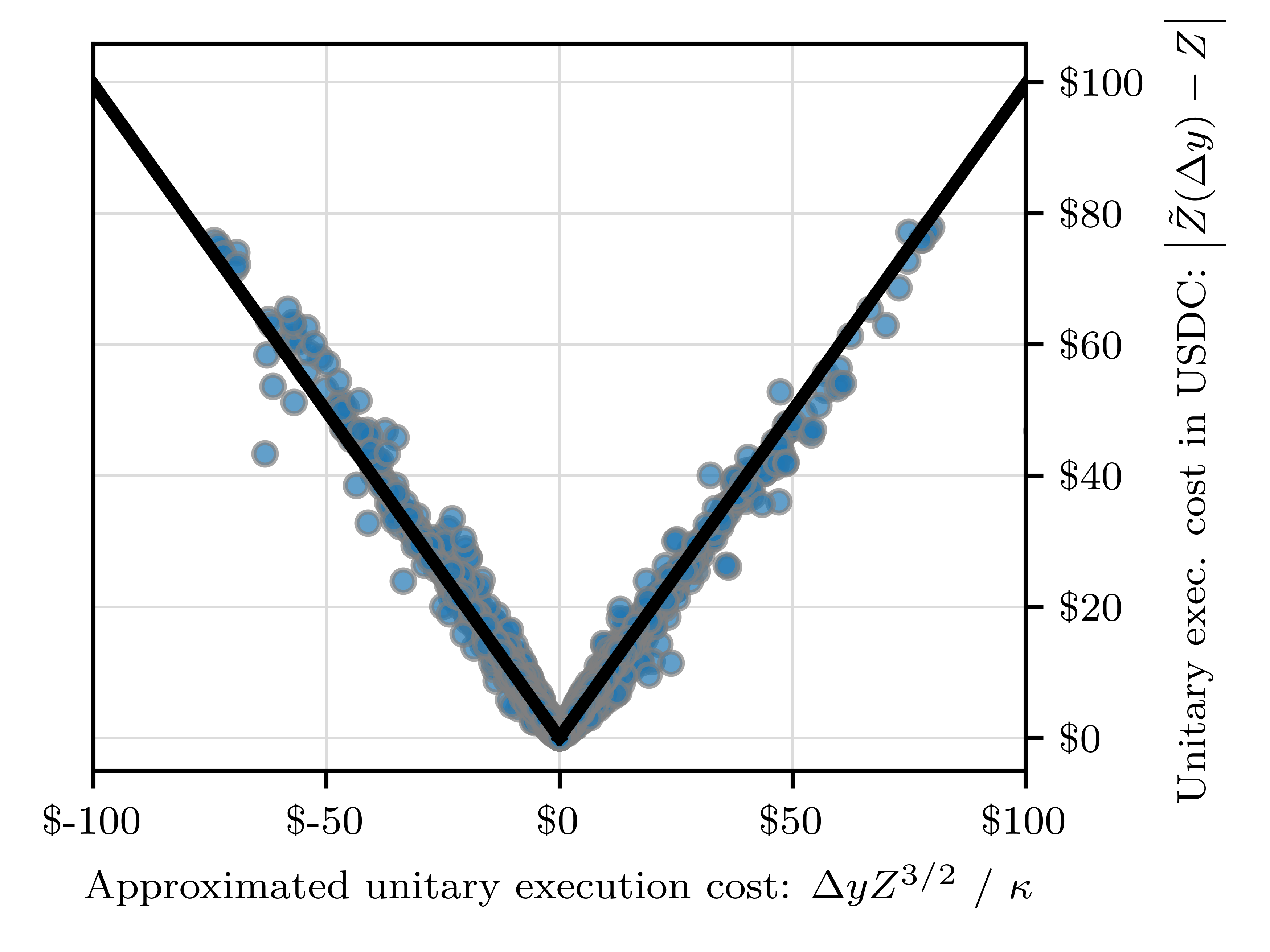}\\ 
				\vspace{-1mm}
				\caption{Scatter plot of transaction costs and the approximation \eqref{eq:execcostsApproxCPMM}
					for all transactions in the ETH/USDC pool between 5 May 2021 and 5 May 2022.}\label{fig:convexity1}
			}
		\end{figure}
	}
	
	To show the extent to which the convexity of the level function captures the execution cost, we use transaction data from the ETH/USDC pool from inception on 5 May 2021 until 5 May 2022, approximately 1,757,000 LT transactions, to compare unitary execution costs incurred by LTs with our approximation in \eqref{eq:execcostsApproxCPMM}. Figure \ref{fig:convexity1} shows a scatter plot of the observed unitary execution costs from transaction data and the approximation $\left(Z^{3/2} /  \kappa\right) \Delta y.$ Recall that negative values of $\Delta y$ are buy orders and positive values are sell orders. The figure shows that the expression in \eqref{eq:execcostsApproxCPMM} is a good approximation, and that the convexity of the level function $\varphi$ can be used to model the unitary execution cost incurred when interacting in CPMMs. 
	

	In  an optimal trading framework, the LT controls the speed $\nu_t$ at which she sends orders to the AMM. Now, assume that the LT trades the quantity $\Delta y_t = \nu_t\, \Delta t$, where  $\Delta t$ is a fixed time-step that determines the LT's frequency of trading and  $\nu_t$ is fixed during the time interval. The execution rate for $\Delta y$ is $\tilde Z(\Delta y) = Z - Z^{3/2}\,\nu_t \, \Delta t\,/\,\kappa$. Thus, to reflect the LT's pace of trading we write the LT's execution rate as 
	\begin{align}
		\label{eq:execratesApproxCPMM_nu}
		\tilde Z_t = Z_t - \frac{\eta}{\kappa}\,Z^{3/2}_t\,\nu_t \, \,,
	\end{align}
	where the parameter $\eta$ scales the execution costs according to the LT's trading frequency $\Delta t$. Note that the continuous-time formulation of the problem in \eqref{eq:execratesApproxCPMM_nu} is used in the algorithmic trading literature to solve optimal execution problems, which are by nature discrete.

	Due to the convexity of the level function, it is sub-optimal to execute large orders in one trade.\footnote{This is a consequence of the trading activity of other market participants.} An optimal trading framework, similar to those developed for traditional LOB-based markets, should balance the trade-off between execution costs and rate risk. The execution cost in CFMMs is similar to the cost of ``walking the book'' when trading in LOBs, sometimes referred to as the temporary price impact. The difference between the temporary price impact and the CFMM execution cost is that in the CFMM we have a deterministic closed-form expression for the execution cost as a function of the depth and the rate, both of which can be employed by LTs to estimate execution costs. On the other hand, in LOBs, traders usually rely on historical data analysis and assumptions to obtain an estimate of the execution costs. In LOBs, it is generally assumed that temporary impact is a linear function of the speed of trading where the slope of the function is assumed to be fixed; see \cite{cartea2015book} and \cite{gueant2016book}.

	As in LOBs, there is uncertainty in execution costs received by LTs.  On blockchains, rate and liquidity updates occur at the block validation frequency. For instance, in the \emph{Ethereum} network hosting the most popular AMMs, a block is validated every 13 seconds, on average. Hence, the most up-to-date information available is the previous block's rate and liquidity data. Transactions are grouped and executed in a block which is externally validated.  Within the block, the transactions form a queue that determines the priority with  which  they are executed. Market participants can pay \emph{tips}  to gain priority in the block.  Finally, every transaction sent on the network pays \emph{gas fees}. Gas is a unit that measures computational effort of network transactions, and all market participants pay gas fees to the network -- fees are proportional to the algorithmic complexity of the functions a trader invokes in the AMM.  Consequently, there is randomness regarding the exact execution costs.

	
	Below, in Section \ref{sec:Model}, we propose a trading model (Model I) in AMMs where execution costs are given by the convexity of the level function, where the depth $\kappa$ does not change over the agent's trading window, and where it is assumed that there is an oracle rate. Model I introduces a method that uses piecewise constant execution costs to obtain a closed-form approximation strategy that adjusts to the convexity of the level function. We use Model I in our performance analysis of Section \ref{sec:Performance}. In the future, an oracle rate  may become less relevant because activity in  AMMs will increase and significant rate discrepancies will very seldom appear. Thus, in Section \ref{sec:Model2}  we propose Model II for liquid AMMs where the depth $\kappa$ of the pool is stochastic and where an oracle rate is redundant.

	\section{Model I: optimal trading with an oracle rate\label{sec:Model}}
	We consider the problem of an LT who trades in a CPMM and wishes  to exchange a large position in asset $Y$ into asset $X$ or to execute a statistical arbitrage in the pair. In both cases, the LT uses rate information from the pool in the CPMM and from another more liquid exchange in which the oracle rate $S$ is the price of $Y$ in terms of that of $X$. The LT acknowledges that rate formation does not take place in the AMMs.
	
	The depth $\kappa$ of the pool is assumed to be constant during the execution window $[0,T]$, where $T>0$\,. Despite intraday changes in the depth of the pool, this assumption does not have a material effect on the LT's execution problem. Currently, the activity and liquidity in AMMs is such that the depth $\kappa$ is constant for periods of time which are longer than the trading horizon $T$ considered by the LT.
	\footnote{Clearly, the LT cannot know when a change in liquidity of the pool occurs. Thus, there is the possibility that within the trading window of the execution programme the value of $\kappa$ changes.}  
	
	In the future, activity in AMMs is expected to increase, and so will the informational content of the rates implied by the pool of AMMs. An increase in activity of the pool would affect our modelling choices in two ways. One, the innovations in depth $\kappa$ will occur more often, so the value of $\kappa$ cannot be regarded as constant throughout the execution window. Two, the  oracle rate becomes redundant because the rates in the AMMs become efficient, so they incorporate all the information available to market participants -- i.e., the discrepancies with rates and prices in other trading venues are negligible and economically insignificant. Therefore, below in  Section \ref{sec:Model2} we propose Model II to solve the LT's execution problem with stochastic dynamics for the depth $\kappa$ and without an external oracle rate. 
	
	We proceed with the setup of Model I, for which  we fix a filtered probability space \newline $\left(\Omega, \mathcal F, \P; \F = (\mathcal F_t)_{t \in [0,T]} \right)$ satisfying the usual conditions, where $\F$ is the natural filtration generated by the collection of observable stochastic processes that we define below.
	
	The LT must liquidate a position $\tilde{y}_0 $ in asset $Y$ over the period of time $[0,T]$, and her wealth is valued in terms of asset $X$. We introduce the progressively measurable oracle rate process $(S_t)_{t \in [0,T]}$ with dynamics 
	\begin{align} \label{eq:SProcess}
		dS_t = \sigma\, S_t\, dW_t\,, 
	\end{align} where the volatility parameter $\sigma$ is a nonnegative constant, and $(W_t)_{t \in [0,T]}$ is a standard Brownian motion.
	
	As discussed above, when an LT sends a trade to the AMM, the rate impact received by the order is encoded in the trading function $f(x,y)$. As in traditional models for optimal execution (see e.g., \cite{cartea2015book} and  \cite{gueant2016book}), the rate impact received by the LT's trade is a function of the trading speed. In AMMs, rate impact is a function of the trading speed and of the rate $Z$ and the depth $\kappa$. Specifically, we write the difference between the execution rate $\tilde Z$  and the instantaneous rate $Z$ as in \eqref{eq:execratesApproxCPMM_nu}, i.e.,
	\begin{equation}\label{eqn:price impact}
		\tilde Z - Z= - \frac{\eta}{\kappa}\, Z^{3/2}\,\nu\,.
	\end{equation}
	
	The term on the right-hand side of \eqref{eqn:price impact} is the rate impact function of the AMM, which is determined by the convexity of the level function $\varphi$, see \eqref{eq:execcostsApproxCPMM}, and depends on the instantaneous rate $Z_t$ of the pool at the time the liquidity taking order is executed. The key difference between the functional form of the execution costs in \eqref{eqn:price impact} and those in the equity LOB literature is that in general, the price impact functions proposed for LOB models do not depend on the price of the asset, see \cite{cartea2015book}. 

    The LT trades at the speed $(\nu_t)_{t \in [0,T]}$, so her inventory $(\tilde y_t)_{t \in [0,T]}$ evolves as 
	\begin{align} \label{eq:ytildeProcess_modelI}
		d\tilde y_t = -\nu_t \,dt\, ,
	\end{align}
	where, for simplicity, trading fees are zero. We do not restrict the speed in \eqref{eq:ytildeProcess_modelI} to be positive; if $\nu>0$ the LT sells the asset and if $\nu<0$ the LT buys the asset. When the initial inventory is $\tilde{y}_0>0$ (resp. $\tilde{y}_0=0$) the LT executes a liquidation (resp. speculation) programme.
	
	When an LT trades at speed $\nu$, the quantity of $Y$ swapped at every instant in time is given by $\nu\,dt$, so the dynamics of her holdings in asset $X$ are given by 
	\begin{equation}\label{eq:wealth}
		\begin{split}
			d\tilde x_t &= \tilde Z_t\, \nu_t\,dt=\left(Z_t-\frac{\eta}{\kappa} \, Z_{t}^{3/2} \,\nu_{t}\right)\,\nu_t\,dt\, .
		\end{split}
	\end{equation}
	Here, the execution cost is stochastic and its dynamics are known; see \cite{barger2019optimal} and \cite{fouque2021optimal}  for similar frameworks where the price impact of trading is stochastic.
	
	In the absence of market frictions, continuous arbitrage between the oracle rate and the rate quoted in the AMM would  make rates converge so that $S_t = Z_t$ at any time $t$. However, exchanges and AMMs are not frictionless, and the oracle rate is the most efficient rate (i.e., reflects all the information available) so we consider the following dynamics for $\left(Z_t\right)_{t\in[0,T]}:$ 
	\begin{align}
		\label{eq:ZProcess}
		dZ_{t}=&\beta\left(S_{t}-Z_{t}\right)dt+\gamma\,Z_{t}\,dB_{t}\ ,
	\end{align}
	where $\beta > 0$ is the mean-reverting parameter, $\gamma > 0$ quantifies the dispersion induced by the trading flow that drives the AMM instantaneous rate $Z_t$ away from the oracle rate $S_t,$ and $(B_t)_{t \in [0,T]}$ is a standard Brownian motion independent of $(W_t)_{t \in [0,T]}.$

	
	\subsection{Optimal trading strategy: execution and statistical arbitrage\label{sec:model:optimal}}

	The LT maximises her expected terminal wealth in units of $X$ while penalising inventory in $Y$. The set of admissible strategies is
	\begin{align}
		\label{def:admissibleset_t_modelI}
		\mathcal A_t = \left\{ (\nu_s)_{s \in [t,T]},\ \R\textrm{-valued},\ \F\textrm{-adapted, and } \int_t^T |\nu_s|^2 \,ds < +\infty, \ \ \P\textrm{-a.s.}  \right\}\,.
	\end{align}
	We write $\mathcal A := \mathcal A_0$ and let $\nu\in\Ac$. The performance criterion of the LT, who trades at speed $\nu$, is the function $u^{\nu}\colon[0,T] \times \R \times \R \times \R_{++} \times\R_{++} \rightarrow \R\ $ given by
	\begin{align}
		\label{eq:perfcriteria_modelI}
		u^{\nu}(t,\tilde{x},\tilde{y},Z,S)=\E_{t,\tilde{x},\tilde{y},Z,S}\left[\tilde{x}_{T}^{\nu}+\tilde{y}_{T}^{\nu}\,Z_T-\alpha\,\left(\tilde{y}_{T}^{\nu}\right)^{2}-\phi\,\int_{t}^{T}\left(\tilde{y}_{s}^{\nu}\right)^{2}\,ds\right]\,,
	\end{align}
	and the LT's value function $u:[0,T] \times \R \times \R \times \R_{++} \times\R_{++} \rightarrow \R$ is
	\begin{align}
		\label{eq:valuefunc_modelI}
		u(t,\tilde{x},\tilde{y},Z,S)=\underset{\nu\in\mathcal{A}}{\sup} \, u^{\nu}\left(t,\tilde{x},\tilde{y},Z,S\right)\,.
	\end{align}
	
	The first term on the right-hand side of \eqref{eq:perfcriteria_modelI} represents the LT's holdings in asset $X$ at the end of the trading window. The second term represents the LT's earnings from liquidating her remaining inventory at the final time $T$ at rate $Z_T$. The third term represents an inventory penalty where the parameter $\alpha > 0$ quantifies the aversion of the LT to holding non-zero inventory at time $T$; the units of $\phi$ are such that the penalty is in units of $X$. Finally, the last term on the right-hand side of \eqref{eq:perfcriteria_modelI} represents a running inventory penalty where the parameter $\phi \geq 0$ quantifies the urgency of the LT to liquidate inventory; the units of $\alpha$ are such that the penalty is in units of $X$. Note that the third term on the right-hand side of \eqref{eq:perfcriteria_modelI} is sometimes interpreted in the algorithmic trading literature as the `cost' of liquidating the final inventory $\tilde y_T$ at time $T.$ Using this interpretation, and in light of the results of the previous section, this term should be proportional to $Z_T^{3/2}$. Here we interpret this term as penalisation term and we observe that for $\alpha$ sufficiently large, $\alpha\,\left(\tilde{y}_{T}^{\nu}\right)^{2}$ is greater than expected execution costs of terminal inventory.
	
	In Appendix \ref{sec:APX:model}, we use the tools of stochastic control to study the optimisation problem in \eqref{eq:perfcriteria_modelI}. The functional form of the execution costs leads to a system of PDEs, one of which is a semilinear PDE which we cannot  solve in closed-form; see \eqref{eq:ANX:system}. The optimal trading speed in feedback form is a function of the solution to the semilinear PDE, see \eqref{eq:ANX:optimalspeed}, and one can compute the optimal trading speed with a numerical scheme. However, in our case, the numerical scheme is computationally expensive because the semilinear PDE requires a thin grid and a linearisation iterative method at each time step to transform the nonlinear problem into a sequence of linear problems. 
	
	We refer to the optimal strategy obtained with a numerical scheme as the numerical approximation strategy. Clearly, the numerical approximation strategy takes too long both to compute and to implement by the LT in real time. In practice, the profitability of execution and statistical arbitrage strategies relies on computing the strategy and instructing the AMM within very short periods of time (e.g., milliseconds).  Thus, because speed is paramount for LTs, below we derive a strategy in closed-form  that can be deployed by the LT in real time. We refer to this strategy as the closed-form approximation strategy. Finally, to assess the precision of the closed-form approximation strategy, Subsection \ref{sec:model:comparison} compares it with the numerical approximation strategy \eqref{eq:ANX:optimalspeed} derived in \ref{sec:APX:model}.
	
	\subsection{Constant impact parameter strategy}
	
	
	
	To obtain a trading scheme that can be implemented by the  LT in real time, we first derive a strategy where the impact parameter of the execution cost is deterministic. We use this  strategy as the building block for the LT's closed-form approximation strategy we derive below. Accordingly, we write the execution cost in \eqref{eqn:price impact} as
	\begin{equation}\label{eq:impact_deterministic}
		\tilde Z - Z= - \eta\,\zeta\,\nu\,,
	\end{equation}
	where $\zeta > 0$ is the impact parameter and recall that the value of $\eta$ depends on the LT's trading frequency. With fixed executions costs,  the LT can derive a closed-form optimal trading strategy $\left(\nu^{\star,\zeta}_t\right)_{t \in [0,T]}$ for a given value of the parameter $\zeta$.
	
	For each $\zeta$ the set of admissible strategies is 
	\begin{align}
		\label{def:admissibleset_t}
		\mathcal A_t^{\zeta} = \left\{ (\nu_s)_{s \in [t,T]},\ \R\textrm{-valued},\ \mathbbm F\textrm{-adapted, and } \int_t^T |\nu_s|^2 \,ds < +\infty, \ \ \mathbbm P\textrm{-a.s.}  \right\},    
	\end{align}
	and we write $\mathcal{A}^{\zeta}:= \mathcal A_0^{\zeta}.$ We consider an optimal trading problem in which the LT trades at speed $\left(\nu_t^{\zeta}\right)_{t \in [0,T]}$, so the inventory $\left(\tilde y_t^{\zeta}\right)_{t \in [0,T]}$ evolves as
	\begin{align}
		\label{eq:ytildeProcessjN}
		d\tilde y^{\zeta}_t = -\nu^{\zeta}_t \,dt\, ,
	\end{align}
	where we do not restrict the speed to be positive, and recall that trading fees are zero.
	
	When an LT trades at speed $\nu^{\zeta}$, the quantity of $Y$ swapped at every instant in time is given by $\nu^{\zeta}\,dt$, so the dynamics of the LT's holdings in asset $X$ are given by
	\begin{equation}\label{eq:wealth_AC}
		\begin{split}
			d\tilde x_t^{\zeta} &= \tilde Z_t\, \nu_t^{\zeta}\,dt 
			=\left(Z_t-\eta \, \zeta \,\nu^{\zeta}_{t}\right)\,\nu^{\zeta}_t\,dt\, .
		\end{split}
	\end{equation}
	
	Let $\nu^{\zeta}\in\Ac^{\zeta}$. The performance criterion of the LT, who trades at speed $\nu^{\zeta}$\,, is a function $u^{\nu^{\zeta}}\colon[0,T] \times \R^4 \rightarrow \R\ $ given by  
	\begin{align}
		\label{eq:perfcriteria}
		u^{\nu^{\zeta}}(t,\tilde{x},\tilde{y},Z,S)=\E_{t,\tilde{x},\tilde{y},Z,S}\left[\tilde{x}_{T}^{\nu^\zeta}+\tilde{y}_{T}^{\nu^\zeta}\,Z_T-\alpha\,\left(\tilde{y}_{T}^{\nu^\zeta}\right)^{2}-\phi\,\int_{t}^{T}\left(\tilde{y}_{s}^{\nu^\zeta}\right)^{2}\,ds\right]\ ,
	\end{align}
	and the value function $u_{\zeta}\colon[0,T] \times \R^4\rightarrow \R$ of the LT is given by 
	\begin{align}
		\label{eq:valuefunc}
		u_{\zeta}(t,\tilde{x},\tilde{y},Z,S)=\underset{\nu^{\zeta}\in\mathcal{A}^{\zeta}}{\sup}\,u^{\nu^{\zeta}}(t,\tilde{x},\tilde{y},Z,S)\ .
	\end{align}
	
	Next, we solve the above dynamic optimisation problem for each value of $\zeta>0$ and derive the optimal liquidation strategy. The value function \eqref{eq:valuefunc} is the unique classical solution to the Hamilton--Jacobi--Bellman (HJB) equation 
	\begin{equation}\label{eq:hjbu}
		\begin{split}
			0=\,&\partial_{t}w_{\zeta}-\phi\,\tilde{y}^{2}+\beta\,\left(S-Z\right)\,\partial_{Z}w_{\zeta}+\frac{1}{2}\,\gamma^{2}\,Z^{2}\,\partial_{ZZ}w_{\zeta}+\frac{1}{2}\,\sigma^{2}\,S^{2}\,\partial_{SS}w_{\zeta}\\
			&+\sup_{\nu\in\R}\left(\left(\nu\, Z-\eta\,\zeta\,\nu^{2}\right)\partial_{\tilde{x}}w_{\zeta}-\nu\,\partial_{\tilde{y}}w_{\zeta}\right)\, ,
		\end{split}
	\end{equation}
	with terminal condition 
	\begin{align}
		\label{eq:termcondu}
		w_{\zeta}(T,\tilde{x},\tilde{y},Z,S)=\tilde{x}+\tilde{y}\,Z-\alpha\,\tilde{y}^{2}\ .
	\end{align}
	
	The form of the terminal condition \eqref{eq:termcondu} suggests the ansatz 
	\begin{align}
		\label{eq:ansatz1}
		w_{\zeta}(t,\tilde{x},\tilde{y},Z,S)=\tilde{x}+\tilde{y}\,Z+\theta_{\zeta}(t,\tilde{y},Z,S)\,,
	\end{align}
	where the first two terms on the right-hand side represent the mark-to-market value of the LT's holdings at the instantaneous rate of the AMM. The last term represents the additional value obtained by the LT when following the optimal strategy. The ansatz  is justified by the following proposition, for which a proof is straightforward.
	\begin{prop}
		Assume there exists a function $\theta_{\zeta}\in C^{1,1,2,2}\left([0,T] \times \R^3\right)$ which solves 
		\begin{equation}\label{eq:hjbtheta1}
			\begin{split}
				0=\,&\partial_{t}\theta_{\zeta}-\phi\,\tilde{y}^{2}+\beta\,\left(S-Z\right)\,\left(\tilde{y}+\partial_{Z}\theta_{\zeta}\right)+\frac{1}{2}\,\gamma^{2}\,Z^{2}\,\partial_{ZZ}\theta_{\zeta}+\frac{1}{2}\,\sigma^{2}\,S^{2}\,\partial_{SS}\theta_{\zeta}\\
				&+\sup_{\nu\in\R}\left(-\eta\,\zeta\,\nu^{2}-\nu\,\partial_{\tilde{y}}\theta_{\zeta}\right)\ ,
			\end{split}
		\end{equation}
		on $[0,T)\times \R^3$, with terminal condition 
		\begin{equation}\label{eq:terminal1}
			\theta_{\zeta}(T,\tilde{y},Z,S)= -\alpha\,\tilde{y}^2\ .
		\end{equation}
		Then, the function $w_\zeta\colon[0,T] \times \R^4 \rightarrow \R $ defined by 
		\begin{align}
			w_{\zeta}(t,\tilde{x},\tilde{y},Z,S)=\tilde{x}+\tilde{y}\,Z+\theta_{\zeta}(t,\tilde{y},Z,S)\ ,
		\end{align}
		is a solution to \eqref{eq:hjbu} on $[0,T)\times \R^4 $, with terminal condition \eqref{eq:termcondu}\ .
	\end{prop}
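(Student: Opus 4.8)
The plan is to verify the claim by direct substitution of the ansatz into the HJB equation \eqref{eq:hjbu}, reducing it to the PDE \eqref{eq:hjbtheta1} for $\theta_\zeta$. First I would compute the partial derivatives of $w_\zeta(t,\tilde x,\tilde y,Z,S)=\tilde x+\tilde y\,Z+\theta_\zeta(t,\tilde y,Z,S)$. Since the first two terms are linear in $\tilde x$ and in the product $\tilde y\,Z$, one has $\partial_t w_\zeta=\partial_t\theta_\zeta$, $\partial_{\tilde x}w_\zeta=1$, $\partial_{\tilde y}w_\zeta=Z+\partial_{\tilde y}\theta_\zeta$, $\partial_Z w_\zeta=\tilde y+\partial_Z\theta_\zeta$, $\partial_{ZZ}w_\zeta=\partial_{ZZ}\theta_\zeta$, and $\partial_{SS}w_\zeta=\partial_{SS}\theta_\zeta$. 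The hypothesis $\theta_\zeta\in C^{1,1,2,2}\left([0,T]\times\R^3\right)$ guarantees that all these derivatives exist and are continuous, so $w_\zeta$ has the regularity required of a classical solution.

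Next I would substitute these into the right-hand side of \eqref{eq:hjbu}. The crucial step—and the only place where the specific form of the ansatz matters—is the Hamiltonian term. Using $\partial_{\tilde x}w_\zeta=1$ and $\partial_{\tilde y}w_\zeta=Z+\partial_{\tilde y}\theta_\zeta$, the supremand becomes $\left(\nu\,Z-\eta\,\zeta\,\nu^2\right)-\nu\left(Z+\partial_{\tilde y}\theta_\zeta\right)=-\eta\,\zeta\,\nu^2-\nu\,\partial_{\tilde y}\theta_\zeta$, so the two $\nu\,Z$ contributions cancel exactly. This cancellation is precisely what the linear term $\tilde y\,Z$ in the ansatz is designed to produce: the mark-to-market part of the value function absorbs the instantaneous trading revenue $\nu\,Z$, leaving a Hamiltonian that depends on $\theta_\zeta$ only through $\partial_{\tilde y}\theta_\zeta$. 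Collecting the remaining terms, the diffusion terms in $Z$ and $S$ carry over unchanged because the added terms $\tilde x$ and $\tilde y\,Z$ contribute no second derivatives, while $\beta\left(S-Z\right)\partial_Z w_\zeta$ produces the extra summand $\beta\left(S-Z\right)\tilde y$; everything else reproduces \eqref{eq:hjbtheta1} verbatim on $[0,T)\times\R^3$.

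Finally I would check the terminal condition: evaluating the ansatz at $t=T$ and using \eqref{eq:terminal1} gives $w_\zeta(T,\tilde x,\tilde y,Z,S)=\tilde x+\tilde y\,Z-\alpha\,\tilde y^2$, which is exactly \eqref{eq:termcondu}. Since by hypothesis $\theta_\zeta$ solves \eqref{eq:hjbtheta1}, the right-hand side of \eqref{eq:hjbu} vanishes identically under the substitution, so $w_\zeta$ is a solution of \eqref{eq:hjbu} with the correct terminal data. There is essentially no obstacle here: the verification is a routine chain-rule computation, and the single substantive observation is the cancellation of the $\nu\,Z$ terms, which is what makes the problem for $\theta_\zeta$ genuinely lower-dimensional (no $\tilde x$-dependence) than the original problem for $w_\zeta$.
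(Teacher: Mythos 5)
Your proof is correct and is precisely the ``straightforward'' verification the paper has in mind (the paper omits it, stating only that the proof is straightforward): direct substitution of the ansatz, the cancellation of the $\nu\,Z$ terms in the Hamiltonian via $\partial_{\tilde x}w_\zeta=1$ and $\partial_{\tilde y}w_\zeta=Z+\partial_{\tilde y}\theta_\zeta$, and the check of the terminal condition. Nothing is missing.
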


	Next, solve the first order condition of the supremum term in \eqref{eq:hjbtheta1} to obtain the LT's optimal trading speed in feedback form
	\begin{equation}\label{eq:optimalspeed}
		\nu^{\zeta,\star}=-\frac{1}{2\,\eta\,\zeta}\,\partial_{\tilde{y}}\theta_\zeta\ .
	\end{equation}
	Substitute \eqref{eq:optimalspeed} into \eqref{eq:hjbtheta1} to write
	\begin{equation}\label{eq:hjbtheta2}
		\begin{split}
			\partial_{t}\theta_{\zeta}=&-\phi\,\tilde{y}^{2}+\beta\,\left(S-Z\right)\,\left(\tilde{y}+\partial_{Z}\theta_{\zeta}\right)+\frac{1}{2}\,\gamma^{2}\,Z^{2}\,\partial_{ZZ}\theta_{\zeta}+\frac{1}{2}\,\sigma^{2}\,S^{2}\,\partial_{SS}\theta_{\zeta}\\
			&+\frac{1}{4\,\eta\,\zeta}\,\left(\partial_{\tilde{y}}\theta_{\zeta}\right)^{2}\ .
		\end{split}
	\end{equation}
	Finally, simplify \eqref{eq:hjbtheta2} with the ansatz 
	\begin{align}
		\label{eq:ansatz2}
		\theta_{\zeta}(t,\tilde{y},Z,S)=&\,A_{\zeta}(t)\,\tilde{y}^{2}+B_{\zeta}(t)\,Z\,\tilde{y}+C_{\zeta}(t)\,\tilde{y}\,S+D_{\zeta}(t)\,\tilde{y}+ E_{\zeta}(t)\,Z^{2}\\
		&+F_{\zeta}(t)\,S^{2} +G_{\zeta}(t)\,Z\,S+H_{\zeta}(t)\,Z+I_{\zeta}(t)\,S+J_{\zeta}(t)\ ,
	\end{align}
	which is justified by the following proposition, for which a proof is straightforward.
	\begin{prop}
		Assume there exist functions $A_{\zeta} \in C^{1}([0 ,T]), $ $\ B_{\zeta} \in C^{1}([0 ,T]), $ $\ C_{\zeta} \in C^{1}([0 ,T]), $ $\ D_{\zeta} \in C^{1}([0 ,T]), $ $\ E_{\zeta} \in C^{1}([0 ,T]), $ $\ F_{\zeta} \in C^{1}([0 ,T]), $ $\ G_{\zeta} \in C^{1}([0 ,T]), $ $\ H_{\zeta} \in C^{1}([0 ,T]), $ $\ I_{\zeta} \in C^{1}([0 ,T]),$ and $J_{\zeta} \in C^{1}([0 ,T])$ which solve the system of ODEs
		\begingroup
		\allowdisplaybreaks
		\begin{equation}\label{eq:ODEsystem}
			\left\{
			\begin{aligned}
				A'_{\zeta}(t) & =\phi-\frac{A_{\zeta}(t)^{2}}{\eta\,\zeta}\ ,\\
				B'_{\zeta}(t) & =\beta+\beta\,B_{\zeta}(t)-\frac{A_{\zeta}(t)\,B_{\zeta}(t)}{\eta\,\zeta}\ ,\\
				C'_{\zeta}(t) & =-\beta-\beta\,B_{\zeta}(t)-\frac{A_{\zeta}(t)\,C_{\zeta}(t)}{\eta\,\zeta}\ ,\\
				D'_{\zeta}(t) & =-\frac{A_{\zeta}(t)\,D_{\zeta}(t)}{\eta\,\zeta}\ ,\\
				E_{\zeta}'(t) & =-(\gamma^{2}-2\beta)\,E_{\zeta}(t)-\frac{B_{\zeta}(t)^{2}}{4\,\eta\,\zeta}\,,\\
				F'_{\zeta}(t) & =-\beta\,G_{\zeta}(t)-\sigma^{2}\,F_{\zeta}(t)-\frac{C_{\zeta}(t)^{2}}{4\,\eta\,\zeta}\ ,\\
				G'_{\zeta}(t) & =-2\,\beta\,E_{\zeta}(t)+\beta\,G_{\zeta}(t)-\frac{B_{\zeta}(t)\,C_{\zeta}(t)}{2\,\eta\,\zeta}\,,\\
				H'_{\zeta}(t) & =\beta\, H_{\zeta}(t)-\frac{B_{\zeta}(t)\,D_{\zeta}(t)}{2\,\eta\,\zeta}\,,\\
				I'_{\zeta}(t) & =-\beta\,H_{\zeta}(t)-\frac{C_{\zeta}(t)\,D_{\zeta}(t)}{2\,\eta\,\zeta}\,,\\
				J'_{\zeta}(t) & =-\frac{D_{\zeta}(t)^{2}}{4\,\eta\,\zeta}\,,
			\end{aligned}
			\right.
		\end{equation}%
		\endgroup
		on $[0,T)$, with terminal conditions 
		\begin{align}\label{eq:termcondODEsystem_j}
			A_{\zeta}(T) &= - \alpha\ , \\  B_{\zeta}(T) = C_{\zeta}(T) = D_{\zeta}(T) = E_{\zeta}(T) = F_{\zeta}(&T) = G_{\zeta}(T) = H_{\zeta}(T) = I_{\zeta}(T) = J_{\zeta}(T) =0\ .
		\end{align} 
		Then, the function $\theta_{\zeta}:[0,T]\times\R^3\to \R$ defined by 
		\begin{align}
			\theta_{\zeta}(t,\tilde{y},Z,S)=&\,A_{\zeta}(t)\,\tilde{y}^{2}+B_{\zeta}(t)\,Z\,\tilde{y}+C_{\zeta}(t)\,\tilde{y}\,S+D_{\zeta}(t)\,\tilde{y}+ E_{\zeta}(t)\,Z^{2}\\
			&+F_{\zeta}(t)\,S^{2} +G_{\zeta}(t)\,Z\,S+H_{\zeta}(t)\,Z+I_{\zeta}(t)\,S+J_{\zeta}(t)\ ,
		\end{align}
		solves \eqref{eq:hjbtheta1} over $[0,T)\times\R^3$ with terminal condition \eqref{eq:terminal1}\ .
	\end{prop}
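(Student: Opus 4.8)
The plan is to verify the claim directly, since the proposition is an ansatz-matching statement rather than an existence result: substitute the quadratic ansatz \eqref{eq:ansatz2} into the semilinear PDE \eqref{eq:hjbtheta2} and show that the resulting polynomial identity in $(\tilde y, Z, S)$ holds for every $(\tilde y, Z, S)$ exactly when the coefficient functions satisfy the system \eqref{eq:ODEsystem}. First I would record the partial derivatives appearing in \eqref{eq:hjbtheta2}: writing $A,\dots,J$ for $A_\zeta(t),\dots,J_\zeta(t)$, one has $\partial_t\theta_\zeta = A'\tilde y^2 + B'Z\tilde y + \cdots + J'$, $\partial_{\tilde y}\theta_\zeta = 2A\tilde y + BZ + CS + D$, $\partial_Z\theta_\zeta = B\tilde y + 2EZ + GS + H$, $\partial_{ZZ}\theta_\zeta = 2E$, and $\partial_{SS}\theta_\zeta = 2F$. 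Note that $\theta_\zeta$ is polynomial in the spatial variables and $C^1$ in $t$ by hypothesis, so it automatically has the regularity demanded in \eqref{eq:hjbtheta1}.

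The key structural observation is that the span of the ten monomials $\{1,\tilde y, Z, S, \tilde y^2, Z\tilde y, \tilde y S, Z^2, S^2, ZS\}$ is invariant under the operator on the right-hand side of \eqref{eq:hjbtheta2}. The linear part is immediate: $\beta(S-Z)(\tilde y + \partial_Z\theta_\zeta)$ multiplies the affine expression $(1+B)\tilde y + 2EZ + GS + H$ by $(S-Z)$ and hence produces only $\tilde y S, ZS, S^2, S, \tilde y Z, Z^2, Z$, while the diffusion terms $\tfrac12\gamma^2 Z^2\partial_{ZZ}\theta_\zeta$ and $\tfrac12\sigma^2 S^2\partial_{SS}\theta_\zeta$ contribute $\gamma^2 E Z^2$ and $\sigma^2 F S^2$. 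The only term requiring care is the nonlinear one: since $\partial_{\tilde y}\theta_\zeta$ is affine, its square $\tfrac{1}{4\eta\zeta}(2A\tilde y + BZ + CS + D)^2$ expands into exactly the monomials $\tilde y^2, Z^2, S^2, \tilde y Z, \tilde y S, Z, S, \tilde y$ and a constant, with coefficients such as $A^2/(\eta\zeta)$, $AB/(\eta\zeta)$, $BC/(2\eta\zeta)$, and so on. Tracking these cross-terms correctly is the only place bookkeeping can go wrong, so I would expand this square first and tabulate the contribution of each product to each monomial.

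Because the ten monomials are linearly independent as functions on $\R\times\R_{++}\times\R_{++}$, the identity obtained from \eqref{eq:hjbtheta2} holds for all $(\tilde y, Z, S)$ if and only if the coefficient of each monomial vanishes. Equating the coefficient of $\tilde y^2$ yields the Riccati equation $A'_\zeta = \phi - A_\zeta^2/(\eta\zeta)$; the coefficients of $Z\tilde y$, $\tilde y S$ and $\tilde y$ give the equations for $B_\zeta, C_\zeta, D_\zeta$; and the coefficients of $Z^2, S^2, ZS, Z, S$ and the constant give the remaining six equations, reproducing \eqref{eq:ODEsystem} line by line. Finally, evaluating the ansatz at $t=T$ and matching against the terminal condition \eqref{eq:terminal1} forces $A_\zeta(T)=-\alpha$ with all other coefficient functions vanishing at $T$, which is precisely \eqref{eq:termcondODEsystem_j}. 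I anticipate no analytic obstacle: the argument is a closed, finite computation, and the genuine risk is merely a sign or factor slip in the quadratic-gradient expansion, which is why I would treat that single term most carefully.
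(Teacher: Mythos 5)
Your approach---direct substitution of the ansatz \eqref{eq:ansatz2} and matching coefficients of the ten linearly independent monomials $\{1,\tilde y, Z, S, \tilde y^2, Z\tilde y, \tilde y S, Z^2, S^2, ZS\}$---is exactly the ``straightforward'' verification the paper has in mind (the paper omits the proof entirely), and your structural observations (the gradient $\partial_{\tilde y}\theta_\zeta$ is affine, the monomial span is invariant under the operator) are the right ones.

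One sign-bookkeeping trap, however, should be made explicit, because as written your proof cites the wrong equation. The proposition's conclusion is that $\theta_\zeta$ solves \eqref{eq:hjbtheta1}; since $\eta\,\zeta>0$ makes the quadratic in $\nu$ strictly concave, its supremum term equals $\frac{1}{4\eta\zeta}\left(\partial_{\tilde y}\theta_\zeta\right)^2$ (this deserves one sentence of justification in the proof), so the equation to verify is
\begin{equation}
0=\partial_{t}\theta_{\zeta}-\phi\,\tilde{y}^{2}+\beta\,\left(S-Z\right)\left(\tilde{y}+\partial_{Z}\theta_{\zeta}\right)+\tfrac{1}{2}\,\gamma^{2}\,Z^{2}\,\partial_{ZZ}\theta_{\zeta}+\tfrac{1}{2}\,\sigma^{2}\,S^{2}\,\partial_{SS}\theta_{\zeta}+\tfrac{1}{4\,\eta\,\zeta}\left(\partial_{\tilde{y}}\theta_{\zeta}\right)^{2}\,.
\end{equation}
Matching coefficients in this equation yields \eqref{eq:ODEsystem} exactly as you state: the $\tilde y^2$ coefficient gives $A'_{\zeta}=\phi-A_{\zeta}^2/(\eta\,\zeta)$, and the other nine monomials give the remaining equations with the signs of the paper. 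By contrast, \eqref{eq:hjbtheta2} as printed in the paper is not equivalent to \eqref{eq:hjbtheta1}: the term $\partial_t\theta_\zeta$ was moved to the left-hand side without negating the remaining terms, so literal coefficient matching in \eqref{eq:hjbtheta2} produces the system with every sign flipped ($A'_{\zeta}=-\phi+A_{\zeta}^2/(\eta\,\zeta)$, and so on), contradicting \eqref{eq:ODEsystem}. Your stated Riccati equation is the correct one, so you evidently computed with the correct sign convention; but since your write-up anchors the verification to \eqref{eq:hjbtheta2}, a referee reading it literally would find an inconsistency. Anchor the verification to \eqref{eq:hjbtheta1} (with the supremum evaluated in closed form) and the argument is complete; the terminal-condition matching and the regularity remark are fine as they stand.
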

	
	The system of ODEs \eqref{eq:ODEsystem} can be solved sequentially. First, solve the ODE in $A_{\zeta}$, and then solve for $B_{\zeta}\ , \ C_{\zeta}\ , \ D_{\zeta}\ ,\ E_{\zeta}\ , \ G_{\zeta}\ , \ F_{\zeta}\ , \ H_{\zeta}\ ,\ I_{\zeta}\ , \ J_{\zeta}$\ , respectively. The system in \eqref{eq:ODEsystem} admits a unique solution, which is given by
	\begin{equation}\label{eq:solution_system_ode_conve}
		\left\{
		\begin{aligned}
			A_{\zeta}(t) & =\sqrt{\phi\,\eta\,\zeta}\,\tanh\left(\frac{\sqrt{\phi}}{\sqrt{\eta\,\zeta}}\,(T-t)+\arctanh\left(-\frac{\alpha}{\sqrt{\phi\,\eta\,\zeta}}\right)\right)\ ,\\
			B_{\zeta}(t) & =\int_{t}^{T}\beta\exp\left(\int_{t}^{s}\left(\beta-\frac{1}{\eta\,\zeta}A_{\zeta}(u)\right)du\right)ds\, ,\\
			C_{\zeta}(t) & =-B_{\zeta}(t)\, ,\\
			D_{\zeta}(t) & = 0\\
			E_{\zeta}(t) & =-\int_{t}^{T}\exp\left(-(\gamma^{2}-2\,\beta)(t-s)\right)\frac{1}{4\,\eta\,\zeta}\,B_{\zeta}(s)^{2}ds\, ,\\
			F_{\zeta}(t) & =-\int_{t}^{T}\exp\left(-\sigma^{2}(t-s)\right)\,\left(\beta\, G_{\zeta}(s)+\frac{1}{4\,\eta\,\zeta}\,C_{\zeta}(s)^{2}\right)ds\, ,\\
			G_{\zeta}(t) & =-\int_{t}^{T}\exp\left(\beta\,(t-s)\right)\left(2\,\beta\, E_{\zeta}(s)-\frac{1}{2\,\eta\,\zeta}\,B_{\zeta}(s)^{2}\right)\,ds\, ,\\
			H_{\zeta}(t) & =0\,,\\
			I_{\zeta}(t) & =0\,,\\
			J_{\zeta}(t) & =0\,.
		\end{aligned}
		\right.
	\end{equation}
	
	The optimal trading strategy in feedback form \eqref{eq:optimalspeed} over $[0,T]\times\R^3$ is now given by
	\begin{equation}
		\label{eq:optimalspeed_feedback_j2}
		\nu^{\zeta,\star}\left(t,\tilde{y},Z,S\right)=-\frac{1}{\eta\,\zeta}\,A_{\zeta}(t)\,\tilde y+\frac{1}{2\,\eta\,\zeta}\,B_{\zeta}(t)\,(S-Z)\ .
	\end{equation}
	
	The first term on the right-hand side of \eqref{eq:optimalspeed_feedback_j2} is the optimal liquidation rate in the continuous Almgren-Chriss model. The second term is an arbitrage component; it accounts for the spread between the instantaneous rate $Z$ and the oracle rate $S$.
	
	\subsection{The closed-form approximation strategy \label{section:discussion}}
	
	Here,  we use a family of closed-form strategies of the type in \eqref{eq:optimalspeed_feedback_j2} to derive a piecewise-defined trading strategy which approximates the optimal trading speed in feedback form \eqref{eq:ANX:optimalspeed2}. Specifically, we partition the space of the rate $Z$ into strips and define a piecewise strategy which uses a different impact parameter $\zeta$ in each different strip. Finally, we show that as the width of the strip becomes arbitrarily small, the piecewise strategy converges to the closed-form approximation strategy. 
	
	Let $\left\{Z_0^N, \dots,Z_N^{N}\right\}$ be a partition of $\left[\underline{Z},\overline{Z}\right],$ where $0<\underline{Z}<\overline{Z}$. Here $\underline{Z} $ and $ \overline{Z}$ are are such that the LT has high confidence that the rate $Z$ will be within the range $\left[\underline{Z},\overline{Z}\right]$ during the execution window. Next, for each $N\in\N$ and  $j\in\{0,\dots,N\}$ we define
	\begin{equation}\label{eq:partitionedConvexity}
		Z_{j}^N \coloneqq \underline{Z}+\frac{j}{N} \left(\overline{Z}-\underline{Z}\right)\quad\text{and}\quad\zeta_{j}^{N} =  \frac{1}{\kappa}\, \left(Z_{j}^N\right)^{3/2}\,.
	\end{equation}
	
	In the remainder of this section, we simplify the notation and use $\nu^{\star,j,N}$ instead of $\nu^{\zeta_{j}^{N}, \star}$ to denote the optimal trading strategy with impact parameter $\zeta_{j}^{N}$. 
	
	Whenever $Z$ is arbitrarily close to $Z_{j}^N$ the impact parameter  in \eqref{eqn:price impact} can be approximated by $\zeta_{j}^{N}$. Thus, to construct the approximate trading strategy, we first define a strategy $\nu^{\star,N}$ that uses the closed-form optimal trading speed $\nu^{\star,j,N}$ to approximate the optimal trading speed whenever the rate is close to $\zeta_{j}^{N}$. We define the piecewise-defined trading speed $\nu^{\star,N}\colon\left[0,T\right]\times\R^3\to\R$ 
	\begin{equation}\label{eq:piecewise_strategy}
		\begin{split}
			\nu^{\star,N}\left(t,\tilde{y},Z,S\right) =\,& \nu^{\star,0,N}\left(t,\tilde{y},Z,S\right)\mathbbm{1}_{Z<Z^N_1} + \sum_{j=1}^{N-1}\nu^{\star,j,N}\left(t,\tilde{y},Z,S\right)\mathbbm{1}_{Z\in[Z_j^N,Z^N_{j+1})} \\
			&+ \nu^{\star,N,N}\left(t,\tilde{y},Z,S\right)\mathbbm{1}_{Z\geq Z^N_N}\,.
		\end{split}
	\end{equation}
	The strategy $\nu^{\star,N}\left(t,\tilde{y},Z,S\right)$ has first-type discontinuity points; specifically, it is discontinuous over $[0,T]\times\R\times\{Z_j^N\}\times\R\,$ for each $j\in\{1,\dots,N\}$ because for each $\left(t,\tilde{y},Z^N_{j+1},S\right)\in[0,T]\times\R^2$ we have  $\nu^{\star,j,N}\left(t,\tilde{y},Z^N_{j+1},S\right)\neq\nu^{\star,j+1,N}\left(t,\tilde{y},Z^N_{j+1},S\right)$.

	The theorem below shows how to partition  $\left[\underline{Z},\overline{Z}\right]$ to make the discontinuities in $\nu^{\star,N}\left(t,\tilde{y},Z,S\right)$ arbitrarily small. Furthermore, when the distance between points in the partition becomes sufficiently small, the sequence of piecewise-defined optimal strategies $\left\{\nu^{\star,N}\right\}_{N\in\N}$ converges uniformly to a continuous closed-form approximation strategy which we use in our performance study of Section \ref{sec:Performance}.
	
	\begin{thm}\label{thm}
		For each $\varepsilon>0\,$ there exists $N\in\N$ such that 
		\begin{equation}\label{eq:inequality}
			\max_{j=1,\dots,N}\left|\nu^{\star,j,N}\left(t,\tilde{y},Z^N_{j+1},S\right)-\nu^{\star,j+1,N}\left(t,\tilde{y},Z^N_{j+1},S\right)\right|<\varepsilon\,.
		\end{equation}
		Furthermore, for each $N\in\N\,,$ let $\tilde{\nu}^{\star,N}\coloneqq \nu^{\star,N}\left|_{[0,T]\times\R\times[\underline{Z},\overline{Z}]\times\R}\right.$. Then, the sequence $\{\tilde{\nu}^{\star,N}\}$ converges to $\tilde{\nu}^{\star}$ uniformly in $ [0,T]\times\R\times\left[\underline{Z},\overline{Z}\right]\times\R $\,, where 
		\begin{equation}
			\label{eq:pseudooptimal}
			\tilde{\nu}^{\star}\left(t,\tilde{y},Z,S\right)=-\frac{\kappa}{\eta}\,Z^{-3/2} \,A(t,Z)\,\tilde y+\frac{\kappa}{2\,\eta}\,Z^{-3/2}\,B(t,Z)\,(S-Z)\,,
		\end{equation}
		and
		\begin{equation}\label{eq:pseudoAB}
			\begin{split}
				A(t,Z) =&\, \sqrt{\frac{\phi\,\eta\,Z^{3/2}}{\kappa}}\tanh\left(\frac{\sqrt{\phi\,\kappa}}{\sqrt{\eta\,Z^{3/2}}}t+\arctanh\left(-\frac{\alpha\,\sqrt{\kappa}}{\sqrt{\phi\,\eta\,Z^{3/2}}}\right)\right)\ ,\\
				B(t,Z) =& \,\int_{t}^{T}\beta \exp\left(\int_{s}^{t} \left(\beta-\frac{\kappa}{\eta\,Z^{3/2}}A(u,Z)\right)du\right)ds\ .\\
			\end{split}
		\end{equation}
	\end{thm}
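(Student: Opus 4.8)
\emph{Proof proposal.} The plan is to read the whole statement as a continuity-in-the-parameter result. Write $\Psi(\zeta;t,\tilde y,Z,S)$ for the right-hand side of \eqref{eq:optimalspeed_feedback_j2}, i.e.\ the closed-form speed attached to a \emph{fixed} impact parameter $\zeta>0$, whose coefficients $A_\zeta,B_\zeta$ are given by \eqref{eq:solution_system_ode_conve}. By \eqref{eq:partitionedConvexity} we have $\nu^{\star,j,N}(t,\tilde y,Z,S)=\Psi(\zeta_j^N;t,\tilde y,Z,S)$, while substituting the rate-dependent value $\zeta=Z^{3/2}/\kappa$ into $\Psi$ reproduces exactly the coefficients $A(t,Z),B(t,Z)$ of \eqref{eq:pseudoAB}, so that the candidate limit is the parameter evaluated pointwise at the current rate,
\begin{equation}
\tilde\nu^{\star}(t,\tilde y,Z,S)=\Psi\!\left(\tfrac1\kappa Z^{3/2};\,t,\tilde y,Z,S\right),\quad\text{with } A(t,Z)=A_{\zeta}(t),\ B(t,Z)=B_{\zeta}(t)\ \text{at } \zeta=Z^{3/2}/\kappa.
\end{equation}
Since $\Psi$ is affine in $(\tilde y,S)$ with coefficients $-A_\zeta(t)/(\eta\zeta)$ and $B_\zeta(t)/(2\eta\zeta)$, both the jump in \eqref{eq:inequality} and the error $\tilde\nu^{\star,N}-\tilde\nu^{\star}$ are controlled once I understand how these two coefficient maps respond to a perturbation of $\zeta$.

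The key lemma I would establish is that $\zeta\mapsto\bigl(A_\zeta(\cdot),B_\zeta(\cdot)\bigr)$ is continuous, and uniformly so on compacta. Set $\underline\zeta:=\underline Z^{3/2}/\kappa$ and $\overline\zeta:=\overline Z^{3/2}/\kappa$, so that every $\zeta_j^N$ and every $Z^{3/2}/\kappa$ with $Z\in[\underline Z,\overline Z]$ lies in $[\underline\zeta,\overline\zeta]\subset(0,\infty)$; on this interval the prefactor $1/\zeta$ is bounded and Lipschitz. The coefficients solve the triangular Riccati/linear system \eqref{eq:ODEsystem}, whose right-hand sides are smooth in $(\zeta,A,B)$. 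The delicate point to check is that the solution exists on all of $[0,T]$ for each $\zeta\in[\underline\zeta,\overline\zeta]$, equivalently that the $\tanh$/$\arctanh$ branch in $A_\zeta$ does not leave its domain; I would verify this from the sign structure of $A'_\zeta=\phi-A_\zeta^2/(\eta\zeta)$ with $A_\zeta(T)=-\alpha$, ruling out finite-time blow-up on $[0,T]$. Standard continuous dependence of ODE solutions on a parameter then makes $(\zeta,t)\mapsto(A_\zeta(t),B_\zeta(t))$ jointly continuous on the compact set $[\underline\zeta,\overline\zeta]\times[0,T]$, hence uniformly continuous; let $\omega$ denote a common modulus of continuity for the two affine coefficients of $\Psi$.

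For the first assertion I would quantify how fine the partition must be. Because $Z\mapsto Z^{3/2}$ is $C^1$ on $[\underline Z,\overline Z]$, the mean value theorem gives the uniform mesh bound $\max_{j}\bigl|\zeta_{j+1}^N-\zeta_j^N\bigr|\le \frac{3}{2\kappa}\sqrt{\overline Z}\,\frac{\overline Z-\underline Z}{N}\to0$ as $N\to\infty$. Evaluating the jump at the shared rate $Z_{j+1}^N$, the difference $\nu^{\star,j,N}(t,\tilde y,Z^N_{j+1},S)-\nu^{\star,j+1,N}(t,\tilde y,Z^N_{j+1},S)$ is affine in $(\tilde y,\,S-Z^N_{j+1})$ with coefficients that are differences of $A_\zeta(t)/(\eta\zeta)$ and $B_\zeta(t)/(2\eta\zeta)$ between $\zeta=\zeta_j^N$ and $\zeta=\zeta_{j+1}^N$; by uniform continuity these are at most $C\,\omega\bigl(\max_j|\zeta_{j+1}^N-\zeta_j^N|\bigr)$, which is driven below any prescribed $\varepsilon$ by choosing $N$ large, giving \eqref{eq:inequality}.

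The second assertion follows by the same estimate on each strip: for $Z\in[Z_j^N,Z_{j+1}^N)$ one has $|Z^{3/2}/\kappa-\zeta_j^N|\le\max_j|\zeta_{j+1}^N-\zeta_j^N|$, so the two affine coefficients of $\tilde\nu^{\star,N}-\tilde\nu^{\star}$ converge to zero uniformly in $(t,Z)\in[0,T]\times[\underline Z,\overline Z]$. The obstacle I anticipate is purely the $(\tilde y,S)$ directions: because the objects are affine with unbounded arguments, the content of ``uniform convergence'' is the uniform convergence of the coefficient functions just established, from which $\tilde\nu^{\star,N}\to\tilde\nu^{\star}$ uniformly on every set where $\tilde y$ and $S$ remain bounded; this is the sense in which \eqref{eq:pseudooptimal} is the uniform limit over $[0,T]\times\R\times[\underline Z,\overline Z]\times\R$.
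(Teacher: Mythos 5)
Your proposal is correct and follows essentially the same route as the paper's proof: both exploit that $\nu^{\star,j,N}$ and $\tilde{\nu}^{\star}$ are affine in $(\tilde y,\,S-Z)$ with coefficients $-A/(\eta\,\zeta)$ and $B/(2\,\eta\,\zeta)$, so that the jump bound \eqref{eq:inequality} and the uniform convergence both reduce to continuity of the coefficient maps on the compact rate interval together with the mesh of the partition going to zero.

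The one substantive difference is how that continuity is justified. The paper reads it off the closed forms: for fixed $t$, the maps $Z\mapsto \kappa\,Z^{-3/2}A(t,Z)$ and $Z\mapsto \kappa\,Z^{-3/2}B(t,Z)$ are compositions of continuous functions on a closed interval, hence uniformly continuous, and the final step invokes uniform continuity of $A$ and $B$ jointly on $[0,T]\times[\underline{Z},\overline{Z}]$. You instead invoke continuous dependence of the solution of the ODE system \eqref{eq:ODEsystem} on the parameter $\zeta$, which forces your extra non-blow-up check on the Riccati equation. Given that the explicit solutions \eqref{eq:solution_system_ode_conve} are already available, the paper's route is shorter; yours is more robust, since it does not rely on the $\arctanh$ expression being on the right branch (i.e.\ on $\alpha<\sqrt{\phi\,\eta\,\zeta}$), a condition the paper's formula implicitly assumes. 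Your closing caveat is also well taken and in fact applies equally to the paper's own argument: its bounds carry the factors $|\tilde{y}|$ and $|S|+\overline{Z}$, so what is actually proved there is uniform convergence of the affine coefficients in $(t,Z)$ — equivalently, uniform convergence on sets where $\tilde y$ and $S$ are bounded — even though the statement asserts uniformity over all of $[0,T]\times\R\times[\underline{Z},\overline{Z}]\times\R$. You made explicit an imprecision that the paper leaves silent.
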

	For a proof see \ref{sec:annex_proof}\,.

	\subsection{Comparison with the numerical approximation strategy \label{sec:model:comparison}}
	In this subsection, we use a Euler scheme to compute the numerical approximation strategy in \eqref{eq:ANX:system}, where execution costs are not piecewise constant. We compare this numerical solution with the closed-form approximation strategy in \eqref{eq:pseudooptimal}.
	
	The numerical approximation strategy uses a three-dimensional grid; one dimension is time, one is the rate $Z,$ and one is the oracle rate $S$. We consider a wide interval for the values of $Z$ and $S$, and use a Neumann boundary condition for both space variables (the derivatives are zero at the boundaries) because the value function should not  vary significantly for values of the rates $Z$ and $S$ that are considerably far from the observed rates. For the semilinear PDE in \eqref{eq:ANX:system}, we use the Picard iterative method to linearise the problem at every time level of the grid. More precisely, at each time level, we use Picard iterations to linearise the term that is quadratic in the unknown solution, which is replaced by the product of the unknown solution and the most recent computed solution obtained at the previous time level. The iterative approximation process is carried on until the difference between two consecutive numerical solutions is smaller than a given threshold.
	
	Figure \ref{fig:comparisonnumerical} compares the numerical approximation strategy \eqref{eq:ANX:optimalspeed2} and the closed-form approximation strategy \eqref{eq:pseudooptimal} for different values of the inventory, the rate $Z$, and the oracle rate $S$. Figure   \ref{fig:comparisonnumerical} indicates that the closed-form and the numerical approximation strategies are significantly close and capture the same financial effects. In particular, the strategies clearly depend on the spread $S-Z$ and the inventory $\tilde y.$ When the LT has zero inventory, the strategy is mostly speculative because the strategy buys asset $Y$ when the oracle rate $S$ is above the rate $Z$, and sells otherwise; recall that the LT buys asset $Y$ when the trading speed $\nu$ is positive, and sells otherwise. When the LT has a positive (negative) inventory, the optimal strategy buys (sells) asset $Y$ only when $S$ is significantly higher (lower) than $Z$. 
	
	Figure \ref{fig:comparisonnumerical} shows that the absolute difference between both strategies increases as the difference between the rate $Z$ and $S$ increases, and the difference is minimal when the rates are equal. In practice, by arbitrage, the rates $Z$ and $S$ are aligned by LTs in the pool, so  differences between rates are small. For instance,  with the market data we use in Section \ref{sec:Performance}, the average absolute difference between rates is $2$ USD ($0.07\%$) in the liquid pool and $10$ USD ($0.37\%$) in the illiquid pool. Finally, it is unclear whether the difference between the optimal speed obtained with a numerical scheme and the closed-form approximation strategy shown in the lower panels of Figure \ref{fig:comparisonnumerical} stem from the numerical approximation or from the closed-form approximation strategy, but in either case, the difference has no bearing on the results we discuss in Section \ref{sec:Performance}.
	
	\begin{figure}[H]\centering
		\includegraphics{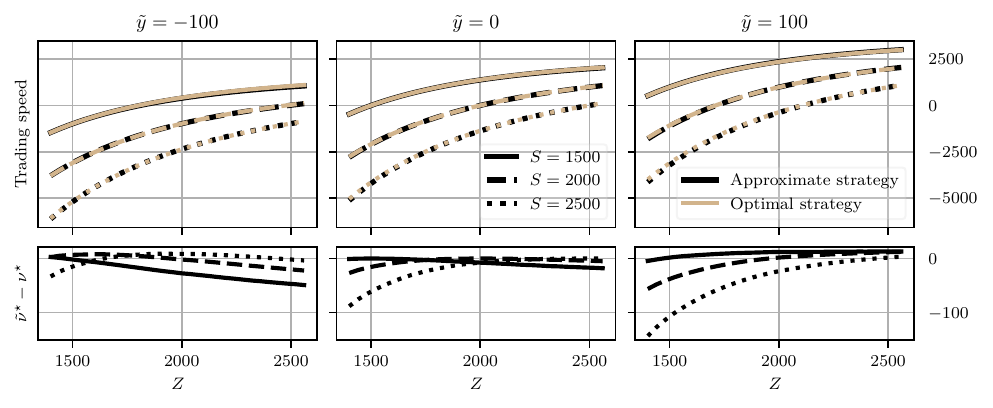}\\
		\caption{Comparison of the optimal speed \eqref{eq:ANX:optimalspeed2} obtained with a numerical scheme and the closed-form approximation strategy in \eqref{eq:pseudoAB}. Model parameters: $T=0.1$, $\sigma = 0.03$, $\gamma = 0.02$, $\beta = 1$, $\alpha = 5$, $\kappa = 10^7$, $\eta=1$, $\phi=10^{-5}$, and $Z_0=S_0=2000$. Inventory is $\tilde y=-100$ (left panel), $\tilde y=0$ (middle panel), and $\tilde y=100$ (right panel).}
		\label{fig:comparisonnumerical}
	\end{figure}

	\section{Model II: optimal execution without oracle rate \label{sec:Model2}}
	
	In this section, we consider the problem of an LT who wants to exchange a large position in asset $Y$ into asset $X$ in a CPMM. The key differences between Model II and Model I of Section \ref{sec:Model} are that in Model II the AMM rates are efficient, so the LT does not use an oracle rate from another venue, and  the depth of the pool is stochastic. 
	
	
	The LT must liquidate a large position in asset $Y$ in a CPMM over a period of time $[0,T]$, where $T>0$. Asset $Y$ and the wealth of the LT are valued in terms of asset $X.$ The CPMM is very active and with concentrated liquidity, so the depth $\kappa$ exhibits very frequent and random updates. The instantaneous rate $(Z_t)_{t\in[0,T]}$ evolves as
	\begin{align}
		\label{eq:priceProcess_model2}
		dZ_{t}=\gamma\,Z_t\,dB_t\ ,
	\end{align}
	and the dynamics of the depth process $(\kappa_t)_{t\in[0,T]}$ are given by
	\begin{align}
		\label{eq:kappaProcess_model2}
		d\kappa_{t}=\varsigma\, \kappa_{t} \,dL_{t}\ .
	\end{align}
	Here, $(B_t)_{t\in[0,T]}$ and $(L_t)_{t\in[0,T]}$ are independent standard Brownian motions and $\varsigma$ is the volatility of the depth $\kappa$. The wealth process $(\tilde x_t)_{t\in[0,T]}$ satisfies
	\begin{align}
		\label{eq:wealthProcess_model2}
		d\tilde{x}_{t} = \left(Z_{t}-\eta\,\frac{Z_{t}^{3/2}}{\kappa_{t}}\,\nu_{t}\right)\,\nu_{t}\,dt\, .
	\end{align}
	\subsection{Performance criterion and value function \label{Model2_1}}
	For each $(t,\tilde{x},\tilde{y},Z,\kappa )\in[0,T]\times\R\times\R\times\R_{++}\times\R_{++},$ and for each admissible control $\nu\in\mathcal{A}$ the performance criterion of the LT is given by
	\begin{align}
		\label{eq:perfcriteria_model2}
		u^{\nu}(t,\tilde{x},\tilde{y},Z,\kappa)=\E_{t,\tilde{x},\tilde{y},Z,\kappa }\left[\tilde{x}_{T}^{\nu}+\tilde{y}_{T}^{\nu}\,Z_T-\alpha\left(\tilde{y}_{T}^{\nu}\right)^{2}-\phi\int_{t}^{T}\left(\tilde{y}_{s}^{\nu}\right)^{2}ds\right]\ ,
	\end{align}
	and the value function is
	\begin{align}
		\label{eq:valuefunc_model2}
		u(t,\tilde{x},\tilde{y},Z,\kappa )=\underset{\nu\in\mathcal{A}}{\sup}\,u^{\nu}(t,\tilde{x},\tilde{y},Z,\kappa )\ .
	\end{align}
	
	\subsection{The dynamic programming equation \label{Model2_2}}
	The value function \eqref{eq:valuefunc_model2} is the unique classical solution to the HJB equation
	\begin{equation}
		\label{eq:HJB_model2}
		\begin{split}
			0=&\,\partial_{t}w-\phi\,\tilde{y}^{2}+\frac{1}{2}\,\gamma^{2}\,Z^{2}\,\partial_{ZZ}w+\frac{1}{2}\,\varsigma^{2}\,\kappa^{2}\,\partial_{\kappa\kappa}w \\
			&+\underset{\nu\in\R}{\sup}\left(\left(\nu\,Z-\eta\,\frac{Z^{3/2}}{\kappa}\,\nu^{2}\right)\partial_{\tilde{x}}w-\nu\,\partial_{\tilde{y}}w\right)\,,
		\end{split}
	\end{equation}
	with terminal condition
	\begin{equation}\label{eq:termcondu_model2}
		w(T,\tilde{x},\tilde{y},Z,\kappa) =\tilde{x}+\tilde{y}\,Z-\alpha\,\tilde{y}^{2}\ .
	\end{equation}
	
	The terminal condition \eqref{eq:termcondu_model2} suggests the ansatz
	\begin{equation}
		\label{eq:ansatz1_Model2}
		w(t,\tilde{x},\tilde{y},Z,\kappa)=\tilde{x}+\tilde{y}\,Z+\theta(t,\tilde{y},Z,\kappa)\,,
	\end{equation}
	which we justify by the following proposition, for which a proof is straightforward.
	\begin{prop}
		Assume there exists a function $\theta\in C^{1,1,2,2}([0,T]\times\R\times\R_{++}\times\R_{++})$ that solves
		\begin{equation}\label{eq:PDE_Model2}
			0=\partial_{t}\theta-\phi\,\tilde{y}^{2}+\frac{1}{2}\,\gamma^{2}\,Z^{2}\,\partial_{ZZ}\theta+\frac{1}{2}\,\varsigma^{2}\,\kappa^{2}\,\partial_{\kappa\kappa}\theta+\underset{\nu\in\R}{\sup}\left(-\eta\,\frac{Z^{3/2}}{\kappa}\,\nu^{2}-\nu\,\partial_{\tilde{y}}\theta\right)\,,
		\end{equation}
		on $[0,T)\times\R\times\R_{++}\times\R_{++}$\ , with terminal condition
		\begin{equation}\label{eq:terminal_liquid_ansatz}
			\theta(T,\tilde{y},Z,\kappa)=-\alpha\,\tilde{y}^{2}\ .
		\end{equation}
		Then, the function $w\colon[0,T] \times \R \times \R \times \R_{++} \times \R_{++} \rightarrow \R$ defined by 
		\begin{equation}
			\label{eq:ansatz_sec4}
			\begin{split}
				w(t,\tilde{x},\tilde{y},Z,\kappa)=\tilde{x}+\tilde{y}\,Z+\theta(t,\tilde{y},Z,\kappa)
			\end{split}
		\end{equation}
		is a solution to \eqref{eq:HJB_model2} on $[0,T)\times\R\times\R_{++}\times\R_{++}$ with terminal condition \eqref{eq:termcondu_model2}\ .
	\end{prop}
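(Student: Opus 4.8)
The plan is a direct verification argument: substitute the ansatz \eqref{eq:ansatz_sec4} into the HJB equation \eqref{eq:HJB_model2}, check that the resulting identity collapses to exactly \eqref{eq:PDE_Model2}, and then confirm the terminal condition by a one-line evaluation at $t=T$. No fixed-point or PDE-existence argument is needed, since existence of $\theta$ is assumed; the claim is purely that the ansatz transforms one PDE into the other.

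First I would compute the partial derivatives of $w=\tilde{x}+\tilde{y}\,Z+\theta(t,\tilde{y},Z,\kappa)$. Because the first two terms are time-independent and (bi)linear in $(\tilde{x},\tilde{y},Z)$, I obtain $\partial_{t}w=\partial_{t}\theta$, $\partial_{\tilde{x}}w=1$, $\partial_{\tilde{y}}w=Z+\partial_{\tilde{y}}\theta$, and for the second-order terms $\partial_{ZZ}w=\partial_{ZZ}\theta$ and $\partial_{\kappa\kappa}w=\partial_{\kappa\kappa}\theta$ (the bilinear term $\tilde{y}\,Z$ has vanishing second $Z$-derivative, and nothing in $w$ beyond $\theta$ depends on $\kappa$). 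The assumed regularity $\theta\in C^{1,1,2,2}$ guarantees all of these are continuous, so $w$ is a legitimate classical candidate. I would also note $\partial_{Z}w=\tilde{y}+\partial_{Z}\theta$, but—unlike the analogous step in Model I—this first-order derivative never enters the HJB, because the $Z$-dynamics \eqref{eq:priceProcess_model2} and the $\kappa$-dynamics \eqref{eq:kappaProcess_model2} are both driftless, so only second-order derivatives in $Z$ and $\kappa$ appear.

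Next I would insert these into the supremum term of \eqref{eq:HJB_model2}. With $\partial_{\tilde{x}}w=1$ and $\partial_{\tilde{y}}w=Z+\partial_{\tilde{y}}\theta$, the bracket becomes $\left(\nu\,Z-\eta\,\frac{Z^{3/2}}{\kappa}\,\nu^{2}\right)-\nu\,(Z+\partial_{\tilde{y}}\theta)$. The decisive observation is that the two $\nu\,Z$ terms cancel, leaving $-\eta\,\frac{Z^{3/2}}{\kappa}\,\nu^{2}-\nu\,\partial_{\tilde{y}}\theta$, which is exactly the integrand of the supremum in \eqref{eq:PDE_Model2}. This cancellation is the entire point of the ansatz: the linear mark-to-market value $\tilde{x}+\tilde{y}\,Z$ is chosen precisely so that the $\nu\,Z$ drift of the wealth process \eqref{eq:wealthProcess_model2} is offset by the inventory term $-\nu\,\partial_{\tilde{y}}w$, eliminating the rate $Z$ from the optimisation and isolating the pure execution-cost trade-off in $\theta$.

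Finally I would collect the remaining non-supremum terms: $\partial_{t}w=\partial_{t}\theta$, the penalty $-\phi\,\tilde{y}^{2}$ is carried through unchanged, and the diffusion terms $\frac{1}{2}\gamma^{2}Z^{2}\partial_{ZZ}w$ and $\frac{1}{2}\varsigma^{2}\kappa^{2}\partial_{\kappa\kappa}w$ reduce to the same expressions with $\theta$ in place of $w$. Assembling these with the simplified supremum reproduces \eqref{eq:PDE_Model2} verbatim on $[0,T)\times\R\times\R_{++}\times\R_{++}$. For the terminal data, evaluating the ansatz at $t=T$ and using $\theta(T,\tilde{y},Z,\kappa)=-\alpha\,\tilde{y}^{2}$ gives $w(T,\tilde{x},\tilde{y},Z,\kappa)=\tilde{x}+\tilde{y}\,Z-\alpha\,\tilde{y}^{2}$, which is \eqref{eq:termcondu_model2}. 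There is no substantive obstacle; the only care required is the bookkeeping of which derivatives survive and the verification of the $\nu\,Z$ cancellation, which is exactly why the statement can be called straightforward.
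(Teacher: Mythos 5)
Your verification is correct and is precisely the ``straightforward'' proof the paper has in mind but omits: direct substitution of the ansatz, the cancellation of the $\nu\,Z$ terms via $\partial_{\tilde{x}}w=1$ and $\partial_{\tilde{y}}w=Z+\partial_{\tilde{y}}\theta$, and the one-line check of the terminal condition. Your remark that no $\partial_{Z}w$ term arises here, since both \eqref{eq:priceProcess_model2} and \eqref{eq:kappaProcess_model2} are driftless in Model II, is also accurate and correctly distinguishes this case from the Model I analogue.
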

	
	Next, solve the first order condition in \eqref{eq:PDE_Model2} to obtain the LT's trading speed in feedback form
	\begin{equation}\label{eq:strat_liquid}
		\nu^{\star}=	-\frac{\kappa}{2\,\eta}\,\partial_{\tilde{y}}\theta\,Z^{-3/2}\,.
	\end{equation}
	Substitute \eqref{eq:strat_liquid} into \eqref{eq:PDE_Model2} to write
	\begin{equation}\label{eq:liquid_pde_theta}
		0=\partial_{t}\theta-\phi\,\tilde{y}^{2}+\frac{1}{2}\,\gamma^{2}\,Z^{2}\,\partial_{ZZ}\theta+\frac{1}{2}\,\varsigma^{2}\,\kappa^{2}\,\partial_{\kappa\kappa}\theta+\frac{\kappa}{4\,\eta}\,\left(\partial_{\tilde{y}}\theta\right)^{2}\,Z^{-3/2}\,.
	\end{equation}
	
	Finally, simplify \eqref{eq:liquid_pde_theta} with the ansatz
	\begin{equation}\label{eq:second_liquid_ansatz}
		\theta(t,\tilde{y},Z,\kappa)=\theta_{0}(t,Z,\kappa)+\theta_{1}(t,Z,\kappa)\,\tilde{y}+\theta_{2}(t,Z,\kappa)\,\tilde{y}^{2}\,,
	\end{equation}
	which is justified by the following proposition, for which a proof is straightforward.
	\begin{prop}
		Assume there exist functions $\theta_0\in C^{1,2,2}([0,T]\times\R_{++}\times\R_{++})$, $\theta_1\in C^{1,2,2}([0,T]\times\R_{++}\times\R_{++})$, and $\theta_2\in C^{1,2,2}([0,T]\times\R_{++}\times\R_{++})$ which solve the system of PDEs
		\begin{equation}\label{eq:liquid_system}
			\left\{
			\begin{aligned}
				&0=\partial_{t}\theta_{2}-\phi+\frac{1}{2}\,\gamma^{2}\,Z^{2}\,\partial_{ZZ}\theta_{2}+\frac{1}{2}\,\varsigma^{2}\,\kappa^{2}\,\partial_{\kappa\kappa}\theta_{2}+\frac{\kappa}{\eta}\,\theta_{2}^{2}\,Z^{-3/2}\,,\\
				&0=\partial_{t}\theta_{1}+\frac{1}{2}\,\gamma^{2}\,Z^{2}\,\partial_{ZZ}\theta_{1}+\frac{1}{2}\,\varsigma^{2}\,\kappa^{2}\,\partial_{\kappa\kappa}\theta_{1}+\frac{\kappa}{\eta}\,\theta_{1}\,\theta_{2}\,Z^{-3/2}\,,\\
				&0=\partial_{t}\theta_{0}+\frac{1}{2}\,\gamma^{2}\,Z^{2}\,\partial_{ZZ}\theta_{0}+\frac{1}{2}\,\varsigma^{2}\,\kappa^{2}\,\partial_{\kappa\kappa}\theta_{0}+\frac{\kappa}{4\,\eta}\,\theta_{1}^{2}\,Z^{-3/2}\,,
			\end{aligned}
			\right.
		\end{equation}
		on $[0,T)\times\R_{++}\times\R_{++}$ with terminal conditions
		\begin{equation}\label{eq:liquid_system_terminal}
			\theta_{2}(T,Z,\kappa)=-\alpha\,,\ \ \theta_{1}(T,Z,\kappa)=0\,,\ \,\text{and}\ \ \theta_{0}(T,Z,\kappa)=0\,.
		\end{equation}
		Then, the function $\theta\colon[0,T]\times\R\times\R_{++}\times\R_{++}\to\R$ defined by
		\begin{equation}
			\theta(t,\tilde{y},Z,\kappa)=\theta_{0}(t,Z,\kappa)+\theta_{1}(t,Z,\kappa)\,\tilde{y}+\theta_{2}(t,Z,\kappa)\,\tilde{y}^{2}\,,
		\end{equation}
		solves \eqref{eq:liquid_pde_theta} over $[0,T)\times\R\times\R_{++}\times\R_{++}$ with terminal condition \eqref{eq:terminal_liquid_ansatz}\,.
	\end{prop}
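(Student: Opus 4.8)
The plan is to verify directly that the quadratic-in-$\tilde y$ ansatz \eqref{eq:second_liquid_ansatz} reduces the single semilinear PDE \eqref{eq:liquid_pde_theta} to the system \eqref{eq:liquid_system} by matching coefficients of powers of $\tilde y$. The key structural observation is that $\theta_0,\theta_1,\theta_2$ depend only on $(t,Z,\kappa)$ and not on $\tilde y$, so the only $\tilde y$-dependence entering \eqref{eq:liquid_pde_theta} comes through the explicit $\tilde y$ and $\tilde y^2$ prefactors and through the nonlinear term $(\partial_{\tilde y}\theta)^2$.

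First I would compute the derivatives of the ansatz. The time and the $Z$-, $\kappa$-derivatives act termwise, e.g.\ $\partial_t\theta=\partial_t\theta_0+(\partial_t\theta_1)\,\tilde y+(\partial_t\theta_2)\,\tilde y^2$, and likewise $\partial_{ZZ}\theta$ and $\partial_{\kappa\kappa}\theta$ are degree-two polynomials in $\tilde y$ whose coefficients are the corresponding derivatives of $\theta_0,\theta_1,\theta_2$. The inventory derivative is linear, $\partial_{\tilde y}\theta=\theta_1+2\,\theta_2\,\tilde y$, so its square expands as
\begin{equation}
(\partial_{\tilde y}\theta)^2=\theta_1^2+4\,\theta_1\,\theta_2\,\tilde y+4\,\theta_2^2\,\tilde y^2\,.
\end{equation}

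Next I would substitute everything into \eqref{eq:liquid_pde_theta} and group by powers of $\tilde y$. Setting the coefficient of $\tilde y^2$ to zero yields the first equation of \eqref{eq:liquid_system}: the explicit $-\phi$ comes from the $-\phi\,\tilde y^2$ term, and the nonlinear contribution is $\tfrac{\kappa}{4\eta}\cdot 4\,\theta_2^2\,Z^{-3/2}=\tfrac{\kappa}{\eta}\,\theta_2^2\,Z^{-3/2}$. The coefficient of $\tilde y^1$ gives the second equation, with cross term $\tfrac{\kappa}{4\eta}\cdot 4\,\theta_1\,\theta_2\,Z^{-3/2}=\tfrac{\kappa}{\eta}\,\theta_1\,\theta_2\,Z^{-3/2}$; and the coefficient of $\tilde y^0$ gives the third equation, where no factor of $4$ arises so the nonlinear term stays $\tfrac{\kappa}{4\eta}\,\theta_1^2\,Z^{-3/2}$. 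Since $1,\tilde y,\tilde y^2$ are linearly independent as functions of $\tilde y$, the aggregate identity holds for every $\tilde y\in\R$ if and only if all three coefficient equations hold; by hypothesis $\theta_0,\theta_1,\theta_2$ solve exactly these on $[0,T)\times\R_{++}\times\R_{++}$, so \eqref{eq:liquid_pde_theta} is satisfied.

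Finally I would confirm the terminal condition and regularity. Evaluating the ansatz at $t=T$ and inserting \eqref{eq:liquid_system_terminal} gives $\theta(T,\tilde y,Z,\kappa)=0+0\cdot\tilde y-\alpha\,\tilde y^2=-\alpha\,\tilde y^2$, which is \eqref{eq:terminal_liquid_ansatz}. Regularity is immediate: $\theta$ is a degree-two polynomial in $\tilde y$, hence smooth in $\tilde y$, and inherits $C^{1,2,2}$ regularity in $(t,Z,\kappa)$ from $\theta_0,\theta_1,\theta_2$, so $\theta\in C^{1,1,2,2}$ as required to invoke the preceding ansatz proposition. There is no genuine obstacle; the only point demanding care is the bookkeeping of the factor $4$ produced by squaring $\partial_{\tilde y}\theta$, which correctly turns the prefactor $\tfrac{\kappa}{4\eta}$ into $\tfrac{\kappa}{\eta}$ for the $\tilde y^2$ and $\tilde y$ coefficients while leaving it intact for the constant coefficient.
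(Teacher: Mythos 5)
Your proof is correct and is precisely the ``straightforward'' verification the paper has in mind when it omits the argument: substitute the quadratic ansatz, expand $\left(\partial_{\tilde y}\theta\right)^2=\theta_1^2+4\,\theta_1\theta_2\,\tilde y+4\,\theta_2^2\,\tilde y^2$, and match the coefficients of $\tilde y^2$, $\tilde y$, and $1$ against the three equations of the system, then check the terminal condition and regularity. The bookkeeping of the factor $4$, which is the only delicate point, is handled correctly.
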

	
	The optimal strategy in feedback form \eqref{eq:strat_liquid} is now given by
	\begin{equation}
		\label{eq:feedbackformliquid}
		\nu^{\star}=-\frac{\kappa}{2\,\eta}\,\left(2\,\theta_2\,\tilde{y}+\theta_1\right)\,Z^{-3/2}\,.
	\end{equation}
	
	The system of PDEs in \eqref{eq:liquid_system} can be solved sequentially as follows. Solve the first PDE in the system to obtain $\theta_2$. Substitute $\theta_2$ is the second and third equations of the system so  the PDEs in $\theta_1$ and $\theta_0$ become linear. We cannot solve the semilinear PDE in $\theta_2$ in closed-form, and providing an existence result is out of the scope of this work. However, Theorem \ref{thm:bounds} provides \textit{a priori} estimates for $\theta_2$ and suggests the system of PDEs is well-behaved. The proof is omitted as it uses the same arguments as those in the proof of Theorem \ref{thm:ANX:existence}.
	
	\begin{thm}\label{thm:bounds}
		Let $\theta_0\in C^{1,2,2}([0,T]\times\R_{++}\times\R_{++})$, $\theta_1\in C^{1,2,2}([0,T]\times\R_{++}\times\R_{++})$, and $\theta_2\in C^{1,2,2}([0,T]\times\R_{++}\times\R_{++})$ be a solution to the system of PDEs \eqref{eq:liquid_system} with terminal condition \eqref{eq:liquid_system_terminal}. Define $\theta$ in \eqref{eq:second_liquid_ansatz} and define $w$ in \eqref{eq:ansatz_sec4}. Assume that for all  $\left(t,\tilde{x},\tilde{y},Z,\kappa \right) \in [0,T] \times \R \times \R \times \R_{++} \times \R_{++}$ we have $$u^{\nu}(t,\tilde{x},\tilde{y},Z,\kappa ) \leq w\left(t,\tilde{x},\tilde{y},Z,\kappa\right),$$ where $u^\nu$ is defined in \eqref{eq:perfcriteria_model2}, and that equality is obtained for the optimal control $\left(\nu^{\star}\right)_{t\in[0, T]}$ in feedback form in \eqref{eq:feedbackformliquid}. Then $\theta_2$ has the following bounds
		\begin{align}
			-\alpha-\phi\left(T-t\right) \leq \theta_2\left(t,\tilde{x},\tilde{y},Z,\kappa \right) \leq 0, \quad \forall \left(t,\tilde{x},\tilde{y},Z,\kappa \right) \in [0,T] \times \R \times \R \times \R_{++} \times \R_{++}\,.
		\end{align}
		
	\end{thm}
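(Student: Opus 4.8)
The plan is to read off both inequalities directly from the probabilistic meaning of $w$. By the two hypotheses, $w$ dominates every $u^\nu$ and coincides with $u^{\nu^\star}$, so $w$ equals the value function $u=\sup_{\nu\in\mathcal{A}}u^\nu$. Because the ansatz \eqref{eq:second_liquid_ansatz} makes $\theta=w-\tilde x-\tilde y\,Z$ an exact quadratic in $\tilde y$ with coefficients $\theta_0,\theta_1,\theta_2$ that depend only on $(t,Z,\kappa)$, it suffices to sandwich $\theta(t,\tilde y,Z,\kappa)$ between two quadratics in $\tilde y$ and compare leading coefficients: a quadratic bounded above on all of $\R$ has nonpositive leading coefficient, and a quadratic bounded below by $-(\alpha+\phi(T-t))\,\tilde y^2$ has leading coefficient at least $-(\alpha+\phi(T-t))$. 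This also explains why the bound is uniform in $\tilde x,\tilde y$, consistent with $\theta_2$ not depending on them.

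For the upper bound I would produce a self-financing decomposition of the terminal mark-to-market value. Applying It\^o's formula to $\tilde y_t Z_t$ (the covariation term vanishes since $\tilde y$ has finite variation) and using \eqref{eq:priceProcess_model2} and \eqref{eq:wealthProcess_model2}, the $Z_s\nu_s\,ds$ terms cancel and one obtains
\begin{equation}
\tilde x_T^\nu + \tilde y_T^\nu\,Z_T = \tilde x + \tilde y\,Z - \eta\int_t^T \frac{Z_s^{3/2}}{\kappa_s}\,\nu_s^2\,ds + \gamma\int_t^T \tilde y_s\,Z_s\,dB_s\,.
\end{equation}
Taking expectations (the stochastic integral being a true martingale), the performance criterion reduces to
\begin{equation}
u^\nu(t,\tilde x,\tilde y,Z,\kappa) = \tilde x + \tilde y\,Z - \eta\,\E\!\left[\int_t^T \frac{Z_s^{3/2}}{\kappa_s}\,\nu_s^2\,ds\right] - \alpha\,\E\!\left[(\tilde y_T^\nu)^2\right] - \phi\,\E\!\left[\int_t^T (\tilde y_s^\nu)^2\,ds\right]\,.
\end{equation}
Since $\eta,\alpha,\phi\ge0$ and $Z,\kappa>0$, every subtracted term is nonnegative, so $u^\nu\le\tilde x+\tilde y\,Z$ for all admissible $\nu$; hence $w=u\le\tilde x+\tilde y\,Z$, i.e.\ $\theta\le0$ for all $\tilde y$, which forces $\theta_2\le0$.

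For the lower bound I would test the value function against the do-nothing control $\nu\equiv0$, under which inventory stays at $\tilde y$ and the holdings in $X$ are unchanged. Because $Z$ is a martingale by \eqref{eq:priceProcess_model2}, $\E_{t,Z}[Z_T]=Z$, and the performance criterion evaluates to $u^0=\tilde x+\tilde y\,Z-\alpha\,\tilde y^2-\phi\,\tilde y^2\,(T-t)$. Thus $w=u\ge u^0$ forces $\theta\ge-(\alpha+\phi(T-t))\,\tilde y^2$ for all $\tilde y$, and comparing leading coefficients of the quadratic $(\theta_2+\alpha+\phi(T-t))\,\tilde y^2+\theta_1\,\tilde y+\theta_0\ge0$ yields $\theta_2\ge-\alpha-\phi(T-t)$.

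The hard part will be the integrability bookkeeping that makes the first step rigorous, namely justifying that $\gamma\int_t^T \tilde y_s\,Z_s\,dB_s$ is a genuine martingale rather than merely a local one, so that its expectation vanishes. Since $Z$ is a geometric Brownian motion and admissible speeds only satisfy $\int_t^T|\nu_s|^2\,ds<\infty$ almost surely, I would introduce a localizing sequence of stopping times, apply the identity on each stopped interval where the stochastic integral is a true martingale, and pass to the limit using the finiteness of $\E[(\tilde y_T^\nu)^2]$, $\E[\int_t^T(\tilde y_s^\nu)^2\,ds]$ and $\E[\int_t^T Z_s^{3/2}\kappa_s^{-1}\nu_s^2\,ds]$ (all finite whenever $u^\nu$ is finite) together with Fatou's lemma; this is precisely the estimate used in the proof of Theorem \ref{thm:ANX:existence}.
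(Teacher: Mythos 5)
Your proposal is correct and takes essentially the same route as the paper: the paper omits this proof, deferring to the argument of Theorem \ref{thm:ANX:existence}, which is exactly your scheme — the do-nothing control $\nu\equiv 0$ for the lower bound, integration by parts on $\tilde x_t+\tilde y_t Z_t$ with the nonnegative execution-cost and penalty terms dropped for the upper bound, and comparison of the leading coefficients of the quadratic in $\tilde y$ to pass from bounds on $\theta$ to bounds on $\theta_2$. The only difference is cosmetic: where you invoke the martingale property of $Z$ directly (with a localization argument for the stochastic integral), the paper's template bounds the remaining term by an auxiliary Merton problem, which in Model II is degenerate with value zero for precisely the reason you give.
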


	\section{Performance of strategies \label{sec:Performance}}
	We study the performance of two versions of the closed-form approximation strategy in \eqref{eq:pseudooptimal} and \eqref{eq:pseudoAB} of Model I; see Section \ref{sec:Model}. One version focuses on liquidating a large position in one asset and the other uses the lead-follow relationship between the oracle and AMM rates to execute a statistical arbitrage. We use the Uniswap v3 data of Section \ref{sec:AMM} for the liquid pool ETH/USDC and the illiquid pool ETH/DAI. We account for AMM and gas fees and assume that the orders sent by the LT do not impact the dynamics of the pools.
	
	We use in-sample data to estimate model parameters and use out-of-sample data to execute the strategies. For in-sample data, we use a window of 24 hours prior to the start of the trading programme for both pools. For out-of-sample data, we use windows of $2$ and $12$ hours when the LT trades in the liquid and illiquid pools, respectively. To measure performance, we use rolling time windows for estimation and execution, between 1 July 2021 and 5 May 2022, to carry out this procedure. Specifically, after every execution programme, we shift both windows by $2$ and $12$ hours for the liquid and illiquid pools, respectively, and repeat the same procedure, i.e., estimate parameters with in-sample data and trade with out-of-sample data. We remark that we do not simulate rates, we use those of the AMM and Binance, and execution costs are those the trades would have received. In total, we run  3,635 and 607 execution programmes for ETH/USDC and ETH/DAI, respectively.
	
	We proceed as follows. Subsection \ref{subsec: liquidation stategy} describes how parameter estimates are obtained and showcases the performance of the liquidation strategy.  Subsection \ref{subsec: speculation stategy} discusses the use of model I for statistical arbitrage, and showcases the performance.

	\subsection{Liquidation strategy}\label{subsec: liquidation stategy}
	
	We describe how to estimate the in-sample model parameters for every run of the liquidation strategy. For rate dynamics, the LT performs OLS regressions on the discretised versions of \eqref{eq:SProcess} and \eqref{eq:ZProcess}:  
	\begin{align}
		\label{eq: discrete price dyn}
		\Delta \log S_t =& -\frac{\sigma^2}{2} \, \Delta t + \sigma  \,\sqrt{\Delta t} \, \upsilon_t\,,
		\nonumber \\
		\Delta \log Z_t  =& -\frac{\gamma^2}{2} \, \Delta t + \beta\, \left(\frac{S_t - Z_t}{Z_t}\right) \,\Delta t + \gamma \, \sqrt{\Delta t} \, \epsilon_t\,,
	\end{align}
	where  $\{\epsilon_t\,,\upsilon_t\}$ are error terms. Here, the size of the time-step $\Delta t$ is the frequency of the liquidity taking orders (from all LTs) that arrive in the pool during the estimation period.

	For the liquidation strategy, we target a participation rate of $50\%$ of the observed hourly volume to set the initial inventory, which is liquidated by the LT over the trading window at the same frequency as the observed average trading frequency over the in-sample estimation period. The LT's trading frequency determines the value of the parameter $\eta$ in \eqref{eq:execratesApproxCPMM_nu}, which scales the execution costs.
	
	The value of the other model parameters are as follows. The value of the parameter $\kappa$ in the trading speed \eqref{eq:pseudooptimal} is  the last observed depth of the pool before the start of the execution. The value of the running inventory parameter $\phi$  is kept constant for all runs. The value of the terminal penalty parameter $\alpha$ is arbitrarily large to enforce full liquidation of outstanding inventory by the end of the trading horizon. Table \ref{table:paramsFixed} shows the parameter values we use for all strategy runs. Finally, as a more detailed example, \ref{Example one run liquidation} describes parameter estimation and performance for a specific run of the liquidation strategy.
	
	{\footnotesize
		\begin{table}[H]
			\footnotesize
			\begin{center}
				\begin{tabular}{c  r  r} 
					\hline 
					& ETH/USDC & ETH/DAI \\ [0.5ex] 
					\hline
					$T$ & 0.083$\ \textrm{days}$ & 0.5$\ \textrm{days}$ \\ [0.5ex] 
					$\phi$ & 0.005$\  \textrm{USDC}\cdot\textrm{ETH}^{-2}$ & 0.05$\  \textrm{DAI}\cdot\textrm{ETH}^{-2}$ \\ [0.5ex] 
					$\alpha$ & 10$\  \textrm{USDC}\cdot\textrm{ETH}^{-2}$ & 10$\  \textrm{DAI}\cdot\textrm{ETH}^{-2}$ \\ 
					\hline 
				\end{tabular}
			\end{center}
			\caption {Values of the liquidation model parameters.}
			\label{table:paramsFixed}
		\end{table}
	}
	


	We benchmark the performance of the liquidation strategy with two strategies: TWAP, which consists in trading at a constant rate; and a single order execution strategy, which consists in executing the entire order at the beginning of the execution window.  The market rates at the time of trading are used to compute the execution costs for all strategies. Gas fees are $5$ USD per transaction, regardless of transaction size. On the other hand, AMM fees depend on transaction size, and here we impute a $0.01 \, \%$ fee to the value of every transaction. Figure \ref{fig:backtest_USDC_DAI_PnLs} depicts the distribution of the gross PnL for each execution programme, which is given by $\tilde x_T + \tilde y_T \, Z_T - \tilde y_0 \, Z_0.$ Tables \ref{table:perfsUSDC} and \ref{table:perfsDAI} show the average and standard deviation of the gross PnLs, the number of transactions, the gas fees, and the AMM fees.\footnote{Gross PnL, as opposed to net PnL, is computed without the AMM fees and gas fees paid by the LT.}


	\begin{figure}%
		\centering
		\subfloat[Distribution of gross PnL for all strategies for ETH/USDC (3,635 executions).] {{\includegraphics{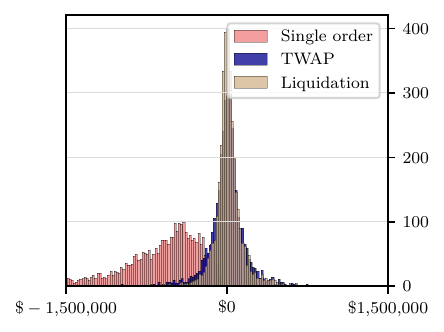} }}%
		\qquad
		\subfloat[Distribution of gross PnL of strategies for ETH/DAI (607 executions).] {{\includegraphics{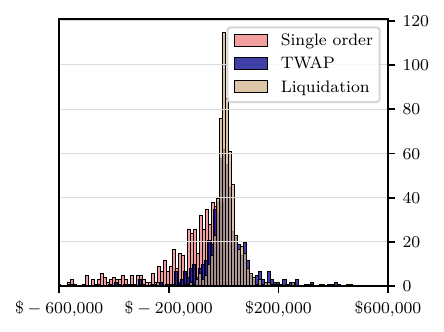} }}%
		\caption{PnL distribution.}%
		\label{fig:backtest_USDC_DAI_PnLs}%
	\end{figure}

	{\footnotesize
		\begin{table}[H]
			\scriptsize
			\parbox{.45\linewidth}{
				\centering
				\begin{tabular}{c  r  r   r  r} 
					\hline 
					& Gross avg.  & Std. & Avg. num. & Avg.   \\
					& PnL  & dev. & trades &  fees  \\ [0.5ex]
					\hline
					Single order & $-$956,298 & 1,963,014 & 1 & 2,538\\ [0.5ex] 
					TWAP &  3998 &  217,001 & 439 & 10,529\\ [0.5ex] 
					Liquidation &  27,185 &  288,518 & 439 & 11,885 \\ 
					\hline 
				\end{tabular}
				\caption {Performance and fees for ETH/USDC (3,635 executions). The Average PnL does not include fees}
				\label{table:perfsUSDC}
			}
			\hfill
			\parbox{.45\linewidth}{
				\centering
				\begin{tabular}{c  r  r  r  r} 
					\hline 
					& Gross avg.  & Std. & Avg. num. & Avg.    \\
					& PnL  & dev. & trades & fees  \\ [0.5ex]
					\hline
					Single order & $-$233,390 &  428,688 & 1 & 634 \\ [0.5ex] 
					TWAP & 1,875 &  170,008 & 108 & 1,217 \\ [0.5ex] 
					Liquidation & 12,240 &  63,605 & 108 & 1,782 \\
					\hline 
				\end{tabular}
				\caption {Performance and fees for ETH/DAI $\ \ \quad $ (607 executions). The Average PnL does not include fees}
				\label{table:perfsDAI}
			}
		\end{table}
	}

	Figure \ref{fig:backtest_USDC_DAI_PnLs} and Tables \ref{table:perfsUSDC} and \ref{table:perfsDAI} show that liquidating all the inventory in one trade is sub-optimal compared with the other strategies due to the high execution costs of the large order. In both cases, our model outperforms TWAP in terms of the ratio between performance, net of fees, and risk measured by the standard deviation. Key to the outperformance is that the liquidation strategy uses the rates in Binance as a trading signal.

	\subsection{Speculative strategy \label{subsec: speculation stategy}}
	We consider the same setup as before, i.e., the in-sample estimation and out-of-sample execution. Here, the LT arbitrages the AMM. To this end, the LT starts with zero inventory in $Y$ and sets the values of $\phi$ to $0.001 \ \textrm{USDC}\cdot\textrm{ETH}^{-2}$ and $0.01 \ \textrm{DAI}\cdot\textrm{ETH}^{-2}$ for the liquid and illiquid pools, respectively. The strategy profits from the oracle rate as a predictive signal. Figure \ref{fig:backtest_USDC_DAI_PnLs_speculation} depicts the distribution of the gross PnL for each run and Tables \ref{table:perfsUSDC2} and \ref{table:perfsDAI2} show the average and standard deviation of the gross PnLs, the number of transactions, and the estimated AMM and gas fees. Note that the speculative strategy is more profitable in the illiquid pool due to a larger discrepancies between the oracle and the instantaneous pool rate, leading to more arbitrage opportunities.

	\begin{figure}%
		\centering
		\subfloat[Distribution of gross PnL for ETH/USDC.] {{\includegraphics{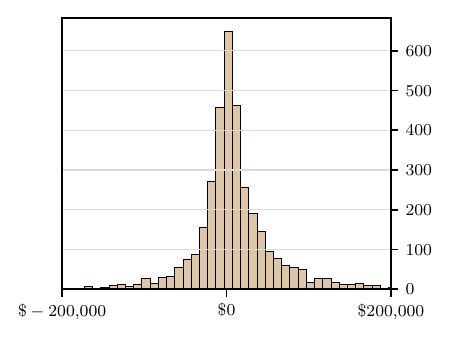} }}%
		\qquad
		\subfloat[Distribution of gross PnL for ETH/DAI.] {{\includegraphics{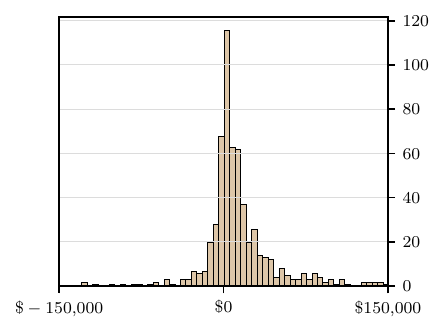} }}%
		\caption{Statistical arbitrage PnL distribution.}%
		\label{fig:backtest_USDC_DAI_PnLs_speculation}%
	\end{figure}
	
	{\footnotesize
		\begin{table}
			\footnotesize
			\parbox{.45\linewidth}{
				\centering
				\begin{tabular}{c  r  r  r  r} 
					\hline 
					& Gross avg.  & Std. & Avg. num. & Avg.    \\
					& PnL  & dev. & trades & fees \\ [0.5ex]
					\hline
					Speculative  &  &  \\ 
					strategy &  22,693 &  190,789 & 439 & 7,111 \\ 
					\hline 
				\end{tabular}
				\caption {Performance and fees for the pair ETH/USDC (3,635 executions). The Average PnL does not include fees}
				\label{table:perfsUSDC2}
			}
			\hfill
			\parbox{.45\linewidth}{
				\centering
				\begin{tabular}{c  r  r  r  r} 
					\hline 
					& Gross avg.  & Std. & Avg. num. & Avg.  \\
					& PnL  & dev. & trades & fees  \\ [0.5ex]
					\hline
					Speculative  &  & & & \\ 
					strategy &  20,886 &  54,043 & 108 & 2,082 \\
					\hline 
				\end{tabular}
				\caption {Performance and fees for the pair ETH/DAI (607 executions). The Average PnL does not include fees}
				\label{table:perfsDAI2}
			}
		\end{table}
	}
	
	\clearpage
	\section{Conclusions}
	In this chapter, we used Uniswap v3 data to analyse rate, liquidity, and execution costs of CPMMs. We proposed two models for optimal trading in CPMMs. In the first model, we assumed that prices are formed in an alternative venue and the liquidity provided in the CPMM remains constant for relevant periods of time. In the second model, price formation takes place in the pool and the liquidity provided in the pool is stochastic. Finally, we used in-sample estimation of model parameters and out-of-sample market data to test the performance of the strategies, so our results do not rely on simulations. We show our strategy considerably outperforms TWAP and a strategy that consists in sending a single large order. We also showed that there are significant arbitrage opportunities between Binance and AMM rates.
	
\chapter{Optimal Trading with Signals in Multiple Automated Market Makers}\label{ch:ieee}

 \section{Introduction}

Automated market makers (AMMs) are open-source immutable programs that run on peer-to-peer networks. These programs encode the trading rules that govern how liquidity takers (LTs) and liquidity providers (LPs) interact. The emergence of these trading venues poses great challenges to traditional financial services because they do not rely on traditional intermediaries and stakeholders such as banks and brokers. At present, the most popular type of AMM is the constant product market maker (CPMM) such as Uniswap v3, and AMMs serve mainly as trading venues for crypto-currencies. CPMMs are a special type of constant function market maker (CFMM), where a convex trading function and a set of rules determine how the exchange operates. 

The literature on the optimal design of these venues and the mathematical tools to optimally take and provide liquidity are new. In \cite{cartea2022decentralised}, the authors propose models for LTs that trade in a CPMM when prices form in the pool and when the prices form in an alternative venue. In  \cite{cartea2023predictable}, the authors characterise the wealth of LPs in CFMMs and in CPMMs with concentrated liquidity, and provide dynamic LP strategies. The work of \cite{bergault2023automated} uses traditional market making models to propose a new design of AMMs. In \cite{fan2021strategic} and \cite{ fan2022differential} the authors show that LPs can strategically exploit their beliefs on future exchange rates, and study the implications on smart contract design. Finally, the work in \cite{angeris2022convex, angeris2022optimal} uses convex optimisation for routing and multi-asset trading in CFMMs.

In practice, LTs usually trade in multiple assets simultaneously. At present, it is key for LTs that trade in CPMMs to consider exchange rates from more liquid alternative venues where prices are formed, and to take into account the significant co-movements that crypto-currencies exhibit. In this work, we solve the problem of an LT who wishes to execute large orders in multiple assets. She has access to different CPMM pools and uses information from  exchange rates from alternative trading venues (oracle). Moreover, we assume that the LT uses a set of predictive signals to improve the performance of his strategy; see \cite{forde2022optimal} and \cite{cartea2018enhancing} for the use of signals in optimal trading.

We formulate the optimal trading problem as a stochastic control problem where the LT controls the speed at which she trades in the different pools. Key to the performance of the trading strategy, is to balance price risk and execution costs. In CPMMs, the execution costs are approximated with the convexity of the trading function, which is a function of the pool's depth and the pool's exchange rate; see \cite{cartea2022decentralised}. In our model, the joint dynamics of the CPMM rates, the oracle rates, an a set of predictive signals follow a multi-dimensional Ornstein-Uhlenbeck process (multi-OU). Multi-OU dynamics are well suited to capture (i) the common stochastic trends that crypto-currencies usually share, (ii) the lead-lag effects between oracle and pool rates, and (iii) to incorporate predictive signals.

The remainder of this chapter is organised as follows. Section \ref{sec:AMM_multi} recalls how LTs and LPs interact with a CPMM. Section \ref{sec:data} uses Uniswap v3 data to study the cointegrated dynamics of oracle and pool rates for ETH and BTC. Section \ref{sec:Model_multi} introduces the optimal trading model and derives a closed-form approximation strategy. Section \ref{sec:performance_multi} implements the model of Section \ref{sec:Model_multi} using oracle rates from Binance and pool rates from Uniswap v3 for an agent in charge of a portfolio composed of the crypto-currencies ETH and BTC. In particular, we show that our strategy uses the cointegration factors in the joint dynamics of the exchange rates, and that it outperforms a classical Almgren-Chriss strategy without cointegration and a TWAP strategy.

\section{Constant function market makers  \label{sec:AMM_multi}}

In this section, we discuss how CFMMs operate and characterise their execution costs and the market impact of liquidity taking activity. We recall the mathematical properties of the trading function and how market participants interact with CFMM pools in charge of a pair of assets. 

In a CFMM, LPs and LTs interact through liquidity pooling. LPs deposit assets in a common pool, and LTs trade directly with the pool. In the remainder of this chapter, we consider $n$ assets $\bm{Y}=\left(Y^1, \dots, Y^n\right)^\intercal$ that are valued in terms of a reference asset $X$. Moreover, we consider a CFMM that offers a pool for every pair $\left(X,Y^i\right),$ and we refer to the instantaneous exchange rate of $Y^i$ in terms of $X$ in the pool by $Z^i\,.$ 

The liquidity pool for a pair $\left(X,Y^i\right)$ consists of a quantity $x>0$ of asset $X$ and a quantity $y>0$ of asset $Y^i.$ The CFMM pool is characterized by a trading function $f\left(x,y\right)$, which is differentiable, has convex level sets, and is increasing in both $x$ and $y$. A CFMM pool imposes an LT trading condition and an LP trading condition both of which define the state of the pool after an LT transaction and after an LP operation is completed. 

\paragraph{\textbf{LT trading condition}} Let $\left(x,y\right)$ be the state of the pool before the arrival of an LT buy order for a quantity $\Delta y>0$ of asset $Y^i$. The quantity $\Delta x>0$ of asset $X$ that the LT pays to the pool is determined by the LT trading condition $f\left(x,y\right) = f\left(x+\Delta x,y-\Delta y\right) = \kappa^2,$ where $\kappa>0$ is the \emph{depth} of the pool. For a sell order, the LT trading condition is $f\left(x,y\right) = f\left(x-\Delta x,y+\Delta y\right) = \kappa^2$. As in \cite{cartea2022decentralised, cartea2022decentralised2}, we define the \emph{level function} $\varphi$ as $x=\varphi\left(y\right)$, which is differentiable, convex, and decreasing.\footnote{To ensure that the pool cannot be depleted by LT trades, the level function should also satisfy $\lim_{y\to0} \varphi\left(y\right) = +\infty$ and $\lim_{y\to+\infty} \varphi\left(y\right) = 0$.}  Moreover, the instantaneous exchange rate is given by
\begin{align}\label{eq:instantaneousprice_multi}
    Z = -\varphi'\left(y\right)\,.
\end{align}

In Section \ref{sec:AMM}, we show that one can use a Taylor expansion to approximate the execution costs for trading a quantity $\Delta y$ with the convexity of $\varphi$, so that
\begin{equation}\label{eq:approx}
    \left|Z-\tilde{Z}\left(\Delta y\right)\right|\approx \frac{1}{2}\,\varphi''\left(y\right)\,|\Delta y| = \frac{1}{\kappa}\,Z^{3/2}\,|\Delta y|\,,
\end{equation}
where $\tilde{Z}\left(\Delta y\right)$ is the execution rate. The level function $\varphi$ is decreasing and convex, thus the execution rate received by an LT deteriorates (i.e., execution costs increase) as the size of the trade increases, and as the depth of the pool $\kappa$ decreases.

Finally, following an LT trade of size $\Delta y$, the \textit{price impact} is the difference between the two rates, i.e., $-\varphi'\left(y+\Delta y\right)+\varphi'\left(y\right)\,$ or $-\varphi'\left(y-\Delta y\right)+\varphi'\left(y\right)\,$. Using the definition in \eqref{eq:instantaneousprice_multi} and a Taylor's expansion we write 
\begin{align*}
    \text{Price impact} \approx\varphi''\left(y\right)\,|\Delta y| = \frac{2\,\,Z^{3/2}}{\kappa}\,|\Delta y|\,.
\end{align*}

\paragraph{\textbf{LP trading condition}} LPs interact with the pool by depositing assets in the pool or withdrawing assets from the pool.  The LP trading condition requires that LP operations do not impact the rate $Z$. However, LP operations do change the depth $\kappa$ of the pool. More precisely, let $\left(x, y\right)$ be the state of the pool before an LP deposits liquidity $\left(\Delta x, \Delta y\right)$,\footnote{Here, $\Delta x\ge0$ and $\Delta y\ge0$.} so $\left(x+\Delta x, y+\Delta y\right)$ is the state of the pool after the deposit, which increases the depth of the pool, because $$f\left(x+\Delta x,y+\Delta y\right) = \overline \kappa^2 > f\left(x,y\right) = \kappa^2\,;$$ recall that the trading function is increasing.

Next, let $\varphi_{\kappa}$ be the level function corresponding to the depth $\kappa$ and let $\varphi_{\overline{\kappa}}$ be the level function corresponding to the depth $\overline \kappa$. The LP trading condition requires that when an LP deposits the quantities $\left(x, y\right)\,$ such that $f\left(x+\Delta x,y+\Delta y\right) = \overline \kappa^2\,,$ then the following equality must hold:
\begin{align}\label{eq:LPtradingCondition_multi}
Z = -\varphi_{\kappa}'\left(y\right) = -\varphi_{\overline{\kappa}}'\left(y+\Delta y\right)\,.
\end{align}

To enforce this property, we require the trading function $f\left(x, y\right)$ to be homothetic. In practice, LPs usually invest an initial wealth $V$ in terms of the reference asset $X,$ so the amounts $\Delta x$ and $\Delta y$ are obtained from the LP trading condition and the equality $V = \Delta x + Z \, \Delta y.$ In contrast, LTs usually specify a quantity $\Delta x$ to trade, and receive or pay a quantity $\Delta y$ which is given by the LT trading condition.

	\section{Data analysis}\label{sec:data}
	In this section, we use data from the Uniswap v3 pools ETH/USDC $0.05\%$ and BTC/USDC $0.3\%\,,$ and from the LOB-based market Binance for the same pairs, to study the joint dynamics of crypto-currencies in both markets between 1 January 2022 and 30 June 2022. Uniswap v3 is a CPMM with concentrated liquidity  running on the Ethereum blockchain, which is a type of distributed ledger technology that validates and stores transactions. At present, it is the most liquid CPMM. In CPMMs, the  trading function is $f\left(x,y\right) = x \times y\,,$ so the level function is $\varphi(y)= \kappa^2 \big{/} y\,$ and the instantaneous exchange rate is $Z = -\varphi'(y) = \kappa^2 \big{/} {y}^2  = x \big{/} y\ .$ Binance is the most liquid LOB-based exchange for crypto-currencies, and the venue where prices are formed.  Table \ref{table:datadescr} shows descriptive statistics of the dataset.
    \begin{table}
        \begin{center}
            \begin{tabular}{c c c c} 
                Pair &  Volume & Volume & Depth  \\ [0.ex] 
                &  Uniswap & Binance & Uniswap\\ [0.5ex] 
                \hline 
                \\ [-1.8ex] 
                ETH/USDC & $\$\, 129.8 \times 10^9$  & $\$ \,267.44 \times 10^9$ & $2.04 \times  10^7 $ \\ [0ex]     
                & (1 txn per $13$ sec) & (1 txn per $0.13$ sec) & \\ [0ex]     
                BTC/USDC & $\$\, 5.1 \times 10^9$  & $\$ \,358.37 \times 10^9$ & $1.32 \times  10^6 $ \\ [0ex]     
                & (1 txn per $381$ sec) & (1 txn per $1.49$ sec) & \\ [0.2ex]   
                \hline
            \end{tabular}
        \end{center}
        \caption {Volume, trading frequency, and depth in Uniswap v3 and Binance for the pairs ETH/USDC and BTC/USDC between 1 January 2022 and 30 June 2022.}
        \label{table:datadescr}
    \end{table}
    
    \subsection{Lead-lag relationships between trading venues}
    
    Figure \ref{fig:LEADLAG} shows the dynamics of the pool and oracle exchange rates for both pairs throughout two hours of trading on 1 April 2023. The figure showcases the lead-lag relationship between both trading venues for a specific day, and suggests that the information from the oracle rates, i.e., those in Binance in our case, can be used to enhance the trading performance of an LT in the CPMM Uniswap v3. 

    The next subsections study the lead-lag relationships between trading venues by looking at cointegration, causality, and spillover effects.
\begin{figure}%
    \centering
    \includegraphics{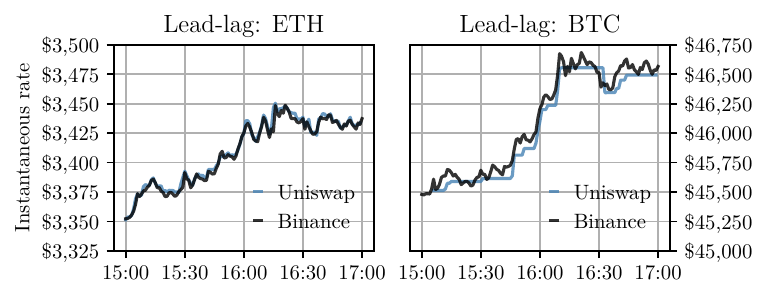}
    \caption{Exchange rates BTC/USDC and ETH/USDC in Uniswap v3 and Binance between 15:00 and 17:00 on 1 April 2022.}%
    \label{fig:LEADLAG}
\end{figure}

\subsubsection{Cointegration}\label{sec:coint}
To further study the relationship between the two trading venues, we use the Johansen's cointegration Trace test to compare the ETH/USDC and BTC/USDC exchange rates in Uniswap v3 and Binance; see \cite{johansen1991estimation}.

To run the Johansen's Trace tests, we first verify the assumption that the time series are integrated of order one. Table \ref{table:adf} shows the result of the augmented Dickey–Fuller stationarity test for each time series; see \cite{dickey1979distribution}. The test result show that all time series are not stationary and are integrated of order 1.
\begin{table}
    \begin{center}
        \begin{tabular}{c c c c c}
            & ETH/USDC & ETH/USDC & BTC/USDC & BTC/USDC\\
            & Binance & Uniswap v3 &  Binance & Uniswap v3\\ \hline 
            p-value &0.886&0.886&0.924&0.940\\
            test statistic &--0.532&--0.532&--0.312&--0.185\\ \hline
        \end{tabular}
    \end{center}
    \caption {{Augmented Dickey–Fuller test on each individual time series. The test concludes that each time series is not stationary and is integrated of order 1.}}
    \label{table:adf}
\end{table}

Table \ref{table:johansenTestResultsTable_venues} shows the results of two Johansen's tests. The first test compares the ETH/USDC exchange rates in Uniswap v3 and Binance, and the second test compares those of BTC/USDC. Both tests reject the hypothesis of cointegration of rank $r=0$, but fail to reject the hypothesis of cointegration of rank $r=1$. Thus, the tests show that there exists common stochastic trends between Uniswap v3 and Binance for the pairs we consider.
\begin{table}
    \begin{center}
        \begin{tabular}{c c || c c | c c}
            \hline &~&~&~&~&~\\[-1.8ex] 
            & &\multicolumn{2}{c|}{ETH/USDC} & \multicolumn{2}{c}{BTC/USDC} 
            \\ \hline ~&~&~&~&~&~\\[-1.8ex] 
            Null & Critical &  Trace & Result & Trace & Result\\
            Hypothesis & Value & statistic & & statistic &\\ [0.5ex] 
            \hline &~&~&~&~&~
            \\ [-1.8ex] 
            $r \leq 0$ & 15.49 & 5,369.79 & Rejected & 4,923.37 &  Rejected\\ [1ex]  
            $r \leq 1$ & 3.84 &  0.29 &  Not Rejected &  0.09 & Not Rejected \\ [1ex]
            \hline 
        \end{tabular}
    \end{center}
    \caption {{Results for two Johansen's cointegration Trace tests. The first test compares the ETH/USDC exchange rates in Uniswap v3 and Binance. The second test compares exchange rates for BTC/USDC in Uniswap v3 and Binance. Critical values are given for a significance level of $95\%$.}}
    \label{table:johansenTestResultsTable_venues}
\end{table}

\subsubsection{Causality}
Section \ref{sec:coint} uses Johansen's test to study the presence of common stochastic trends driving the exchange rates in Binance and Uniswap v3. Here, we investigate the causality between the exchange rates in both venues. 

Specifically, we run four Granger's causality tests; see \cite{granger1969investigating}. For each pair, we test whether historical rates in Binance can be used to forecast future rates in Uniswap v3, or vice versa. We perform Granger's test for $n=1$ to $n=5$ lags, where one lag corresponds to 5 minutes. Table \ref{table:granger} shows the test result. The test shows that for ETH/USDC, historical rates in Binance forecast rates in Uniswap v3 and vice versa. For BTC/USDC, the test shows that historical rates in Binance forecast rates in Uniswap v3,  but rates in Uniswap v3 do not forecast rates in Binance after 5 and 10 minutes.

The results of the Granger's causality tests suggest that for ETH/USDC prices in Uniswap v3 and Binance are aligned. On the other hand, the exchange rates in Uniswap v3 for BTC/USDC are delayed with respect to the exchange rates in Binance.

\begin{table}
    \begin{center}
        \begin{tabular}{|c |c | c |c |c|}
            \hline & \multicolumn{4}{c|}{ETH/USDC}
            \\[0.2ex] \hline ~& \multicolumn{2}{c|}{} &\multicolumn{2}{c|}{}\\[-1.8ex] 
            & \multicolumn{2}{c|}{Uniswap v3 $\not\to$ Binance} & \multicolumn{2}{c|}{Binance $\not\to$ Uniswap v3}
            \\[0.2ex] \hline ~&~&~&~&~\\[-1.8ex]
            Number of lags& p-value & Test statistic $\chi^2$ & p-value & Test statistic $\chi^2$
            \\[0.2ex] \hline ~&~&~&~&~\\[-1.8ex]
            $n=1$ & $<0.001$ & 105.09 & $<0.001$ & 1,317.03\\[0.2ex]
            $n=2$ & $<0.001$ & 182.78 & $<0.001$ & 1,870.48\\[0.2ex]
            $n=3$ & $<0.001$ & 231.26 & $<0.001$ & 2,003.54\\[0.2ex]
            $n=4$ & $<0.001$ & 232.55 & $<0.001$ & 2,123.15\\[0.2ex]
            $n=5$ & $<0.001$ & 243.18 & $<0.001$ & 2,156.65\\[0.2ex]
            
            \hline
        \end{tabular}\\
        \begin{tabular}{cc}
            &  \\
            & 
        \end{tabular}\\
        \begin{tabular}{|c |c | c |c |c|}
            \hline & \multicolumn{4}{c|}{BTC/USDC}
            \\[0.2ex] \hline ~& \multicolumn{2}{c|}{} &\multicolumn{2}{c|}{}\\[-1.8ex] 
            & \multicolumn{2}{c|}{Uniswap v3 $\not\to$ Binance} & \multicolumn{2}{c|}{Binance $\not\to$ Uniswap v3}
            \\[0.2ex] \hline ~&~&~&~&~\\[-1.8ex]
            Number of lags& p-value & Test statistic $\chi^2$ & p-value & Test statistic $\chi^2$
            \\[0.2ex] \hline ~&~&~&~&~\\[-1.8ex]
            $n=1$ & 0.1663 & 1.92 & $<0.001$ & 5,381.31\\[0.2ex]
            $n=2$ & 0.4766 & 1.48 & $<0.001$ & 6,479.78\\[0.2ex]
            $n=3$ & $<0.001$ & 18.59 & $<0.001$ & 6,656.85\\[0.2ex]
            $n=4$ & $<0.001$ & 23.15 & $<0.001$ & 6,883.71\\[0.2ex]
            $n=5$ & $<0.001$ & 26.97 & $<0.001$ & 6,963.54\\[0.2ex]
            
            \hline
        \end{tabular}
    \end{center}
    \caption {{Granger's causality test results. \textbf{Top panel:} The test shows that for ETH/USDC, historical rates in Binance can be used to forecast rates in Uniswap v3 and vice versa. \textbf{Bottom panel:} The test shows that for BTC/USDC, historical rates in Binance can be used to forecast rates in Uniswap v3,  but rates in Uniswap v3 cannot be used to forecast rates in Binance after 5 and 10 minutes.}}
    \label{table:granger}
\end{table}

\subsubsection{Spillover effects}
Next, we study spillover effects between Uniswap v3 and Binance. Spillover effects are the consequences that shocks in a random variable have on other variables. Here, we look at the spillover indices introduced in \cite{diebold2012better} and we refer to \cite{giudici2020vector} for a detailed analysis of spillover effects in BTC/USDC between various centralised trading venues.

First, we recall the definition of generalised impulse response function introduced in \cite{koop1996impulse}. For each pair, let $S$ be the oracle from Binance and let $Z$ be the exchange rate in Uniswap v3. We represent the time series of the exchange rates returns using a VAR($k$) model
\begin{equation}\label{eq:var_rep}
    \begin{pmatrix}
        \Delta Z_t\\ \Delta S_t
    \end{pmatrix} = \sum_{i=1}^{k} \bm\Phi_{i}\begin{pmatrix}
    \Delta Z_{t-i}\\ \Delta S_{t-i}
\end{pmatrix} + \varepsilon_t\,,
\end{equation}
where $\bm\Phi_{1},\dots,\bm\Phi_{k}$ are $2\times2$ regression matrices and $\bm\varepsilon_t$ is the error term, which is distributed as $\mathcal{N}\left(0,\bm\Sigma\right)$, where
\begin{equation}
    \bm\Sigma = 
    \begin{pmatrix}
        \sigma_{ZZ} & \sigma_{ZS}\\
        \sigma_{ZS} & \sigma_{SS}
    \end{pmatrix}
\end{equation}
is the covariance matrix. Next, we write the moving average representation of \eqref{eq:var_rep} as
\begin{equation}\label{eq:ma_rep}
    \begin{pmatrix}
        \Delta Z_t\\ \Delta S_t
    \end{pmatrix} = \sum_{k=1}^{\infty} \bm A_{k}\,\bm\varepsilon_{t-k}\,,
\end{equation}
where the coefficient matrices $\bm A_{k}$ satisfy
\begin{equation}\label{key}
    \bm A_{k} = 
    \begin{cases}
        0 & \text{for }k<0\,, \\
        I_2 & \text{for }k=0\,, \\
        \displaystyle	\sum_{l=1}^{k}\bm\Phi_{l}\,\bm A_{k-l}& \text{for }k>0\,.\\
    \end{cases}
\end{equation}

The generalised impulse response function of $\left(\Delta Z_t, \Delta S_t\right)$ at horizon $n$ is defined as
\begin{equation}\label{eq:gi}
    GI_{n}(\bm{\delta}) = \E_{t-1}\left[\begin{aligned}
        \Delta Z_{t+n}\\ \Delta S_{t+n}
    \end{aligned}\middle|\bm\epsilon_t=\bm{\delta}\right]-
\E_{t-1}\left[\begin{aligned}
    \Delta Z_{t+n}\\ \Delta S_{t+n}
\end{aligned}\right] = \bm A_{n}\,\bm{\delta}\,,
\end{equation}
and it represents the effect that a shock $\bm{\delta}$ at time $t$ has on the expected future return at time $t+n$. Next, use $\bm\epsilon\sim\mathcal{N}\left(0,\bm\Sigma\right)$ to write
\begin{equation}\label{eq:epsilon}
    \E\left[\bm\epsilon_t\middle|\epsilon_t^i=\delta_i\right] = \frac{1}{\sigma_{ii}^2}\,\bm{\Sigma}\,e_{i}\,\delta_i\,,\quad \text{for }i\in\{Z,S\}\,.
\end{equation}
Next, set $\delta_{i} = \sigma_{ii}$ and use \eqref{eq:gi} and \eqref{eq:epsilon} to obtain, for each $n\in\N$, the scaled impulse response function
\begin{equation}\label{eq:scaled}
    \psi^{g}_i(n)=\frac{1}{\sigma_{ii}}\bm A_{n}\,\bm\Sigma\,\bm{e}_i\,, \quad \text{for }i\in\{Z,S\}\,. 
\end{equation}
Here, for $i\in\{Z,S\}$, $\psi^{g}_i(n)$ in \eqref{eq:scaled} measures the effect that one standard error shock in the returns of the trading venue $i$ has on $\left(\Delta Z, \Delta S\right)$ after $n$ time steps.

Next, we use \eqref{eq:scaled} to derive, for each $n\in\N$, the generalised forecast error variance decomposition
\begin{equation}\label{eq:fevd}
    \theta_{ij}^{g}(n)=\frac{1}{\sigma^2_{ii}}\times \frac{\sum_{l=0}^{n}(\bm e^\intercal_i\,\bm A_l\,\bm\Sigma\,\bm e_j)^2}{\sum_{l=0}^{n}(\bm e^\intercal_i\,\bm A_l\,\bm\Sigma\,\bm A_l^\intercal\,\bm e_i)^2}\,, \quad \text{for }i,j\in\{Z,S\}\,,
\end{equation}
and we refer the reader to \cite{lutkepohl1991introduction} and \cite{pesaran1998generalized} for more details. Here, $\theta_{ij}^{g}(n)$ represents the proportion of the $n$-step-ahead forecast error variance of the returns in venue $i$ which is accounted for by shocks in venue $j$. Finally, normalise the generalised forecast error variance decomposition to obtain 
\begin{equation}\label{eq:normalised}
    \tilde \theta_{ij}^{g}(n) = \frac{\theta_{ij}^{g}(n)}{\displaystyle\sum_{j\in\{Z,S\}}\theta_{ij}^{g}(n)}\,.
\end{equation}

Next, define the total spillover index as
\begin{equation}\label{eq:tsi}
    TSI(n) = \frac{\displaystyle\sum_{\substack{i,j\in\{Z,S\} \\ i\neq j}}\tilde \theta_{ij}^{g}(n)}{\displaystyle\sum_{i,j\in\{Z,S\}}\tilde \theta_{ij}^{g}(n)}\times 100\,,
\end{equation}
which represent the percentage of total $n$-step ahead forecast error variance which is caused by spillovers between trading venues. Then, for each $i\in\{Z,S\}$, define the directional spillover indices as
\begin{equation}\label{eq:dsi}
    DSI_{i\to J}(n) = \frac{\displaystyle\sum_{\substack{j\in\{Z,S\} \\ i\neq j}} \tilde \theta_{ji}^{g}(n)}{\displaystyle \sum_{i,j\in\{Z,S\}}\tilde \theta_{ij}^{g}(n)}\times 100\,,\quad DSI_{J\to i}(n) = \frac{\displaystyle\sum_{\substack{j\in\{Z,S\} \\ i\neq j}}\tilde\theta_{ji}^{g}(n)}{\displaystyle\sum_{i,j\in\{Z,S\}}\tilde \theta_{ij}^{g}(n)}\times 100\,.
\end{equation}
Finally, for each $i\in\{Z,S\}$, define the net spillover index as
\begin{equation}\label{eq:nsi}
    NSI_{i}(n) = DSI_{i\to J}(n) - DSI_{J\to i}(n)\,.
\end{equation}
The directional spillover index $DSI_{i\to J}(n)$ represents the percentage of total $n$-step ahead forecast error variance which is caused by spillovers from $i$ to the other trading venues. On the other hand, the directional spillover index $DSI_{J\to i}(n)$ represents the percentage of total $n$-step ahead forecast error variance which is caused by spillovers from other trading venues to $i$. 

\begin{figure}%
    \centering
    \includegraphics{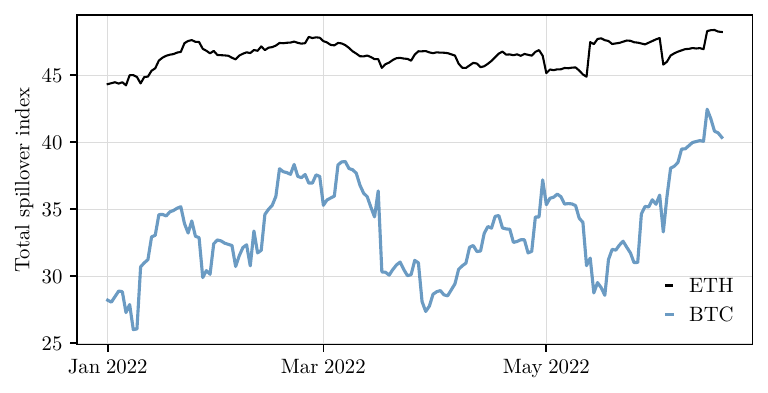}
    \caption{ 10-step-ahead total spillover index between Uniswap v3 and Binance for ETH/USDC and BTC/USDC.}%
    \label{fig:tsi}%
\end{figure}
For ETH/USDC and BTC/USDC, we compute the 10-step-ahead spillover indices between Uniswap v3 and Binance over a one-day period. Then, using rolling windows of two hours, we look at how spillover indices evolve. Figure \ref{fig:tsi} shows how the total spillover index between Uniswap v3 and Binance evolve over time for ETH/USDC and BTC/USDC. The figure shows that for the ETH/USDC pair, the total spillover index is higher than 45\% for almost the entire time window considered. This suggests that both trading venues are equally important in determining the exchange rate. On the other hand, for the BTC/USDC pair, the total spillover index is for almost the whole time window below 40\% and occasionally drops below 30\%. This, indicates that prices are formed in Binance, with returns in Uniswap v3 adding only little information. This claim about the asymmetry in the role of the two trading venues is confirmed by Granger's causality test in Table \ref{table:granger}.

\begin{figure}%
    \centering
    \includegraphics{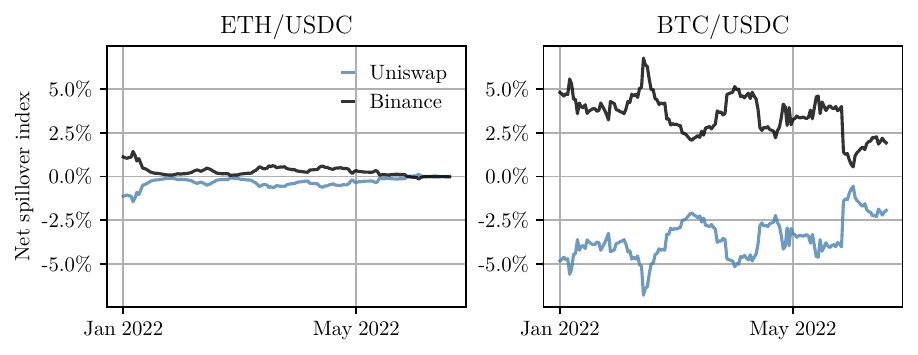}
    \caption{ 10-step ahead net spillover index between Uniswap v3 and Binance for ETH/USDC and BTC/USDC.}%
    \label{fig:nsi}%
\end{figure}
Figure \ref{fig:nsi} shows the evolution of the net spillover indices between Uniswap v3 and Binance over time for ETH/USDC and BTC/USDC. The figure shows that, for both pairs, shocks in rates in Binance have larger impact than shocks in Uniswap v3. This shows that exchange rates in Binance have higher forecasting power than exchange rates in Uniswap v3. Thus, an LT who trades in Uniswap v3 can enhance the performance of her strategy by using the exchange rate in Binance as a predictive signal.  This further motivates the model we introduced in Section \ref{sec:Model}, where we assume that price formation is in the alternative trading venue and not in the CPMM.
	
	\subsection{Cointegration and oracle rates}
	Figure \ref{fig:coint} shows the rates for both BTC and ETH in terms of USDC in the pools we consider. The figure suggests that the exchange rates share common stochastic trends.
	
	\begin{figure}%
		\centering
		\includegraphics{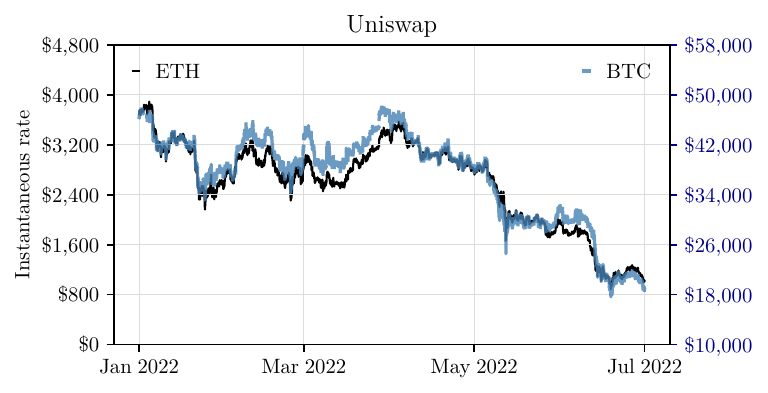}
		\caption{ Exchange rates ETH/USDC and BTC/USDC  in Uniswap v3 between January 2022 and June 2022.}%
		\label{fig:coint}%
	\end{figure}
	
	In our model, we capture the joint dynamics of the pool rates and the oracle rates using a multi-OU process. This process is suited to capture equilibrium relationships as those exhibited between BTC and ETH, and to capture lead-lag effects between different trading venues as those exhibited between Uniswap and Binance. The lead-lad effects are encoded in the cointegration factors in the mean-reversion matrix, which drives the deterministic and long-term component of the process. 
	
	To further study the suitability of the multi-OU process, we use the data described above to perform a Johansen's cointegration Trace test for the vector $\bm R = \left(Z^1, Z^2, S^1, S^2\right)$, where $Z^1$ and $S^1$ are the pool and oracle rates of ETH, and $Z^2$ and $S^2$ are the pool and oracle rates of BTC. Johansen's test rejects the hypothesis of no cointegration and the hypothesis of a cointegration rank $r = 1\,,$ but does not reject a cointegration rank $r = 2$; see \cite{johansen1991estimation}. 
	
	The cointegration vectors are $(1,  -0.022, -1.002, 0.022)$ and $(1, 0.09, -1.004, -0.088)$. The cointegration vectors, and hence the mean-reversion matrix  (i.e., the matrix $\boldsymbol\beta$ in \eqref{eq:dynmultiOU}), capture the joint dynamics of BTC and ETH, but also the lead-lag relation between the pool rates and oracle rates. Table \ref{table:johansenTestResultsTable} shows the results for the Johansen's cointegration Trace test. 
	\begin{table}
		\begin{center}
			\begin{tabular}{c r c c} 
				Null Hypothesis &  Trace statistics & Critical Value & Result\\ [0.5ex] 
				\hline 
				\\ [-1.8ex] 
				$r \leq 0$ & $11,196.97$ & $54.68$ & Rejected \\ [1ex]  
				$r \leq 1$ & $4,821.52$ & $35.46$ & Rejected \\ [1ex] 
				$r \leq 2$ & $11.45$ & $19.93$ & Not rejected \\ 
				\hline
			\end{tabular}
		\end{center}
		\caption {Johansen's cointegration Trace test results (critical values are given for a significance level of $99\%$). The test confirms a cointegration rank $r=2$.}
		\label{table:johansenTestResultsTable}
	\end{table}
	
	Thus, it is key for an LT who trades multiple crypto-currencies to account for external information from similar assets and from other trading venues.
	
	\section{Optimal portfolio trading with oracles and signals \label{sec:Model_multi}}
	In this section, we solve the problem of an LT in charge of liquidating a large position in a basket of  $n$ risky and cointegrated assets $\bm{Y} = \left(Y^1,\dots,Y^n\right)$ in a CPMM. The CPMM offers $n$ pools for each asset against a reference asset $X\,$ and we denote by $(\bm Z_t)_{t \in [0,T]} = \left(Z^1,\dots,Z^n\right)_{t \in [0,T]}$ the exchange rates in these pools. The LT uses $n$ oracle rates $(\bm S_t)_{t \in [0,T]} = \left(S^1,\dots,S^n\right)_{t \in [0,T]}$  from another more liquid exchange and a set of $m$ predictive signals $(\bm I_t)_{t \in [0,T]} = \left(I^1,\dots,I^m \right)_{t \in [0,T]}\,.$ 
	
	We fix a filtered probability space $\left(\Omega, \mathcal F, \P; \F = (\mathcal F_t)_{t \in [0,T]} \right)$ satisfying the usual conditions, where $\F$ is the natural filtration generated by the collection of observable stochastic processes that we define below.
	
	The LT must liquidate her initial inventory $\bm{\tilde{y}}_0\in\R^n$ over a period of time $[0,T]$ and her wealth is valued in terms of asset $X$. The joint dynamics of the pool rates $\bm Z\,,$ the oracle rates $\bm S\,,$ and the signals $\bm I$ are modelled by a $2n+m$-dimensional vector $(\bm R_t)_{t \in [0,T]}$ that follows multi-OU dynamics
	\begin{equation}\label{eq:dynmultiOU}
		d\bm{R}_t = \bm{\beta}\,\left(\bm{\mu}-\bm{R}_t\right)\,dt + \bm{\sigma}^\intercal\, d\bm{W}_t\,,
	\end{equation}
	where we force the first $n$ components of $\bm R$ to coincide with $\bm Z\,,$ $\bm{\beta}$ is a $2n+m\times2n+m$ mean-reversion matrix, $\bm{\mu}$ is a $2n+m$-dimensional vector that models the long-term unconditional mean, $\bm{\sigma}^\intercal$ is the Cholesky decomposition of the asset prices correlation matrix $\bm{\Sigma}$\,, the symbol $^\intercal$ denotes the transpose operator, and $(\bm{W}_t)_{t \in [0,T]}$ is a $2n+m$-dimensional Brownian motion with independent coordinates. For previous work with multi-OU price dynamics see \cite{cartea2018trading}, \cite{bergault2022multi}, and \cite{drissi2022solvability}.
	
	The LT trades at a continuous $n$-dimensional speed $(\bm{\nu}_t)_{t \in [0,T]}$ and the dynamics of her holdings are given by
	\begin{equation}
		d \bm{\tilde y}_t = -\bm{\nu}_t \,dt\, .
	\end{equation}
	We do not restrict the speed to be positive; if for some $i\in\{1,\dots,n\}$ we have $\nu^{i}>0$ the LT sells asset $Y^{i}$ and if $\nu^{i}<0$ the LT buys the asset.
	
	As discussed in Section \ref{sec:AMM2_2}, the rate impact of the LT's activity is encoded in the trading function $f\left(x,y\right)$ and is a function of the trading speed as in traditional execution models (see \cite{cartea2015book}), the pool rate $\bm Z$, and the pool depth $\kappa$; see \cite{cartea2022decentralised}.
	
	
	More precisely, for each asset $Y^i$, we write the difference between the execution rate $\tilde Z^i$  and the instantaneous rate $Z^i$ as
	\begin{equation}\label{eqn:price_impact_multi}
		\tilde Z^i - Z^i= - \frac{\eta}{\kappa^i}\, \left(Z^i\right)^{3/2}\,\nu^i\,,
	\end{equation}
	where $\kappa^i$ is the depth of the pool that offers liquidity for the pair $\left(X,Y^i\right).$ Thus, the dynamics of the LT's holdings of asset $X$ are given by 
	\begin{equation}\label{eq:wealth_multi}
		\begin{split}
			d\tilde x_t &= \bm{\tilde Z}_t^\intercal\, \bm{\nu}_t\,dt=\left(\bm{Z}_t-\mathfrak{D}\left(\frac{\eta}{\bm{\kappa}} \odot \bm{Z}_{t}^{3/2}\right)\,\bm{\nu}_{t}\right)^\intercal\bm{\nu}_t\,dt\, ,
		\end{split}
	\end{equation}
	where $\bm{\kappa}$ is the $n$-dimensional vector of the depth of the pools, $\eta$ is a parameter which depends on the LT's trading frequency, $\mathfrak{D}(\bm{c})$ denotes a diagonal matrix whose diagonal elements are equal to the vector $\bm{c}$\,, and $\odot$ denotes the component-wise product. Compared with previous works in the literature that employ multi-OU price dynamics, here market impact parameter is a deterministic function of the stochastic vector $\bm Z$. 
	
	Let $\bm{\mathcal{X}}$ be a $n\times (2n+m)$ matrix with $\bm{\mathcal{X}}_{ij}=\mathbbm{1}_{\{i=j\}}$ which maps the first $n$ elements of a $2n+m$-dimensional vector into an $n$ dimensional vector, and write \eqref{eq:wealth_multi} in terms of the vector $\bm{R}$ as
	\begin{equation}\label{eq:wealth_new}
		\begin{split}
			d\tilde x_t &=\left(\bm{\mathcal{X}}\,\bm{R}_t-\mathfrak D\left(\frac{\eta}{\bm{\kappa}} \odot \left(\bm{\mathcal{X}}\,\bm{R}_t^{3/2}\right)\right) \,\bm{\nu}_{t}\right)^\intercal\bm{\nu}_t\,dt\, .
		\end{split}
	\end{equation}
	
	\subsection{Performance criterion and value function \label{sec:model:optimal_multi}}
	In our model, the LT maximises her expected terminal wealth in units of $X$ while penalising inventory in the risky assets $\bm{Y}$. The set of admissible strategies is
	\begin{equation}\label{def:admissibleset_t_multi}\mathcal A_t = \left\{
		(\bm{\nu}_s)_{s \in [t,T]} \ \middle\vert \begin{array}{l}
			\R^n\textrm{-valued},\ \F\textrm{-adapted, and } \displaystyle\int_t^T \lVert \bm{\nu}_s \rVert^2 \,ds < +\infty, \ \ \textrm{a.s.}
		\end{array}\right\}\,,
	\end{equation}
	and we write $\mathcal A := \mathcal A_0\,.$ Let $\bm{\nu}\in\Ac$, the performance criterion of the LT who trades at speed $\bm{\nu}$, is the function $u^{\nu}\colon[0,T] \times \R \times \R^n \times \R_{++}^{2n+m} \rightarrow \R\ $ given by
	\begin{align}
		\label{eq:perfcriteria_model_multi}
		u^{\bm{\nu}}(t,\tilde{x},\bm{\tilde{y}},\bm{R})=\E_{t,\tilde{x},\bm{\tilde{y}},\bm{R}}\Big [&\tilde{x}_{T}^{\bm{\nu}}+ \bm{R}_T^\intercal\,\bm{\mathcal{X}}^\intercal\, \bm{\tilde{y}}_{T}^{\bm{\nu}} -\left(\bm{\tilde{y}}_{T}^{\bm{\nu}}\right)^\intercal\bm{\alpha}\,\bm{\tilde{y}}_{T}^{\bm{\nu}}-\phi\,\int_{t}^{T}\left(\bm{\tilde{y}}_{s}^{\bm{\nu}}\right)^\intercal\bm{\tilde{\Sigma}}\,\bm{\tilde{y}}_{s}^{\bm{\nu}}\,ds\Big]\,,
	\end{align}
	and the LT's value function is 
	\begin{align}
		\label{eq:valuefunc_multi}
		u(t,\tilde{x},\bm{\tilde{y}},\bm{R})=\underset{\nu\in\mathcal{A}}{\sup} \, u^{\nu}(t,\tilde{x},\bm{\tilde{y}},\bm{R})\,.
	\end{align}
	
	The first term on the right-hand side of \eqref{eq:perfcriteria_model_multi} is the amount of asset $X$ that the LT holds at the end of the trading window. The second term represents the terminal value of her remaining inventory marked-to-market using the pool rates $\bm{Z}$. The third term represents an inventory penalty where the diagonal positive matrix $\bm{\alpha}$ quantifies the aversion of the LT to holding non-zero inventory at time $T$; the units of $\bm{\alpha}$ are such that the penalty is in units of $X$. Finally, the last term is a running inventory penalty where $\bm{\tilde{\Sigma}}$ is the covariance matrix of the pool rates $\bm{Z}$, and the parameter $\phi \geq 0$ quantifies the urgency of the LT to liquidate inventory; the units of $\phi$ are such that the penalty is in units of $X$. Note that the third term on the right-hand side of \eqref{eq:perfcriteria_model_multi} is sometimes interpreted in the algorithmic trading literature as the `cost' of liquidating the final inventory $\bm{\tilde{y}}_T$ at time $T.$ Using this interpretation, and in light of the results of the previous section, this term should be proportional to $\mathcal{D}\left(\frac{\eta}{\bm{\kappa}} \odot \left(\bm{\mathcal{X}}\,\bm{R}_t^{3/2}\right)\right)$. Here we interpret this term as penalisation term and we observe that for $\bm{\alpha}$ sufficiently large, $\left(\bm{\tilde{y}}_{T}^{\bm{\nu}}\right)^\intercal\bm{\alpha}\,\bm{\tilde{y}}_{T}^{\bm{\nu}}$ is greater than expected execution costs of terminal inventory.
	
	The value function solves the Hamilton--Jacobi--Bellman (HJB) equation
	\begin{equation}\label{eq:hjbu_multi}
		\begin{split}
			0=\,&\partial_{t}w-\phi\,\bm{\tilde{y}}^\intercal\bm{\tilde{\Sigma}}\,\bm{\tilde{y}}+ \,\left(\bm{\mu}-\bm{R}\right)^\intercal\bm{\beta}\,\partial_{\bm{R}}w+\frac{1}{2}\,\Tr\left(\bm{\Sigma}\,D_{\bm{R}\bm{R}}w\right) \\
			&+\sup_{\nu\in\R^n}\Bigg(-\bm{\nu}^\intercal\partial_{\bm{\tilde{y}}}w+\bigg(\bm{\mathcal{X}}\,\bm{R}-\mathfrak D\left(\frac{\eta}{\bm{\kappa}} \odot \left(\bm{\mathcal{X}}\,\bm{R}^{3/2}\right)\right) \,\bm{\nu}\bigg)^\intercal\bm{\nu}\,\partial_{\tilde{x}}w\Bigg)\ ,
		\end{split}
	\end{equation}
	with terminal condition \begin{align} \label{eq:termcondu_multi}
		w(T,\tilde{x},\bm{\tilde{y}},\bm{R})=\tilde{x}+ \bm{R}^\intercal\,\bm{\mathcal{X}}^\intercal\, \bm{\tilde{y}} -\bm{\tilde{y}}^\intercal\,\bm{\alpha}\,\bm{\tilde{y}}\, .
	\end{align}

	Next, substitute the  ansatz 
	\begin{align}
		\label{eq:ansatz1_multi}
		w(t,\tilde{x},\bm{\tilde{y}},\bm{R})\,=\,\tilde{x}+ \bm{R}^\intercal\,\bm{\mathcal{X}}^\intercal\, \bm{\tilde{y}}+\theta(t,\bm{\tilde{y}},\bm{R})\ ,
	\end{align}
	into \eqref{eq:hjbu_multi} and solve the first order condition to obtain the partial differential equation (PDE)
	\begin{align}\label{eq:hjbtheta1_multi}
		0\,=\,&\partial_{t}\theta-\phi\,\bm{\tilde{y}}^\intercal\bm{\tilde{\Sigma}}\,\bm{\tilde{y}}+ \,\left(\bm{\mu}-\bm{R}\right)^\intercal\bm{\beta}\,\left(\bm{\mathcal{X}}^\intercal\bm{\tilde y}+\partial_{\bm{R}}\theta\right)\\
		&+\frac{1}{2}\,\Tr\left(\bm{\Sigma}\,D_{\bm{R}\bm{R}}\theta\right)+\partial_{\bm{\tilde{y}}}\theta^\intercal\,\mathfrak D \left(\frac{\bm{\kappa}}{4\,\eta} \odot \left(\bm{\mathcal{X}}\,\bm{R}^{-3/2}\right)\right)\,\partial_{\bm{\tilde{y}}}\theta\, ,\nonumber
	\end{align}
	with terminal condition
	\begin{equation}\label{eq:terminal1_multi}
		\theta(T,\bm{\tilde{y}},\bm{R})=-\bm{\tilde{y}}^\intercal\,\bm{\alpha}\,\bm{\tilde{y}}\, ,
	\end{equation}
	where the optimal trading speed in feedback form is
	\begin{equation}\label{eq:optimalspeed_multi}
		\nu^{\star}=-\mathfrak D \left(\frac{\bm{\kappa}}{2\,\eta} \odot \left(\bm{\mathcal{X}}\,\bm{R}^{-3/2}\right)\right)\,\partial_{\bm{\tilde{y}}}\theta\ .
	\end{equation}

	The semi-linear PDE \eqref{eq:hjbtheta1_multi} does not admit a closed-form solution. In practice, one uses a numerical scheme to approximate the solution $\theta$ and deduce the optimal speed \eqref{eq:optimalspeed_multi}. In our model, the dimensionality of the problem grows exponentially with the number of assets. As a consequence, numerical schemes suffer from the curse of dimensionality.
	
	In practice, the performance of an execution or statistical arbitrage strategy heavily relies on the ability to compute and send liquidity taking instructions with minimal latency. Thus, it is crucial for an LT to employ a trading strategy that is relatively fast to compute. Hence, we derive a closed-form approximation strategy for \eqref{eq:optimalspeed_multi} which can be implemented and tracked by the LT in real time. Therefore, we extend the method developed in Section \ref{sec:Model} for a single asset framework.
	
	Subsection \ref{sec:approxstrat} solves an execution problem when convexity costs are fixed for a value of the pool rate $Z\,,$ and Subsection \ref{sec:closedform} defines the closed-form approximation strategy as the uniform limit of a family of these strategies.
	
	\subsection{Optimal strategy with deterministic execution costs \label{sec:approxstrat}}

	Here, we derive a trading strategy that assumes a deterministic and fixed cost parameter in \eqref{eqn:price_impact_multi}.  More precisely, consider here that the execution cost in \eqref{eqn:price_impact_multi} is
	\begin{equation}\label{eq:impact_deterministic_multi}
		\bm{\tilde Z} - \bm{Z}= - \eta\,\mathfrak D(\bm{\zeta})\,\bm{\nu}\,.
	\end{equation}
	Here the $n$-dimensional vector $\bm{\zeta} > 0$ is a fixed cost parameter. Our aim is to derive the optimal strategy $\left(\bm{\nu}^{\star,\bm{\zeta}}_t\right)_{t \in [0,T]}$ for a given value of the parameter $\bm{\zeta}>0$. For each value of $\bm{\zeta}$, we consider the set of admissible strategies $\mathcal A_t^{\bm{\zeta}}$ similar to that in \eqref{def:admissibleset_t_multi} and we write $\mathcal{A}^{\bm{\zeta}}:= \mathcal A_0^{\bm{\zeta}}.$ Let $\bm{\nu}^{\bm{\zeta}}\in\Ac^{\bm{\zeta}}$. For an  LT who trades at speed $\big(\bm{\nu}_t^{\bm{\zeta}}\big)_{t \in [0,T]}$, the inventory $\big(\tilde y_t^{\zeta}\big)_{t \in [0,T]}$ evolves as
	\begin{align}
		\label{eq:ytildeProcessjN_multi}
		d \bm{\tilde y}^{\bm{\zeta}}_t = -\bm{\nu}^{\bm{\zeta}}_t \,dt\, ,
	\end{align}
	and her holdings of asset $X$ follow the dynamics
	\begin{equation}\label{eq:wealth_AC_multi}
		\begin{split}
			d\tilde x_t^{\bm{\zeta}} &=\left(\bm{\mathcal{X}}\,\bm{R}_t-\eta\,\mathfrak D\left(\bm{\zeta}\right) \,\bm{\nu}_{t}^{\bm{\zeta}}\right)^\intercal\bm{\nu}_t^{\bm{\zeta}}\,dt\, .
		\end{split}
	\end{equation}
	
	We consider the same performance criterion as that in \eqref{eq:perfcriteria_model_multi} and let 
	$u_{\bm{\zeta}}$ be the value function  of the LT. To solve the new optimisation problem for each value of $\bm{\zeta}$, follow the same steps as above to obtain the PDE verified by $\theta_{\bm{\zeta}}$, where $w_{\bm{\zeta}}(t,\tilde{x},\bm{\tilde{y}},\bm{R})=\tilde{x}+ \bm{R}^\intercal\,\bm{\mathcal{X}}^\intercal\, \bm{\tilde{y}}+\theta_{\bm{\zeta}}(t,\bm{\tilde{y}},\bm{R})\,,$ and write
	\begin{equation}\label{eq:hjbtheta2_multi}
		\begin{split}
			0\,=\,&\partial_{t}\theta_{\bm{\zeta}}-\phi\,\bm{\tilde{y}}^\intercal\,\bm{\tilde{\Sigma}}\,\bm{\tilde{y}}+ \,\left(\bm{\mu}-\bm{R}\right)^\intercal\,\bm{\beta}\,\left(\bm{\mathcal{X}}^\intercal\,\bm{\tilde y}+\partial_{\bm{R}}\theta_{\bm{\zeta}}\right)\\
			&+\frac{1}{2}\,\Tr\left(\bm{\Sigma}\,D_{\bm{R}\bm{R}}\,\theta_{\bm{\zeta}}\right)+\frac{1}{4\,\eta}\,\partial_{\bm{\tilde{y}}}\,\theta_{\bm{\zeta}}^\intercal\,\mathfrak D\left(\bm{\zeta}\right)^{-1}\,\partial_{\bm{\tilde{y}}}\theta_{\bm{\zeta}}\ ,
		\end{split}
	\end{equation}
	with terminal condition
	\begin{equation}\label{eq:terminal2_multi}
		\theta_{\bm{\zeta}}(T,\bm{\tilde{y}},\bm{R})=-\bm{\tilde{y}}^\intercal\bm{\alpha}\,\bm{\tilde{y}}\, ,
	\end{equation}
	where the LT's optimal trading speed in feedback form is
	\begin{equation}\label{eq:optimalspeed_multi_deterministic}
		\bm\nu^{\star,\bm{\zeta}}=-\frac{1}{2\,\eta}\,\mathfrak D\left(\bm{\zeta}\right)^{-1}\,\partial_{\bm{\tilde{y}}}\theta_{\bm{\zeta}}\ .
	\end{equation}

	To further study the problem, substitute the ansatz
	\begin{equation}\label{eq:ansatz2_multi}
		\begin{split}
			\theta_{\bm{\zeta}}(t,\tilde{x},\bm{\tilde{y}},\bm{R})=&\,\bm{\tilde{y}}^{\intercal}A_{\bm{\zeta}}(t)\,\bm{\tilde{y}} +\bm{\tilde{y}}^{\intercal}B_{\bm{\zeta}}(t)\,\bm{R}+C_{\bm{\zeta}}(t)\,\bm{\tilde{y}}+\bm{R}^{\intercal}D_{\bm{\zeta}}(t)\,\bm{R}+E_{\bm{\zeta}}(t)\,\bm{R}+F_{\bm{\zeta}}(t)\,,
		\end{split}
	\end{equation}
	in \eqref{eq:hjbtheta2_multi} to obtain the  system of ODEs
	\begin{equation}\label{eq:system}
		\left\{
		\begin{aligned}
			0=&A'_{\bm{\zeta}}(t)-\phi\,\bm{\tilde{\Sigma}}+\frac{1}{\eta}A_{\bm{\zeta}}(t)^{\intercal}\mathfrak{D}\left(\bm{\zeta}\right)^{-1}A_{\bm{\zeta}}(t)\,,\\
			0=&B_{\bm{\zeta}}'(t)-\bm{\mathcal{X}}\,\bm{\beta}^{\intercal}-B_{\bm{\zeta}}(t)\,\bm{\beta}^{\intercal}+\frac{1}{\eta}\,A_{\bm{\zeta}}(t)^{\intercal}\,\mathfrak{D}\left(\bm{\zeta}\right)^{-1}B_{\bm{\zeta}}(t)\,,\\
			0=&C_{\bm{\zeta}}'(t)+\bm{\mu}^{\intercal}\,\bm{\beta} \,\bm{\mathcal{X}}^{\intercal}+ \bm{\mu}^{\intercal}\,\bm{\beta}B_{\bm{\zeta}}(t)^{\intercal}+\frac{1}{\eta}\,C_{\bm{\zeta}}(t)^{\intercal}\,\mathfrak{D}\left(\bm{\zeta}\right)^{-1}A_{\bm{\zeta}}(t)\,,\\
			0=&D_{\bm{\zeta}}'(t)+\frac{1}{4\,\eta}B_{\bm{\zeta}}(t)^{\intercal}\,\mathfrak{D}\left(\bm{\zeta}\right)^{-1}B_{\bm{\zeta}}(t)\,,\\
			0=&E_{\bm{\zeta}}'(t)-E_{\bm{\zeta}}(t)^{\intercal}\,\bm{\beta}^{\intercal}+\frac{1}{2\,\eta}\,C_{\bm{\zeta}}(t)^{\intercal}\,\mathfrak{D}\left(\bm{\zeta}\right)^{-1}\,B_{\bm{\zeta}}(t)\,,\\
			0=&F_{\bm{\zeta}}'(t)+ \bm{\mu}^{\intercal}\,\bm{\beta}E_{\bm{\zeta}}(t)+Tr\left(\bm{\Sigma}\,D_{\bm{\zeta}}(t)\right)+\frac{1}{4\,\eta}\,C_{\bm{\zeta}}(t)^{\intercal}\,\mathfrak{D}\left(\bm{\zeta}\right)^{-1}C_{\bm{\zeta}}(t)\,,
		\end{aligned}\right.
	\end{equation}
	with terminal condition
	\begin{equation}\label{eq:terminal2}
		\begin{split}
			&A_{\bm{\zeta}}(T)=-\bm{\alpha}\,,\\
			B_{\bm{\zeta}}(T)=C_{\bm{\zeta}}(T)=&D_{\bm{\zeta}}(T)=E_{\bm{\zeta}}(T)=F_{\bm{\zeta}}(T)=0\,.
		\end{split}
	\end{equation}
	
	The ODEs in $D$, $E$, $F$, and the corresponding terminal conditions admit the unique solutions  $D_{\bm{\zeta}}=E_{\bm{\zeta}}=F_{\bm{\zeta}}=0\,.$ Next, use the classical tools for Riccati equations to show that the ODE for $A_{\bm{\zeta}}(t)$ admits the unique solution
	\begin{equation}
		A_{\bm{\zeta}}(t)=\eta\,\mathfrak{D}\left(\bm{\zeta}\right)^{\frac{1}{2}}\left(\bm{\varPsi}+\bm{\xi}\left(t\right)^{-1}\right) \,\mathfrak{D}\left(\bm{\zeta}\right)^{\frac{1}{2}}\,,
	\end{equation}
	where
	\begin{equation}
		\left\{
		\begin{aligned}
			\bm{\varPsi} & =\frac{\phi}{\eta}\,\mathfrak{D}\left(\bm{\zeta}\right)^{-\frac{1}{2}}\bm{\tilde{\Sigma}}\,\mathfrak{D}\left(\bm{\zeta}\right)^{-\frac{1}{2}}\,,\\
			\bm{\xi}\left(t\right) & =-\frac{\bm{\varPsi}^{-1}}{2}\left(I-e^{-2\bm{\varPsi}\left(T-t\right)}\right)-e^{-\bm{\varPsi}\left(T-t\right)}\left(\bm{\Phi}+\bm{\varPsi}\right)^{-1}e^{-\bm{\varPsi}\left(T-t\right)}\text{},\\
			\bm{\Phi} & =\frac{1}{\eta}\mathfrak{D}\left(\bm{\zeta}\right)^{-\frac{1}{2}}A_{\bm{\zeta}}(T)\,\mathfrak{D}\left(\bm{\zeta}\right)^{-\frac{1}{2}}\,.
		\end{aligned}\right.
	\end{equation}
	
	Finally, use Theorem $3$ in \cite{cartea2018trading} to obtain the unique solution for $B_{\bm{\zeta}}$ and $C_{\bm{\zeta}}$\,. More precisely, observe that $C_{\bm{\zeta}} = -\bm{\mu}^\intercal\,B_{\bm{\zeta}}^\intercal$ and write
	\begin{equation}
		B_{\bm{\zeta}}(t) = \int_t^T\bm{\colon e}^{\int_t^u (\bm{\beta}^{\intercal}-\frac{1}{\eta}A_{\bm{\zeta}}(s)^{\intercal}\mathfrak{D}\left(\bm{\zeta}\right)^{-1})\,ds}\bm{\colon}\bm{\mathcal{X}}\bm{\beta}^{\intercal}\,du\,,
	\end{equation}
	where the notation $\bm{\colon e}^{\int_t^u \cdot \,ds}\bm{\colon}$ denotes the time-ordered exponential. 
	
	Thus, we obtain a closed-form solution to the HJB equation and an analytical formula for the optimal trading speed in the case of fixed execution costs. In particular, substitute \eqref{eq:ansatz2_multi} into \eqref{eq:optimalspeed_multi_deterministic} to obtain the optimal strategy in feedback form
	\begin{equation}\label{eq:speed}
		\bm\nu^{\star,\bm{\zeta}}=\frac{1}{2\,\eta}\,\mathfrak D\left(\bm{\zeta}\right)^{-1}\,\left(B_{\bm{\zeta}}(t)^\intercal\,(\bm{\mu}-\bm{R})-2\,\bm{A}_{\bm{\zeta}}(t)\,\bm{\tilde{y}}\right).
	\end{equation}
	
	\subsection{Closed-form approximation strategy \label{sec:closedform}}
	
	Here,  we use a family of closed-form strategies of the type in \eqref{eq:optimalspeed_multi_deterministic} to derive a piecewise-defined trading strategy which approximates the optimal trading speed in feedback form \eqref{eq:optimalspeed_multi}. Specifically, we partition the space of the rate $\bm Z$ into hyperrectangles and define a piecewise strategy which uses a different impact parameter $\bm\zeta$ in each different hyperrectangle. Finally, we show that as the dimension of the hyperrectangles becomes arbitrarily small, the piecewise strategy converges to the closed-form approximation strategy. 
	
	For each $i\in\{1,\dots,n\}$, let $\left\{Z_0^{i,N}, \dots,Z_N^{i,N}\right\}$ be a partition of $\left[\underline{Z}^{i},\overline{Z}^{i}\right],$ where $0<\underline{Z}^{i}<\overline{Z}^{i}$. Here, $\underline{Z}^i $ and $ \overline{Z}^i$ are are such that the LT has high confidence that the rate $Z^i$ will be within the range $\left[\underline{Z}^i,\overline{Z}^i\right]$ during the execution window. We define $\Rk\coloneqq\left[\underline{Z}^1, \overline{Z}^1\right]\times\left[\underline{Z}^n, \overline{Z}^n\right]$. Next, for each $N\in\N$, $i\in\{1,\dots,n\}$, and  $j\in\{0,\dots,N\}$ we define
	\begin{equation}\label{eq:partitionedConvexity_multi}
		Z_{j}^{i,N} \coloneqq \underline{Z}^{i}+\frac{j}{N} \left(\overline{Z}^{i}-\underline{Z}^{i}\right)\quad\text{and}\quad\zeta_{j}^{i,N} \coloneqq  \frac{1}{\kappa^{i}}\, \left(Z_{j}^{i,N}\right)^{3/2}\,,
	\end{equation}
	and for each multi-index $\bm j = \left(j_1,\dots,j_n\right)\in\{1,\dots,N\}^n$ we define
	\begin{equation}
		\begin{split}
			\bm\zeta_{\bm{j}}^N\coloneqq\left(\zeta_{j_1}^{1,N},\dots, \zeta_{j_n}^{n,N}\right),\text{\quad and\quad}
			\bm\nu^{\star,\bm{j},N}=\left(\nu^{\star,1,j_1,N},\dots, \nu^{\star,n,j_n,N}\right)\coloneqq\bm\nu^{\star,\bm\zeta_{\bm{j}}^N}\,.
		\end{split}
	\end{equation}
	
	To construct the approximate trading strategy, we first define a strategy $\bm\nu^{\star,N}$ that uses the closed-form optimal trading speed $\bm\nu^{\star,\bm{j},N}$ to approximate the optimal trading speed whenever the rate is close to $\bm\zeta_{\bm{j}}^N$. For each $i\in\{1,\dots,n\} $, we define the piecewise-defined trading speed $\nu^{\star,i,N}\colon\left[0,T\right]\times\R\times\R^n\times\R^{2n+m}\to\R$ 
	\begin{equation}\label{eq:piecewise_strategy_multi}
		\begin{split}
			\nu^{\star,i,N}\left(t,\bm\tilde{y},\bm R\right) =\,& \nu^{\star,i,0,N}\left(t,\bm\tilde{y},\bm R\right)\mathbbm{1}_{Z^i<Z^{i,N}_1} + \sum_{j=1}^{N-1}\nu^{\star,i,j,N}\left(t,\bm\tilde{y},\bm R\right)\mathbbm{1}_{Z^i\in[Z_j^{i,N},Z^{i,N}_{j+1})} \\
			&+ \nu^{\star,i,N,N}\left(t,\bm\tilde{y},\bm R\right)\mathbbm{1}_{Z^i\geq Z^{i,N}_N}\,.
		\end{split}
	\end{equation}
	The strategy $\nu^{\star,i,N}\left(t,\bm\tilde{y},\bm R\right)$ has first-type discontinuity points; specifically, for each $j\in\{1,\dots,N\}$ we have  $\nu^{\star,i,j,N}|_{Z^{i}=Z^{i,N}_{j+1}} \neq\nu^{\star,i,j+1,N}|_{Z^{i}=Z^{i,N}_{j+1}}$.

	The theorem below shows how to partition $\Rk$ to make the discontinuities in each $\nu^{\star,i,N}$ arbitrarily small simultaneously. Furthermore, when the distance between points in the partition becomes sufficiently small, the sequence of piecewise-defined optimal strategies $\left\{\bm\nu^{\star,N}\right\}_{N\in\N}$ converges uniformly to a continuous closed-form approximation strategy which we use in our performance study of Section \ref{sec:performance_multi}.
	
	\begin{thm}\label{thm_multi}
		For each $\varepsilon>0\,$ there exists $N\in\N$ such that 
		\begin{equation}\label{eq:inequality_multi}
			\underset{\substack{j=1,\dots,N \\ i=0,\dots,n}}{\max} \left\lVert\nu^{\star,i,j,N}|_{Z^{i}=Z^{i,N}_{j+1}}\ - \nu^{\star,i,j+1,N}|_{Z^{i}=Z^{i,N}_{j+1}}\right\rVert<\varepsilon\,.
		\end{equation}
		Furthermore, for each $N\in\N\,,$ let $\bm{\tilde{\nu}}^{\star,N}\coloneqq \bm\nu^{\star,N}\left|_{\left[0,T\right]\times\R\times\R^n\times\R^{n}\times\Rk\times\R^m}\right.$. Then, the sequence $\left\{\bm{\tilde{\nu}}^{\star,N}\right\}$ converges to $\bm\tilde{\nu}^{\star}$ uniformly in $ \left[0,T\right]\times\R\times\R^n\times\R^{n}\times\Rk\times\R^m$\,, where 
		\begin{equation}
			\label{eq:pseudooptimal_multi}
			\bm{\tilde{\nu}}^{\star}\left(t,\bm\tilde{y},\bm R\right)=\frac{1}{2\,\eta}\,\mathfrak D\left(\bm{\zeta}\right)^{-1}\, \left(B\left(t,\bm{\kappa}^{-1}\odot\bm{Z}^{3/2}\right)^\intercal\, (\bm{\mu}-\bm{R})-2\,\bm{A}\left(t,\bm{\kappa}^{-1}\odot\bm{Z}^{3/2}\right)\,\bm{\tilde{y}}\right)\,.
		\end{equation}
		and
		\begin{equation}\label{eq:pseudoAB_multi}
			\begin{split}
				\bm{A}\left(t,\bm{\kappa}^{-1}\odot\bm{Z}^{3/2}\right)& =\eta\,\mathfrak{D}\left(\bm{\kappa}^{-1}\odot\bm{Z}^{3/2}\right)^{\frac{1}{2}}\left(\bm{\varPsi}+\bm{\xi}\left(t\right)^{-1}\right) \,\mathfrak{D}\left(\bm{\kappa}^{-1}\odot\bm{Z}^{3/2}\right)^{\frac{1}{2}}\,,\\
				\bm{B}\left(t,\bm{\kappa}^{-1}\odot\bm{Z}^{3/2}\right) &= \int_t^T\bm{\colon e}^{\int_t^u (\bm{\beta}^{\intercal}-\frac{1}{\eta}\bm{A}\left(s,\bm{\kappa}^{-1}\odot\bm{Z}^{3/2}\right)^{\intercal}\mathfrak{D}\left(\bm{\kappa}^{-1}\odot\bm{Z}^{3/2}\right)^{-1})\,ds}\bm{\colon}\bm{\mathcal{X}}\bm{\beta}^{\intercal}\,du\,,\ .\\
			\end{split}
		\end{equation}
	\end{thm}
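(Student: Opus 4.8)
The plan is to mirror the single-asset argument behind Theorem \ref{thm}, the only genuinely new ingredient being that the cost parameter is now a vector $\bm{\zeta}$ and the coefficient objects $A_{\bm{\zeta}}$, $B_{\bm{\zeta}}$ are matrix-valued solutions of a Riccati ODE and a time-ordered exponential rather than the scalar $\tanh$ and exponential-integral expressions of the single-asset case. First I would pin down the parameter set. Since each $Z^i$ ranges over $[\underline{Z}^i,\overline{Z}^i]$ with $\underline{Z}^i>0$, the map $z\mapsto z^{3/2}/\kappa^i$ is a continuous increasing bijection of $[\underline{Z}^i,\overline{Z}^i]$ onto a compact interval $[\underline{\zeta}^i,\overline{\zeta}^i]\subset(0,\infty)$, so every cost vector $\bm{\zeta}_{\bm{j}}^N$ defined in \eqref{eq:partitionedConvexity_multi}, as well as the limiting cost $\bm{\kappa}^{-1}\odot\bm{Z}^{3/2}$, lies in the compact set $K:=\prod_{i=1}^n[\underline{\zeta}^i,\overline{\zeta}^i]$, which is bounded away from the boundary of $\R_{++}^n$.

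The core analytic step is to show that the feedback coefficients depend continuously on $\bm{\zeta}$ on $[0,T]\times K$. From the explicit Riccati solution, $A_{\bm{\zeta}}(t)=\eta\,\mathfrak{D}(\bm{\zeta})^{1/2}(\bm{\varPsi}+\bm{\xi}(t)^{-1})\mathfrak{D}(\bm{\zeta})^{1/2}$, where $\bm{\varPsi}$, $\bm{\Phi}$ and $\bm{\xi}(t)$ are built from $\bm{\zeta}$ through matrix square roots, inverses and matrix exponentials; on $K$ the matrix $\mathfrak{D}(\bm{\zeta})$ is uniformly positive definite and the matrices $\bm{\xi}(t)$ and $\bm{\Phi}+\bm{\varPsi}$ that are inverted remain nonsingular (by well-posedness of the Riccati equation on $[0,T]$), so $(t,\bm{\zeta})\mapsto A_{\bm{\zeta}}(t)$ is continuous, hence \emph{uniformly} continuous on the compact $[0,T]\times K$. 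I would then propagate this to $B_{\bm{\zeta}}$: the integrand $\bm{\beta}^\intercal-\tfrac1\eta A_{\bm{\zeta}}(s)^\intercal\mathfrak{D}(\bm{\zeta})^{-1}$ is jointly continuous in $(s,\bm{\zeta})$, the time-ordered exponential depends continuously on its driving matrix field by standard continuous dependence of linear ODE flows on parameters, and the outer integral in \eqref{eq:pseudoAB_multi} preserves continuity; thus $(t,\bm{\zeta})\mapsto B_{\bm{\zeta}}(t)$ is uniformly continuous on $[0,T]\times K$ as well. Consequently the three maps $\bm{\zeta}\mapsto A_{\bm{\zeta}}(t)$, $\bm{\zeta}\mapsto B_{\bm{\zeta}}(t)$ and $\bm{\zeta}\mapsto\mathfrak{D}(\bm{\zeta})^{-1}$ admit a common modulus of continuity $\omega$ on $K$, uniform in $t\in[0,T]$.

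With uniform continuity in hand, both assertions follow from the fact that the partition mesh tends to zero. For the jump bound \eqref{eq:inequality_multi}, note that at a face $Z^i=Z^{i,N}_{j+1}$ the neighbouring cost vectors $\bm{\zeta}_{\bm{j}}^N$ and $\bm{\zeta}_{\bm{j}+\bm{e}_i}^N$ differ only in their $i$-th entry, and that entry changes by at most $\tfrac{1}{\kappa^i}\big|(Z^{i,N}_{j+1})^{3/2}-(Z^{i,N}_{j})^{3/2}\big|$, which is $O(1/N)$ uniformly in $j$ because $z\mapsto z^{3/2}$ is Lipschitz on $[\underline{Z}^i,\overline{Z}^i]$ and the grid is uniform. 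Feeding this into the feedback form \eqref{eq:speed} and using $\omega$ bounds the difference of the $A$-, $B$- and $\mathfrak{D}(\bm{\zeta})^{-1}$-coefficients by $\omega(O(1/N))$ simultaneously for every coordinate $i$ and index $j$; choosing $N$ large makes the resulting bound on the affine-in-$(\bm{\tilde{y}},\bm{R})$ speed smaller than $\varepsilon$ on the relevant bounded state ranges, giving \eqref{eq:inequality_multi}. For the uniform convergence statement, on the hyperrectangle indexed by $\bm{j}$ the strategy \eqref{eq:piecewise_strategy_multi} uses $\bm{\zeta}_{\bm{j}}^N$ while the candidate limit $\bm{\tilde{\nu}}^{\star}$ of \eqref{eq:pseudooptimal_multi} uses the continuous cost $\bm{\kappa}^{-1}\odot\bm{Z}^{3/2}$; these parameters again differ by $O(1/N)$ uniformly over $\bm{Z}\in\Rk$, so uniform continuity forces the feedback coefficients $A$, $B$ and $\mathfrak{D}(\cdot)^{-1}$ of $\bm{\tilde{\nu}}^{\star,N}$ to converge uniformly on $[0,T]\times\Rk$ to those of $\bm{\tilde{\nu}}^{\star}$, and the continuity of $\bm{\tilde{\nu}}^{\star}$ itself follows because $\bm{Z}\mapsto\bm{\kappa}^{-1}\odot\bm{Z}^{3/2}$ maps continuously into $K$.

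\textbf{Main obstacle.} I expect the delicate point to be the continuity steps, namely establishing that $\bm{\zeta}\mapsto A_{\bm{\zeta}}$ and, through it, $\bm{\zeta}\mapsto B_{\bm{\zeta}}$ are uniformly continuous on $[0,T]\times K$. Unlike the scalar case of Theorem \ref{thm}, where $A_\zeta$ and $B_\zeta$ are closed-form $\tanh$ and exponential-integral expressions whose continuity in $\zeta$ is transparent, here one must argue that the matrices inverted in the definitions of $\bm{\xi}(t)$ and $A_{\bm{\zeta}}(t)$ stay nonsingular uniformly over the compact parameter set, and that this regularity survives composition with the time-ordered exponential defining $B_{\bm{\zeta}}$. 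Once this uniform continuity is secured, the passage to the limit and the simultaneous control over all $n$ coordinate discontinuities are routine consequences of compactness and the vanishing mesh.
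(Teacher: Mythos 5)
Your proposal takes essentially the same route as the paper: the paper disposes of Theorem \ref{thm_multi} by declaring it a trivial generalisation of the single-asset proof in \ref{sec:annex_proof}, and that proof rests on exactly the two ingredients you use, namely uniform continuity of the coefficient maps (there $Z\mapsto \kappa\,Z^{-3/2}A(t,Z)$ and $Z\mapsto \kappa\,Z^{-3/2}B(t,Z)$ on $[0,T]\times[\underline{Z},\overline{Z}]$, here their matrix analogues on $[0,T]\times K$) combined with the $O(1/N)$ partition mesh. Your additional attention to the nonsingularity of $\bm{\xi}(t)$ and $\bm{\Phi}+\bm{\varPsi}$ in the matrix Riccati solution, and to continuous dependence of the time-ordered exponential on its driver, is precisely the detail the paper leaves implicit in calling the generalisation ``trivial,'' so it strengthens rather than departs from the paper's argument.
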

	The proof for Theorem \ref{thm_multi} is a trivial generalisation of the proof in \ref{sec:annex_proof}\,.

	\section{Performance analysis \label{sec:performance_multi}}
	
	We use transaction data to show the performance of the closed-form approximation strategy in \eqref{eq:pseudooptimal_multi}. Similar to Section \ref{sec:AMM_multi}, we use transaction data for ETH and BTC in two pools of the CPMM Uniswap v3 and the more liquid exchange Binance. 
	
	To estimate the execution costs incurred by the LT in the pools we consider, one needs to track the depth of the pool at every instant. We use historical LP operations to reconstruct the historical depth $\kappa\,$ of the two pools, which we use to estimate the execution costs during the execution programmes of the LT.
	
	Here, we perform consecutive in-sample estimation of model parameters using $24$ hours of transaction data sampled at the minute frequency, and out-of-sample execution of the closed-form approximation strategy \eqref{eq:pseudooptimal_multi} using $T=$12 hours, $\phi=1\times10^{-6}$, and $\bm{\alpha}=\left(\begin{array}{cc}
		10 & 0\\
		0 & 10
	\end{array}\right)\,.$
	
	For each individual run, we compute the optimal strategy and compute the terminal profit and loss of the execution programme, which is $\bm{\tilde x}_T + \bm{\tilde y}_T^\intercal \, \bm{Z}_T - \bm{\tilde y}_0^\intercal \, \bm{Z}_0$. Next, we shift both the in-sample and the out-of-sample time windows by $12$ hours and repeat the process, thus performing $357$ runs in total. In each run, the agent must (randomly with probability $1/2$) either liquidate $1000$ units of ETH, and $100$ units of BTC, or buy the same amounts.
	
	To estimate the coefficients of the process driving the joint dynamics of the exchange rates, we use the VAR(1) model $$\Delta R_t = a + \Pi \,\Delta R_{t-1} + \epsilon_t\,,$$ which is a time discretisation of our multi-OU process; where $a \in \R^{2n+m}$, $\Pi \in \mathcal M_{2n+m} (\R)$, and $\epsilon$ is white noise. We use in-sample data and a least squares regression to estimate the coefficients of the VAR model, from which one obtains the mean-reversion matrix $\bm{\beta}$, the long-term unconditional mean $\bm{\mu}$, and the covariance matrix $\bm{\Sigma}\,.$
	
	As an example, Figure \ref{fig:Inventory} shows the optimal inventory when the in-sample data is that of 1 February 2022 and the execution programme is over the following $12$ hours. Figure \ref{fig:Inventory} shows that the strategy uses the spread between the oracle rates and the pool rates for both pairs to predict a potential decrease or increase in the rate. The depth of the ETH/USDC $0.05\%$ pool is larger than that of the BTC/USDC $0.3\%$; see Table \ref{table:datadescr}, so trading costs caused by the convexity of the level function are low in the first pool and high in the second. Hence, the speculative component of the closed-form approximation strategy \eqref{eq:pseudooptimal_multi} is less exploited in the less liquid pool even though the spread between the oracle and the pool rate is wide.
	
	\begin{figure}%
		\centering
		\includegraphics{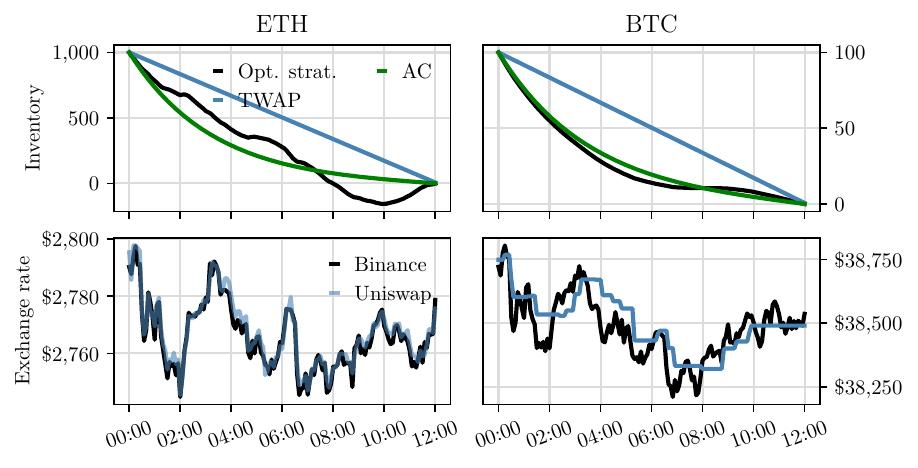}
		\caption{Inventory in ETH and BTC during the execution programme (top) and the pool and oracle exchange rates (bottom). Model parameters are obtained using data between 00:00 on 1 February and 00:00 on 2 February 2022, and the execution programme is between 00:00 on 2 February and 12:00 on 2 February 2022.}%
		\label{fig:Inventory}%
	\end{figure}
	
	In our analysis, we compare the performance of our strategy with that of a classical Almgren-Chriss (AC) strategy and a TWAP strategy; see \cite{gueant2016book}. Moreover, we run the strategy  \eqref{eq:pseudooptimal_multi} with zero initial inventory, which corresponds to an arbitrage strategy that uses the cointegration factors to take speculative positions. Table \ref{table:results} shows the distribution of the terminal P\&L for the three strategies and shows that the closed-form approximation strategy outperforms TWAP and AC on average, and is profitable for statistical arbitrage.
	\begin{table}
		\begin{center}
			\begin{tabular}{c c r} 
				&  Avg. P\&L & Std. Dev.\\ [0.5ex] 
				\hline 
				\\ [-2ex] 
				Optimal & \$\, 2,328 & \$ 54,805  \\ [1ex] 
				Optimal (stat. arb.) & \$\, 2,658 & \$ 5,819  \\ [1ex]   
				AC & \$\;\; -343 & \$ 54,823\\ [1ex]  
				TWAP & \$ -3,705 & \$ 99,946\\
				\hline
			\end{tabular}
		\end{center}
		\caption {Average and standard deviation of the terminal P\&L with strategy \eqref{eq:pseudooptimal_multi} (liquidation and statistical arbitrage) and from executing TWAP and AC for $357$ runs of in-sample estimation of model parameters and out-of-sample execution.}
		\label{table:results}
	\end{table}
	
	\section{Conclusions}
	In this chapter, we derived the optimal strategy for a liquidity taker (LT) who trades in an AMM a basket of crypto-currencies whose constituents co-move. The LT uses market signals and exchange rate information from relevant AMMs and traditional venues to enhance the performance of her strategy. We used stochastic control tools to derive a closed-form strategy that can be computed and implemented by the LT in real time. Finally, we used market data from Uniswap v3 and Binance to study  co-movements between crypto-currencies and lead-lag effects between trading venues, and to showcase the performance of the strategy.

\chapter{Optimal Liquidity Provision in constant product market makers with concentrated liquidity}\label{ch:pl}

\section{Introduction}
Traditional electronic exchanges are organised around limit order books to clear demand and supply of liquidity. In contrast, the takers and providers of liquidity in  constant function market makers (CFMMs) interact in liquidity pools; liquidity providers (LPs) deposit their assets in the liquidity pool and liquidity takers (LTs) exchange assets directly with the pool. At present, constant product market makers (CPMMs) with concentrated liquidity (CL) are the most popular type of CFMM, with  Uniswap v3 as a prime example; see \cite{uniswap2021core}. In CPMMs with CL, LPs specify the rate intervals (i.e., tick ranges) over which they deposit their assets, and this liquidity is counterparty to trades of LTs when the marginal exchange rate of the pool is within the liquidity range of the LPs. When LPs deposit liquidity, fees paid by LTs accrue and are paid to LPs when they withdraw their assets from the pool. The amount of fees accrued to LPs is proportional to the share of liquidity they hold in each liquidity range of the pool.

Existing research characterises the losses of LPs, but does not offer tools for strategic liquidity provision. In this chapter, we study strategic liquidity provision in CPMMs with CL. We derive the continuous-time dynamics of the wealth of strategic LPs which consists of the position they hold in the pool (position value),  fee income, and costs from repositioning their liquidity position. The width of the range where the assets are deposited affects the value of the LP's position in the pool and we show that fee income is subject to a tradeoff between the width of the LP's liquidity range and the volatility of the marginal rate in the pool. More precisely, CL increases fee revenue when the rate is in the range of the LP, but also increases \textit{concentration risk}. Concentration risk refers to the risk that the LP faces when her position is concentrated in narrow ranges; the LP stops collecting fees when the rate exits the range of her position.

We derive an optimal dynamic strategy to provide liquidity in a CPMM with CL. In our model, the LP maximises the expected utility of her terminal wealth, which consists of the accumulated trading fees and the gains and losses from the market making strategy. The dynamic strategy controls the width and the skew of liquidity that targets the marginal exchange rate. For the particular case of log-utility, we obtain the strategy in closed-form and show how the solution balances the  opposing effects between volatility, convexity, and fee collection. When volatility increases, there is an incentive for the LP to widen the range of liquidity provision. In particular, in the extreme case of very high volatility, the LP must withdraw from the pool. Also, when there is an increase in the potential provision fees that the LP may collect because of higher liquidity taking activity, the strategy balances two opposing forces. One, there is an incentive to increase fee collection by concentrating the liquidity of the LP in a tight range around the exchange rate of the pool. Two, there is an incentive to limit the losses due to concentration risk by widening the range of liquidity provision. 
Finally, when the drift of the marginal exchange rate is stochastic (e.g., a predictive signal), the strategy skews the range of liquidity to increase fee revenue, by capturing the LT trading flow, and to increase the position value, by profiting from the expected changes in the marginal rate. 

Finally, we use Uniswap v3 data to motivate our model and to test the performance of the strategy we derive. The LP and LT data are from the pool ETH/USDC (Ethereum and USD coin) between the inception of the pool on $5$ May $2021$ and $18$ August $2022$.  To illustrate the performance of the strategy we use in-sample data to estimate model parameters and out-of-sample data to test the strategy. Our analysis of the historical transactions in Uniswap v3 shows that LPs have traded at a significant loss, on average, in the ETH/USDC pool. We show that the out-of-sample performance of our
strategy is considerably superior to the average LP performance we observe in the ETH/USDC pool.

Early works on AMMs are in \cite{chiu2019blockchain, angeris2021analysis, lipton2021blockchain}. Some works in the literature study strategic liquidity provision in CFMMs and CPMMs with CL. \cite{heimbach2022risks} discuss the tradeoff between risks and returns that LPs face in Uniswap v3, \cite{cartea2023predictable}  study the predictable losses of LPs in a continuous-time setup, \cite{milionis2023automated} study the impact of fees on the profits of arbitrageurs in CFMMs,  \cite{fukasawa2023model} study the hedging of the impermanent losses of LPs, and \cite{li2023yield} study the economics of liquidity provision. Closest to our work are the models in \cite{fan2021strategic} and \cite{fan2022differential} which focus on fee revenue and use approximation techniques to obtain dynamic strategies. Finally, there is a growing literature on AMM design for fair competition between LPs and LTs. \cite{goyal2023finding} study an AMM with dynamic trading functions that incorporate beliefs of LPs, \cite{lommers2023:case} study AMMs where the LP's strategy adjusts dynamically to market information, and  \cite{cartea2023automated} generalise CFMMs and propose AMM designs where LPs express their beliefs and risk preferences.

Our work is related to the algorithmic trading and optimal market making literature.  Early works on liquidity provision in traditional markets are \cite{ho1983dynamics}, \cite{biais1993price}, and  \cite{avellaneda2008high} with extensions in many directions; see \cite{cartea2014buy, cartea2017algorithmic, gueant2017optimal, bergault2021closed, drissi2022solvability}. We refer the reader to \cite{cartea2015book}, \cite{gueant2016book}, and \cite{donnelly2022optimal} for a comprehensive review of algorithmic trading models for takers and makers of liquidity in traditional markets. Also, our work is related to those in  \cite{cartea2018enhancing, barger2019optimal, cartea2020market, donnelly2020optimal, forde2022optimal, bergault2022multi} which implement market signals in algorithmic trading strategies. 

The remainder of the chapter proceeds as follows. Section  \ref{sec:CL} describes CL pools. Section \ref{sec:wealthCL} studies the continuous-time dynamics of the wealth of LPs as a result of the position value, the fee revenue, and rebalancing costs. In particular, we use Uniswap v3 data to study the fee revenue component of the LP's wealth and our results motivate the assumptions in our model.  Section \ref{sec:4} introduces our liquidity provision model and uses stochastic control to derive a closed-form optimal strategy. Next, we study how the strategy controls the width and the skew of the liquidity range as a function of the pool's profitability, PL, concentration risk, and the drift in the marginal rate. Finally, Section  \ref{sec:num1} uses  Uniswap v3 data to test the performance of the strategy and showcases its superior performance.

\section{Constant function market makers and concentrated liquidity \label{sec:CL}}

\subsection{Constant function market makers}

Consider a reference asset $X$ and a risky asset $Y$ which is valued in units of $X.$ Assume there is a pool that makes liquidity for the pair of assets $X$ and $Y$, and denote by $Z$ the marginal exchange rate of asset $Y$ in units of asset $X$ in the pool. In a CFMM that charges a fee $\tau$ proportional to trade size, the trading function $f\left(x, y\right)  = \kappa^2$ links the state of the pool before and after a trade is executed, where $x$ and $y$ are the quantities of asset $X$ and $Y$ that constitute the \textit{reserves} in the pool, $\kappa$ is the depth of the pool, and $f$ is increasing in both arguments. We write $f\left(x, y\right) = \kappa^2$ as $x = \varphi_\kappa\left(y\right)$ for an appropriate decreasing \textit{level function} $\varphi_\kappa$.

We denote the execution rate for a traded quantity $\pm \Delta y$ by $\tilde Z\left(\pm \Delta y\right)$, where $\Delta y\ge 0$. When an LT buys $\Delta y$ of asset $Y$, she pays $\Delta x = \Delta y \times \tilde Z\left(y\right)$ of asset $X$, where 
\begin{equation}\label{eq:CFMM_lt_cond_1}
f\left(x + (1-\tau)\,\Delta x, y - \Delta y\right) = \kappa^2 \quad \implies \quad  \tilde Z(\Delta y) = \frac{\varphi_\kappa\left(y-\Delta y\right) - \varphi_\kappa(y)}{\left(1-\tau\right)\,\Delta y}\,.
\end{equation}
Similarly, when an LT sells $y$ of asset $Y$, she receives $x = y \times \tilde Z\left(-y\right)$ of asset $X$, where
\begin{equation}\label{eq:CFMM_lt_cond_2}
f\left(x + \Delta x, y +  (1-\tau)\,\Delta y\right) = \kappa^2 \quad \implies \quad  \tilde Z(-\Delta y) = \frac{\varphi_\kappa\left(y\right) - \varphi_\kappa\left(y+\left(1-\tau\right)\,\Delta y\right)}{\Delta y}\,.
\end{equation}

In CL markets, the marginal exchange rate  is $Z = -\varphi_\kappa'(y)$, which is the price of an infinitesimal trade when fees are zero, i.e., when $\Delta y\rightarrow 0$ and $\tau = 0$.\footnote{When fees are not zero, the exchange rates for infinitesimal trades are $\lim_{\Delta y\rightarrow 0} \tilde Z(\Delta y) = - \frac{1}{1-\tau}\, \varphi'_\kappa(y)= \frac{1}{1-\tau}\,Z $ and $\lim_{\Delta y\rightarrow 0} \tilde Z(-\Delta y) = - \left(1-\tau\right) \varphi'_\kappa(y) = \left(1-\tau\right)\,Z$; see \eqref{eq:CFMM_lt_cond_1} and \eqref{eq:CFMM_lt_cond_2}.} In traditional CPMMs such as Uniswap v2, the trading function is $f\left(x, y\right) = x\times y$, so the level function is $\varphi_\kappa(y) = \kappa^2/y$ and the marginal exchange rate is $Z = x / y.$ 

Liquidity provision operations in CPMMs do not impact the marginal rate, so when an LP deposits the quantities $x$ and $y$ of assets $X$ and $Y$, the condition $Z = x/y = (x+\Delta  x)/(y+\Delta  y)$ must be satisfied; see \eqref{eq:LPconditionCPMM}.

\subsection{Concentrated liquidity markets}

This chapter focuses on liquidity provision in CPMMs with CL. In pools with CL, the space of rates where an LP provides liquidity is discretised into a set of values $\left\{\Zc(i)\right\}_{i\in\Z}$ called ticks.\footnote{In traditional limit order books, a tick is the smallest price increment.} For instance, in Uniswap v3 each tick $\Zc(i)$ is uniquely identified by an integer $i\in\Z$ according to
\begin{equation}\label{eq:tick}
    \Zc(i) = 1.0001^i\,.
\end{equation}
The boundaries of the LP's position in the pool take values in the set of ticks. The range between two consecutive ticks defines the smallest range available for LPs. Every tick range $\left(\Zc(i), \Zc(i+1)\right]$ has its own depth which determines the execution cost of LT trades when the instantaneous rate $Z$ is in that tick range. 

The LP specifies a lower tick $\Zc(L)$ and an upper tick $\Zc(U)$ between which she provides liquidity that is evenly spread over all the tick ranges in the position range $\left(\Zc(L),\Zc(U)\right]$. The position of an LP who provides liquidity in a range $\left(\Zc(L),\Zc(U)\right]$ is characterised by the position depth $\tilde{\kappa}$. The value of $\tilde{\kappa}$ represents the depth of the LP's liquidity within every tick range in her position range, and $\tilde{\kappa} / \kappa^{i}$ represents the portion of liquidity that the LP holds in the tick range $\left(\Zc(i), \Zc(i+1)\right] \subset \left(\Zc(L),\Zc(U)\right]$. 

The assets that the LP deposits in a range  $\left(\Zc(L),\Zc(U)\right]$ provide the liquidity that supports marginal rate movements between $\Zc(L)$ and $\Zc(U)$. The quantities $\tilde x$ and $\tilde y$ that the LP provides verify the key formulae
\begin{equation}\label{LP:eq:position}
    \begin{cases}
        \tilde x = 0 \qquad\qquad\qquad\qquad\quad\quad\,\,\, \text{and} \qquad \tilde y = \tilde{\kappa}\left(\frac{1}{\sqrt{\Zc(L)}}-\frac{1}{\sqrt{\Zc(U)}}\right) & \text{if}\quad Z<\Zc(L)\,,\\
        \tilde x = \tilde{\kappa}\left(\sqrt{Z}-\sqrt{\Zc(L)}\right) \,\qquad\,\,\, \text{and} \qquad \tilde y = \tilde{\kappa}\left(\frac{1}{\sqrt{Z}}-\frac{1}{\sqrt{\Zc(U)}}\right) & \text{if}\quad \Zc(L) \leq Z<\Zc(U) \,,\\
        \tilde x =  \tilde{\kappa}\left(\sqrt{\Zc(U)}-\sqrt{\Zc(L)}\right)\,\,\,\,\, \text{and} \qquad \tilde y =  0 & \text{if}\quad Z \geq \Zc(U) \,.\\
    \end{cases} 
\end{equation}
When the rate $Z$ changes, the equations in \eqref{LP:eq:position} and the prevailing marginal rate $Z$  determine the holdings of the LP in the pool, in particular, they determine the quantities of each asset received by the LP when she withdraws her liquidity.

Within each tick range, the constant product formulae \eqref{eq:CFMM_lt_cond_1}--\eqref{eq:CFMM_lt_cond_2} determine the dynamics of the marginal rate, where the depth $\kappa$ is the total depth of liquidity in that tick range. To obtain the total depth in a tick range, one sums the depths of the individual liquidity positions in the same tick range; see \cite{drissi2023models}. When a liquidity taking trade is large, so the marginal rate crosses the boundary of a tick range, the pool executes two separate trades with potentially different depths for the constant product formula. 

The proportional fee $\tau$ charged by the pool to LTs is distributed among LPs. More precisely, if an LP's liquidity position with depth $\tilde \kappa$ is in a tick range where the total depth of liquidity is $\kappa$, then for every liquidity taking trade that pays an amount $p$ of fees, the LP with liquidity $\tilde \kappa$ earns the amount
\begin{align}
\label{eq:remuneration_LP}
\tilde  p = \frac{\tilde{\kappa}}{ \kappa}\, p\,\mathbbm{1}_{\{\Zc(L) < Z \leq \Zc(U)\}}\,.
\end{align}
Thus, the larger is the position depth $\tilde \kappa$, the higher is the proportion of fees that the LP earns.\footnote{For instance, if the LP is the only provider of liquidity in the range $(Z^\ell, Z^u]$ then $\kappa = \tilde \kappa$, so the LP collects all the fees in that range.} 

\section{The wealth of liquidity providers in CL pools \label{sec:wealthCL}}

In this section, we consider a strategic LP who dynamically tracks the marginal rate $Z$. In our model, the LP's position is self-financed throughout an investment window $[0, T]$, so the LP repeatedly withdraws her liquidity and collects the accumulated fees, then uses her wealth, i.e., the collected fees and the assets she withdraws, to deposit liquidity in a new range. In the remainder of this work, we work in a filtered probability space $\left(\Omega, \mathcal F, \mathbbm P; \mathbbm F = (\mathcal F_t)_{t \in [0,T]} \right)$ that satisfies the usual conditions, where $\mathbbm F$ is the natural filtration generated by the collection of observable stochastic processes that we define below.

We assume that the marginal exchange rate  in the pool $\left(Z_t\right)_{t\in [0,T]}$ is driven by a stochastic drift $\left(\mu_t\right)_{t\in [0,T]}$ and we write
\begin{align}
\label{eq:dynZ}
   \diff Z_t = \mu_t \, Z_t\, \diff t + \sigma\, Z_t\, \diff W_t\,,  
\end{align}
where the volatility parameter $\sigma$ is a nonnegative constant and $(W_t)_{t \in [0,T]}$ is a standard Brownian motion independent of $\mu$. We assume that $\mu$ is càdlàg with finite fourth moment, i.e., $\mathbbm E\left[\mu_t^4\right]<+\infty$ for $ 0\leq t\leq T$. The LP observes and uses $\mu$ to optimise her liquidity positions and improve trading performance. 

Consider an LP who provides liquidity in the CPMM with CL during an investment horizon $[0, T],$ with $T>0.$ At time $t$, she chooses a range $\left(Z_t^L,Z_t^U\right]$ to provide liquidity. Note that $Z_t^L$, $Z_t^U$ do not necessarily correspond to ticks as defined in \eqref{eq:tick}; in practice, when the LP implements her strategy in real time, she rounds  $Z_t^L$, and $Z_t^U$ to the closest ticks.

The LP's initial wealth is $V_0,$ in units of $X$, and at time $t=0\,,$ she deposits  quantities $\left(\tilde x_0, \tilde y_0\right)$ in the range $\left(Z^L_0, Z^U_0\right]$, so the initial depth of her position is $\tilde \kappa_0,$ and the value of her initial position, marked-to-market in units of $X$, is $V_0 = \tilde x_0 +\tilde  y_0 \, Z_0.$   The dynamics of the LP's wealth consist of the value  of the liquidity position in the pool $\left(\alpha_t\right)_{t \in [0, T]}$, the fee revenue $\left(p_t\right)_{t \in [0, T]}$, and the rebalancing costs $\left(c_t\right)_{t \in [0, T]}$.  We introduce the wealth process $(V_t = \alpha_t +  p_t + c_t)_{t \in [0,T]}$, which we mark-to-market in units of the reference asset $X$, with $V_0 > 0$ known.  At any time $t,$ the LP uses her wealth $V_t$ to provide liquidity. Next, Subsection  \ref{sec:positionvalue} studies the dynamics of the LP's position $\alpha$ in the pool, Subsection \ref{sec:feerevenue} studies the dynamics of the LP's fee revenue $p$, and Subsection \ref{sec:costs} studies the dynamics of the rebalancing costs $c$.

\subsection{Position value \label{sec:positionvalue}}

In this section, we focus our analysis on the \textit{position value} $\alpha$.  Throughout the investment window $[0, T]$, the holdings $\left(\tilde x_t, \tilde y_t\right)_{t\in[0,T]}$ of the LP change because the marginal rate $Z$ changes and because she continuously adjusts her liquidity range around $Z$. More precisely, to make markets optimally, the LP controls the values of $\left(\delta_t^L\right)_{t\in[0,T]}$ and  $\left(\delta_t^U\right)_{t\in[0,T]}$ which determine the dynamic liquidity provision boundaries $\left(Z_t^L\right)_{t\in[0,T]}$ and $\left(Z_t^U\right)_{t\in[0,T]}$ as follows:
\begin{equation}\label{eq:ZlZucontrol}
    \begin{aligned}
        \sqrt{Z_{t}^{U}}= & \sqrt{Z_{t}}\left(1-\frac{\delta_{t}^{U}}{2}\right)^{-1},\\
        \sqrt{Z_{t}^{L}}= & \sqrt{Z_{t}}\left(1-\frac{\delta_{t}^{L}}{2}\right),
    \end{aligned}
\end{equation}
where $\delta^L \in \left(-\infty, 2\right]$, $\delta^U \in \left[-\infty, 2\right)$, and $\delta^L\,\delta^U/2<\delta^L+\delta^U$ because $0 \leq Z^L < Z^U < \infty.$ We define $\delta^L$ and $\delta^U$ in \eqref{eq:ZlZucontrol} in terms of $\sqrt Z$ to simplify and linearise the CL constant product formulae; see \cite{cartea2023predictable} for more details.

In practice, the LP earns fees when the rate $Z_t$ is in the LP's liquidity range $(Z^L_t, Z^U_t]$, so  $\delta^L \in \left(0, 2\right]$, $\delta^U \in \left[0, 2\right)$, and $\delta^L\,\delta^U/2<\delta^L+\delta^U$.
Below, Section \ref{sec:4} considers a problem where the controls are not constrained, and values $\delta^L \notin \left(0, 2\right]$, $\delta^U \notin \left[0, 2\right)$ are those where liquidity provision is unprofitable.

In the remainder of this chapter we define the \textit{spread} $\delta_t$ of the LP's position  as
\begin{align}
\label{eq:pos_spread}
\delta_t = \delta_t^U + \delta_t^L\,,
\end{align}
and we  consider admissible strategies $\delta$ that are $\R\textrm{-valued}$ and such that $\int_{0}^{T}\delta_{t}^{-4}\,\diff t<\infty,$ almost surely; see Section \ref{sec:4} for more details. For small position spreads, we use the first-order Taylor expansion to write the approximation
\begin{equation}
    \frac{Z_t^U - Z^L_t}{ Z_t} =  \left(1-\frac{\delta_{t}^{U}}{2}\right)^{-2}-\left(1-\frac{\delta_{t}^{L}}{2}\right)^{2}\approx \delta_t.
\end{equation}

We  assume  that the marginal rate process $\left(Z_t\right)_{t\in [0,T]}$ follows the dynamics \eqref{eq:dynZ}. \cite{cartea2023predictable} show that the holdings in assets $X$ and $Y$ in the pool for an LP who follows an arbitrary strategy $\left(Z_t^L, Z_t^U\right)$ are given by
\begin{equation}\label{eq:holdingspool}
\tilde x_t =    \frac{\delta_{t}^{L}}{\delta_{t}^{L}+\delta_{t}^{U}}\,\alpha_{t} \quad \text{and} \quad  \tilde y_t = \frac{\delta_{t}^{U}}{Z_t\left( \delta_{t}^{L}+\delta_{t}^{U}\right) }\,\alpha_{t}\,,
\end{equation}
so the value $\left(\alpha_t\right)_{t \in [0, T]}$ of her position follows the dynamics
\begin{equation}\label{eq:dynAlphaFirst}
\diff\alpha_{t}=\,\,V_{t}\,\left(\frac{1}{\delta_{t}^{L}+\delta_{t}^{U\,}}\right)\left(-\frac{\sigma^{2}}{2}\,\diff t+\mu_t\,\delta_{t}^{U}\,\diff t+\sigma\,\delta_{t}^{U}\,\diff W_{t}\right)\,.
\end{equation}
The dynamics in \eqref{eq:dynAlphaFirst} show that a larger position spread $\delta$ reduces the LP's exposure to the volatility of the pool. 

For a fixed value of the spread $\delta_t = \delta^L_t + \delta^U_t$, the dynamics in \eqref{eq:dynAlphaFirst} show that if $\mu_t \ge 0$, then the LP increases her expected wealth by increasing the value $\delta^U$, i.e., by skewing her range of liquidity to the right. However, note that the quadratic variation of the LP's position value is $\diff \langle \alpha, \alpha \rangle_t = V_t^2 \, \sigma^2 \left(\delta_t^U / \delta_t\right)^2\, \diff t\,,$ so skewing the range to the right also increases the variance of the LP's position. On the other hand, if $\mu \le 0$, then the LP reduces her losses by decreasing the value of $\delta^U$ or equivalently increasing the value of $\delta^L$, i.e., by skewing her range of liquidity to the left. Thus, the LP uses the expected changes in the marginal rate to skew the range of liquidity and to increase her expected terminal wealth.

\subsection{Fee income \label{sec:feerevenue}}

\subsubsection{Fee income: pool fee rate}

The dynamics of the fee income in our model of Section \ref{sec:4} uses a fixed depth $\kappa$ and assumes that the pool generates fee income for all LPs at an instantaneous \textit{pool fee rate} $\pi$; clearly, these fees are paid by LTs who interact with the pool, see \eqref{eq:CFMM_lt_cond_1}--\eqref{eq:CFMM_lt_cond_2}.  The value of $\pi$ represents the instantaneous profitability of the pool, akin to the size of market orders and their arrival rate in LOBs. 

To analyse the  dynamics of the pool fee rate $\pi,$ we use historical LT transactions in Uniswap v3 as a measure of activity and to estimate the total fee income generated by the pool; Table \ref{table:datadescr} provides descriptive statistics. Figure \ref{fig:DOLP_feerate} shows the estimated  fee rate $\pi$ in the ETH/USDC pool. For any time $t,$ we use\footnote{Pools have different fee rates in Uniswap v3: $0.01\%$, $0.05\%,$ and $0.1\%.$}
$$\pi_t = 0.05\% \, \frac{\Xi_t}{2\,\kappa\, \sqrt{Z_t}}  \,,$$ 
where $\Xi_t$ is the volume of LT transactions the day before $t,$  $2\,\kappa\, \sqrt{Z_t}$ is the pool value in terms of asset $X$ at time $t,$ and $0.05 \%$ is the fixed fee of the pool.\footnote{The value of a CPMM pool is given by the active pool depth. In particular, the size  of a pool with quantity $x$ of asset $X$ and quantity $y$ of asset $Y$ is $x+Z\,y = 2\,x =  2\,\kappa\, Z^{1/2}$ units of $X$ because $x\, y =\kappa^2 \ \text{and} \ Z= x/ y \ \text{ so } x^2 = \kappa^2 \, Z. $} Figure \ref{fig:DOLP_feerate} suggests that the pool fee rate $\pi$ generated by liquidity taking activity in the pool is stochastic and mean reverting. Here, we assume that $\pi$ is independent of the rate $Z$ over the time scales we consider; Table \ref{table:corrPiZ} shows that the pool fee rate is weakly correlated to the  rate $Z$ at different sampling frequencies, especially for the higher frequencies we consider in our numerical tests. In Section \ref{sec:4}, the pool fee rate $\pi$ follows Cox-Ingersoll-Ross-type dynamics.

\begin{figure}[!htb]\centering
\includegraphics[width=0.45\textwidth]{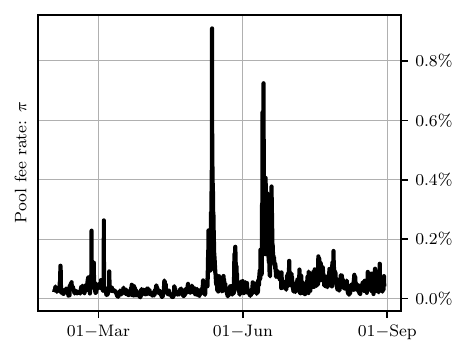}
\includegraphics[width=0.45\textwidth]{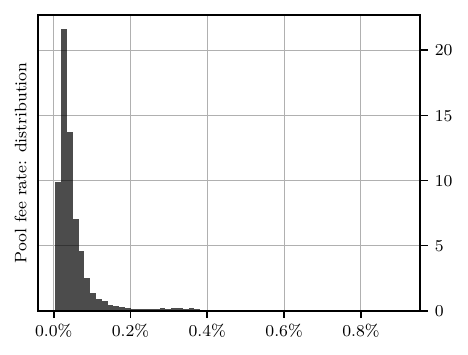}\\
\caption{ 
Estimated pool fee rate from February to August 2022 in the ETH/USDC pool. For any time $t,$ the pool fee rate is the total fee income, as a percentage of the total pool size, paid by LTs on the period $[t-1\text{ day}, t].$ The pool size at time $t$ is $2 \, \kappa \, \sqrt{Z_t}$ in terms of asset $X$, where $Z_t$ is the active rate in the pool at time $t.$ Left panel: historical values of the pool fee rate. Right panel: distribution of the pool fee rate. }
\label{fig:DOLP_feerate}
\end{figure}

\begin{table}[h]
\footnotesize
\begin{center}
\begin{tabular}{c  r  r  r  r} 
\hline 
& $\Delta t = 1$ minute & $\Delta t = 5$ minutes & $\Delta t = 1$ hour & $\Delta t = 1$ day \\ [0.5ex] 
\hline
Correlation & $-2.1 \%\ \ $ & $-2.4 \%$ & $-2.6 \%$ & $-10.9 \%$ \\ 
{Two-tailed p-value} & {$2.0\%\ \ $} & {$8.9\%$} & {$8.2\%$} & {$16.7\%$} \\ 
{$R^2$ regression} & {$0.01\%$} & {$1.1\%$} & {$2.9\%$} & {$5.0\%$} \\ 
\hline 
\end{tabular}
\end{center}
\caption {{First row:} correlation of the returns of the rate $Z$ and the fee rate $\pi$, i.e., $\left(Z_{t+\Delta t} - Z_t\right) /  Z_t$ and $\left(Z_{t+\Delta t} - Z_t\right)  /  Z_t$ for $\Delta t = 1$ minute, five minutes, one hour, and one day, using data of the ETH/USDC pool  between 5 May 2021 and 18 August 2022. {Second row: two-tailed p-value of the t-test. Final row: $R$-squared of regression of the pool fee rate's returns against the marginal exchange rate's returns.} }
\label{table:corrPiZ}
\end{table}


\subsubsection{{Fee income: spread and concentration risk} \label{sec:concentrationcost}}

In the three cases of \eqref{LP:eq:position}, increasing the spread reduces the depth $\tilde \kappa$ of the LP's position.  Recall that the LP fee income is proportional to $\tilde \kappa / \kappa,$ where $\kappa$ is the pool depth. Thus, decreasing the value of $\tilde \kappa$  potentially reduces LP fee income. Figure \ref{fig:SIMPLE1} shows the value of $\tilde \kappa$ as a function of the spread $\delta$.

\begin{figure}[!h]
\centering
\includegraphics{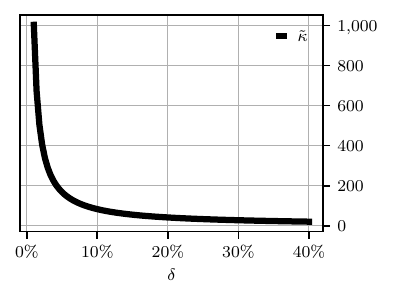}\\
\caption{Value of the depth $\tilde \kappa$ of the LP's position in the pool as a function of the spread $\delta.$ The spread is in percentage of the marginal exchange rate; recall that  $\left(Z_t^U - Z^L_t\right) / Z_t  \approx \delta_t.$}\label{fig:SIMPLE1}
\end{figure}

However, although narrow ranges increase the potential fee income, they also increase concentration risk; a wide spread (i.e., a lower value of the depth $\tilde \kappa$)  decreases fee income per LT transaction but reaps earnings from a larger number of LT transactions because the position is active for longer periods of time (i.e., it takes longer, on average, for $Z$ to exit the LP's range). Thus, the LP must strike a balance between maximising the depth $\tilde \kappa$ around the rate and minimising the concentration risk, which depends on the volatility of the rate $Z.$

In our model, the LP continuously adjusts her position around the current rate $Z,$ so we write the continuous-time dynamics of  \eqref{eq:remuneration_LP}, conditional on the rate not exiting the LP's range, as
\begin{align}
\label{eq:feeDynamics00}
\diff p_{t}= \underbrace{ \left(\tilde{\kappa}_{t} \, / \, \kappa\right)}_{\text{Position depth}} \ \underbrace{\pi_t }_{\text{Fee rate}} \ \underbrace{\,2 \,\kappa \,\sqrt{Z_{t}} }_{\text{Pool size}} \, \diff t\,,
\end{align}
where $(\tilde \kappa_t)_{t \in [0,T]}$ models the depth of the LP's position and $p$ is the LP's fee income for providing  liquidity with depth $\tilde\kappa$ in the pool. The fee income is proportional to the pool size, i.e., proportional to $2\,\kappa\, \sqrt{Z_t}.$ Next, use the second equation in \eqref{LP:eq:position} and  equations \eqref{eq:ZlZucontrol}--\eqref{eq:holdingspool} to write the dynamics of the LP's position depth $\tilde \kappa_t$ as
\begin{align}
\label{eq:dynKappaTildeDis0}
\tilde{\kappa}_{t}=2\,V_{t}\,\left(\frac{1}{\delta_{t}^{L}+\delta_{t}^{U}}\right)\,\frac{1}{\sqrt{Z_{t}}}\,,
\end{align}
so the dynamics in \eqref{eq:feeDynamics00} become
\begin{align}
\label{eq:feeDynamics0}
\diff p_{t} = \left(\frac{4}{\delta_{t}^{L}+\delta_{t}^{U}}\right)\, \pi_t\, V_{t}\, \diff t\,.
\end{align}

In practice, {there is latency in the market and the LP cannot reposition her liquidity  position and rebalance her assets continuously.  Thus,  
the LP faces concentration risk in between the times she repositions her liquidity; narrow spreads generate less fee income because the rate $Z$ may exit the range of the LP's liquidity, especially in volatile markets.}

{
To model the losses due to concentration risk in the continuous-time dynamics  \eqref{eq:feeDynamics0} of the LP's fee revenue, we introduce an instantaneous concentration cost that reduces the fees collected by the LP as a function of the spread;} the concentration cost increases (decreases) when the spread narrows (widens). We modify the dynamics of the fees collected by the LP  in \eqref{eq:feeDynamics0}  as follows
\begin{align}
\label{eq:feedyn_0}
\diff p_{t} = \left(\frac{4}{\delta_{t}^{L}+\delta_{t}^{U}}\right)\, \pi_t\, V_{t}\, \diff t\, - \gamma\,\left(\frac{1}{\delta_t^{ L }+\delta_t^{u}}\right)^{2}\,V_t\,\diff t\,,
\end{align}
where $\gamma>0$ is the concentration cost parameter and $V_t$ is the wealth invested by the LP in the pool at time $t.$ 

{
To justify the form of the concentration cost, we study the realised fee revenue  in the ETH/USDC pool as a function of the spread $\delta$ in \eqref{eq:pos_spread} for rebalancing frequencies $m=1$ minute and $m=5$ minutes. We denote by $\widehat  p_{\delta,m}$ the average realised fee revenue earned by an LP in the ETH/USDC pool who provides liquidity with wealth $V = 1$, in a range with spread $\delta$, around the marginal rate $Z$ throughout a time window of length $m$. The left panel of Figure \ref{fig:gamma_1_5min} shows that the value of $\delta$ which maximises the fee revenue $\widehat  p_{\delta,m}$ is strictly positive for both values of $m$ that we consider. In particular, narrow ranges reduce fee revenue due to concentration risk.

The form of the fee revenue dynamics \eqref{eq:feedyn_0} is a second order Taylor approximation that captures the specific shape of the fee revenue in CL markets; see left panel of Figure \ref{fig:gamma_1_5min}. The LP uses the regression model
\begin{equation}
    \delta^2 \, \widehat p_{\delta,m} = 4\,\pi\,m\,\delta - \gamma \, m\,,
\end{equation} 
which is based on the dynamics \eqref{eq:feedyn_0}, to estimate the concentration cost parameter $\gamma$. 
The right panel of Figure \ref{fig:gamma_1_5min} shows that $\delta^2 \, \widehat p_{\delta,m}$ is affine in $\delta$ and that the estimated concentration cost parameter $\gamma$ depends on the rebalancing frequency of the LP. In particular, the figure shows that the dynamics \eqref{eq:feedyn_0} and the second order approximation are suitable to describe the realised fee revenue in CL markets. The performance study of Section \ref{sec:num1} uses this methodology to set the value of the concentration cost parameter $\gamma$.

\begin{figure}[!h]\centering
\includegraphics{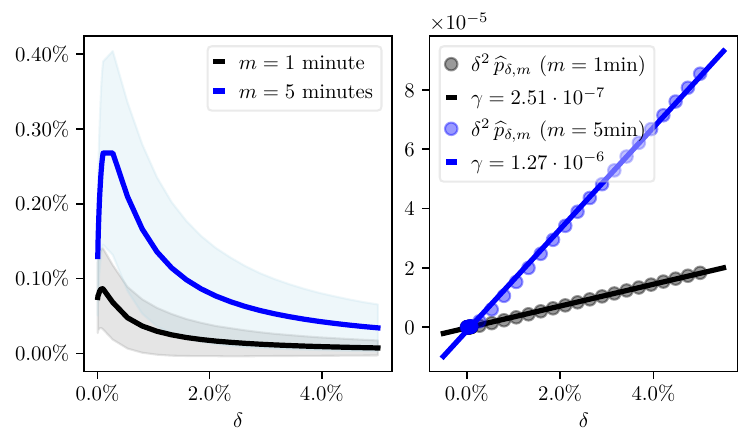}\\
\caption{{Left panel: mean and standard deviation of the fee revenue of hypothetical liquidity positions as a function of the position spread $\delta$ for the rebalancing frequencies $m=1$ minute and $m=5$ minutes. The fee revenue of each position considers a wealth $V=1$ and uses historical LT transactions of the ETH/USDC pool of Uniswap v3. Right panel: $\delta^2\,\widehat p_{\delta,m}$ as a function of the position spread $\delta$ for the rebalancing frequencies $m=1$ minute and $m=5$ minutes. The concentration cost parameter is estimated as $-\iota / m$ where $\iota$ is the intercept of the regression of $\delta^2\,\widehat p_{\delta,m}$ on $\delta.$}  }\label{fig:gamma_1_5min}
\end{figure}

}

\subsubsection{Fee income: drift and asymmetry}

The stochastic drift $\mu$ indicates the future expected changes of the marginal exchange rate in the pool. In practice, the LP may use a predictive signal so that $\mu$ represents the belief that the LP holds over the future marginal rate. For an LP who maximises fee revenue, it is natural to consider asymmetric liquidity positions that capture the liquidity taking flow. We define the \textit{asymmetry} of a position as 
\begin{align}
\label{eq:asymmetryrhot}
\rho_t = \frac{\delta^U_t}{\delta^U_t + \delta^L_t } = \frac{\delta^U_t}{\delta_t}\, ,
\end{align}
where $\delta^U_t$ and $\delta^L_t$ are defined in \eqref{eq:ZlZucontrol}. In one extreme, when the asymmetry $\rho \to 0,$ then $Z^U \to Z$ and  the position consists of only asset $X,$ and in the other extreme, when $\rho \to 1,$ then $Z^L \to Z$ and the position consists of only asset $Y.$ 

{In the remainder of this work, the asymmetry of the LP's position is a function of the observed drift:
\begin{align}
\label{eq:relation_rho_mu}
\rho_t = \rho\left(\delta_t, \mu_t\right) = \frac12 + \frac{\mu_t}{\delta_t} = \frac12 + \frac{\mu_t}{\delta_t^U + \delta_t^L} \,, \quad \forall t \in [0, T]\,.
\end{align}
The asymmetry \eqref{eq:relation_rho_mu} adapts the skew of the position to the expected drift of the marginal rate. When $\mu=0$, the position is symmetric around the marginal rate and $\rho_t = 1/2$, so $\delta_t^L = \delta_t^U.$ When $\mu>0,$ the position is skewed to the right (i.e., $\delta_t^U > \delta^L_t$) to capture more LT trades and fee revenue, and similarly when $\mu<0,$ the position is skewed to the left (i.e., $\delta_t^U < \delta^L_t$). Also, when $\mu>0$ and the position is skewed to the right according to \eqref{eq:relation_rho_mu}, the proportion of asset $Y$ increases and the position profits from rate appreciation, and when the position is skewed to the left, the proportion of asset $Y$ decreases and the position is protected from rate depreciation.

Optimal liquidity provision is the dynamic choice of  $\delta^{U}$ and $\delta^{L}$ over a trading window, or equivalently, the dynamic choice of  $\delta$ and $\rho$. Our model assumes that the LP fixes the asymmetry $\rho$ of her position at time $t$ according to \eqref{eq:relation_rho_mu}, so we reduce the trading problem to a one-dimensional dynamic optimisation problem, which significantly simplifies calculations. 

To further justify the form of the asymmetry }\eqref{eq:relation_rho_mu}, we use Uniswap v3 data to study how the asymmetry and the width of the LP's range of liquidity relate to fee revenue. First, we estimate the realised drift $\mu$ in the pool ETH/USDC over a rolling window of $T=5$ minutes.\footnote{The values of the drift in this section are normalised to reflect daily estimates. In particular, we use $\mu = \tilde \mu \, / \, \Delta t$ where $\tilde{\mu}$ is the average of the observed log returns and  $\Delta t$ is the observed average LT trading frequency.} Next, for any time $t$, the fee income for different positions of the LP's liquidity range is computed for various values of the spread $\delta$ and for various values of the asymmetry $\rho.$ For each value of the realised drift $\mu$ during the investment horizon, and for each fixed value of the spread $\delta,$ we record the asymmetry that maximises fee income. Figure \ref{fig:DOLP5} shows the optimal (on average) asymmetry $\rho$ as a function of the spread  $\delta$ of the position for multiple values of the realised drift $\mu.$ 

\begin{figure}[!h]\centering
\includegraphics{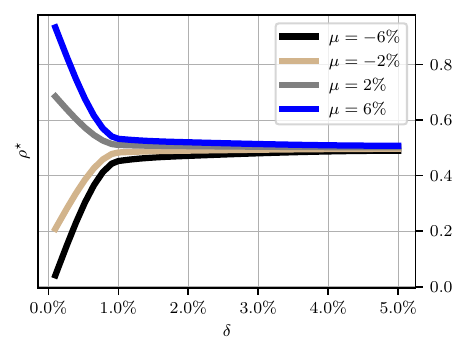}
\caption{Optimal position asymmetry $\rho^\star$ in \eqref{eq:asymmetryrhot} as a function of the spread  $\delta$ of the position, for multiple values of the drift $\mu.$ {For each historical value of the drift in our data, we compute the performance of hypothetical liquidity positions with asymmetry values in the range $[0, 1]$ and position spreads in the range $[0.05\%, 5\%].$} The  asymmetry $\rho^\star$ in the figure is the value of $\rho$ that maximises the average fee income for each value of the pair $\left(\delta, \mu\right)$. {The drift is computed as the mean of $5$-minute returns during one hour of trading, and the performance of the liquidity positions is computed as the total fee revenue if the position is held for an hour. } }\label{fig:DOLP5}
\end{figure}

Figure \ref{fig:DOLP5} suggests that there exists a preferred  asymmetry of the position for a given value of the spread $\delta$ and a given  value of the drift  $\mu.$ First, for all values of the spread $\delta$, the LP skews her position to the right when the drift is positive ($\rho^\star > 0.5$) and she skews her position to the left when the drift is negative ($\rho^\star < 0.5$). Second, for narrow spreads, the liquidity position requires more asymmetry than for large spreads when the drift is not zero. 

In our liquidity provision model of Section \ref{sec:4}, the LP  holds a belief over the future exchange rate throughout the investment window and controls the spread $\delta=\delta^U + \delta^L$  of her position. Thus, she strategically chooses the asymmetry of her position as a function of $\delta$ and $\mu.$  We approximate the relationship exhibited in Figure  \ref{fig:DOLP5} with the asymmetry function \eqref{eq:relation_rho_mu}.


{

\subsection{Rebalancing costs and gas fees \label{sec:costs}}

Our model considers a strategic LP who continuously repositions her liquidity position in the pool which requires rebalancing of the LP's assets.  Specifically, repositioning typically leads to different holdings \eqref{eq:holdingspool} in the pool, in which case the LP trades one asset for the other in the pool or in another trading venue. However, rebalancing assets to reposition liquidity is costly. 

Let $\left(c_t\right)_{t\in[0,T]}$ denote the cost of rebalancing in terms of asset $X$. Similar to \cite{fan2021strategic} and \cite{fan2022differential}, we model rebalancing costs as proportional to the quantity $\tilde y_t$ of asset $Y$ that the LP deposits in the pool.  At any time $t$, we assume that the LP uses all her wealth $V_t$ when repositioning her liquidity position, so we use \eqref{eq:holdingspool} to write
\begin{equation}\label{eq:rebalancing_cost_dyns}
c_t = -\zeta\,\tilde y_t\,Z_t = -\zeta\,\frac{\delta_{t}^{U}}{\delta_{t}^{L}+\delta_{t}^{U} }\,V_{t}\,, \quad c_0=0\,,
\end{equation}
where $\zeta>0$ is a constant that models the execution costs.

Transactions sent to the pool also bear gas fees. Gas fees are a flat fee paid to the blockchain and do not depend on the size of the LP's transaction; see \cite{li2023yield} for more details on the impact of gas fees on liquidity provision. Thus, gas fees scale with the frequency at which the LP sends transactions to the pool. Our model considers continuous trading, so gas fees do not influence the optimisation problem, but should be considered when the strategy is implemented.}

In the next section, we derive an optimal liquidity provision strategy, and prove that the profitability of liquidity provision  is subject to a tradeoff between fee revenue, PL, and concentration risk. 

\section{Optimal liquidity provision in CL pools \label{sec:4}}

\subsection{The problem}

Consider an LP who provides liquidity in a CPMM with CL throughout the investment window $[0, T]$. We work on the filtered probability space $\left(\Omega, \mathcal F, \mathbbm P; \mathbbm F = (\mathcal F_t)_{t \in [0,T]} \right)$ where $\mathcal F_t$ is the natural filtration generated by the collection $\left(Z, \mu, \pi\right)$.

The dynamics of the LP's wealth consist of the fees earned, the position value, and {the rebalancing costs}.  Similar to Section \ref{sec:positionvalue}, we denote the wealth process of the LP by $(V_t = \alpha_t +  p_t + c_t)_{t \in [0,T]},$ with $V_0 > 0$ known, where $\alpha$ is the value of the LP's position, $p$ is the fee revenue{, and $c$ is the rebalancing cost}. At any time $t,$ the LP uses her wealth $V_t$ to provide  liquidity. {Now, use \eqref{eq:dynAlphaFirst}, \eqref{eq:feedyn_0}, and \eqref{eq:rebalancing_cost_dyns} to write the dynamics of the LP's wealth as
\begin{align}
\label{eq:temp_x_dyn}
\diff V_{t}=V_{t}\left(\frac{1}{\delta_{t}^{L}+\delta_{t}^{U}}\right)\left[\left(4\,\pi_{t}-\frac{\sigma^{2}}{2}+\left(\mu_{t}-\zeta\,\right)\,\delta_{t}^{U}\right)\diff t+\sigma\,\delta_{t}^{U}\,\diff W_{t}\right]-\gamma\,\left(\frac{1}{\delta_{t}^{L}+\delta_{t}^{U}}\right)^{2}V_t\,\diff t\,,\ \ 
\end{align}
where $\gamma \geq 0$ is the concentration cost parameter, see Subsection \ref{sec:feerevenue}, and $\pi$ follows the dynamics defined below in \eqref{eq:dyn_pi}. To simplify notation, we set $\zeta=0$; our results hold when replacing $\mu$ by $\mu-\zeta$.}

Next, use $\delta_t = \delta^U_t + \delta_t^L$ and $\delta^U_t \, / \, \delta_t = \rho\left(\mu_t, \delta_t\right)$ in \eqref{eq:temp_x_dyn} to write the dynamics of the LP's wealth as
\begin{align}
\label{eq:dynxtilde}
\diff V_{t}&=\frac{1}{\delta_{t}}\,\left(4\,\pi_{t}-\frac{\sigma^{2}}{2}\right)\, V_{t}\,\diff t+\mu_t\,\rho\left(\delta_{t},\mu_t\right)\, V_{t}\,\diff t+\sigma\,\rho\left(\delta_{t},\mu_t\right)\, V_{t}\,\diff W_{t}-\frac{\gamma}{\delta_{t}^{2}}\, V_{t}\,\diff t\,.
\end{align}

{Next, following the discussions of Section \ref{sec:feerevenue}, we denote by $\eta$ the \textit{profitability threshold} and we assume that the pool fee rate $\pi$ follows the dynamics}
\begin{align}
\label{eq:dyn_pi}
\diff(\pi_{t} - \eta_t)&=\Gamma\,\left(\overline{\pi}+\eta_t-\pi_{t}\right)\diff t+\psi\,\sqrt{\pi_{t}-\eta_t}\,\diff B_{t}\,,
\end{align}
where  $\Gamma>0$ denotes the mean reversion speed, $\overline{\pi} > 0$ is the long-term mean of $\left(\pi_t - \eta_t\right)_{t\in[0,T]}$, $\psi>0$ is a non-negative volatility parameter, $\left(B_t\right)_{t\in[0,T]}$ is a Brownian motion independent of $\left(W_t\right)_{t\in[0,T]},$ and $\pi_0-\eta_0>0$ is known. 
{To solve the LP's optimal liquidity provision problem, we introduce the following assumption.
\begin{assume}\label{assumption:eta}
The profitability {threshold} $\eta$ in the dynamics  \eqref{eq:dyn_pi} is given by
\begin{equation}\label{eq:eta stochastic}
\eta_t=\frac{\sigma^{2}}{8}-\frac{\mu_t}{4}\left(\mu_t-\frac{\sigma^{2}}{2}\right)+\frac{\varepsilon}{4}\,.
\end{equation}
\end{assume}}

From Assumption \ref{assumption:eta} and the CIR dynamics \eqref{eq:dyn_pi} it follows that
\begin{align}\label{assumption:1}
    \pi_{t} - \eta_t \geq 0 \implies 4\,\pi_{t}-\frac{\sigma^{2}}{2}+\mu_t\left(\mu_t-\frac{\sigma^{2}}{2}\right)\ge\varepsilon>0\,, \quad \forall t \in [0, T]\,.
\end{align}
{Assumption \ref{assumption:eta} and condition \eqref{assumption:1}} ensure that the spread $\delta$ of the optimal strategy is {admissible.  Financially, the inequality in \eqref{assumption:1} is a profitability condition $\pi_t \ge \eta_t$ that 
guarantees that the LP's fee income $\pi$ is greater than the PL faced by the LP, adjusted by the drift in the marginal rate.}

We impose specific dynamics for the fee rate $\pi$ such that it satisfies a profitability condition \eqref{assumption:1} that allows us to obtain an admissible strategy. While  we use this constraint to solve the optimal liquidity provision problem, it also represents an adequate and natural measure for LPs to assess the profitability of liquidity provision in different pools before depositing their assets.

Finally, note that the dynamics in \eqref{eq:dynZ} and \eqref{eq:dyn_pi} imply that the LP also observes $W,$ $B,$ and the {profitability threshold} $\eta$ is determined by $\mu$, so the LP observes all the stochastic processes of this problem.

\subsection{The optimal strategy}

The LP controls the spread  $\delta$ of her position to maximise the expected utility of her terminal wealth in units of $X$. {To define the set of admissible strategies $\mathcal{A}$, note that if the LP assumes that the asymmetry function $\rho$ is that in  \eqref{eq:relation_rho_mu}, then for each $\delta$, we need
\begin{equation}\label{eq:rho_condition}
    \int_{0}^{T} \rho\left(\delta_{t},\mu_t\right)^2 \,\diff t<\infty  \quad \mathbbm{P}\textrm{--a.s.}\,.
\end{equation}
Observe that 
\begin{equation}
    \begin{split}
        \int_{0}^{T} \rho\left(\delta_{t},\mu_t\right)^2 \,\diff t \,= \,& \int_{0}^{T} \left(\frac12 + \frac{\mu_t}{\delta_t}\right)^2 \,\diff t\,
        \leq \, \frac T4 + \frac{1}{2}\,\int_{0}^{T} \mu_t^4 \,\diff t+\frac{1}{2}\,\int_{0}^{T} \frac{1}{ \delta_t^4} \,\diff t\,.
    \end{split}
\end{equation}
Thus, to satisfy \eqref{eq:rho_condition} and to ensure that the control problem below is well-posed, we define the set of admissible strategies
\begin{align}
\label{LP:def:admissibleset_t}
\mathcal{A}_{t}=\left\{ (\delta_{s})_{s\in[t,T]},\ \R\textrm{-valued},\ \mathbbm{F}\textrm{-adapted, and }\int_{t}^{T}\frac{1}{\delta_{s}^{4}}\,\diff s<+\infty \ \mathbbm{P}\textrm{--a.s.}\right\} \, ,
\end{align}
where $\mathcal A := \mathcal A_0.$ 
}

Let $\delta\in\Ac$. The performance criterion of the LP is a function $u^{\delta}\colon[0,T] \times \R^4 \rightarrow \R$ given by
\begin{align}
\label{LP:eq:perfcriteria}
    u^{\delta}(t,v,z, \pi,\mu)=\mathbbm{E}_{t,v,z, \pi,\mu}\left[U\left(V_{T}^{\delta}\right)\right]\, ,
\end{align}
where $U$ is a concave utility function, and the value function $u:[0,T] \times \R^4 \rightarrow \R$ of the LP is
\begin{equation}
\label{LP:eq:valuefunc}
    u(t,v,z,\pi,\mu)=\underset{\delta\in\mathcal{A}_t}{\sup} \, u^{\delta}(t,v,z,\pi,\mu)\, .
\end{equation}

The following results solve the optimal liquidity provision model when the LP assumes a general stochastic drift $\mu$ and her preferences are given by a logarithmic utility function.

\begin{prop}
\label{prop:1}
Assume the asymmetry function $\rho$ is as  in  \eqref{eq:relation_rho_mu} and that $U(v) = \log(v).$ Then,
\begin{align}
\label{eq:hjbsol}
\begin{split}
w\left(t,v,z,\pi,\mu\right)=&\log\left(v\right)+\left(\pi-\eta\right)^{2}\int_{t}^{T}\E_{t,\mu}\left[\frac{8}{2\,\gamma+\mu_{s}^{2}\,\sigma^{2}}\right]\exp\left(-2\,\Gamma\left(s-t\right)\right)\,\diff s\\&+\left(\pi-\eta\right)\left(2\,\Gamma\,\overline{\pi}+\psi^{2}\right)\int_{t}^{T}\E_{t,\mu}\left[C\left(s,\mu_{s}\right)\right]\exp\left(-\Gamma\left(s-t\right)\right)\,\diff s\\&-\left(\pi-\eta\right)\int_{t}^{T}\E_{t,\mu}\left[\frac{4\,\varepsilon}{2\,\gamma+\sigma^{2}\,\mu_{s}^{2}}\right]\exp\left(-\Gamma\left(s-t\right)\right)\,\diff s\\&+\int_{t}^{T}\left(\Gamma\,\overline{\pi}\,\E_{t,\mu}\left[E\left(s,\mu_{s}\right)\right]+\psi^{2}\,\E_{t,\mu}\left[\eta_{s}\,C\left(s,\mu_{s}\right)\right]\right)\diff s
\\
&-\frac{1}{2}\,\int_{t}^{T}\left(\E_{t,\mu}\left[\frac{\varepsilon^{2}}{2\,\gamma+\sigma^{2}\,\mu_{s}^{2}}+\mu_{s}\right]\right)\diff s-\pi\frac{\sigma^{2}}{8}\left(T-t\right)\,
\end{split}
\end{align}
solves the HJB equation associated with problem \eqref{LP:eq:valuefunc}. Here, $\eta_s = \frac{\sigma^{2}}{8}-\frac{\mu_s}{4}\left(\mu_s-\frac{\sigma^{2}}{2}\right)+\frac{\varepsilon}{4}$ for $s\ge t\,,$ $\eta_t = \eta$, and  $\E_{t,\mu}$ represents expectation conditioned on $\mu_t = \mu$, and $$C\left(t,\mu\right)=\E_{t,\mu}\left[\,\int_{t}^{T}\frac{8}{2\,\gamma+\mu_{s}^{2}\,\sigma^{2}}\exp\left(-2\,\Gamma\left(s-t\right)\right)\,\diff s\right]\,,$$ and $$E\left(t,\mu\right)=\E_{t,\mu}\left[\,\int_{t}^{T}\left(\left(2\,\Gamma\,\overline{\pi}+\psi^{2}\right)C\left(s,\mu\right)+\frac{4\,\varepsilon}{2\,\gamma+\sigma^{2}\,\mu_{s}^{2}}\right)\exp\left(-\Gamma\left(s-t\right)\right)\,\diff s\right]\,.$$

\end{prop}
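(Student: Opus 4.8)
The plan is to verify the candidate by the standard dynamic-programming argument: write the Hamilton--Jacobi--Bellman (HJB) equation for the value function \eqref{LP:eq:valuefunc}, perform the pointwise supremum over the control $\delta$, collapse the $v$-dependence with a log-utility ansatz, and represent the remaining coefficient functions by Feynman--Kac. First I would note that the marginal rate $Z$ does not enter the wealth dynamics \eqref{eq:dynxtilde}---these involve only $\pi$, $\mu$, $\delta$ and the constants $\sigma,\gamma$---so the value function does not depend on $z$ and the effective state is $(t,v,\pi,\mu)$. Since $W$ is independent of $\mu$ and $B$ is independent of $W$, the infinitesimal generator separates into the controlled $V$-part, the Cox--Ingersoll--Ross (CIR) part acting on $\pi-\eta$, and the generator $\mathcal L^{\mu}$ of the exogenous drift, with no cross-variation between $V$ and $(\pi,\mu)$.

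Because the utility is logarithmic and the wealth process is geometric, I would posit $w=\log v+G(t,\pi,\mu)$; then $\partial_v w=1/v$ and $\partial_{vv}w=-1/v^{2}$, so the entire $V$-generator reduces to a scalar function of $(\pi,\mu,\delta)$ independent of $v$ and of $G$. Substituting $\rho(\delta,\mu)=\tfrac12+\mu/\delta$ gives
\begin{equation}
H(\delta)=\Big(\tfrac{\mu}{2}-\tfrac{\sigma^{2}}{8}\Big)+\frac{a}{\delta}-\frac{b}{\delta^{2}}\,,\qquad a=4\pi-\tfrac{\sigma^{2}}{2}+\mu\Big(\mu-\tfrac{\sigma^{2}}{2}\Big)\,,\quad b=\gamma+\tfrac{\sigma^{2}\mu^{2}}{2}\,.
\end{equation}
The first-order condition yields the feedback maximiser $\delta^{\star}=2b/a=(2\gamma+\sigma^{2}\mu^{2})/a$, and the concavity of $H$ regarded as a quadratic in $1/\delta$ confirms it is a maximum. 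Here the profitability condition \eqref{assumption:1} is indispensable: it forces $a\geq\varepsilon>0$, so $\delta^{\star}$ is finite, strictly positive, and---together with the path regularity of the CIR fee rate---admissible in the sense of \eqref{LP:def:admissibleset_t} and consistent with \eqref{eq:rho_condition}. Substituting $\delta^{\star}$ back and using $a=4(\pi-\eta)+\varepsilon$, which follows from Assumption \ref{assumption:eta}, gives $\sup_{\delta}H=\tfrac{\mu}{2}-\tfrac{\sigma^{2}}{8}+a^{2}/(4b)$, a quadratic in $\pi-\eta$.

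The HJB thereby becomes a linear equation for $G$, and I would insert the quadratic ansatz $G=A(t,\mu)(\pi-\eta)^{2}+D(t,\mu)(\pi-\eta)+F(t,\mu)$. Matching the coefficients of $(\pi-\eta)^{2}$, $(\pi-\eta)$ and $(\pi-\eta)^{0}$---after expanding $a^{2}/(4b)$ and using the CIR dynamics $\diff(\pi-\eta)=\Gamma(\overline{\pi}-(\pi-\eta))\,\diff t+\psi\sqrt{\pi-\eta}\,\diff B$---produces a triangular system of backward linear equations for $A$, $D$, $F$, each carrying a discount rate $2\Gamma$ or $\Gamma$ inherited from the mean reversion and a source term read off from the expansion. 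Solving this system by Feynman--Kac against the law of $\mu$ delivers exactly the conditional expectations $C(t,\mu)$ and $E(t,\mu)$ and the iterated integral representations in the statement, while the terminal condition $G(T,\cdot,\cdot)=0$ matches $U(V_{T})=\log V_{T}$.

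The hard part is the coupling between the fee rate $\pi$ and the general càdlàg drift $\mu$ through the profitability threshold $\eta=\eta(\mu)$ of Assumption \ref{assumption:eta}. Expressing the quadratic pieces in the variable $\pi-\eta$ keeps the CIR noise decoupled from $\mu$, but it relocates the $\mu$-dependence of $\eta$ into the generator, so $\mathcal L^{\mu}$ acting on powers of $\pi-\eta$ produces additional lower-order and mixed contributions---these are precisely the $\psi^{2}\,\E_{t,\mu}[\eta_{s}\,C(s,\mu_{s})]$ and $-\pi\,\sigma^{2}(T-t)/8$ terms of $F$---which must be tracked with care. Because $\mu$ is only assumed càdlàg with $\E[\mu_{t}^{4}]<\infty$ rather than a tractable diffusion, I would avoid writing $\mathcal L^{\mu}$ explicitly and instead define $A$, $D$, $F$ directly through the stated conditional expectations, then verify the Feynman--Kac identities and the integrability of all expectations using the fourth-moment bound and $a\geq\varepsilon$. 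Establishing this well-posedness and integrability, rather than the coefficient bookkeeping, is where the genuine difficulty lies.
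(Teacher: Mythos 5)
Your route coincides with the paper's own proof: pass to the variable $\tilde\pi_t=\pi_t-\eta_t$ (the paper formalises this as an ``equivalent problem'', exploiting the fact that \eqref{eq:dyn_pi} prescribes autonomous CIR dynamics for $\pi-\eta$), posit $w=\log v+\theta$, maximise pointwise in $\delta$ to get $\delta^{\star}=(2\gamma+\mu^{2}\sigma^{2})/(4(\pi-\eta)+\varepsilon)$ via Assumption \ref{assumption:eta}, insert a quadratic ansatz in $\pi-\eta$ with $(t,\mu)$-dependent coefficients, and solve the resulting triangular linear system by probabilistic (Feynman--Kac-type) representations. Your two deviations are benign: discarding the $z$-dependence at the outset is legitimate (the wealth dynamics \eqref{eq:dynxtilde} and the terminal reward do not involve $Z$; the paper instead carries terms $A z^{2}+B\tilde\pi z+Dz$ in its ansatz and shows $A=B=D=0$), and defining the coefficients directly by the stated conditional expectations and verifying the integral identities is in substance what the paper does, since for a merely c\`adl\`ag $\mu$ the operator $\mathcal L^{\mu}$ is only formal and the paper's own argument is an It\^o-plus-conditioning verification of the candidate $\hat C$.

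The one genuine flaw is your explanation of the cross terms in $F$. You claim that $\psi^{2}\,\E_{t,\mu}\left[\eta_{s}\,C\left(s,\mu_{s}\right)\right]$ arises because ``$\mathcal L^{\mu}$ acting on powers of $\pi-\eta$ produces additional lower-order and mixed contributions''. It cannot: a factor $\psi^{2}$ can only be produced by the second-order CIR term $\tfrac12\psi^{2}\tilde\pi\,\partial_{\tilde\pi\tilde\pi}$, never by the generator of $\mu$; and in the correct formulation $\mathcal L^{\mu}$ never acts on powers of $\pi-\eta$ at all, because $\tilde\pi$ and $\mu$ are separate coordinates of the state --- this decoupling is precisely what the specification \eqref{eq:dyn_pi} buys, and it is the whole point of the paper's change of variable. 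If you collect terms as you describe (generator acting only on the coefficient functions), the CIR diffusion applied to $C\,\tilde\pi^{2}$ contributes $\psi^{2}C$ only to the equation for the coefficient of $\tilde\pi$ (the paper's equation for $E$) and contributes nothing to the constant equation for $F$; so the bookkeeping you defer as routine is exactly where your derivation will fail to reproduce the stated $\psi^{2}\,\E_{t,\mu}\left[\eta_{s}\,C\left(s,\mu_{s}\right)\right]$ and $-\pi\,\sigma^{2}(T-t)/8$ terms, and your final ``direct verification'' step would stall on this discrepancy. (For reference, those terms are consistent with reading the squared diffusion coefficient as $\psi^{2}\pi$ rather than $\psi^{2}(\pi-\eta)$; either way, their origin is the diffusion term, not $\mathcal L^{\mu}$, and any complete proof must settle this point explicitly.)
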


\noindent
For a proof, see \ref{sec:proofs:hjb1}.

\begin{thm}
\label{thm:verif}
Let {Assumption \ref{assumption:eta} hold and assume that } the asymmetry function $\rho$ is as in \eqref{eq:relation_rho_mu} and that $U(v)= \log(v).$ Then, the solution in Proposition \ref{prop:1} is the unique solution to the optimal control problem \eqref{LP:eq:valuefunc}, and the optimal spread $\left(\delta_s\right)_{s\in[t,T]} \in \mathcal A_t$ is given by
\begin{equation}
    \label{eq:optimalspeed_lp}
    \begin{split}
        \delta_{s}^{\star}=\frac{2\,\gamma+\mu_s^{2}\,\sigma^{2}}{4\,\pi_{s}-\frac{\sigma^{2}}{2}+\mu_s\left(\mu_s-\frac{\sigma^{2}}{2}\right)}=\frac{2\,\gamma+\mu_s^{2}\,\sigma^{2}}{4\left(\pi_{s}-\eta_s\right)+\varepsilon}\,,
    \end{split}
\end{equation}
where $\eta_s = \frac{\sigma^{2}}{8}-\frac{\mu_s}{4}\left(\mu_s-\frac{\sigma^{2}}{2}\right)+\frac{\varepsilon}{4}\,.$
\end{thm}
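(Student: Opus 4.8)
The plan is to prove Theorem~\ref{thm:verif} as a verification result built on Proposition~\ref{prop:1}. Since the wealth dynamics \eqref{eq:dynxtilde} and the terminal utility $U(V_T)=\log V_T$ do not involve the level of the marginal rate $z$, both the value function and the candidate $w$ are independent of $z$, so I would work with the reduced state $(t,v,\pi,\mu)$. The formula of Proposition~\ref{prop:1} is additively separable, $w(t,v,\pi,\mu)=\log v+\Phi(t,\pi,\mu)$, which makes the Hamiltonian independent of $v$. Writing $\rho=\tfrac12+\mu/\delta$ and applying It\^o to $\log V_t$, the instantaneous growth rate of $\log V$ is
\begin{equation}
h(\delta,\mu,\pi)=\frac{1}{\delta}\Big(4\,\pi-\frac{\sigma^{2}}{2}\Big)+\mu\,\rho-\frac{\gamma}{\delta^{2}}-\frac{\sigma^{2}}{2}\,\rho^{2}\,,
\end{equation}
and the HJB equation solved in Proposition~\ref{prop:1} takes the form $\partial_t\Phi+\mathcal{L}^{\pi,\mu}\Phi+\sup_{\delta}h(\delta,\mu,\pi)=0$, where $\mathcal{L}^{\pi,\mu}$ is the generator of $(\pi,\mu)$ (whose precise form is encoded in the conditional expectations $\E_{t,\mu}$ appearing in $w$).

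The conceptual heart is the pointwise maximisation of $h$, which is where the log-utility structure pays off: the control problem decouples into an instantaneous, state-by-state optimisation. Substituting $\rho$ and collecting powers of $1/\delta$ gives
\begin{equation}
h(\delta,\mu,\pi)=\Big(\frac{\mu}{2}-\frac{\sigma^{2}}{8}\Big)+\frac{1}{\delta}\Big(4\,\pi-\frac{\sigma^{2}}{2}+\mu\big(\mu-\tfrac{\sigma^{2}}{2}\big)\Big)-\frac{1}{\delta^{2}}\Big(\gamma+\frac{\sigma^{2}\mu^{2}}{2}\Big)\,.
\end{equation}
With $q=1/\delta$ this is a concave quadratic in $q$ with leading coefficient $-(\gamma+\sigma^{2}\mu^{2}/2)<0$; setting its derivative to zero and using the identity $4\,\pi-\tfrac{\sigma^{2}}{2}+\mu(\mu-\tfrac{\sigma^{2}}{2})=4(\pi-\eta)+\varepsilon$ from Assumption~\ref{assumption:eta} yields the unique maximiser $\delta^{\star}$ in \eqref{eq:optimalspeed_lp}. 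Strict concavity delivers uniqueness of the pointwise maximiser, and hence—once admissibility is settled—uniqueness of the optimal control.

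Next I would establish the two inequalities that pin down the value function. For an arbitrary $\delta\in\Ac$, applying It\^o to $w(s,V_s^{\delta},\pi_s,\mu_s)$ on $[t,T]$ and using the HJB supersolution inequality $\partial_t w+\mathcal{L}^{\delta}w\le 0$ together with the terminal condition $w(T,v,\pi,\mu)=\log v$ (which the Proposition~\ref{prop:1} formula satisfies, since every time integral and the final term vanish at $s=T$) gives, after localisation and taking expectations, $u^{\delta}(t,v,\pi,\mu)=\E_{t}[\log V_T^{\delta}]\le w(t,v,\pi,\mu)$; the supremum over $\delta$ then yields $u\le w$. Inserting $\delta^{\star}$ turns the HJB inequality into an equality, so, provided $\delta^{\star}$ is admissible and the It\^o integrals are genuine martingales, one obtains $u^{\delta^{\star}}=w$, whence $w=u^{\delta^{\star}}\le u\le w$ and all three coincide; $\delta^{\star}$ is therefore optimal and, by strict concavity, the unique optimiser.

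The hard part will be the integrability bookkeeping that simultaneously justifies the martingale property of the It\^o integrals and the admissibility of $\delta^{\star}$, and this is exactly where Assumption~\ref{assumption:eta} is indispensable. The profitability condition \eqref{assumption:1} guarantees $4(\pi_s-\eta_s)+\varepsilon\ge\varepsilon>0$, so $\delta^{\star}$ is well defined and, for $\gamma>0$,
\begin{equation}
\big(\delta_s^{\star}\big)^{-1}=\frac{4(\pi_s-\eta_s)+\varepsilon}{2\,\gamma+\sigma^{2}\mu_s^{2}}\le\frac{4(\pi_s-\eta_s)+\varepsilon}{2\,\gamma}\,.
\end{equation}
Since $(\pi_s-\eta_s)_{s}$ has a.s.\ continuous paths (the CIR dynamics \eqref{eq:dyn_pi} with $\eta$ a continuous functional of $\mu$), the right-hand side is a.s.\ bounded on $[t,T]$, so $\int_t^T(\delta_s^{\star})^{-4}\,\diff s<\infty$ a.s.\ and $\delta^{\star}\in\Ac_t$. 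The same bound, combined with $\E[\mu_s^{4}]<\infty$ and the finite moments of the CIR process, controls $\E\big[\int_t^T\sigma^{2}\rho(\delta_s^{\star},\mu_s)^{2}\,\diff s\big]$ and the quadratic variation of $w(s,V_s,\pi_s,\mu_s)$, upgrading the localised local martingales to true martingales and closing the verification. Finally, matching $u$ with the explicit expression of Proposition~\ref{prop:1} reduces to evaluating $\E_{t}\big[\int_t^T h^{\star}(\mu_s,\pi_s)\,\diff s\big]$ with $h^{\star}=\big(\tfrac{\mu}{2}-\tfrac{\sigma^{2}}{8}\big)+\tfrac{(4(\pi-\eta)+\varepsilon)^{2}}{2(2\gamma+\sigma^{2}\mu^{2})}$, whose expansion of $(4(\pi-\eta)+\varepsilon)^{2}$ into $(\pi-\eta)^{2}$, $(\pi-\eta)$, and constant pieces reproduces term by term the conditional-expectation representation of $w$.
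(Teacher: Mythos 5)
Your proposal is correct and follows essentially the same route as the paper: Proposition \ref{prop:1} supplies a classical solution, a standard verification (super-solution/martingale) argument does the rest, and the only genuine obligation is admissibility of $\delta^{\star}$, which you check with exactly the paper's bound $1/\delta_s^{\star}\le \bigl(4(\pi_s-\eta_s)+\varepsilon\bigr)/(2\gamma)$ combined with a.s.\ path-boundedness of the CIR factor on $[t,T]$. The pointwise maximisation of the Hamiltonian that you re-derive (concave quadratic in $1/\delta$, giving uniqueness of the maximiser) is already contained in the proof of Proposition \ref{prop:1}, so your write-up simply makes explicit the ``standard results'' the paper invokes.
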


\noindent
For a proof, see \ref{sec:proofs:hjb2}.

\subsection{Discussion: profitability, PL, and concentration risk \label{sec:4:discussion}}

{In this section, we study the strategy 
when $\mu = 0\,,$ in which case the position is symmetric, so $\rho = 1/2$ and $\delta^U_t = \delta^L_t = \delta_t / 2$ and the optimal spread \eqref{eq:optimalspeed} becomes}\footnote{The position range is approximately symmetric around the position rate $Z$ because $\delta^U_t = \delta^L_t$ does not imply that $Z-Z^L = Z^U-Z;$ see \eqref{eq:ZlZucontrol}. However, for small values of $\delta^L$ and $\delta^U,$ one can write the approximation $Z-Z^L \approx Z^U-Z$, in which case the position is symmetric around the rate $Z$.}
\begin{align}
\label{eq:optimalspreadsul_symm}
\delta_{t}^{ L \,\star}=\delta_{t}^{U\,\star}=\frac{2\,\gamma}{8\,\pi_{t}-\sigma^{2}}\,\quad \implies \quad \delta_{t}^{\star}=\frac{4\,\gamma}{8\,\pi_{t}-\sigma^{2}},
\end{align}
so the inequality in \eqref{assumption:1} becomes \begin{align}\label{assumption:1_symm}
    4\,\pi_{t}-\frac{\sigma^{2}}{2}\ge\varepsilon>0\,, \quad \forall t \in [0, T]\,.
\end{align}

The inequality in \eqref{assumption:1_symm} guarantees that the optimal control \eqref{eq:optimalspreadsul_symm} does not explode, and ensures that fee income is large enough for LP activity to be profitable. In particular, it ensures that $\pi > \sigma^2/8 + \varepsilon.$ When $\varepsilon \to 0$, i.e., $\sigma^2/4 \to \pi,$ the spread $\delta \to +\infty\,.$ However, we require that the spread $\delta = \delta^U + \delta^L \le 4$, so the conditions $\delta^L \leq 2$ and $\delta^U \leq 2$ become
\begin{align}
\label{eq:cond0}
\frac{\gamma}{4\,\pi-\frac{\sigma^{2}}{2}}\leq2 \implies \pi-\frac{\gamma}{8}\geq\frac{\sigma^{2}}{8} \,. 
\end{align}

When $\delta^L = \delta^U = 2,$ the LP provides liquidity in the maximum range $(0\,, +\infty),$ so the depth of her liquidity position $\tilde \kappa$ is minimal, and the LP's position is equivalent to providing liquidity in CPMMs without CL; see \cite{cartea2023predictable} for more details.

On the other hand, when $\delta\leq4$, the depreciation rate of the LP's position value in \eqref{eq:dynAlphaFirst} is higher {than $\sigma^2/8$}. In particular, if $\delta = \delta^\text{tick}$, where $\delta^\text{tick}$ is the spread of a liquidity position concentrated within a single tick range, then the depth of the LP's liquidity position $\tilde \kappa$ is maximal and the LP's exposure to volatility is maximal.

LPs should track the profitability of the pools they consider before depositing their assets in the pool. When $\mu=0,$ we propose that LPs use $\sigma^2 / 8$ as a {lower bound} rule-of-thumb for the  pool's rate of profitability because $\sigma^2 / 8$ is the lowest rate of depreciation of their wealth in the pool.

Condition \eqref{eq:cond0} ensures that the profitability $\pi - \gamma / 8,$ which is the pool fee rate adjusted by the concentration cost, is higher than the depreciation rate of the LP's assets in the pool. Thus, the condition imposes a  minimum profitability level of the pool, so LP activity is viable. An optimal control $\delta^\star > 4$ indicates non-viable LP activity because fees are not enough to compensate the losses borne by the LP. Figure \ref{fig:DOLP_feerate_vol} shows the estimated pool fee rate and the estimated depreciation rate in the ETH/USDC pool (from January to August 2022).  In particular, the CIR model captures the dynamics of $\pi_t - \sigma^2 / 8.$

\begin{figure}[!h]\centering
\includegraphics{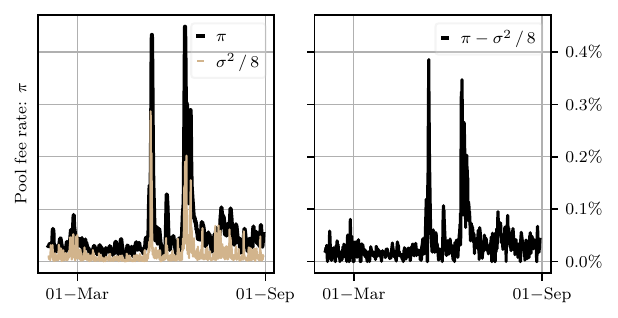}\\
\caption{Estimated pool fee rate from February to August 2022 in the ETH/USDC pool. For any time $t$, the pool fee rate is the total fee income, as a percentage of the total pool size, paid by LTs in the period $[t-1\text{ day}, t].$ The pool size at time $t$ is $2 \, \kappa \, \sqrt{Z_t}$ where $Z_t$ is the active rate in the pool at time $t.$ }\label{fig:DOLP_feerate_vol}
\end{figure}

Next, we study the dependence of the optimal spread on the value of the concentration cost coefficient $\gamma,$ the fee rate $\pi,$ and the volatility $\sigma$. The concentration cost coefficient $\gamma$ scales  the spread linearly in \eqref{eq:optimalspreadsul_symm}. Recall that the cost term penalises small spreads because there is a risk that the rate will exit the LP's range. Thus, large values of $\gamma$ generate large values of the spread. Figure \ref{fig:OLP_dyn0} shows the optimal spread as a function of the pool fee rate $\pi$. Large potential fee income pushes the strategy towards targeting more closely the marginal rate $Z$ to profit from fees. Lastly, Figure \ref{fig:OLP_dyn2} shows that the optimal spread increases as the volatility of the rate $Z$ increases.

\begin{figure}[!h]
\centering 
\includegraphics{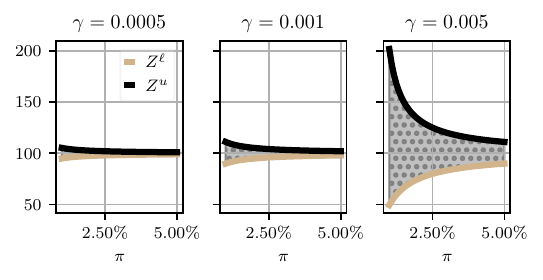}\\
\caption{Optimal LP position range $\left(Z^L, Z^U\right]$ as a function of the pool fee rate $\pi$ for different values of the concentration cost parameter $\gamma,$ when $Z = 100,$ $\sigma = 0.02,$ and $\mu=0.$}\label{fig:OLP_dyn0}
\end{figure}

\begin{figure}[!h]
\centering 
\includegraphics{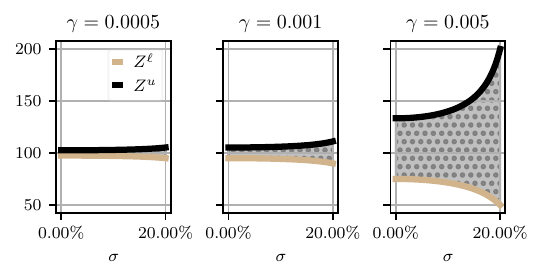}\\
\caption{Optimal LP position range $\left(Z^L, Z^U\right]$ as a function of the volatility $\sigma$ for different values of the concentration cost parameter $\gamma,$ when $Z = 100,$ $\pi = 0.02,$ and $\mu=0.$}\label{fig:OLP_dyn2}
\end{figure}

Finally, the optimal spread does not depend on time or on the terminal date $T.$ The LP marks-to-market her wealth in units of $X,$ but does not penalise her holdings in asset $Y$. In particular, the LP's performance criterion does not include a running penalty or a final liquidation penalty (to turn assets into cash or into the reference asset). For example, if at the end of the trading window the holdings in asset $Y$ must be exchanged for $X$, then the optimal strategy would skew, throughout the trading horizon,  the liquidity range  to convert holdings in $Y$ into $X$ through LT activity.\footnote{In LOBs, one usually assumes that final inventory is liquidated with adverse price impact and that there is a running inventory penalty, thus market making strategies in LOBs depend on the terminal date $T$.  }

\subsection{Discussion: drift and position skew}

In this section, we study how the strategy depends on the stochastic drift $\mu$. Use $\delta_t = \delta^L_t + \delta^U_t$ and $\rho\left(\delta_t, \mu_t\right) = \delta^U_t / \delta_t$ to write the two ends of the optimal spread as
\begin{align}
\label{eq:optimalspreadsul}
\delta_{t}^{U \, \star}=\frac{2\,\gamma+\mu_t^{2}\,\sigma^{2}}{8\,\pi_{t}-\sigma^{2}+2\,\mu_t\left(\mu_t-\frac{\sigma^{2}}{2}\right)}+\mu_t \quad \text{and} \quad \delta_{t}^{ L \, \star}=\frac{2\,\gamma+\mu_t^{2}\,\sigma^{2}}{8\,\pi_{t}-\sigma^{2}+2\,\mu_t\left(\mu_t-\frac{\sigma^{2}}{2}\right)}-\mu_t\,.
\end{align}

The inequality in \eqref{assumption:1} guarantees that the optimal control in \eqref{eq:optimalspreadsul} does not explode and ensures that fee income is large enough for LP activity to be profitable. The profitability condition in \eqref{eq:cond0} becomes
\begin{equation}
\label{eq:cond1}
\pi_{t}-\frac{\gamma}{8}\ge\frac{\sigma^{2}}{8}\left(\frac{\mu_t^{2}\,}{2}+1\right)-\frac{\mu_t}{4}\left(\mu_t-\frac{\sigma^{2}}{2}\right)\,,
\end{equation}
so LPs that assume a stochastic drift in the dynamics of the exchange rate $Z$ should use this simplified measure of the depreciation rate due to PL as a rule-of-thumb before considering depositing their assets in the pool.

Next, we study the dependence of the optimal spread on the value of the drift $\mu$.  First, recall that the controls in \eqref{eq:optimalspreadsul} must obey the inequalities\footnote{The admissible set of controls is not restricted to these ranges. However, values outside these range cannot be implemented in practice.}
$$0<\delta^{L}_t\leq2 \quad \text{and}\quad 0\leq\delta^{U}_t<2\,,$$
because $0 \leq Z^L < Z^U < \infty$ and $Z_t \in (Z^L_t, Z^U_t]$, which together with \eqref{eq:pos_spread} implies $0 \leq\delta_t\leq 4$. Next, the asymmetry function satisfies 
\begin{equation}\label{eq:rho}
    0<\rho\left(\delta_t, \mu\right) = \frac{\delta^U_t}{\delta_t}<1\,,
\end{equation}
which implies
\begin{equation}
\label{eq:double_rho}
0\leq\rho \left(\delta_t, \mu\right)\,\delta_t<2\quad \text{and}\quad 0\leq\left(1-\rho \left(\delta_t, \mu\right)\right)\,\delta_t<2\,.
\end{equation}
Now, use \eqref{eq:relation_rho_mu} and \eqref{eq:double_rho} to write
\begin{equation}
\label{eq:double_rho_two}
0\leq\left(\frac{1}{2}+\frac{\mu}{\delta_t}\right)\,\delta_t<2\quad \text{and}\quad
0\leq\left(\frac{1}{2}-\frac{\mu}{\delta_t}\right)\,\delta_t<2\,.
\end{equation}
Finally, use \eqref{eq:rho}  and \eqref{eq:double_rho_two} to obtain the inequalities
\begin{align}
\label{eq:conditionFin}
2\,\left|\mu\right|\leq\delta_t\leq4-2\,\left|\mu\right|,
\end{align}
so $\mu$ must be in the range $\left[-1, 1\right]$ for the LP to provide liquidity. If $\mu$ is outside this range, concentration risk is too high so the LP must withdraw her holdings from the pool. Recall that the dynamics of $Z$ are geometric and $\mu$ is a percentage drift, so values of $\mu$ outside the range $\left[-1, 1\right]$ are unlikely. Moreover, when $\mu=-1,$ the drift of the exchange rate $Z$ is large and negative, so the optimal range is $(0, Z],$ i.e., the largest possible range to the left of $Z.$ When $\mu=1,$ the drift of the exchange rate $Z$ is large and positive, so the optimal range is $(Z, +\infty),$ which is the largest possible range to the right of $Z.$ Condition \eqref{eq:conditionFin} is always verified when we study the performance of the strategy in the ETH/USDC pool. Figure \ref{fig:OLP_dyn1} shows how the optimal spread adjusts to the value of the drift $\mu$. Finally, note that
\begin{equation}\label{eq:derivative_mu}
    \frac{\partial\delta^{U \, \star}}{\partial\sigma}=\frac{\partial\delta^{L \, \star}}{\partial\sigma}=\frac{2\,\mu^{2}\,\sigma\left(4\,\pi-4\,\eta+\varepsilon\right)+4\,\sigma\left(1+\mu\right)\left(2\,\gamma+\mu^{2}\,\sigma^{2}\right)}{\left(4\,\pi-4\,\eta+\varepsilon\right)^{2}}>0\,,\qquad \forall\mu\in[-1,1]\,,
\end{equation}
shows that the optimal range is strictly increasing in the volatility $\sigma$ of the rate $Z,$ which one expects as increased activity  that exposes the position value to more PL, and increases the concentration risk.

\begin{figure}[!h]
\centering 
\includegraphics{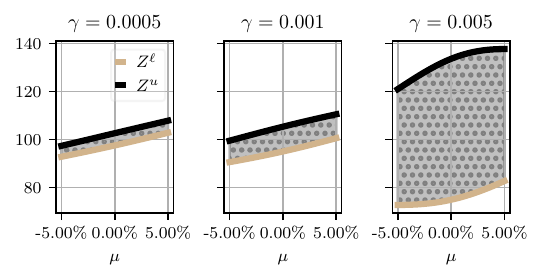}\\
\caption{Optimal LP position range $\left(Z^L, Z^U\right]$ as a function of the drift $\mu$ for different values of the concentration cost parameter $\gamma,$ when $Z = 100,$ $\pi = 0.02,$ and $\sigma=0.02.$}\label{fig:OLP_dyn1}
\end{figure}

\section{Performance of strategy\label{sec:num1}}

\subsection{Methodology}
In this section, we use Uniswap v3 data between 1 January and 18 August 2022 to study the performance of the strategy of Section \ref{sec:4}. We consider execution costs and discuss how gas fees and liquidity taking activity in the pool affect the performance of the strategy.\footnote{In practice, LPs pay gas fees when using the Ethereum network to deposit liquidity, withdraw liquidity, and adjust their holdings. Gas is paid in Ether, the native currency of the Ethereum network, and measures the computational effort of the LP operation; see \cite{cartea2022decentralised}.} 

Our strategy in Section \ref{sec:4} is solved in continuous time. In our performance study, we discretise the trading window in one-minute periods and the optimal spread is fixed at the beginning of each time-step. That is, let $t_i$ be the times where the LP interacts with the pool, where $i\in\{1,\dots,N\}$ and $t_{i+1} - t_i = 1$ minute.  For each time $t_i,$ the LP uses the optimal strategy in \eqref{eq:optimalspreadsul} based on information available at time $t_i,$ and she fixes the optimal spread of her position throughout the period $[t_i, t_{i+1});$ recall that the optimal spread is not a function of time.

To determine the optimal spread \eqref{eq:optimalspreadsul} of the LP's position at time $t_i,$ we use in-sample data $[t_i-\text{1 day}, t_i]$ to estimate the parameters. The volatility $\sigma$ of the rate $Z$ is given by the standard deviation of one-minute  log returns of the rate $Z,$ which is multiplied by $\sqrt{1440}$ to obtain a daily estimate. The pool fee rate $\pi_t$ is given by the total fee income generated by the pool during the in-sample period, divided by the pool size $2 \, \kappa \, \sqrt{Z_t}$ at time $t,$ where $\kappa$ is the active depth at time $t.$ {We run the linear regression described in Section \ref{sec:concentrationcost} to estimate} the concentration cost parameter, {which is set} to $\gamma = 5 \times 10^{-7}$.  
Finally, prediction of the future marginal rate $Z$ is out of the scope of this work, thus we set $\mu=0$ and $\rho = 0.5.$ 

To compute the LP's performance as a result of changes in the value of her holdings in the pool (position value), and as a result of fee income, we use out-of-sample data $[t_i, t_{i+1}].$ {We use  equation \eqref{eq:dynAlphaFirst} to determine the one-minute out-of-sample changes in the position value}. For fee income, we use  LT transactions in the pool at rates included in the range $\left(Z_t^L, Z_t^U \right]$ and equation \eqref{eq:remuneration_LP}. The income from fees accumulates in a separate account in units of $X$ with zero risk-free rate.\footnote{In practice, fees accumulate in both assets $X$ and $Y.$} At the end of the out-of-sample window, the LP withdraws her liquidity and collects the accumulated fees, and we repeat the in-sample estimation and out-of-sample liquidity provision described above. Thus, at times $t_i,$ where $i\in\{1,\dots,N-1\},$ the LP consecutively withdraws and deposits liquidity in different ranges. Between two consecutive operations (i.e., reposition liquidity provision), the LP may need to take liquidity in the pool to adjust her holdings in asset $X$ and $Y.$ In that case, we use results in \cite{cartea2022decentralised} to compute execution costs.\footnote{\cite{cartea2022decentralised} show that execution costs in the pool are a closed-form function of the rate $Z,$ the pool depth $\kappa,$ and the transaction size.} In particular, we consider execution costs when the LP trades asset $Y$ in the pool to adjust her holdings between two consecutive operations. More precisely, we consider that for every quantity $y$ of asset $Y$ bought or sold in the pool, a transaction cost $y \, Z_t^{3/2}/\kappa$ is incurred. We assume that the LP's taking activity does not impact the dynamics of the pool. Finally, we obtain 331,858 individual LP operations from 1 January to 18 August 2022.

\subsection{Benchmark}

We compare the performance of our strategy with the performance of LPs in the pool we consider. We select operation pairs that consist of first providing and then withdrawing the same depth of liquidity $\tilde \kappa$ at two different points in time by the same LP.\footnote{In blockchain data, every transaction is associated to a unique wallet address.} The operations that we  select represent approximately $66\%$ of all LP operations. Figure \ref{fig:LP1} shows the distribution of key variables that describe how LPs provide liquidity. The figure shows the distribution of the number of operations per LP, the changes in the position value, the length of time the position is held in the pool, and the position spread. 

\begin{figure}[!h]
\centering
\includegraphics{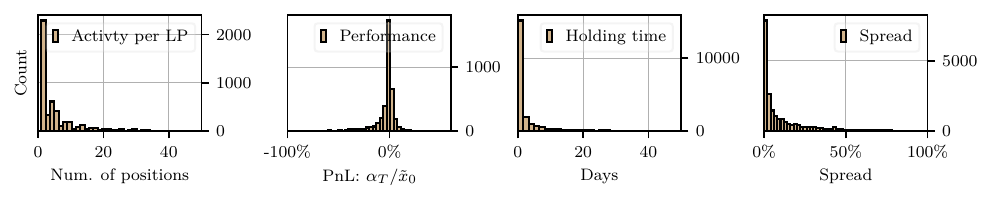}\\
\caption{From left to right: distribution of the number of operations per LP, changes in the holdings value as a percentage of initial wealth, position hold time, and position spread. ETH/USDC pool with selected operations from 5,156 LPs between 5 May 2021 and 18 August 2022.}\label{fig:LP1}
\end{figure}

Finally, Table \ref{table:dataLPdescr} shows the average and standard deviation of the distributions in Figure \ref{fig:LP1}. Notice that the bulk of liquidity is deposited in small ranges, and positions are held for short periods of time; $20\%$ of LP positions are held for less than five minutes and $30\%$ for less than one hour. Table \ref{table:dataLPdescr} also shows that, on average, the performance of the LP operations in the pool and the period we consider is $-1.49\,\%$ per operation.

{
\begin{table}[h]
\footnotesize{
\begin{center}
\begin{tabular}{c | r  r  r  r  r}
\hline 
 & Number of   & Position value & Fee income & Hold time & Spread \\ [0.5ex]
 & transactions per LP  & performance  ($\alpha_T / V_0 - 1$) & ($ p_T / V_0 - 1$) & & \\ [0.5ex]
\hline 
 Average  & $11.5$ & $-1.64\%$& $0.155\%$& $6.1$ days & $18.7\%$\\ [0.5ex]
Standard deviation & $40.2$ & $7.5\%\ \ $ & $0.274\%$ & $22.4$ days &  $43.2\%$\\
\hline 
\end{tabular}
\end{center}
\caption {LP operations statistics in the ETH/USDC pool using operation data of 5,156 different LPs between 5 May 2021 and 18 August 2022. Performance includes transaction fees and excludes gas fees. The position value performance and the fee income are not normalised by the hold time. }
\label{table:dataLPdescr}
}
\hfill
\end{table}}

\subsection{Performance results}
This subsection focuses on the performance of our strategy when gas fees are zero --- at the end of the section we discuss the profitability of the strategy when gas fees are included. Figure \ref{fig:DOLP_backtestgammas} shows the distribution of the optimal spread \eqref{eq:optimalspreadsul_symm} posted by the LP. The bulk of liquidity is deposited in ranges with a spread  $\delta$ below $1 \%$. Table \ref{table:dataLPdescr} compares the average performance of the components of  the optimal  strategy with  the performance of LP operations observed in the ETH/USDC pool.\footnote{In particular, performance is given for the selected operations shown in Figure \ref{fig:LP1}.} Table \ref{table:dataLPperf} suggests that the position of the LP loses value in the pool (on average) because of PL; however, the fee income would cover the loss, on average, if one assumes that gas fees are zero. Finally, the results show that our optimal strategy significantly improves the PnL of LP activity in the pool and the performance of the assets themselves.

\begin{figure}[!h]
\centering 
\includegraphics{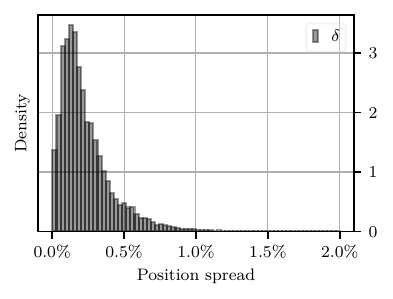}\\
\caption{Distribution of the position spread $\delta.$}\label{fig:DOLP_backtestgammas}
\end{figure}

{
\begin{table}[h]
\begin{center}
\begin{footnotesize}
\begin{tabular}{c || r  r  r} 
\hline 
 & Position value performance & Fee income & Total performance \\ 
  &  per operation   &  per operation & per operation  \\ 
    &     &  &  (with transaction costs, \\ 
      &     &  &  without gas fees) \\ 
  [0.5ex]
\hline
Optimal strategy & $-0.015 \%$  & $ 0.0197 \%$ & $0.0047 \% $ \\ [-0.5ex]
 &   $(0.0951 	 \%)$ & $( 0.005 \%)$ & $( 0.02 \%)$ \\ [1ex]
 
Market &  $ -0.0024 \%$  & $ 0.0017 \%$ & $ -0.00067 \% $ \\ [-0.5ex]
 &   $(0.02 \%)$ & $( 0.005 \%)$ & $( 0.02 \%)$ \\ [1ex]

Hold & n.a.   & n.a.  & $-0.00016 \% $ \\ [-0.5ex]
 &    &  & $( 0.08 \%)$ \\ [0.5ex]
\hline
\hline 
\end{tabular}
\end{footnotesize}
\end{center}
\caption {\textbf{Optimal strategy}: Mean and standard deviation of the one-minute  performance of the LP strategy \eqref{eq:optimalspreadsul_symm} and its components. \textbf{Market}: Mean and standard deviation of one-minute performance of LP activity in the ETH/USDC pool using data between 1 January and 18 August 2022.
\textbf{Hold}: Mean and standard deviation of the one-minute  performance of holding the assets. In all cases, the performance includes transaction costs (pool fee and execution cost), but does not include gas fees.}
\label{table:dataLPperf}
\hfill
\end{table}}

The results in Table \ref{table:dataLPperf} do not consider gas fees.   Gas cost is a flat fee, so it does not depend on the position spread or size of transaction. If the activity of the LP does not affect the pool and if the fees collected scale with the wealth that the LP deposits in the pool, then the LP should consider an initial amount of $X$ and $Y$ that would yield  enough fees to cover the flat gas fees. An estimate of the average gas cost gives an estimate of the minimum amount of initial wealth for a self-financing strategy to be profitable. Recall that, at any point in time $t,$ the LP withdraws her liquidity, adjusts her holdings, and then deposits new liquidity. In the data we consider, the average gas fee is $30.7\,$ USD to  provide liquidity, $24.5\,$ USD to  withdraw liquidity, and $29.6\,$ USD to take liquidity. Average gas costs are obtained from blockchain data which record the gas used for every transaction, and from historical gas prices. The LP pays a flat fee of  $84.8$ USD per operation when implementing the strategy in the pool we consider, so the LP strategy is profitable, on average, if the initial wealth deposited in the pool is greater than $1.8 \times 10^6$ USD. 

Fee income of the LP strategy is limited by the volume of liquidity taking activity in the pool, so one should not only consider increasing the initial wealth to make the strategy more profitable. There are $4.7$ LT transactions per minute and the average volume of LT transactions is $477,275$ USD per minute, so if the LP were to provide $100\%$ of the liquidity, the average fee income per operation would be 1,431 USD.

{
\section{Discussion: modelling assumptions \label{sec:modelling_assumptions}}

This section summarises our modelling assumptions and discusses their implications, strengths, and weaknesses.


\paragraph{Continuous trading.} Our model assumes continuous repositioning of the LP's position. However, when interacting with  blockchains, updates occur at the block validation frequency. For instance, the Ethereum network's blocks are validated every 13 seconds, on average. Moreover, within each block, the transactions form a (random) queue that determines the priority with  which they are executed. Our model can be extended to include delays inherent to blockchains and we refer the reader to the work in \cite{cartea2021shadow} and \cite{cartea2021latency}.

\paragraph{Rebalancing.} Continuous repositioning requires rebalancing the LP's assets which incurs costs as discussed in Section \ref{sec:costs}. To model this aspect, we assume that the LP pays costs that are proportional to the quantity of asset $Y$ held in the pool. In practice, the exact costs depend on variations in the holdings between two consecutive liquidity positions. Thus,  liquidity provision strategies should balance large variations in the holdings with fee revenue, PL, and concentration cost. However, the nonlinearity in the CL constant product formula complicates the mathematical modelling of this aspect of trading costs. 
Moreover, rebalancing costs depend on the cost structure of the trading venue where rebalancing trades are executed. 

\paragraph{Gas fees.} Our model assumes that gas fees paid by the LP to interact with the blockchain are flat and constant. In practice, gas fees are stochastic and depend on exogenous factors such as the price of electricity and network congestion. These could be included in our model by considering a stochastic gas fee that is observed by the LP.

\paragraph{Concentration costs.} The specific microstructure of CL markets features a new type of investment risk which we refer to as concentration risk. To capture the losses due to concentration risk in a continuous-time framework, we introduced an instantaneous cost which is inversely proportional to the spread of the LP's position, and we showed that it captures the losses due to concentration risk accurately. In practice, LPs must tailor the estimation of the concentration cost parameter $\gamma$ in \eqref{eq:feedyn_0} to the rebalancing frequency and to the volatility of the marginal rate $Z.$

\paragraph{Asymmetry.} Our model assumes a fixed relation between the asymmetry of the LP's position and the drift in the marginal rate. This relation fits observed data but also leads the LP to skew the position to capture more LT trades and to profit from expected rate movements. Future work will consider a richer characterization of the asymmetry because it may be desirable for LPs to adjust the asymmetry of their position as a function of other state variables or as a controlled variable

\paragraph{Fee dynamics.} 
Our model assumes that the distribution of fee revenue $\pi$   among LPs is stochastic and proportional to the size of the pool to reflect that large pools attract more trading flow because trading is cheaper. We also make the simplifying assumption that fees are uncorrelated to the price. In practice, the dynamics of fee revenue may be correlated to those of the volatility of the rate, which is also related to the concentration cost parameter $\gamma.$ Future work will consider more complex relations among these variables. 

\paragraph{Constant volatility.} At present, CFMMs with CL mainly serve as trading venues for crypto-assets which are better described with a diffusion model with stochastic volatility. It is straightforward to extend our strategy to this type of models.

\paragraph{Market impact.} Finally, our analysis does not take into account the impact of liquidity provision on liquidity taking activity, however, we expect liquidity provision in CPMMs with CL to be profitable in pools where the volatility of the marginal rate is low, where liquidity taking activity is high, and when the gas fee cost to interact with the liquidity pools is low. These conditions ensure that the fees paid to LPs in the pool, adjusted by gas fees and concentration cost, exceed PL so liquidity provision is viable.

\subsection{Discussion: related work}

Closest to this work are the strategic liquidity provision models proposed in \cite{fan2021strategic} and \cite{fan2022differential} which also consider CL markets. Both models allow LPs to compute liquidity positions over different intervals centered around the marginal rate within a given time horizon;   \cite{fan2021strategic} only consider static LP strategies which do not use of reallocations, and  \cite{fan2022differential} reposition liquidity whenever the price is outside of the position range. Both approaches only focus on maximising fee revenue and rely on approximations of the LP's objective and  use  Neural Networks to obtain context-aware approximate strategies. In contrast, our model leads to closed-form formulae that explicitly balance fee revenue with concentration risk, fee revenue, and rebalancing costs, while allowing LPs to use price signals (potentially based on exogenous information) to improve trading performance.

}

\section{Conclusions}

We studied the dynamics of the wealth of an LP in a CPMM with CL who implements a self-financing strategy that dynamically adjusts the range of liquidity. The wealth of the LP consists of the position value and fee revenue. We showed that the position value depreciates due to PL and the LP widens her liquidity range to minimise her exposure to PL. On the other hand, the fee revenue is higher for narrow ranges, but narrow ranges also increase concentration risk.

We derived the optimal strategy to provide liquidity in a CPMM with CL when the LP maximises expected utility of terminal wealth. This strategy is found in closed-form for log-utility of wealth, and it shows that liquidity provision is subject to a profitability condition. In particular, the potential gains from fees, net of gas fees and concentration costs, must exceed PL. Our model shows that the LP strategically adjusts the spread of her position around the reference exchange rate; the spread depends on various market features including tthe volatility of the rate, the liquidity taking activity in the pool, and the drift of the rate.
	
\chapter{Deep Liquidity Provision}\label{ch:deep}
\section{Introduction}

Constant function market makers (CFMMs) are trading venues where the takers and providers of liquidity interact in liquidity pools; liquidity providers (LPs) deposit their assets in the liquidity pool and liquidity takers (LTs) exchange assets directly with the pool. At present, constant product market makers (CPMMs) with concentrated liquidity (CL) are the most popular type of CFMM, with  Uniswap v3 as a prime example; see \cite{uniswap2021core}. In CPMMs with CL, LPs specify the rate ranges over which they deposit their assets, and this liquidity is counterparty to trades of LTs when the marginal exchange rate of the pool is within the liquidity range of the LPs. When LPs deposit liquidity, fees paid by LTs accrue and are paid to LPs when they withdraw their assets from the pool. The amount of fees accrued to LPs is proportional to the share of liquidity they hold in each liquidity range of the pool.

Current research characterises the losses of LPs and offers tools for strategic liquidity provision in a single liquidity pool with CL; see \cite{cartea2022decentralised2,cartea2023predictable}. In this chapter, we study strategic liquidity provision in multiple liquidity pools with CL. Here, each pool trades a risky asset and a common reference asset used by the LP to value her wealth.

We derive the discrete-time dynamics of the wealth of a strategic LP who dynamically provides liquidity in the various pools. At each time step, the LP specifies the proportion of her wealth to deposit in each pool and the range where she wants to provide liquidity. Her wealth consists of the position she holds in each pool and fee income. 

Here, we do not specify a model for the stochastic processes observed by the LP. Instead, we use a model-free approach to solve the liquidity provision problem. Specifically, we approximate the optimal liquidity provision strategy with a long short-term memory (LSTM) network, whose output at each time step determines how the LP spreads her wealth across the pools and the liquidity provision range in each pool. The LP use Sharpe Ratio and mean-variance criterion to train the LSTM strategy on simulated data and market data.

We test the LSTM strategy on simulated data to showcase the performance of the strategy under different market conditions. Firstly, we consider an LP who provides liquidity in a single liquidity pool with CL and we recover  results from the literature; see \cite{cartea2022decentralised2}. Specifically, when volatility is high, regardless of the performance criterion used, the LSTM strategy provides liquidity in wide ranges of the pool, as in see \cite{cartea2023predictable}. The optimal strategy increases the range posted by the LP as the fee tier decreases, and the LP skews her position to align provision of liquidity with the trend of the pools rate. Secondly, we consider an LP who provides liquidity in two liquidity pools with CL. The behaviour of the LSTM strategy in each individual pool is qualitatively the same as that in the single pool case, and the LP provides more liquidity in pools with lower volatility, higher fee tier, and where the rate between the risky asset and the reference asset trends upwards.

Finally, we use Uniswap v3 and Binance data to test the LSTM strategy. We use LP and LT data from the two trading venues  between 1 July 2021 and 30 September 2023 for the pair ETH/USDC and the pair BTC/USDC. ETH is the ticker of \emph{Ether}, BTC is the ticker of \textit{Bitcoin} and USDC is the ticker of \emph{USD coin}, which is a digital currency fully backed by U.S. Dollars (USD). The input to the LSTM strategy is a set of features which we build with data from both venues. Overall, we find that gas fees have a considerable impact on the performance of the strategy. Specifically, the average returns of the LSTM strategy before gas fees decrease with the LP's initial wealth and increase with the LP's trading frequency; in contrast, the average returns of the strategy after gas fees increase with the LP's initial wealth and decrease with the LP's trading frequency.

Early works on automated market makers (AMMs) include \cite{chiu2019blockchain, angeris2021analysis, lipton2021blockchain}. \cite{cartea2023predictable} study the losses faced by LPs in CFMMs and then introduce predictable loss, which measures the unhedgeable losses of LPs stemming from the depreciation of their holdings in the pool and from the opportunity costs from locking their assets in the pool. Predictable loss is similar to loss-versus-rebalancing in  \cite{milionis2022automated} which describes the unhedgeable losses of LPs in traditional CFMMs due to losses to arbitrageurs.\footnote{\cite{milionis2022automated} introduced loss-versus-rebalancing in August 2022 as a measure that quantifies the unhedgeable losses of LPs in CFMMs to arbitrageurs. Contemporaneously, \cite{cartea2023predictable} introduced predictable loss in November 2022 as a component in the wealth of strategic LPs that quantifies predictable loss due to the convexity of the trading function and due to opportunity costs in CFMMs and in CL markets.} Previous works on strategic liquidity provision include \cite{heimbach2022risks} which discusses the tradeoff between risks and returns that LPs face in Uniswap v3. \cite{cartea2022decentralised2} introduces a continuous-time model for optimal liquidity provision, \cite{milionis2023automated} study the impact of fees on the profits of arbitrageurs in CFMMs,  \cite{fukasawa2023model} study the hedging of the impermanent losses of LPs, and \cite{li2023yield} studies the economics of liquidity provision. Close to our work are the models in \cite{fan2021strategic} and \cite{fan2022differential} which focus on fee revenue and use approximation techniques to obtain dynamic strategies for liquidity provision. Finally, there is a growing literature on AMM design for fair competition between LPs and LTs. \cite{goyal2023finding} study an AMM with dynamic trading functions that incorporate beliefs of LPs, \cite{lommers2023:case} studies AMMs where the LP's strategy adjusts dynamically to market information, and  \cite{cartea2023automated} generalises CFMMs and propose AMM designs where LPs express their beliefs and risk preferences.

Our work is related to the algorithmic trading and optimal market making literature.  Early works on liquidity provision in traditional markets are \cite{ho1983dynamics}, \cite{biais1993price}, and  \cite{avellaneda2008high} with extensions in many directions; see \cite{cartea2014buy, cartea2017algorithmic, gueant2017optimal, bergault2021closed, drissi2022solvability}. We refer the reader to \cite{cartea2015book}, \cite{gueant2016book}, and \cite{donnelly2022optimal} for a comprehensive review of algorithmic trading models for takers and makers of liquidity in traditional markets. 

LSTM neural networks are a type of neural network typically employed for sequential data; see \cite{Hochreiter1997long}. In financial applications, LSTM networks are typically used for unsupervised learning tasks; see \cite{nelson2017stock,fischer2018deep,zhang2019at-lsts,moghar2020stock}. Here, we employ LSTM networks to approximate an optimal strategy which maximises a performance criterion. Close to our work are \cite{buehler2019deep,becker2020pricing,limmer2023robust} where the authors train neural networks on market data to hedge a basket of derivatives. Also, our work is related to the reinforcement learning and market making literature; see \cite{chan2001adaptive,balch2019how,gasperov2021market,jerome2023mbt}.

The remainder of the chapter proceeds as follows. Section \ref{DL:sec:1} describes CL pools and Section \ref{DL:sec:olp} studies the discrete-time dynamics of the wealth of LPs who provides liquidity in a CPMM with $N$ pools with CL. Finally, Section \ref{DL:sec:deep} introduces the framework for the model-free optimisation, Section \ref{DL:sec:fake} tests the optimal strategy on synthetic data, and Section \ref{DL:sec:market_data} tests the strategy on market data.
    
\section{Automated Market Making \label{DL:sec:1}}
In this section, we focus on CPMMs with CL and describe the revenue and the risks of liquidity provision in these venues. 
    
\subsection{Constant product market makers \label{DL:sec:CPMM}}
Traditional electronic exchanges are organised around limit order books  where liquidity taking and liquidity provision orders are matched. AMMs are a new type of trading venues where takers and providers of liquidity interact through liquidity pooling. LPs deposit assets in a common liquidity pool, and LTs trade directly with the pool. The liquidity pool consists of quantity $x>0$ of asset $X$ and quantity $y>0$ of asset $Y.$ Asset $Y$ is valued in terms of the reference asset $X$, and $Z$ denotes the (exchange) rate in the pool. CPMMs are a particular type of AMMs and they impose an LT trading condition and an LP trading condition, both of which define the state of the pool after an LT transaction and after an LP operation is completed. 
	
	\paragraph{\textbf{LT trading condition}} Let $(x,y)$ be the state of the pool before the arrival of an LT buy order for a quantity $\Delta y>0$ of asset $Y$. The quantity $\Delta x>0$ of asset $X$ that the LT pays to the pool, in exchange for $\Delta y>0$, is determined by the LT trading condition
	\begin{equation}\label{DL:eq:lt_condition_1}
		x\,y = \left(x+(1-\tau)\,\Delta x\right)\left(y-\Delta \Delta y\right) = \kappa^2\,,
	\end{equation}
	where $\kappa>0$ is the \emph{depth} of the pool and $\tau\geq0$ is the \textit{fee tier}. For a sell order, the LT trading condition is
	\begin{equation}\label{DL:eq:lt_condition}
		x\,y = \left(x-\Delta x\right)\left(y+(1-\tau)\,\Delta y\right) = \kappa^2\,.
	\end{equation}
	Chapter \ref{ch:paper} shows that the instantaneous exchange rate is given by
	\begin{align}\label{DL:eq:instantaneousprice}
		Z = \frac{x}{y}\,,
	\end{align}
	and that the execution rate for an LT trade of quantity $y$ is 
	\begin{equation}
		\tilde Z\left( \Delta y\right) = Z - Z^{3/2} \,  \Delta y \, / \, \kappa\,.
	\end{equation}
	
	\paragraph{\textbf{LP trading condition}} LPs deposit assets in the pool or withdraw assets from the pool.  The LP trading condition requires that LP operations do not impact the rate $Z$ of the pool. However, LP operations do change the depth $\kappa$ of the pool. More precisely, let $\left(x, y\right)$ be the state of the pool before an LP deposits liquidity $\left(x, y\right)$,\footnote{Here, $x\ge0$ and $y\ge0$\,.} so $\left(x+x, y+y\right)$ is the state of the pool after the deposit, which increases the depth of the pool because 
	\begin{equation}
		\left(x+\Delta x\right)\left(y+\Delta y\right) = \overline \kappa^2 > x\,y= \kappa^2\,.
	\end{equation}
	
    The LP trading condition requires that when an LP deposits the quantities $\left(\Delta x, \Delta y\right)$, then the following equality must hold
	\begin{align}\label{DL:eq:LPtradingCondition}
		Z = \frac{x}{y} = \frac{x+\Delta x}{y+\Delta y}\,.
	\end{align}
	
    In practice, LPs usually invest an initial wealth $V$ in terms of the reference asset $X\,,$ so the amounts $x$ and $y$ are obtained with the LP trading condition and the equality $V = \Delta x + Z \, \Delta y.$ In contrast, LTs usually specify a quantity $x$ to trade, and receive or pay a quantity $y$ which is determined by the LT trading condition.
	
    \subsection{Concentrated liquidity \label{DL:sec:CPMMwithCL}}
    In CPMMs without CL, liquidity in the pool is spread over all rates, and the LP trading condition ensures that the quantities $(x, y)$ deposited or withdrawn by LPs verify $(\Delta x, \Delta y) = (\rho\,x, \rho\,y)$ where $\rho \ge 0\,$; see \cite{cartea2022decentralised} and \cite{cartea2023execution}. In pools with CL, the space of rates where LPs provide liquidity is discretised into a set of values $\left\{\Zc(i)\right\}_{i\in\Z}$ called ticks.\footnote{In contrast, a tick in LOBs refers to the smallest possible price increment.} Each tick $\Zc(i)$ is uniquely identified by an integer $i\in\Z$ according to
    \begin{equation}\label{DL:eq:tick}
        \Zc(i) = 1.0001^i\,.
    \end{equation}
    The boundaries of an LP's position in the pool take values in the set of ticks. The range between two consecutive ticks defines the smallest range available for LPs. Every tick range $\left(\Zc(i), \Zc(i+1)\right]$ has its own depth which determines the execution cost of LT trades when the instantaneous rate $Z$ is in that tick range. Thus, the depth $\kappa$ can change as a result of the instantaneous rate $Z$ crossing the boundary of a tick range. 
	
    LPs specify a lower tick $\Zc(L)$ and an upper tick $\Zc(U)$ between which they provide liquidity that is evenly spread over all the tick ranges in the position range $\left(\Zc(L),\Zc(U)\right]$. The position of an LP who provides liquidity in a range $\left(\Zc(L),\Zc(U)\right]$ is characterised by the position depth $\tilde{\kappa}^{L,U}$. The value of $\tilde{\kappa}^{L,U}$ represents the depth of the LP's liquidity within every tick range in her position range, and $\tilde{\kappa}^{L,U} / \kappa^{i}$ represents the portion of liquidity that the LP holds in the tick range $\left(\Zc(i), \Zc(i+1)\right] \subset \left(\Zc(L),\Zc(U)\right]$. 
	
	The following key formulae define the LP trading condition in CPMMs with CL, i.e., they give the quantities $\left(x, y\right)$ that the LP must deposit in or withdraw from the range $\left(\Zc(L),\Zc(U)\right]$ so that the equality in \eqref{DL:eq:LPtradingCondition} holds
	\begin{equation}\label{DL:eq:position}
		\begin{cases}
			x = 0 \qquad\qquad\qquad\qquad\qquad\quad\,\,\,\,\, \text{and} \qquad y = \tilde{\kappa}^{L,U}\left(\frac{1}{\sqrt{\Zc(L)}}-\frac{1}{\sqrt{\Zc(U)}}\right) & \text{if}\quad Z<\Zc(L)\,,\\
			x = \tilde{\kappa}^{L,U}\left(\sqrt{Z}-\sqrt{\Zc(L)}\right) \,\qquad\,\,\, \text{and} \qquad y = \tilde{\kappa}^{L,U}\left(\frac{1}{\sqrt{Z}}-\frac{1}{\sqrt{\Zc(U)}}\right) & \text{if}\quad \Zc(L) \leq Z<\Zc(U) \,,\\
			x =  \tilde{\kappa}^{L,U}\left(\sqrt{\Zc(U)}-\sqrt{\Zc(L)}\right)\,\,\,\,\, \text{and} \qquad y =  0 & \text{if}\quad Z \geq \Zc(U) \,,\\
		\end{cases} 
	\end{equation}
	and the depth available in the tick range $\left(\Zc(i), \Zc(i+1)\right]$ is given by the sum of the depth of the positions of all LPs who provide liquidity in that tick range. We write
	\begin{align}
		\label{DL:eq:fromvirtualtorealreserves}
		\kappa^{i}=\sum_{j\in\{1,\cdots, M\}}\,\tilde{\kappa}^{L,U,j}\,\mathbbm{1}_{\{\Zc(L_{j})\leq \Zc(i)<\Zc(U_{j})\}}\,,
	\end{align}
	where $\tilde{\kappa}^{L,U,j}$ is the liquidity provided by the $j$th LP in the range $\left(\Zc(L_{j}), \Zc(U_{j})\right]$, and $\mathbbm{1}$ is the indicator function. 
	
    When the rate $Z$ is in a tick range, fee revenue is distributed to LPs who provide liquidity in that tick range, i.e., for a fee  $p $ paid by an LT, the $j$th LP with position depth  $\tilde{\kappa}^{L,U, j}$ earns\footnote{If the volume of the LT trade makes the rate $Z$ exit the tick range, then the transaction is subdivided into sub-transactions which get executed with different values of the depth.}  
    \begin{align}\label{DL:eq:remuneration_LP}
        \frac{\tilde{\kappa}^{L,U, j}}{ \kappa}\,p \,\mathbbm{1}_{\{\Zc(L_{j}) < Z \leq \Zc(U_{j})\}}\,.
    \end{align}
	
    \section{Optimal liquidity provision}\label{DL:sec:olp}
    
    In this section, we formalise the problem of an LP who provides liquidity in $n=1,\dots,N$ pools. Each pool provides liquidity in the reference asset $X$ and a risky asset $Y_{n}$. We denote by $Z^1,\dots,Z^N$ the marginal rates in these pools. 
	
    We fix a filtered probability space $\left(\Omega, \Fc, \P, \F=\left\{\Fc_{t}\right\}_{t=0,\dots,T} \right)$ satisfying the usual conditions. Here, $\F$ is the natural filtration generated by the $\R^D$-valued stochastic process $\left\{\mathfrak{I}_{t}\right\}_{t=0,\dots,T}$ that represents the information available to the LP at each time. Finally, we assume that each stochastic process defined below is adapted to $\F$.
	
    For each $n\in\{1,\dots,N\}$, the marginal rate of the pool to trade $X$ and $Y_{n}$ is the $\R_+$-valued stochastic process $\left\{Z_{t}^{n}\right\}_{t=0,\dots,T}$. The index of the active tick in the pool is the $\Z$-valued stochastic process $\left\{i_{t}^n\right\}_{t=0,\dots,T}$, where $n$ is the pool, $t$ is the time, and $i$ is the tick index of the pool as defined in \eqref{DL:eq:tick}. In particular, for each $t\in\{0,\dots,T\}$ the rate
	\begin{equation}
		Z_t^n\in \left(\Zc\left(i_{t}^n\right), \Zc\left(i_{t}^n+1\right)\right]\,.
	\end{equation}
	Similarly, the market depth is the $\R_+$-valued stochastic process $\left\{\kappa_{t}^n\right\}_{t=0,\dots,T}$, where $n$ is the pool, $t$ is the time and $\kappa$ is the market depth as defined in \eqref{DL:eq:lt_condition_1}.
	
    \subsection{Stochastic control problem}
    The LP provides liquidity in the $N$ pools between times 0 and $T$, and her wealth, in units of asset $X$, is the $\R_+$-valued stochastic process $\left\{V_{t}\right\}_{t=0,\dots,T}$. She provides liquidity in the CPMM by readjusting her position every $\Delta t$ time-steps, where $\Delta t\in\{1,2,3,\dots\}$ is fixed. We denote by $\T$ the set of times at which the LP readjusts her position, which is given by
    \begin{equation}
        \T \coloneqq \left\{t = k\,\Delta t \,\middle|\,k\in\{0,1,2,\dots\}\text{ and } k\,\Delta t < T \right\}\,.
    \end{equation}
    
    For each $t\in\T$, to optimally provide liquidity between $t$ and $t+\Delta t$, the LP chooses optimal controls $\bm{w}_t= \left(w^1_t,\dots,w^N_t\right) $ such that
	\begin{equation}\label{DL:eq:weights}
		\sum_{n=1}^{N}w^n_t =1\quad\text{ and }\quad w_t^n\geq 0\,,\,\,\forall\, n\in\left\{1,\dots,N\right\}.
	\end{equation}
Here, $w^n_t$ represents the proportion of wealth that she deposits into the pool that trades assets $X$ and $Y_{n}$. Simultaneously, the LP chooses optimal controls $\bm{\ell}_t = \left(\ell_{t}^1,\dots,\ell_{t}^N\right) \in\left\{0,\dots,\Tc\right\}^N$ and $\bm{u}_t = \left(u_{t}^1,\dots,u_{t}^N\right)\in\left\{0,\dots,\Tc\right\}^N$, where $\Tc$ is a fixed positive integer. For each $n$, the quantity $\ell_{t}^{n}$ is the \emph{lower spread}, and $\Zc(i^n_{t}-\ell^n_{t})$ is the lower tick of the LP's range. Similarly, the quantity $u_{t}^{n}$ is the \emph{upper spread}, and $\Zc(i^n_{t}+u^n_{t}+1)$ is the upper tick of the LP's range. Next, for each $n$ the LP deposits a fraction $w^n_t$ of her wealth into the range $\left(\Zc(i^n_{t}-\ell^n_{t}), \Zc(i_{t}^n+u_{t}^n+1)\right]$ of the pool that trades assets $X$ and $Y_{n}$. The quantity $\ell_{t}^n+u_{t}^n$ is the \textit{spread} of the strategy in the pool of assets $X$ and $Y_{n}$ at time $t$ and we have
\begin{equation}
    Z^n_t\in\left(\Zc(i^n_{t}-\ell^n_{t}), \Zc(i_{t}^n+u_{t}^n+1)\right]
\end{equation}
because $\ell^n_{t},\,u^n_{t}\geq0$. 

	A liquidity provision strategy is a sequence $\left\{\left(\bm{w}_t,  \bm\ell_{t}, \bm u_{t}\right)\right\}_{t\in\T} $ which specifies how the LP provides liquidity in the $N$ pools of the CPMM between time 0 and $T$. The set of admissible liquidity provision strategies is given by
	\begin{equation}
		\Ac = \left\{\left\{\left(\bm{w}_t,  \bm\ell_{t}, \bm u_{t}\right)\right\}_{t\in\T}\, \middle\vert 
		\begin{array}{c}
			\R^N_+\times\left\{0,\dots,\Tc\right\}^N \times\left\{0,\dots,\Tc\right\}^N \text{-valued},\\
			\F\text{-adapted, and $\bm{w_t}$ satisfies \eqref{DL:eq:weights}} 
		\end{array}\right\}\,.
	\end{equation}

    \subsection{Liquidity provision}
    At time $t=0$, the LP chooses a control $\bm{w}_0$ which determines the proportion of her wealth that she allocates to each pool. Simultaneously,  she chooses a pair of controls $\bm\ell_0, \bm u_0$ which specify the range where she provides liquidity between time 0 and $\Delta t$. For pool $n$, the quantities $w_{0}^{n}$,  $\Zc(i_0^n-\ell_0^n)$, and $  \Zc(i^n_0+u^n_0+1)$, determine the LP's market depth $\tilde{\kappa}^{n}_0$, which is given by
	\begin{equation}\label{DL:eq:kappa_tilde}
		\begin{split}
			\tilde{\kappa}^{n}_0 = &w_{0}^{n}\, V_{0}\times\left(2\,\sqrt{Z_0^n} - \sqrt{\Zc(i_0^n-\ell_0^n)} -\frac{Z_0^n}{\sqrt{\Zc(i_0^n+u_0^n+1)}}\right)^{-1}\,,
		\end{split}
	\end{equation}
	see \cite{uniswap2021core} for more details. The quantities $\left(\tilde x_0^n, \tilde y_0^n\right)$ of assets $X$ and $Y$ that the LP deposits in the liquidity pool are determined by $\Zc(i_0^n-\ell_0^n)$, $  \Zc(i_0^n+u_0^n+1)$, $Z_0^n$, and $\tilde{\kappa}^{n}_0$\,. Specifically, we have
	\begin{equation}\label{DL:eq:position0}
		\left\{
		\begin{aligned}
			\displaystyle &\tilde x_0^n = \tilde{\kappa}^{n}_0\left(\sqrt{Z_0^n}-\sqrt{\Zc(i_0^n-\ell_0^n)}\right)\,, \\
			\displaystyle &\tilde y_0^n = \tilde{\kappa}^{n}_0\left(\frac{1}{\sqrt{Z_0^n}}-\frac{1}{\sqrt{\Zc(i_0^n+u_0^n+1)}}\right)\,,
		\end{aligned}
		\right.
	\end{equation}
    where $\tilde x_0^n+\tilde y_0^n\, Z_0^n=w_0\, V_0\,.$ 
    
    For simplicity, we assume that before providing liquidity in the pool, the LP trades in an alternative trading venue to obtain the exact amounts of asset $X$ and assets $Y_{1},\dots,Y_{N}$ that she will deposit in the $N$ pools of the CPMM, we also assume that the LP does not pay transaction fees and trades at the instantaneous rate $Z^1_0,\dots,Z^N_0\,.$
	
    Between time 0 and $\Delta t$, the instantaneous rate in the $n$th pool moves from $Z_0^n$ to $Z_{\Delta t}^n$ and the LP's quantities of assets $X$ and $Y_{n}$ in the CPMM change. We denote by $\left(\mathring{x}_{\Delta t}^n, \mathring{y}_{\Delta t}^n\right)$ the quantities of assets $X$ and $Y_{n}$ held by the  LP in the pool at the beginning of time $\Delta t$. In this way, we distinguish these quantities from $\left(\tilde x_{\Delta t}^n,\tilde y_{\Delta t}^n\right),$ which denote the LP's holdings in the same pool after choosing new controls $w_{\Delta t}^n$, $\ell_{\Delta t}^n$, and $u_{\Delta t}^n$. In particular, the quantities $\mathring{x}_{\Delta t}^n$ and $\mathring{y}_{\Delta t}^n$ are given by
    \begin{equation}
        \begin{split}
            &\mathring{x}_{\Delta t}^n = \left\{
            \begin{aligned}
                &0 &\qquad\text{if}\quad Z_{\Delta t}^n<\Zc(i_0^n-\ell_0^n)\,,\\
                &\tilde{\kappa}^{n}_0\left(\sqrt{Z_{\Delta t}^n}-\sqrt{\Zc(i_0^n-\ell_0^n)}\right)&\qquad\text{if}\quad \Zc(i_0^n-\ell_0^n) \leq Z_{\Delta t}^n<\Zc(i_0^n+u_0^n+1) \,,\\
                &\tilde{\kappa}^{n}_0\left(\sqrt{\Zc(i_0^n+u_0^n+1)}-\sqrt{\Zc(i_0^n-\ell_0^n)}\right)& \qquad\text{if}\quad Z_{\Delta t}^n \geq \Zc(i_0^n+u_0^n+1) \,,\\
            \end{aligned}
            \right.\\
            &\mathring{y}_{\Delta t}^n = \left\{
            \begin{aligned}
                &\tilde{\kappa}^{n}_0\left(\frac{1}{\sqrt{\Zc(i_0^n-\ell_0^n)}}-\frac{1}{\sqrt{\Zc(i_0^n+u_0^n+1)}}\right) &\qquad\text{if}\quad Z_{\Delta t}^n<\Zc(i_0^n-\ell_0^n)\,,\\
                &\tilde{\kappa}^{n}_0\left(\frac{1}{\sqrt{Z_{\Delta t}^n}}-\frac{1}{\sqrt{\Zc(i_0^n+u_0^n+1)}}\right) &\qquad\text{if}\quad \Zc(i_0^n-\ell_0^n) \leq Z_{\Delta t}^n<\Zc(i_0^n+u_0^n+1) \,,\\
                &0& \qquad\text{if}\quad Z_{\Delta t}^n \geq \Zc(i_0^n+u_0^n+1) \,,\\
            \end{aligned}
            \right.\\
        \end{split}
    \end{equation}
    see \cite{uniswap2021core} for more details.

    Between time $0$ and $\Delta t$, the LP's liquidity deposited in the pool that trades assets $X$ and $Y_{n}$ generates $\Phi_{0,\Delta t}^{n}$, in units of $X$, and more generally, $\Phi_{s,t}^{n}$ denotes the amount of fees that the LP's position generates between time $s$ and $t$, also in units of $X$. Thus, the mark-to-market value of the LP's wealth at time $\Delta t$ is
    \begin{equation}
        V_{\Delta t} = \sum_{n=1}^N \,\mathring{x}_{\Delta t}^n + \mathring{y}_{\Delta t}^n\,Z_{\Delta t}^n+\Phi_{0,\Delta t}^{n}\,.
    \end{equation}
    Next, the LP withdraws her assets from the $N$ pools and reinvests all her wealth $V_{\Delta t}$ to provide liquidity between time $\Delta t$ and time $2\Delta t$. 
	
    In general, for each $t\in\T$ the mark-to-market value of the LP's wealth is equal to
    \begin{equation}
        V_t = \sum_{n=1}^N \,\mathring{x}_t^n + \mathring{y}_t^n\,Z_t^n+\Phi_{t-\Delta t,t}^{n}
    \end{equation}
    units of $X$. Here, $\mathring{x}^{n}_t + \mathring{y}^{n}_t\,Z_t^n$ is the mark-to-market value at the beginning of time $t$ of the LP's assets deposited in the pool that trades assets $X$ and $Y_{n}$, and $\Phi_{t-\Delta t,t}^n$ are the fees that the LP's position generates between $t-\Delta t$ and $t$. Then, at time $t$ she chooses optimal controls $\bm{w}_{t},\bm\ell_{t},\bm u_{t}$ to provide liquidity between $t$ and $t+\Delta t$. For each $n$, the quantities $w_{t}^{n}$,  $\Zc(i_t^n-\ell_t^n)$, and $  \Zc(i^n_t+u^n_t+1)$ determine the LP's liquidity depth $\tilde{\kappa}^{n}_t$, which is given by
	\begin{equation}\label{DL:eq:tilde_t}
		\begin{split}
			\tilde{\kappa}^{n}_t = & w_t^n \, V_{t}\times\left(2\,\sqrt{Z_t^n} - \sqrt{\Zc(i_{t}^n-\ell_{t}^n)} -\frac{Z_t^n}{\sqrt{\Zc(i_{t}^n+u_{t}^n+1)}}\right)^{-1}\,,
		\end{split}
	\end{equation}
	while $\Zc(i_{t}^n-\ell_{t}^n)$, $  \Zc(i_{t}^n+u_{t}^n+1)$, $Z_t^n$, and $\tilde{\kappa}^{n}_t$ determine the quantities  $\left(\tilde x_t^{n},\tilde y_t^{n}\right)$ of assets $X$ and $Y_{n}$ that the LP needs to deposit in the pool. Specifically, we have
	\begin{equation}\label{DL:eq:positionn}
		\left\{
		\begin{aligned}
			\displaystyle &x_t^{n} = \tilde{\kappa}^{n}_t\left(\sqrt{Z_t^n}-\sqrt{\Zc(i_{t}^n-\ell_{t}^n)}\right)\,, \\
			\displaystyle &y_t^{n} = \tilde{\kappa}^{n}_t\left(\frac{1}{\sqrt{Z_t^n}}-\frac{1}{\sqrt{\Zc(i_{t}^n+u_{t}^n+1)}}\right)\,.
		\end{aligned}
		\right.
	\end{equation}

    \subsection{Performance criterion}
    The LP uses a performance criterion $\bm{J}\colon\Ac\to\R$ to evaluate a strategy $\left(\bm w, \bm \ell, \bm u\right)\in\Ac$. We study two optimal criteria: \textbf{Sharpe Ratio}
    \begin{equation}
        J\left(\bm w, \bm \ell, \bm u\right)=\frac{\E\left[V_{T}\right]-V_0}{\sqrt{\V\left[V_{T}\right]}}\,,
    \end{equation}
    and \textbf{mean-variance} 
    \begin{equation}\label{DL:eq:mv}
        J\left(\bm w, \bm \ell, \bm u\right)=\E\left[V_{T}\right]-\gamma\,\V\left[V_{T}\right]\,,
    \end{equation}
    where $\gamma$ is a risk aversion parameter, $\E[\,\cdot\,]$ and $\V[\,\cdot\,]$ are the expectation and variance operators, respectively. The goal of the LP is to find a liquidity provision strategy $\left(\bm w^\star, \bm \ell^\star, \bm u^\star\right)$ such that
	\begin{equation}\label{DL:eq:optim}
		\left(\bm w^\star, \bm \ell^\star, \bm u^\star\right) = \argmax_{\left(\bm w, \bm \ell, \bm u\right)\in\Ac} J\left(\bm w, \bm \ell, \bm u\right)\,.
	\end{equation}

    \section{Model-free optimisation}\label{DL:sec:deep}
    In this section, we use LSTM networks to solve the optimisation problem in \eqref{DL:eq:optim}. LSTM is a type of neural network employed for sequential data; see \cite{Hochreiter1997long}. Specifically, LSTM networks are a particular type of recurrent neural network (RNN). 
    
    RNN networks take an $\R^p$-valued input sequence $\bm x =\left\{x_t\right\}_{t=0,\dots,T}$ and use information from the past of the sequence to produce an $\R^q$-valued output sequence $\left\{h^\theta_t\right\}_{t=0,\dots,T}$ through a self loop. Typically, an RNN is implemented as a feedforward neural network whose input at time $t$ is the vector $x_{t}$ and the output $h_{t-1}^\theta$ of the RNN at the previous time step. This self loop allows information from the input sequence to persist over time and makes RNN networks particularly suited for sequential data.

    However, RNN networks suffer from the vanishing gradient problem which prevents them from learning long term dependencies; see \cite{bengio1994learning}. LSTM networks solve the problem of long term dependencies by introducing a custom layer, called LSTM layer, which controls the flow of information over time. Typically, LSTM networks are implemented as one or more stacked LSTM layers which take an $\R^p$-valued input sequence $\left\{x_t\right\}_{t=0,\dots,T}$ and sequentially produce the $\R^q$-valued output sequence $\left\{h^\theta_t\right\}_{t=0,\dots,T}$ and the $\R^r$-valued hidden \textit{cell} sequence $\left\{c^\theta_t\right\}_{t=0,\dots,T}$. At time $t$, the inputs to the LSTM network are $x_t$, the output at the previous time step $h^\theta_{t-1}$, and the hidden cell $c^\theta_{t-1}$; then, the outputs are $h^\theta_{t}$ and $c^\theta_{t}$. These two self loops and the LSTM layer structure allow long and short term memory to persist over time. Specifically, the output $h^\theta_{t}$ carries short-term information while the hidden cell $c^\theta_{t}$ stores long term information. The amount of information retained by the hidden cell at each time is determined by the \textit{forget gate} inside each LSTM layer; see \cite{Goodfellow2016deep}.

    Learning long-term dependencies is particularly relevant in financial applications because information in financial time series tends to persist over time. For instance, in \cite{cartea2022decentralised} and \cite{cartea2023execution} the authors show that the past values of the instantaneous rate in Binance can be used to forecast the instantaneous rate in Uniswap v3.

    \subsection{Implementation}
    Next, we provide details about the neural network architecture we use to approximate the solution to the optimisation problem in \eqref{DL:eq:optim}. Specifically, we consider a class of LSTM networks with input dimension $\R^D$ and output dimension $\R^{N}\times \R^{(\Tc+1) N}\times \R^{(\Tc+1) N}$.
	
    At time $t$, the input to the LSTM network is the $D$-dimensional vector $\mathfrak{I}_t$ representing the information available to the LP up to time $t$. The network's output is the vector $ \bm{h}_t  = (\bm{h}_t^{\theta,w}, \bm{h}_t^{\theta,\ell}, \bm{h}_t^{\theta,u}) $, where $\theta$ are the network parameters, $\bm{h}_t^{\theta,w}\in\R^N,$ $ \bm{h}_t^{\theta,\ell}\in\R^{(\Tc+1) N}, $ and $ \bm{h}_t^{\theta,u}\in\R^{(\Tc+1) N}$. Next, to obtain the controls $\bm w_t^\theta, \bm \ell_t^\theta, \bm u_t^\theta$, we further apply a custom layer to the output vector $\bm{h}_t^\theta$. 
	\paragraph{\textbf{Allocation weights}} The allocation weights $\bm w_t^\theta$ are given by
	\begin{equation}\label{DL:eq:soft_weights}
		\bm w_t^\theta = \softmax \left(\bm{h}_t^{\theta,w}\right),
	\end{equation}
	and
	\begin{equation}\label{DL:eq:soft_weights_n}
		\left[\bm w_t^\theta\right]_n = \left[\softmax \left(\bm{h}_t^{\theta,w}\right)\right]_n =  \frac{\exp\left(\left[\bm{h}_t^{\theta,w}\right]_n\right)}{\sum_{k=1}^N \exp\left(\left[\bm{h}_t^{\theta,w}\right]_k\right)}\,,
	\end{equation}
    for each $n\in\{0,\dots,N\}$, where $[\,\cdot\,]_n$ is the $n$th entry of the input vector. The $\softmax$ activation function enforces the condition in \eqref{DL:eq:weights}.
	
	\paragraph{\textbf{Spreads}} To obtain the spreads $\bm \ell_t^\theta $ and $ \bm u_t^\theta$, we split the $(\Tc+1) N$-dimensional output vectors $\bm{h}_t^{\theta,\ell}$ and $\bm{h}_t^{\theta,u}$ into $N$ vectors in $\R^{\Tc+1}$ 
	\begin{equation}\label{DL:eq:split}
		\bm{h}_t^{\theta,\ell} = \left(\bm{h}_t^{1,\theta,\ell},\dots,\bm{h}_t^{N,\theta,\ell}\right)\quad\text{ and }\quad \bm{h}_t^{\theta,u} = \left(\bm{h}_t^{1,\theta,u},\dots,\bm{h}_t^{N,\theta,u}\right)\,,
	\end{equation}
	and then convert the $\Tc+1$-dimensional vectors $\bm{h}_t^{n,\theta,\ell}$ and $\bm{h}_t^{n,\theta,u}$ into the controls $\ell^{n,\theta}_t$ and $u^{n,\theta}_t$, which are given by
	\begin{equation}\label{DL:eq:tick_weights}
		\ell_t^{n,\theta} = \argmax_{i\in\{0,\dots,\Tc\}}\left\{\left[\bm{h}_t^{n,\theta,\ell}\right]_i\right\}\,\quad\text{ and } \quad u_t^{n,\theta}\argmax_{i\in\{0,\dots,\Tc\}}\left\{\left[\bm{h}_t^{n,\theta,u}\right]_i\right\}\,.
	\end{equation}
	
    Finally, for each set of parameters $\theta$, we obtain an LSTM liquidity provision strategy $\left(\bm w^\theta, \bm \ell^\theta,\bm u^\theta\right)$. 
    
    Next, the LP trains the LSTM network using the performance criterion $J$ to approximate the optimal liquidity provision strategy in \eqref{DL:eq:optim} with $\left(\bm w^{\theta^\star}, \bm \ell^{\theta^\star},\bm u^{\theta^\star}\right)$, where
	\begin{equation}\label{DL:eq:theta_star}
		\theta^\star = \argmax_{\theta\in\Theta} J\left(\bm w^\theta, \bm \ell^\theta,\bm u^\theta\right)\,.
	\end{equation}
    \section{LSTM strategy with simulated data}\label{DL:sec:fake}
    In this section, we test the LSTM strategy on simulated data to showcase the performance of the strategy under various market conditions and in Section \ref{DL:sec:market_data} we test the LSTM strategy on market data. Here, for each experiment, we train an LSTM network of five LSTM layers with $24$ neurons each. The dropout rate is 0.1 and we use the Adam optimiser for training; see \cite{kingma2015adam}. For each experiment we draw 256,000 samples and use $60\%$ of them for training, $20\%$ for validation, and $20\%$ for testing. We use the mean-variance criterion in \eqref{DL:eq:mv} to train our model. 

    \subsection{Model}
    We introduce a continuous-time model for the $N$ liquidity pools where the LP provides liquidity. We use this model to generate samples to train the LSTM network, and obtain the LSTM liquidity provision strategy.

    We fix a filtered probability space $\left(\Omega, \Fc, \P, \tilde\F=\left\{\Fc_{t}\right\}_{t\in[0,T]} \right)$ satisfying the usual conditions. Here, $\tilde\F$ is the natural filtration generated by the $\R^D$-valued stochastic process $\left\{\mathfrak{I}_{t}\right\}_{t\in[0,T]}$ which represents the information available to the LP at each time. Note that $\F\subset\tilde\F$, where $\F=\left\{\Fc_{t}\right\}_{t=0,\dots,T}$ is the natural filtration introduced in Section \ref{DL:sec:olp}.

    When a liquidity taking order arrives in the pool, the instantaneous rate is updated according to the LT condition in Section \ref{DL:sec:1}, and the volume of the order is distributed among LPs according to their contribution to market depth and on the fee tier $\tau_n$. Here, we distinguish between the market depth of the LP and the aggregated market depth of the other LPs. The market depth of the LP at time $t$ in pool $n$ is denoted by $\tilde\kappa^n_t$ and it is given by \eqref{DL:eq:tilde_t}, and the aggregated market depth of the other LPs in the pool is the $\R_{++}$-valued stochastic process $\kappa_t^n$. We assume that during the trading horizon there is no liquidity provision activity from other LPs and we further assume that their market depth $\kappa^n_t$ is constant across different ticks. Therefore, $\kappa_t^n$ is constant during the entire trading horizon. We write $\kappa_n$ to denote the constant aggregated market depth of the other LPs, so that the total market depth of the pool at time $t$ is $\kappa_n+\tilde\kappa^n_t$. 

    Liquidity taking orders arrive in the pool according to a Poisson process $N_t^n$ with parameter $\lambda_n$. A liquidity taking order is a buy order with probability $p_n\in(0,1)$ and its size, in units of asset $X$, is normally distributed with mean $\mu_n>0$ and standard deviation $\xi_n>0.$ Let $\{B_t^n\}_{t\in[0,T]}$ be i.i.d. Bernoulli random variables with parameter $p_n$, and let $\{S_t^n\}_{t\in[0,T]}$ be i.i.d. normal random variables with mean $\mu_n$ and standard deviation $\xi_n$, independent of $\{B_t^n\}_{t\in[0,T]}$. Then, the liquidity taking order flow in units of asset $X$ of buy and sell orders during the trading horizon is given by
    \begin{equation}
        \begin{split}
            dO_t^{n,\text{buy}}    &= B_t^n \, S_t^n\,dN_t^n\,,\\
            dO_t^{n,\text{sell}}   &= (1-B_t^n) \, S_t^n\,dN_t^n\,.
        \end{split}
    \end{equation}
    Here, we assume that the arrival of liquidity taking orders is independent between pools, so the Poisson processes $N^1_t,\dots,N^N_t$ are independent.
    
    In our experiments, we simulate the Poisson process $N_t^n$, the random variables $\{B_t^n\}_{t\in[0,T]}$, and $\{S_t^n\}_{t\in[0,T]}$ to obtain the order flow processes $O_t^{n,\text{buy}}$ and $O_t^{n,\text{sell}}$. We update the instantaneous rate $Z_t^n$ and distribute trading fees among LPs according to the LT condition in Section \ref{DL:sec:1}.
    
    \subsection{Single pool}
    Here, we consider an LP who provides liquidity in one liquidity pool, i.e., $N=1$, that trades the reference asset $X$ and the risky asset $Y$, and the instantaneous rate at time $t=0$ is $Z_0=\,$2,200. The LP provides liquidity for $24$ hours and readjusts her position every $30$ minutes, so $T=\,$1,440 minutes and $\Delta t=30$ minutes. The LP's initial wealth is 500,000 units of asset $X$ and the maximum value for the lower and upper spreads that the LP considers is $\Tc=\,$500.  At each time $t\in\T$ the input to the LSTM strategy is $\mathfrak{I}_{t} = \left(w^\theta_{t-\Delta t}, \ell^\theta_{t-\Delta t},u^\theta_{t-\Delta t}, Z_t, \Phi_{t-\Delta t,t}\right)$\,. 
    
    Liquidity taking orders arrive in the pool at a Poisson rate of $\lambda\in\R_{++}$ and the distribution of the order size is normal with mean $\mu_{\text{size}}=\,$132,030 and standard deviation $\sigma_{\text{size}}=\,$20,000 units of asset $X$. These choices of parameters are based on market data from the ETH/USDC 0.3\% pool in Uniswap v3 between 1 July 2021 and 30 September 2023; see Table \ref{DL:table:datadescr}.
    
    \subsubsection{Trading frequency}\label{DL:sec:one_frequency}
    
    Here, we train the LSTM network for various values of the arrival rate $\lambda$ of liquidity taking orders, while keeping the remaining parameters fixed. The market depth is $\kappa=\,$15,000,000, the fee tier $\tau=\,$0.3\%, and $p=0.5$ is the probability that the liquidity taking order is a buy. When the value of $\lambda$ is large, orders arrive more frequently to the pool and the volatility of the instantaneous rate increases; Table \ref{DL:table:volatility} shows the volatility of 30-minute returns as a function of the arrival rate $\lambda$.

    \begin{table}[H]\fontsize{9.0}{12.0}\selectfont
        \centering
        \begin{tabular}{c||c c c c c}
            \textbf{Arrival rate $\lambda$} & 1 $\text{min}^{-1}$ & $\displaystyle\frac{1}{3}$ $\text{min}^{-1}$ & $\displaystyle\frac{1}{5}$ $\text{min}^{-1}$ & $\displaystyle\frac{1}{10}$ $\text{min}^{-1}$ & $\displaystyle\frac{1}{20}$ $\text{min}^{-1}$ \\ [1.5ex]\hline
            \textbf{Volatility $\sigma_{\text{30min}}$} & 7.70\% & 4.64\% & 3.49\%& 2.51\%& 1.77\%
        \end{tabular}
        \caption{Volatility of 30-minute returns $\sigma_{\text{30min}}$ as a function of the arrival rate $\lambda$.  Market depth $\kappa=\,$15,000,000, fee tier $\tau=\,$0.3\%, and $p=\,$0.5 is the probability that a liquidity taking order is a buy order.}
        \label{DL:table:volatility}
    \end{table}
    
    Table \ref{DL:table:one_frequency} reports the performance of the LSTM strategy on the test set with simulated data. The table shows the average lower and upper spreads posted by the strategy, and it shows the Sharpe Ratio and the average returns of the strategy, both before and after gas fees; whenever the LP adjusts her position she pays $73.3$ units of asset $X$ for gas fees, which is the average USD amount of gas fees paid by LPs for withdrawing or depositing liquidity from the ETH/USDC 0.3\% pool in Uniswap v3 between 1 July 2021 and 30 September 2023.

    \begin{table}[H]\fontsize{9.0}{12.0}\selectfont
        \centering
        \begin{tabular}{|c|c|c|c|c|}
            \hline
            \textbf{Arrival rate $\lambda$} & \textbf{\begin{tabular}[c]{@{}c@{}}Sharpe Ratio\\ (after gas fees)\end{tabular}} & \textbf{\begin{tabular}[c]{@{}c@{}}Returns\\ (after gas fees)\end{tabular}} & \textbf{\begin{tabular}[c]{@{}c@{}}Average\\ lower spread\end{tabular}} & \textbf{\begin{tabular}[c]{@{}c@{}}Average\\ upper spread\end{tabular}} \\ \hline
            1  &  \begin{tabular}[c]{@{}c@{}}0.22\\ ($-$13.80)\end{tabular}                & \begin{tabular}[c]{@{}c@{}}0.01\%\\ ($-$0.71\%)\end{tabular}                    & 496.00    & 8.00\\ \hline
            1/3  &  \begin{tabular}[c]{@{}c@{}}3.74\\ ($-$1.26)\end{tabular}                 & \begin{tabular}[c]{@{}c@{}}0.54\%\\ ($-$0.18\%)\end{tabular}                    & 88.95     & 7.75 \\ \hline
            1/5  &  \begin{tabular}[c]{@{}c@{}}4.63\\ (2.41)\end{tabular}                  & \begin{tabular}[c]{@{}c@{}}1.51\%\\ (0.78\%)\end{tabular}                     & 24.76     & 9.20 \\ \hline
            1/10 &  \begin{tabular}[c]{@{}c@{}}6.14\\ (4.15)\end{tabular}                  & \begin{tabular}[c]{@{}c@{}}2.23\%\\ (1.51\%)\end{tabular}                     & 15.00     & 4.00 \\ \hline
            1/20 &  \begin{tabular}[c]{@{}c@{}}\textbf{6.37}\\ (\textbf{4.80})\end{tabular}& \begin{tabular}[c]{@{}c@{}}\textbf{2.93\%}\\ (\textbf{2.21\%})\end{tabular}   & 13.43     & 4.00 \\ \hline
        \end{tabular}
        \caption{Performance of the LSTM strategy on test set of simulated data. The LP trades in a CPMM with one liquidity pool with CL. We assume market depth $\kappa=\,$15,000,000, fee tier $\tau=\,$0.3\%, and $p=\,$0.5 is the probability that a liquidity taking order is a buy order.}
        \label{DL:table:one_frequency}
    \end{table}

    The position of the LP generates fees only when the instantaneous rate is within the range where she provides liquidity. When volatility is high, the spread of the LSTM strategy is high to ensure that the instantaneous rate remains within the range where the LP is providing liquidity. On the other hand, when volatility is low, the strategy posts a narrow spread, which increases the value of the LP's market depth $\tilde\kappa$, to increase the amount of fees that she earns; see \eqref{DL:eq:kappa_tilde}. This result aligns with the finding in \cite{cartea2023predictable}. They show that LPs widen (resp. tighten) the range where they provide liquidity when volatility is higher (resp. lower).

    If the value of the lower spread $\ell$ is higher than the value of the upper spread $u$, then the majority of the LP's wealth is allocated to the reference asset $X$; see \eqref{DL:eq:position}. On the other hand, when $u>\ell$ the LP allocates most of her wealth to the risky asset $Y$. If the LP allocates most of her liquidity to asset $Y$, then she is more exposed to the fluctuations of the instantaneous rate; the value of the position in $Y$ depends on the instantaneous rate. Table \ref{DL:table:one_frequency} shows that when the volatility of the instantaneous rate is high (i.e., when the value of the arrival rate $\lambda$ of liquidity taking orders is high), the mean-variance criterion pushes the LP to provide liquidity in a range which is skewed towards the reference asset $X$ to reduce the standard deviation of the LP's returns. 

    Figure \ref{DL:fig:ticks} shows the LP's liquidity provision spread as a function of the volatility. Specifically, the red line is the upper spread $u$ and the blue line is the negative lower spread $-\ell$. The distance between the red and the blue line is the average spread of the LSTM strategy. As volatility increases, the LP widens the range where she is providing liquidity and skews her position towards the reference asset $X$.

    \begin{figure}[H]\centering
	\includegraphics{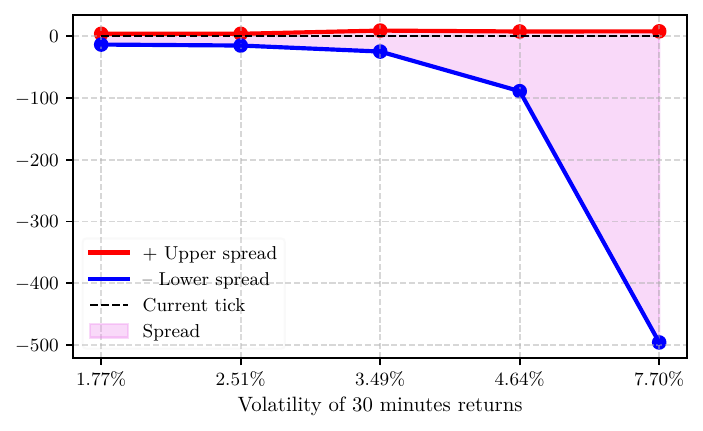}\\
	\caption{Average upper and lower spreads posted by the LSTM strategy as functions of the volatility $\sigma_{\text{30min}}$.}
	\label{DL:fig:ticks}
    \end{figure}

    Average returns and Sharpe Ratio, both before and after gas fees, increases when the arrival rate $\lambda$ decreases. When $\lambda=1/20$, the average returns of the LSTM strategy are $2.93\%$ before gas fees and $2.21\%$ after gas fees. The Sharpe Ratio of the strategy is $6.37$ before gas fees and $4.80$ after gas fees. When $\lambda=1$, the average returns of the LSTM strategy are $0.01\%$ before gas fees and $-0.71\%$ after gas fees, and Sharpe Ratio of the strategy is $0.22$ before gas fees and $-13.80$ after gas fees.

    \subsubsection{Fee tier}\label{DL:sec:one_fee_tier}
    Next, we train the LSTM network for various values of the fee tier $\tau$, while keeping the remaining parameters fixed, arrival rate $\lambda=\,$1/3 $\text{min}^{-1}$, market depth $\kappa=\,$15,000,000, and probability of liquidity taking order to be a buy order $p=\,$0.5. 
    
    Table \ref{DL:table:one_fee_rate} reports the performance of the LSTM strategy on the test set. It shows the average lower and upper spreads posted by the strategy and shows the Sharpe Ratio and the average returns of the strategy, both before and after gas fees. As above, whenever the LP adjusts her position she pays $73.3$ units of asset $X$ for gas fees.

    \begin{table}[H]\fontsize{9.0}{12.0}\selectfont
        \centering
        \begin{tabular}{|c|c|c|c|c|}
        
            \hline
            \multicolumn{1}{|l|}{\textbf{Fee Tier $\tau$}}  & \textbf{\begin{tabular}[c]{@{}c@{}}Sharpe Ratio \\ (after gas fees)\end{tabular}} & \textbf{\begin{tabular}[c]{@{}c@{}}Returns\\ (after gas fees)\end{tabular}} & \textbf{\begin{tabular}[c]{@{}c@{}}Average\\ lower spread\end{tabular}} & \textbf{\begin{tabular}[c]{@{}c@{}}Average\\ upper spread\end{tabular}} \\ \hline
            0.1\% & \begin{tabular}[c]{@{}c@{}}0.09\\ ($-$32.84)\end{tabular}& \begin{tabular}[c]{@{}c@{}}0.00\%\\ ($-$0.72\%)\end{tabular} & 382.88    & 12.16 \\ \hline
            0.3\% & \begin{tabular}[c]{@{}c@{}}3.74\\ ($-$1.26)\end{tabular}& \begin{tabular}[c]{@{}c@{}}0.54\%\\ ($-$0.18\%)\end{tabular} & 88.95& 7.75  \\ \hline
            1.0\% & \begin{tabular}[c]{@{}c@{}}\textbf{8.60} \\ (\textbf{7.32})\end{tabular}& \begin{tabular}[c]{@{}c@{}}\textbf{4.83\%}\\ (\textbf{4.10\%})\end{tabular}  & 12.83& 3.92  \\ \hline
        \end{tabular}
        \caption{Performance of the LSTM strategy on test set of simulated data. We assume arrival rate is $\lambda=\,$1/3 $\text{min}^{-1}$, market depth is $\kappa=\,$15,000,000\,, and the probability of a buy liquidity taking order is $p=\,$0.5.}
        \label{DL:table:one_fee_rate}
    \end{table}
    
    Figure \ref{DL:fig:fee} shows the LP's range as a function of the fee tier $\tau$. As the fee tier increases, we see that the LP narrows the range where she provides liquidity and increases her exposure to the risky asset $Y$. This result aligns with the finding of \cite{cartea2023predictable}, who show that LPs widen the range where they provide liquidity when the average fee income over time is lower. 

    Figure \ref{DL:fig:fee} also shows that as the value of the fee tier decreases, the LP skews her position more towards the reference asset $X$. Indeed, the exposure to the risks associated with $Y$ benefits the LP only if the fee yield is high enough. 

    Average returns of the LSTM strategy are highest when $\tau=\,$1.0\%;  they are equal to 4.83\% and 4.10\%, respectively, before and after gas fees. When $\tau=\,$1.0\%, the LP provides liquidity in a narrow range around the instantaneous rate such that the average lower and upper spreads  $\ell^\theta=\,$12.83 and $u^\theta=\,$3.92. Returns are lowest when $\tau=\,$0.1\% and they are $0.00\%$ before gas fees and $-0.72\%$ after gas fees. The LP skews her position towards the reference asset $X$ and the average lower and upper spreads are $\ell^\theta=\,$382.88 and $u^\theta=\,$12.16.
    
    \begin{figure}[H]\centering
	\includegraphics{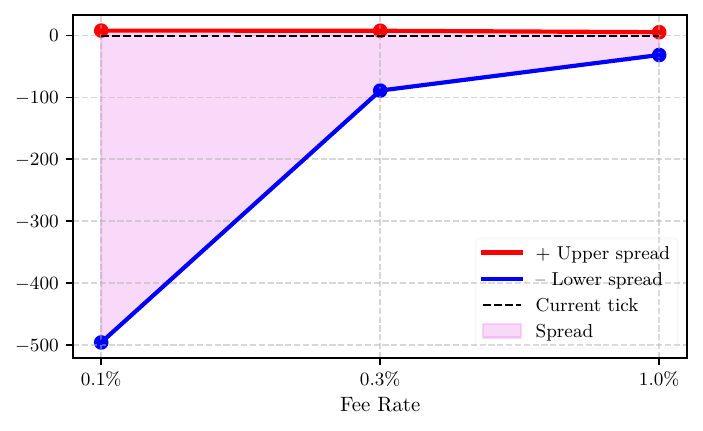}\\
	\caption{Average upper and lower spread posted by the LSTM strategy as functions of the fee tier $\tau$.}
	\label{DL:fig:fee}
    \end{figure}

\subsubsection{Trend}\label{DL:sec:one_trend}

Here, we train the LSTM network for various values of the probability $p$ of a buy liquidity taking, while keeping the remaining parameters fixed. The arrival rate is $\lambda=\,$1/3 $\text{min}^{-1}$, market depth is $\kappa=\,$15,000,000, and fee tier is $\tau=\,$0.3\%. When $p<1/2$, sell orders arrive more frequently than buy orders, so the pool's marginal rate trends downwards, on average. Similarly, when $p>1/2$, the pool marginal rate moves upwards, on average. Table \ref{DL:table:trend} shows how the average 30-minute returns $\widehat\mu_{\text{30min}}$ changes as a function of the probability $p$.

\begin{table}[H]\fontsize{9.0}{12.0}\selectfont
    \centering
    \begin{tabular}{c||c c c c c}
        \textbf{Probability of buy order $p$}           & 0.10          & 0.25          & 0.50      & 0.75      & 0.90 \\ \hline
        \textbf{Average Returns $\widehat\mu_{\text{30min}}$}   & $-$0.29\%     & $-$0.18\%     & 0.00\%    & 0.19\%    & 0.31\%
    \end{tabular}
    \caption{Average 30-minute returns $\widehat\mu_{\text{30min}}$ as a function of the probability $p$ of a buy liquidity taking order.  Market depth is $\kappa=\,$15,000,000, fee tier is $\tau=\,$0.3\%, and arrival rate is $\lambda=\,$1/3 $\text{min}^{-1}$.}
    \label{DL:table:trend}
\end{table}

    Table \ref{DL:table:one_trend} reports the performance of the LSTM strategy on the test set. It shows the average lower and upper spreads posted by the strategy, and shows the Sharpe Ratio and the average returns of the strategy, both before and after gas fees. As above, gas fees are $73.3$ units of asset $X$.

\begin{table}[H]\fontsize{9.0}{12.0}\selectfont
        \centering
        \begin{tabular}{|c|c|c|c|c|}
        
            \hline
            \multicolumn{1}{|l|}{\textbf{Trend $p$}}  & \textbf{\begin{tabular}[c]{@{}c@{}}Sharpe Ratio \\ (after gas fees)\end{tabular}} & \textbf{\begin{tabular}[c]{@{}c@{}}Returns\\ (after gas fees)\end{tabular}} & \textbf{\begin{tabular}[c]{@{}c@{}}Average\\ lower spread\end{tabular}} & \textbf{\begin{tabular}[c]{@{}c@{}}Average\\ upper spread\end{tabular}} \\ \hline
            0.10 & \begin{tabular}[c]{@{}c@{}}$-$7.86\\ ($-$21.18)\end{tabular}& \begin{tabular}[c]{@{}c@{}}$-$0.43\%\\ ($-$1.15\%)\end{tabular} & 495.00    & 2.00 \\ \hline
            0.25 & \begin{tabular}[c]{@{}c@{}}$-$3.40\\ ($-$21.11)\end{tabular}  & \begin{tabular}[c]{@{}c@{}}$-$0.14\%\\ ($-$0.86\%)\end{tabular} & 492.00& 2.00  \\ \hline
            0.50 & \begin{tabular}[c]{@{}c@{}}3.74\\ ($-$1.26)\end{tabular}  & \begin{tabular}[c]{@{}c@{}}0.54\%\\ ($-$0.18\%)\end{tabular} & 88.95& 7.75  \\ \hline
            0.75 & \begin{tabular}[c]{@{}c@{}}14.44 \\ (13.35)\end{tabular}  & \begin{tabular}[c]{@{}c@{}}9.59\%\\ (8.87\%)\end{tabular}  & 54.85& 170.41 \\ \hline
            0.90 & \begin{tabular}[c]{@{}c@{}}\textbf{25.40} \\ (\textbf{24.30})\end{tabular}  & \begin{tabular}[c]{@{}c@{}}\textbf{16.70\%}\\ (\textbf{15.97\%})\end{tabular}  & 69.14& 217.83  \\ \hline
        \end{tabular}
        \caption{Performance of the LSTM strategy on test set of simulated data. The LP trades in a CPMM with one liquidity pool with CL. Arrival rate is $\lambda=\,$1/3 $\text{min}^{-1}$, market depth is $\kappa=\,$15,000,000, and fee tier is $\tau=\,$0.3\%.}
        \label{DL:table:one_trend}
    \end{table}

Figure \ref{DL:fig:trend} shows the LP's range as a function of the average 30-minute returns $\widehat\mu_{\text{30min}}$. When $\widehat\mu_{\text{30min}}<0$, the LP skews her position towards the reference asset $X$ to reduce her losses associated to the drop of the instantaneous rate. Conversely, when $\widehat\mu_{\text{30min}}>0$, the LP skews her position towards the risky asset $Y$ to take advantage of the upward trend in the instantaneous rate $Z$. This result aligns with the finding of \cite{cartea2023predictable}, who show LPs skew their position in the direction of the rate's trend.

\begin{figure}[H]\centering
	\includegraphics{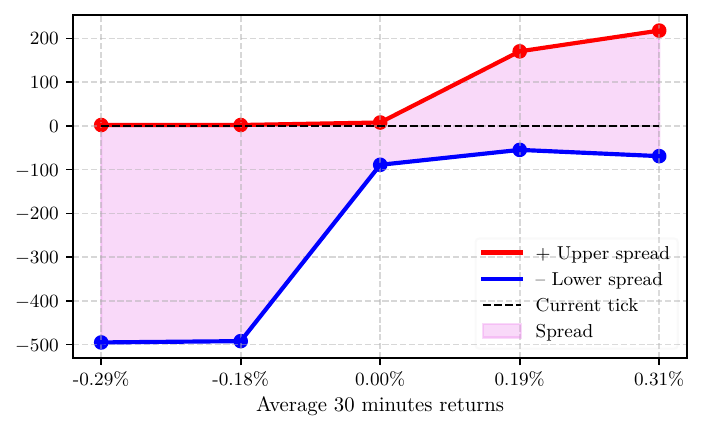}\\
	\caption{Average upper and lower spread posted by the LSTM strategy as functions of the average 30-minute returns $\widehat\mu_{\text{30min}}$.}
	\label{DL:fig:trend}
\end{figure}

When there is a negative trend, the LP keeps very little inventory in the risky asset $Y$, while when there is a positive trend, she keeps a considerable part of her inventory in the reference asset $X$; see Figure \ref{DL:fig:trend}. This is a consequence of using the mean-variance criterion, where it is optimal to increase this LP's investment in the risk-free reference asset $X$.

Table \ref{DL:table:one_trend} shows that the returns of the LSTM strategy are highest when $p=0.9$; returns are $16.70\%$ and $15.97\%$, before and after gas fees, respectively. When $p=0.9$ the LP skews her position towards the risky asset $Y$ to capitalise on the upward trend. The average lower and upper spreads are $\ell^\theta=69.14$ and $u^\theta=217.83$. Average returns are lowest when $p=\,$0.1 and the LP skews her position towards the reference asset $X$ to reduce losses related to the negative trend in the value of $Y$. Average returns are equal to $-0.43$\% and $-1.15$\% before and after gas fees, respectively. The average lower and upper spreads are $\ell^\theta=\,$495.00 and $u^\theta=\,$2.00.

\subsection{Two pools}

Next, consider an LP who provides liquidity in two liquidity pools, i.e., $N=2$. One trades the pair $X$ and $Y_1$, and the other trades the pair $X$ and $Y_2$, and the instantaneous rates at time $t=0$ are $Z_0^1=Z_0^2=\,$2,200. The LP provides liquidity for $24$ hours and readjusts her position every $30$ minutes, i.e., $T=$1,440 minutes and $\Delta t=30$ minutes. The LP's initial wealth is 500,000 units of asset $X$ and the maximum value for the lower and upper spreads that the LP considers is $\Tc=\,$500.  At each time $t\in\T$ the input to the LSTM strategy is $\mathfrak{I}_{t} = \left(\bm w^\theta_{t-\Delta t}, \bm \ell^\theta_{t-\Delta t},\bm u^\theta_{t-\Delta t}, \bm Z_t, \bm \Phi_{t-\Delta t,t}\right)$\,.
    
Liquidity taking orders arrive in the pools at Poisson rates of $\lambda_1,\lambda_2\in\R_{++}$ and the distribution of the order size in both pools is normal with mean $\mu_{\text{size}}=\,$132,030 and standard deviation $\sigma_{\text{size}}=\,$20,000 units of asset $X$. These choices of parameters are based on market data from the ETH/USDC 0.3\% pool in Uniswap v3 between 1 July 2021 and 30 September 2023; see Table \ref{DL:table:datadescr}.
    
    \subsubsection{Trading frequency}
    Here, we train the LSTM network for various values of the arrival rates $\lambda_1$ and $\lambda_2$, while keeping the remaining parameters fixed. Market depths are $\kappa_1=\kappa_2=\,$15,000,000, the fee tiers are $\tau_1=\tau_2=\,$0.3\%, and $p_1=p_2=\,$0.5 are the probabilities of a buy liquidity taking order in each pool.

    Table \ref{DL:table:two_frequency} reports the performance of the LSTM strategy on the test set, measured by returns and Sharpe Ratio, both before and after gas fees; at each time step the LP pays 73.3 units of asset $X$ for each pool where she adjusts her position. The table also shows the average lower and upper spreads posted by the strategy and the average proportion of wealth that the LP provides in each pool.

    Overall, Table \ref{DL:table:two_frequency} reports similar results to those of the experiments in Section \ref{DL:sec:one_frequency}. Indeed, the behaviour of the LSTM strategy in each individual pool is qualitatively the same as that of the single pool presented in Table \ref{DL:table:one_frequency}. Specifically, the LP provides liquidity in wide ranges when the value of $\lambda$ is big and in narrow ranges when the value of $\lambda$ is small. Moreover, when the value of $\lambda$ is big, the LP skews her position towards the reference asset $X$, and when the value of $\lambda$ is small, she increases her exposure to the risky asset $Y$. Note that the LP skews her position because the mean-variance criterion pushes the LP to reduce her exposure to the risky asset $Y$ to reduce the standard deviation of the strategy returns.
    
    Table \ref{DL:table:two_frequency} shows that the LSTM strategy provides more liquidity to the pool where volatility is lower. For instance, when the arrival rate $\lambda_2=\,$1  and $\lambda_1<\lambda_2$, then the LP deposits more than $99\%$ of her wealth in  the first pool. 

\begin{table}[H]\fontsize{7.5}{12.0}\selectfont
    \centering
\begin{tabular}{|cc|c|c|cc|cc|cc|}
\hline
\multicolumn{2}{|c|}{\textbf{Arrival rate $\lambda$}}& \multirow{2}{*}{\textbf{\begin{tabular}[c]{@{}c@{}}Sharpe Ratio \\ (after gas fees)\end{tabular}}} & \multirow{2}{*}{\textbf{\begin{tabular}[c]{@{}c@{}}Returns\\ (after gas fees)\end{tabular}}} & \multicolumn{2}{c|}{\textbf{\begin{tabular}[c]{@{}c@{}}Average\\ lower spread\end{tabular}}} & \multicolumn{2}{c|}{\textbf{\begin{tabular}[c]{@{}c@{}}Average\\ upper spread\end{tabular}}} & \multicolumn{2}{c|}{\textbf{Weights}}  \\ \cline{1-2} \cline{5-10} 
\multicolumn{1}{|c|}{\textbf{Pool 1}} & \textbf{Pool 2} & & & \multicolumn{1}{c|}{\textbf{Pool 1}}   & \textbf{Pool 2}   & \multicolumn{1}{c|}{\textbf{Pool 1}}   & \textbf{Pool 2}   & \multicolumn{1}{c|}{\textbf{Pool 1}} & \textbf{Pool 2} \\ \hline
\multicolumn{1}{|c|}{1}& 1& \begin{tabular}[c]{@{}c@{}}0.35\\ ($-$35.34)\end{tabular}& \begin{tabular}[c]{@{}c@{}}0.01\%\\ ($-$1.43\%)\end{tabular}  & \multicolumn{1}{c|}{494.00} & 480.62 & \multicolumn{1}{c|}{12.73}  & 7.75   & \multicolumn{1}{c|}{43.69\%}    & 56.31\% \\ \hline
\multicolumn{1}{|c|}{1/3}& 1 & \begin{tabular}[c]{@{}c@{}}3.57 \\ ($-$2.77)\end{tabular}& \begin{tabular}[c]{@{}c@{}}0.81\%\\ ($-$0.63\%)\end{tabular}  & \multicolumn{1}{c|}{51.89}  & 480.65 & \multicolumn{1}{c|}{8.51}   & 34.72  & \multicolumn{1}{c|}{99.43\%}    & 0.57\%  \\ \hline
\multicolumn{1}{|c|}{1/5}& 1 & \begin{tabular}[c]{@{}c@{}}4.60 \\ ($-$1.03)\end{tabular}& \begin{tabular}[c]{@{}c@{}}1.18\%\\ ($-$0.26\%)\end{tabular}  & \multicolumn{1}{c|}{37.14}  & 480.60 & \multicolumn{1}{c|}{6.68}   & 75.63  & \multicolumn{1}{c|}{97.66\%}    & 2.34\%  \\ \hline
\multicolumn{1}{|c|}{1/10}& 1 & \begin{tabular}[c]{@{}c@{}}6.98 \\ (1.52)\end{tabular} & \begin{tabular}[c]{@{}c@{}}1.85\%\\ (0.40\%)\end{tabular}   & \multicolumn{1}{c|}{19.40}  & 497.00 & \multicolumn{1}{c|}{6.98}   & 1.06   & \multicolumn{1}{c|}{99.15\%}    & 0.85\%  \\ \hline
\multicolumn{1}{|c|}{1/20}& 1 & \begin{tabular}[c]{@{}c@{}}9.64 \\ (4.58)\end{tabular} & \begin{tabular}[c]{@{}c@{}}2.75\%\\ (1.31\%)\end{tabular}   & \multicolumn{1}{c|}{74.53}  & 492.00 & \multicolumn{1}{c|}{5.00}   & 425.25 & \multicolumn{1}{c|}{\textbf{99.85\%}}    & 0.15\%  \\ \hline
\multicolumn{1}{|c|}{1/3}& 1/3& \begin{tabular}[c]{@{}c@{}}5.42 \\ (0.60)\end{tabular} & \begin{tabular}[c]{@{}c@{}}1.63\%\\ (0.18\%)\end{tabular}   & \multicolumn{1}{c|}{15.08}  & 24.00  & \multicolumn{1}{c|}{6.37}   & 7.00   & \multicolumn{1}{c|}{42.50\%}    & 57.50\% \\ \hline
\multicolumn{1}{|c|}{1/5}& 1/3& \begin{tabular}[c]{@{}c@{}}6.38\\ (2.03)\end{tabular}  & \begin{tabular}[c]{@{}c@{}}2.12\%\\ (0.67\%)\end{tabular}   & \multicolumn{1}{c|}{14.08}  & 9.20   & \multicolumn{1}{c|}{2.96}   & 3.86   & \multicolumn{1}{c|}{67.28\%}    & 32.72\% \\ \hline
\multicolumn{1}{|c|}{1/10}& 1/3& \begin{tabular}[c]{@{}c@{}}6.44\\ (2.16)\end{tabular}  & \begin{tabular}[c]{@{}c@{}}2.18\%\\ (0.73\%)\end{tabular}   & \multicolumn{1}{c|}{45.50}  & 471.00 & \multicolumn{1}{c|}{5.33}   & 5.15   & \multicolumn{1}{c|}{94.05\%}    & 5.95\%  \\ \hline
\multicolumn{1}{|c|}{1/20}& 1/3& \begin{tabular}[c]{@{}c@{}}\textbf{11.46}\\ (\textbf{6.62})\end{tabular} & \begin{tabular}[c]{@{}c@{}}\textbf{3.42\%}\\ (\textbf{1.97\%})\end{tabular}   & \multicolumn{1}{c|}{26.94}  & 93.29  & \multicolumn{1}{c|}{7.66}   & 3.43   & \multicolumn{1}{c|}{81.97\%}    & 18.03\% \\ \hline
\multicolumn{1}{|c|}{1/5}& 1/5& \begin{tabular}[c]{@{}c@{}}5.20\\ ($-$0.50)\end{tabular} & \begin{tabular}[c]{@{}c@{}}1.32\%\\ ($-$0.13\%)\end{tabular}  & \multicolumn{1}{c|}{28.51}  & 175.90 & \multicolumn{1}{c|}{5.37}   & 431.18 & \multicolumn{1}{c|}{98.97\%}    & 1.03\%  \\ \hline
\multicolumn{1}{|c|}{1/10}& 1/5& \begin{tabular}[c]{@{}c@{}}5.81\\ (0.67)\end{tabular}  & \begin{tabular}[c]{@{}c@{}}1.64\%\\ (0.19\%)\end{tabular}   & \multicolumn{1}{c|}{167.41} & 9.34   & \multicolumn{1}{c|}{21.14}  & 5.03   & \multicolumn{1}{c|}{51.69\%}    & 48.31\% \\ \hline
\multicolumn{1}{|c|}{1/20}& 1/5& \begin{tabular}[c]{@{}c@{}}9.46\\ (4.79)\end{tabular}  & \begin{tabular}[c]{@{}c@{}}2.93\%\\ (1.48\%)\end{tabular}   & \multicolumn{1}{c|}{48.05}  & 234.57 & \multicolumn{1}{c|}{7.35}   & 5.48   & \multicolumn{1}{c|}{74.30\%}    & 25.70\% \\ \hline
\multicolumn{1}{|c|}{1/10}& 1/10& \begin{tabular}[c]{@{}c@{}}7.63\\ (2.30)\end{tabular}  & \begin{tabular}[c]{@{}c@{}}2.07\%\\ (0.62\%)\end{tabular}   & \multicolumn{1}{c|}{209.04} & 13.55  & \multicolumn{1}{c|}{9.98}   & 4.49   & \multicolumn{1}{c|}{23.63\%}    & \textbf{76.37\%} \\ \hline
\multicolumn{1}{|c|}{1/20}& 1/10& \begin{tabular}[c]{@{}c@{}}8.12\\ (4.51)\end{tabular}  & \begin{tabular}[c]{@{}c@{}}3.25\%\\ (1.81\%)\end{tabular}   & \multicolumn{1}{c|}{42.26}  & 32.19  & \multicolumn{1}{c|}{3.85}   & 4.00   & \multicolumn{1}{c|}{62.30\%}    & 37.70\% \\ \hline
\multicolumn{1}{|c|}{1/20}& 1/20& \begin{tabular}[c]{@{}c@{}}9.49\\ (4.10)\end{tabular}  & \begin{tabular}[c]{@{}c@{}}2.55\%\\ (1.10\%)\end{tabular}   & \multicolumn{1}{c|}{79.80}  & 351.76 & \multicolumn{1}{c|}{4.24}   & 296.61 & \multicolumn{1}{c|}{74.33\%}    & 25.67\% \\ \hline
\end{tabular}
\caption{Performance of the LSTM strategy on test set of simulated data, in a CPMM with two liquidity pools with CL. Market depths are $\kappa_1=\kappa_2=\,$15,000,000, fee tiers are $\tau_1=\tau_2=\,$0.3\%, and the probabilities of a buy order are $p_1=p_2=\,$0.5.}
\label{DL:table:two_frequency}

\end{table}

\subsubsection{Fee tier}

Here, we train the LSTM network for various values of the fee tiers $\tau_1$ and $\tau_2$, while keeping the remaining parameters fixed. Arrival rates are $\lambda_1=\lambda_2=\,$1/3, market depths are $\kappa_1=\kappa_2=\,$15,000,000, and $p_1=p_2=\,$0.5 are the probabilities that of buy orders.

Table \ref{DL:table:two_fee_tier} reports the performance of the LSTM strategy on the test set, measured by returns and Sharpe Ratio, both before and after gas fees; as above, she pays 73.3 units of asset $X$ in each pool for every adjustment of liquidity. The table also shows the average lower and upper spreads posted by the strategy and the average proportion of wealth that the LP provides in each pool.

Overall, Table \ref{DL:table:two_fee_tier} reports similar results to those of the experiments in Section \ref{DL:sec:one_fee_tier}. Indeed, the behaviour of the LSTM strategy in each individual pool is qualitatively the same as that of the single pool presented in Table \ref{DL:table:one_fee_rate}. Specifically, the LP provides liquidity in wide ranges when the value of the fee tier $\tau$ is small and in narrow ranges when the value of $\tau$ is big. Moreover, when the value of the fee tier is low, the LP skews her position towards the reference asset $X$.

Table \ref{DL:table:two_fee_tier} shows that the LSTM strategy provides more liquidity to the pool where the fee tier $\tau$ is higher. For instance, when  $\tau_1=\,$1.0\% and $\tau_2=\,$0.1\%, the LP provides all her liquidity to the first pool. 

The returns of the LSTM strategy are highest when $\tau_1=\tau_2=\,$1.0\%  and the LP splits her wealth almost equally between the two pools. Returns are equal to 17.67\% before gas fees and to 16.22\% after gas fees.

\begin{table}[H]\fontsize{7.5}{12.0}\selectfont
    \centering
\begin{tabular}{|cc|c|c|cc|cc|cc|}
\hline
\multicolumn{2}{|c|}{\textbf{Fee   Tier $\tau$}}        & \multirow{2}{*}{\textbf{\begin{tabular}[c]{@{}c@{}}Sharpe Ratio\\ (after gas fees)\end{tabular}}} & \multirow{2}{*}{\textbf{\begin{tabular}[c]{@{}c@{}}Returns\\ (after gas fees)\end{tabular}}} & \multicolumn{2}{c|}{\textbf{\begin{tabular}[c]{@{}c@{}}Average   \\ lower spread\end{tabular}}} & \multicolumn{2}{c|}{\textbf{\begin{tabular}[c]{@{}c@{}}Average   \\ upper spread\end{tabular}}} & \multicolumn{2}{c|}{\textbf{Weights}}                  \\ \cline{1-2} \cline{5-10} 
\multicolumn{1}{|c|}{\textbf{Pool 1}} & \textbf{Pool 2} &                                                                                             &                                                                                        & \multicolumn{1}{c|}{\textbf{Pool 1}}                     & \textbf{Pool 2}                    & \multicolumn{1}{c|}{\textbf{Pool 1}}                     & \textbf{Pool 2}                    & \multicolumn{1}{c|}{\textbf{Pool 1}} & \textbf{Pool 2} \\ \hline
\multicolumn{1}{|c|}{0.1\%}          & 0.1\%          & \begin{tabular}[c]{@{}c@{}}0.12\\ ($-$60.00)\end{tabular}                                     & \begin{tabular}[c]{@{}c@{}}$-$0.00\%\\ ($-$1.44\%)\end{tabular}                            & \multicolumn{1}{c|}{480.75}                              & 481.28                             & \multicolumn{1}{c|}{14.00}                               & 19.15                              & \multicolumn{1}{c|}{57.13\%}         & 42.87\%         \\ \hline
\multicolumn{1}{|c|}{0.3\%}          & 0.1\%          & \begin{tabular}[c]{@{}c@{}}3.19\\ ($-$2.61)\end{tabular}                                      & \begin{tabular}[c]{@{}c@{}}0.80\%\\ ($-$0.65\%)\end{tabular}                             & \multicolumn{1}{c|}{84.86}                               & 480.32                             & \multicolumn{1}{c|}{4.00}                                & 33.86                              & \multicolumn{1}{c|}{96.40\%}         & 3.60\%          \\ \hline
\multicolumn{1}{|c|}{1.0\%}          & 0.1\%          & \begin{tabular}[c]{@{}c@{}}7.78\\ (4.04)\end{tabular}                                       & \begin{tabular}[c]{@{}c@{}}3.01\%\\ (1.57\%)\end{tabular}                              & \multicolumn{1}{c|}{70.66}                               & 491.00                             & \multicolumn{1}{c|}{5.46}                                & 397.69                             & \multicolumn{1}{c|}{\textbf{99.49\%}}         & 0.51\%          \\ \hline
\multicolumn{1}{|c|}{0.3\%}          & 0.3\%          & \begin{tabular}[c]{@{}c@{}}5.42\\ (0.60)\end{tabular}                                       & \begin{tabular}[c]{@{}c@{}}1.63\%\\ (0.18\%)\end{tabular}                              & \multicolumn{1}{c|}{15.08}                               & 24.00                              & \multicolumn{1}{c|}{6.37}                                & 7.00                               & \multicolumn{1}{c|}{42.50\%}         & \textbf{57.50\%}      \\ \hline
\multicolumn{1}{|c|}{1.0\%}          & 0.3\%          & \begin{tabular}[c]{@{}c@{}}8.39\\ (4.26)\end{tabular}                                       & \begin{tabular}[c]{@{}c@{}}2.94\%\\ (1.49\%)\end{tabular}                              & \multicolumn{1}{c|}{71.32}                               & 21.55                              & \multicolumn{1}{c|}{6.78}                               & 13.96                              & \multicolumn{1}{c|}{82.31\%}         &   17.69\%       \\ \hline
\multicolumn{1}{|c|}{1.0\%}          & 1.0\%          & \begin{tabular}[c]{@{}c@{}}\textbf{11.05}\\ (\textbf{8.51})\end{tabular}                                      & \begin{tabular}[c]{@{}c@{}}\textbf{6.29\%}\\ (\textbf{4.84\%})\end{tabular}                              & \multicolumn{1}{c|}{42.31}                               & 23.58                              & \multicolumn{1}{c|}{11.95}                               & 11.78                              & \multicolumn{1}{c|}{53.73\%}         & 46.27\%         \\ \hline
\end{tabular}
\caption{Performance of the LSTM strategy on test set of simulated data. The LP trades in a CPMM with two liquidity pools with CL. Arrival rates are $\lambda_1=\lambda_2=\,\,$1/3, market depths are $\kappa_1=\kappa_2=\,\,$15,000,000, and the probabilities of a buy order are $p_1=p_2=\,\,$0.5.}
\label{DL:table:two_fee_tier}
\end{table}

\subsubsection{Trend}
Here, we train the LSTM network for various values of the probabilities $p_1$ and $p_2$ of liquidity taking orders to be buy orders in each pool, while keeping the remaining parameters fixed. Arrival rates are $\lambda_1=\lambda_2=\,$1/3 $\text{min}^{-1}$, market depths are $\kappa_1=\kappa_2=\,$15,000,000, and fee tiers are $\tau_1=\tau_2=\,$0.3\%.

Table \ref{DL:table:two_trend} reports the performance of the LSTM strategy on the test set, measured by returns and Sharpe Ratio, both before and after gas fees; as above, she pays 73.3 units of asset $X$ in each pool for every adjustment of liquidity. The table also shows the average lower and upper spreads posted by the strategy and the average proportion of wealth that the LP provides in each pool.

Overall, Table \ref{DL:table:two_trend} reports similar results to those of the experiments in Section \ref{DL:sec:one_trend}. Indeed, the behaviour of the LSTM strategy in each individual pool is qualitatively the same as that of presented in Table \ref{DL:table:one_trend}. Specifically, when there is a downward trend in the instantaneous rate, the LP skews her position towards the reference asset $X$. Conversely, when the instantaneous rate trends upwards, the LP skews her position towards the risky asset. Similarly to the results of Section \ref{DL:sec:one_trend}, the mean-variance criterion pushes the LP to keep a considerable position in the reference asset even when the instantaneous rate has a strong upwards trend.

Table \ref{DL:table:two_frequency} shows that the LSTM strategy provides more liquidity to the pool when the probability $p$ of a buy order is higher. For instance, when  $p_1=\,$0.25 and $p_2=\,$0.1, the LP provides all of her liquidity to the first pool. 

The returns of the LSTM strategy are highest when $p_1=p_2=\,$0.9 and there is a strong upward trend in both pools. Returns are equal to 17.67\% before gas fees and to 16.22\% after gas fees.

\begin{table}[H]\fontsize{7.5}{12.0}\selectfont
    \centering
\begin{tabular}{|cc|c|c|cc|cc|cc|}
\hline
\multicolumn{2}{|c|}{\textbf{Trend $p$}}                & \multirow{2}{*}{\textbf{\begin{tabular}[c]{@{}c@{}}Sharpe Ratio\\ (after gas fees)\end{tabular}}} & \multirow{2}{*}{\textbf{\begin{tabular}[c]{@{}c@{}}Returns\\ (after gas fees)\end{tabular}}} & \multicolumn{2}{c|}{\textbf{\begin{tabular}[c]{@{}c@{}}Average   \\ lower spread\end{tabular}}} & \multicolumn{2}{c|}{\textbf{\begin{tabular}[c]{@{}c@{}}Average   \\ upper spread\end{tabular}}} & \multicolumn{2}{c|}{\textbf{Weights}}                  \\ \cline{1-2} \cline{5-10} 
\multicolumn{1}{|c|}{\textbf{Pool 1}} & \textbf{Pool 2} &                                                                                             &                                                                                        & \multicolumn{1}{c|}{\textbf{Pool 1}}                     & \textbf{Pool 2}                    & \multicolumn{1}{c|}{\textbf{Pool 1}}                     & \textbf{Pool 2}                    & \multicolumn{1}{c|}{\textbf{Pool 1}} & \textbf{Pool 2} \\ \hline
\multicolumn{1}{|c|}{10\%}            & 10\%            & \begin{tabular}[c]{@{}c@{}}$-$7.48 \\ ($-$34.91)\end{tabular}                                   & \begin{tabular}[c]{@{}c@{}}$-$0.39\%\\ ($-$1.84\%)\end{tabular}                            & \multicolumn{1}{c|}{500.00}                              & 500.00                             & \multicolumn{1}{c|}{1.00}                                & 1.00                               & \multicolumn{1}{c|}{0.00\%}          & \textbf{100.00\%}       \\ \hline
\multicolumn{1}{|c|}{25\%}            & 10\%            & \begin{tabular}[c]{@{}c@{}}$-$3.12\\ ($-$40.16)\end{tabular}                                    & \begin{tabular}[c]{@{}c@{}}$-$0.12\%\\ ($-$1.57\%)\end{tabular}                            & \multicolumn{1}{c|}{500.00}                              & 500.00                             & \multicolumn{1}{c|}{1.00}                                & 1.00                               & \multicolumn{1}{c|}{\textbf{100.00\%}}        & 0.00\%          \\ \hline
\multicolumn{1}{|c|}{50\%}            & 10\%            & \begin{tabular}[c]{@{}c@{}}3.71\\ ($-$5.43)\end{tabular}                                      & \begin{tabular}[c]{@{}c@{}}0.59\%\\ ($-$0.86\%)\end{tabular}                             & \multicolumn{1}{c|}{62.01}                               & 425.47                             & \multicolumn{1}{c|}{4.02}                                & 16.41                              & \multicolumn{1}{c|}{98.24\%}         & 1.76\%          \\ \hline
\multicolumn{1}{|c|}{75\%}            & 10\%            & \begin{tabular}[c]{@{}c@{}}20.21 \\ (17.13)\end{tabular}                                    & \begin{tabular}[c]{@{}c@{}}9.49\%\\ (8.04\%)\end{tabular}                              & \multicolumn{1}{c|}{55.86}                               & 483.00                             & \multicolumn{1}{c|}{71.55}                               & 5.00                               & \multicolumn{1}{c|}{99.95\%}         & 0.05\%          \\ \hline
\multicolumn{1}{|c|}{90\%}            & 10\%            & \begin{tabular}[c]{@{}c@{}}34.65\\ (31.63)\end{tabular}                                     & \begin{tabular}[c]{@{}c@{}}16.57\%\\ (15.12\%)\end{tabular}                            & \multicolumn{1}{c|}{1.00}                                & 483.00                             & \multicolumn{1}{c|}{261.14}                              & 2.00                               & \multicolumn{1}{c|}{93.04\%}         & 6.96\%          \\ \hline
\multicolumn{1}{|c|}{25\%}            & 25\%            & \begin{tabular}[c]{@{}c@{}}$-$3.12\\ ($-$40.48)\end{tabular}                                    & \begin{tabular}[c]{@{}c@{}}$-$0.12\%\\ ($-$1.57\%)\end{tabular}                            & \multicolumn{1}{c|}{500.00}                              & 500.00                             & \multicolumn{1}{c|}{1.00}                                & 1.00                               & \multicolumn{1}{c|}{0.00\%}          & 100.00\%        \\ \hline
\multicolumn{1}{|c|}{50\%}            & 25\%            & \begin{tabular}[c]{@{}c@{}}3.21\\ ($-$1.73)\end{tabular}                                      & \begin{tabular}[c]{@{}c@{}}0.94\%\\ ($-$0.51\%)\end{tabular}                             & \multicolumn{1}{c|}{32.47}                               & 499.10                             & \multicolumn{1}{c|}{6.15}                                & 16.00                              & \multicolumn{1}{c|}{98.56\%}         & 1.44\%          \\ \hline
\multicolumn{1}{|c|}{75\%}            & 25\%            & \begin{tabular}[c]{@{}c@{}}16.50\\ (13.95)\end{tabular}                                     & \begin{tabular}[c]{@{}c@{}}9.35\%\\ (7.91\%)\end{tabular}                              & \multicolumn{1}{c|}{5.62}                                & 478.12                             & \multicolumn{1}{c|}{67.00}                               & 17.78                              & \multicolumn{1}{c|}{92.16\%}         & 7.84\%          \\ \hline
\multicolumn{1}{|c|}{90\%}            & 25\%            & \begin{tabular}[c]{@{}c@{}}\textbf{42.15}\\ (\textbf{38.52})\end{tabular}                                     & \begin{tabular}[c]{@{}c@{}}16.80\%\\ (15.35\%)\end{tabular}                            & \multicolumn{1}{c|}{8.48}                                & 483.00                             & \multicolumn{1}{c|}{64.80}                               & 4.00                               & \multicolumn{1}{c|}{89.26\%}         & 10.74\%         \\ \hline
\multicolumn{1}{|c|}{50\%}            & 50\%            & \begin{tabular}[c]{@{}c@{}}5.42\\ (0.60)\end{tabular}                                       & \begin{tabular}[c]{@{}c@{}}1.63\%\\ (0.18\%)\end{tabular}                              & \multicolumn{1}{c|}{24.00}                               & 15.08                              & \multicolumn{1}{c|}{7.00}                                & 6.37                               & \multicolumn{1}{c|}{57.50\%}         & 42.50\%         \\ \hline
\multicolumn{1}{|c|}{75\%}            & 50\%            & \begin{tabular}[c]{@{}c@{}}15.97\\ (13.57)\end{tabular}                                     & \begin{tabular}[c]{@{}c@{}}9.64\%\\ (8.20\%)\end{tabular}                              & \multicolumn{1}{c|}{26.81}                               & 50.90                              & \multicolumn{1}{c|}{139.64}                              & 38.95                              & \multicolumn{1}{c|}{96.91\%}         & 3.09\%          \\ \hline
\multicolumn{1}{|c|}{90\%}            & 50\%            & \begin{tabular}[c]{@{}c@{}}33.24\\ (30.23)\end{tabular}                                     & \begin{tabular}[c]{@{}c@{}}16.00\%\\ (14.55\%)\end{tabular}                            & \multicolumn{1}{c|}{43.19}                               & 45.54                              & \multicolumn{1}{c|}{289.96}                              & 356.38                             & \multicolumn{1}{c|}{95.72\%}         & 4.28\%          \\ \hline
\multicolumn{1}{|c|}{75\%}            & 75\%            & \begin{tabular}[c]{@{}c@{}}22.70\\ (19.36)\end{tabular}                                     & \begin{tabular}[c]{@{}c@{}}9.84\%\\ (8.39\%)\end{tabular}                              & \multicolumn{1}{c|}{7.42}                                & 33.99                              & \multicolumn{1}{c|}{65.05}                               & 82.86                              & \multicolumn{1}{c|}{46.88\%}         & 53.12\%         \\ \hline
\multicolumn{1}{|c|}{90\%}            & 75\%            & \begin{tabular}[c]{@{}c@{}}28.23\\ (25.81)\end{tabular}                                     & \begin{tabular}[c]{@{}c@{}}16.87\%\\ (15.43\%)\end{tabular}                            & \multicolumn{1}{c|}{4.38}                                & 83.83                              & \multicolumn{1}{c|}{106.30}                              & 500.00                             & \multicolumn{1}{c|}{87.94\%}         & 12.06\%         \\ \hline
\multicolumn{1}{|c|}{90\%}            & 90\%            & \begin{tabular}[c]{@{}c@{}}25.31\\ (23.24)\end{tabular}                                     & \begin{tabular}[c]{@{}c@{}}\textbf{17.67\%}\\ (\textbf{16.22\%})\end{tabular}                            & \multicolumn{1}{c|}{26.44}                               & 9.85                               & \multicolumn{1}{c|}{332.94}                              & 176.75                             & \multicolumn{1}{c|}{47.70\%}         & 52.30\%         \\ \hline
\end{tabular}
\caption{Performance of the LSTM strategy on test set of simulated data. The LP trades in a CPMM with two liquidity pools with CL. Arrival rates are $\lambda_1=\lambda_2=\,$1/3, market depths are $\kappa_1=\kappa_2=\,$15,000,000, and fee tiers are $\tau_1=\tau_2=\,$0.3\%.}
\label{DL:table:two_trend}
\end{table}

\section{Numerical experiments with market data}\label{DL:sec:market_data}
    \subsection{Data}
    \subsubsection{Uniswap v3}
	
	
    We use LP and LT data from the Uniswap v3 pools for the pair ETH/USDC and the pair BTC/USDC between 1 July 2021 and 30 September 2023 to train the LSTM strategy introduced in Section \ref{DL:sec:deep}. The fee tier in both pools is $0.3\%$. 

    {\begin{table}[H]
    \centering
    \begin{tabular}{c || r  r || r  r } 
        \hline 
   & \multicolumn{2}{c||}{ETH/USDC $0.3\%$}   & \multicolumn{2}{c}{BTC/USDC $0.3\%$} \\ [0.5ex]
        \hline 
   & LT  & LP & LT & LP\\ [0.5ex]
        \hline
        Number of instructions &  380,162 & 115,170 & 136,365 & 16,307 \\ [0.8ex]
        Average daily number &  & \\ [-0.5ex] 
        of instructions &  462 & 140 & 166 & 20 \\ [0.8ex]
        Total USD volume & $\approx$ \$ 50.2$\times 10^9$ & $\approx$ \$ 160.9 $\times 10^9$ & $\approx$ \$ 10.3 $\times 10^9$ & $\approx$ \$ 29.9 $\times 10^9$\\[1ex] 
        Average daily USD volume & \$ 61,061,745 & \$ 195,700,068 & \$ 12,560,091 & \$ 36,361,862 \\ [0.8ex]
        Average LT transaction &  & \\[-0.5ex] 
        or LP operation size & \$ 132,030 & \$ 1,396,765 & \$ 75,711 & \$ 1,832,921\\ [0.8ex]
        
        Average trading frequency & 3 minutes & 10 minutes & 9 minutes & 72 minutes \\ [0.8ex]
        \hline
        
        Average pool depth &  \multicolumn{2}{c||}{14,443,050 $\sqrt{\textrm{ETH}\cdot\textrm{USDC}}$}   & \multicolumn{2}{c}{1,342,009 $\sqrt{\textrm{BTC}\cdot\textrm{USDC}}$}  \\ [0.8ex]
        \hline
        \hline 
    \end{tabular}
    \caption {LT and LP activity details in the ETH/USDC $0.3\%$ and  ETH/USDC $0.3\%$ pools between 1 July 2021 and 30 September 2023. }
    \label{DL:table:datadescr}
    \hfill
    \end{table}}

    Table \ref{DL:table:datadescr} shows statistics for both pools. Overall, Table \ref{DL:table:datadescr} shows that there is more trading activity in the ETH/USDC than in the BTC/USDC pool. The total number of both LT and LP instructions in the ETH/USDC  pool is significantly higher than in the BTC/USDC pool. Here, an LT instruction is a buy or a sell liquidity taking order, while an LP instruction is to deposit or to withdraw liquidity into and from the pool, respectively. The frequency of arrival of LT and LP instructions is significantly higher in the ETH/USDC pool compared with the BTC/USDC pool. Specifically, on average in the ETH/USDC pool there is an LT instruction every 3 minutes, compared to one every 9 minutes in the BTC/USDC pool. Similarly, on average in the ETH/USDC pool there is an LP instruction every 10 minutes, while in the BTC/USDC pool there is one every 72 minutes.

    The total USD volume traded in the two pools is significantly different. in the ETH/USDC pool, the total USD volume traded by LTs is around \$ $50.2\times 10^9$, which is roughly five times greater than the total USD volume traded by LTs in the BTC/USDC pool, which is approximately \$ $10.3\times 10^9$. 

    \subsubsection{Binance}
    \cite{cartea2022decentralised} solves the problem of an LT who liquidates a large position in a CPMM with CL. The authors backtest their strategy on market data from Uniswap v3 and use the instantaneous rate from Binance as market signals to enhance the performance of their strategy. At present, Binance is the largest centralised exchange for cryptocurrencies and it is organised as a limit order book market. Here, alongside data from Uniswap v3, we use data from Binance to train the LSTM strategy introduced in Section \ref{DL:sec:deep}. Specifically, we use volume and rate data for the pair ETH/USDC and the pair BTC/USDC between 1 July 2021 and 30 September 2023; Table \ref{DL:table:binance} shows statistics for both pairs. 

    A comparison between Table \ref{DL:table:datadescr} and \ref{DL:table:binance} highlights that in Uniswap v3 there is significantly more activity in the ETH/USDC pair than in BTC/USDC, while in Binance there is more activity in BTC/USDC than in ETH/USDC. Specifically, the total USD volume traded by LTs in the BTC/USDC pair is approximately \$ 1,386.2$\times10^9$, which is roughly 7 times higher than the total USD volume traded in the ETH/USDC pair, equal to \$ 201.6$\times10^9$. 

    
    In Binance, the average trading sizes of ETH/USDC and BTC/USDC are \$ 1,079 and \$ 953, respectively. These values are significantly lower than the average trading sizes of the same pairs in Uniswap v3, which are equal to \$ 132,030 and \$ 75,711; see Table \ref{DL:table:datadescr}. Gas fees are the main reason behind the difference in the average trading sizes between the two trading venues. Indeed, in Uniswap v3 users pay a high flat fee, which prevents them from sending multiple small orders to the pool and to opt instead for sending individual large orders.

    {\begin{table}[H]
    \centering
    \begin{tabular}{c || c || c } 
        \hline
        & ETH/USDC   & BTC/USDC \\ \hline
        &&\\[-1.2ex]
        Total USD volume                & $\approx$ 201.6 $\times10^9$ & $\approx $ 1,386.2 $\times10^9$ \\ [0.8ex]
        Average daily USD volume        & 245,365,037             & 1,687,095,461  \\[0.8ex]
        Average daily number of trades  & 227,385            & 1,770,836    \\[0.8ex]
        Average USD trade size          & \$ 1,079                 & \$ 953  \\[0.8ex]
        Average trading frequency       & 0.38 seconds            & 0.05 seconds \\[0.8ex]\hline\hline
        
    \end{tabular}
    \caption {Binance activity in ETH/USDC and BTC/USDC between 1 July 2021 and 30 September 2023. }
    \label{DL:table:binance}
    \hfill
    \end{table}}

    \subsection{Features}\label{DL:sec:feat}
    
    We train the LSTM strategy from Section \ref{DL:sec:deep} with market data from Uniswap v3 and Binance between 1 July 2021 and 30 September 2023. At each time $t$ the input to the strategy $\mathfrak{I}_t$ uses the following features from the two trading venues:
    \begin{itemize}
        \item \textbf{Pool current status:} pool rate $Z_t$, tick $\mathcal{Z}(i)$, market depth $\kappa$, and gas price.
        \item \textbf{Past returns, volatility, and fees:} percentage change between the current price and prices 5, 10, and 30 minutes before, and the volatility of the time series of these returns over the past 180, 360, and 720 minutes. Total fees earned by LPs and the pool fee rate of the past 180, 360, and 720 minutes.
        \item \textbf{LP activity:}  total USD amount deposited and withdrawn by LPs over the past 10 and 30 minutes. Total number of LPs transactions and the net USD amount over the same time horizons.
        \item \textbf{Binance activity:} total USD volume of buy and sell orders and the number of liquidity taking orders in Binance over the past 30 minutes. Total USD amount of limit orders posted in Binance and current spread between Binance and Uniswap v3.
        \item \textbf{Signals:} trading flow imbalance in Uniswap v3 and Binance. Moving average convergence/divergence (MACD) signal and exponential moving average of the MACD signal with periods 9, 14, and 26; see \ref{DL:ax:macd}.
    \end{itemize}
    
    \subsection{Results}\label{DL:sec:results_real}

    Next, we use market data from Uniswap v3 and Binance to showcase the performance of the LSTM strategy of Section \ref{DL:sec:deep}. We train an LSTM network of five LSTM layers with 120 neurons each.  The dropout rate is 0.1 and we use the Adam optimiser for training. Note that in Section \ref{DL:sec:fake}, each LSTM layer has 24 neurons. Here, we increase the number of neurons per layer because we use more input features and because market data are more noisy than the simulated data, so the LSTM network needs more parameters to learn the optimal strategy.
    
    We split our dataset into non-overlapping daily time series and further split our dataset between train, validation, and test data. Then, to avoid look ahead bias, we use the first 30\% of daily time series to train the model, the next 30\% to validate the model, and the remaining 40\% to test the model. In Section \ref{DL:sec:fake}, we use 60\% of the data for training, 20\% for validation, and 20\% for testing. In this section, a higher percentage of the dataset is used for testing because the amount of market data available is low and we need to increase the number of datapoints in the test set to ensure that there is enough variability among them. Indeed, we want the empirical distribution of the test set to be as close as possible to the empirical distribution of the entire dataset to ensure that the performance of the LSTM strategy on the test set is unbiased.
    
    In our experiments, we consider three values for the initial wealth of the LP and five values for her trading frequency $\Delta t$, and use  Sharpe Ratio and the mean-variance as criteria to train the LSTM network. We execute the LSTM strategy for five choices of trading frequency and three initial wealth positions. Figure \ref{DL:fig:returns} shows the returns of the LSTM strategy, before and after gas fees, and Table \ref{DL:table:real} reports the results.

    \begin{figure}[H]\centering
	\includegraphics{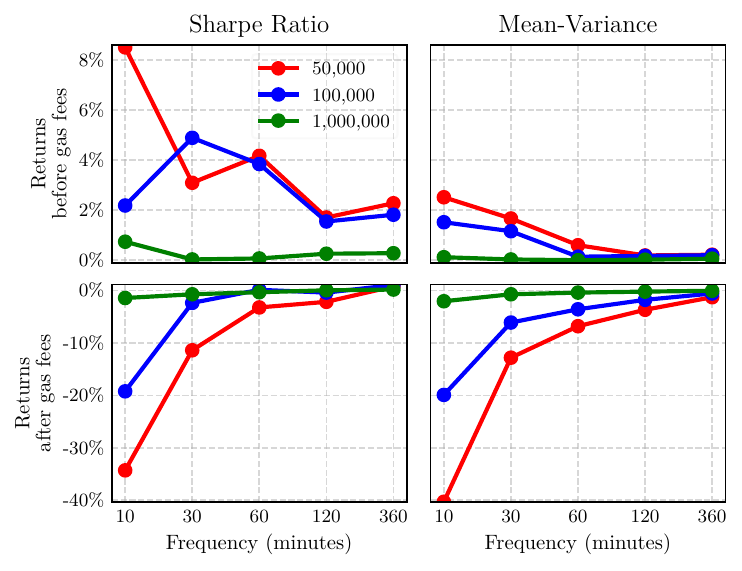}\\
	\caption{\textbf{Left}: average returns of the LSTM strategy before (top) and after (bottom) gas fees using Sharpe Ratio to train the LSTM network. \textbf{Right}: average returns of the LSTM strategy before (top) and after (bottom) gas fees using the mean-variance criterion to train the LSTM network.}
	\label{DL:fig:returns}
    \end{figure}

    Overall, Table \ref{DL:table:real} shows that gas fees have a considerable impact on the performance of the strategy. The average returns of the LSTM strategy before gas fees are negative only for one of our experiments. In contrast, the average returns after gas fees are positive only in 5 of the 30 experiments. Specifically, returns after gas fees are negative for all the experiments where we use the mean-variance criterion to train the LSTM network. 
    
    Figure \ref{DL:fig:returns} shows that average returns are a decreasing function of the LP's initial wealth. This follows from the LP's remuneration rule in \eqref{DL:eq:remuneration_LP} and the definition of the LP's market depth $\tilde{\kappa}$, which is a linear function of the LP's wealth; see \eqref{DL:eq:kappa_tilde}. The remuneration rule in \eqref{DL:eq:remuneration_LP} says that for each fee $p $ paid by LTs, the LP earns a percentage of $p $ equal to 
    \begin{equation}\label{DL:eq:linear}
        \frac{\tilde{\kappa}}{\kappa+\tilde{\kappa}}\,,
    \end{equation}
    where $\kappa$ is the aggregated market depth of the other LPs. Note that \eqref{DL:eq:linear} is an increasing function of $\tilde{\kappa}$ which is bounded by 1 and grows sub-linearly. Therefore, from \eqref{DL:eq:remuneration_LP}, \eqref{DL:eq:kappa_tilde}, \eqref{DL:eq:linear}, it follows that the LP's absolute returns per trade are bounded by the amount of fees paid by LTs and percentage returns are a decreasing function of the initial wealth.

    In contrast, Figure \ref{DL:fig:returns} shows that average returns after gas fees are an increasing function of the LP's initial wealth because gas fees are a flat fee and thus they impact small values of the initial wealth more. In

    Average returns before gas fees are highest and equal to 8.50\% when the LP's initial wealth is 50,000 USDC, the trading frequency is $\Delta t=10$ minutes and Sharpe Ratio is used as performance criterion. Here, 50,000 USDC and $\Delta t=10$ are, respectively, the lowest and the highest values for the initial wealth and the trading frequency. After gas fees, average returns drop to $-$34.31\%, which is the lowest value for returns after gas fees when the LP maximises Sharpe Ratio. This noticeable drop in the performance of the strategy is due to the relatively low initial wealth and the high frequency at which she readjusts her positions; when $\Delta t=10$ minutes, the LP adjusts her position too frequently and thus pays more gas fees. The relative impact of gas fees is maximised because her initial wealth is the at the minimum value considered in our experiments.

    Sharpe Ratio is highest and equal to 0.34 when the LP's initial wealth is 100,000 USDC, the trading frequency is $\Delta t=360$ minutes, and Sharpe Ratio is used as performance criterion. This shows that gas fees are a bottleneck for dynamic liquidity provision in Uniswap v3.

\begin{table}[H]\fontsize{7.5}{12.0}\selectfont
    \centering
    \begin{tabular}{|c|c|c|c|c|}
\hline

\textbf{\begin{tabular}[c]{@{}c@{}}Performance \\ Criterion\end{tabular}}  & \textbf{Frequency} & \textbf{Initial Wealth (USDC)}   & \textbf{Returns (after gas fees)}               & \textbf{Sharpe Ratio (after gas fees)}      \\ \hline
\multirow{15}{*}{Sharpe Ratio}  & 10min              & 50,000                           & \textbf{8.50\%} \qquad($-$34.31\%)        & 1.24 \qquad($-$5.01)                  \\ \cline{2-5} 
                                & 10min              & 100,000                          & 2.17\% \qquad($-$19.23\%)                 & 0.91 \qquad($-$8.07)                  \\ \cline{2-5} 
                                & 10min              & 1,000,000                        & 0.72\% \qquad($-$1.42\%)                  & 1.01 \qquad($-$1.99)                  \\ \cline{2-5} 
                                & 30min              & 50,000                           & 3.08\% \qquad($-$11.39\%)                 & 0.92 \qquad($-$3.39)                  \\ \cline{2-5} 
                                & 30min              & 100,000                          & 4.88\% \qquad($-$2.35\%)                  & \textbf{1.37} \qquad($-$0.66)         \\ \cline{2-5} 
                                & 30min              & 1,000,000                        & 0.01\% \qquad($-$0.71\%)                  & 0.01 \qquad($-$0.51)                  \\ \cline{2-5} 
                                & 60min              & 50,000                           & 4.16\% \qquad($-$3.22\%)                  & 1.06 \qquad($-$0.82)                  \\ \cline{2-5} 
                                & 60min              & 100,000                          & 3.83\% \qquad(0.14\%)                     & 1.19 \qquad(0.04)                     \\ \cline{2-5} 
                                & 60min              & 1,000,000                        & 0.05\% \qquad($-$0.32\%)                  & 0.03 \qquad($-$0.19)                  \\ \cline{2-5} 
                                & 120min             & 50,000                           & 1.69\% \qquad($-$2.15\%)                  & 0.47 \qquad($-$0.61)                  \\ \cline{2-5} 
                                & 120min             & 100,000                          & 1.53\% \qquad($-$0.39\%)                  & 0.51 \qquad($-$0.13)                  \\ \cline{2-5} 
                                & 120min             & 1,000,000                        & 0.24\% \qquad(0.05\%)                     & 0.17 \qquad(0.03)                     \\ \cline{2-5} 
                                & 360min             & 50,000                           & 2.26\% \qquad(0.79\%)                     & 0.60 \qquad(0.21)                     \\ \cline{2-5} 
                                & 360min             & 100,000                          & 1.80\% \qquad(\textbf{1.06\%})            & 0.58 \qquad(\textbf{0.34})            \\ \cline{2-5} 
                                & 360min             & 1,000,000                        & 0.26\% \qquad(0.19\%)                     & 0.18 \qquad(0.13)                     \\ \hline
\multirow{15}{*}{Mean-Variance} & 10min              & 50,000                           & 2.50\% \qquad($-$40.30\%)                 & 0.67 \qquad($-$10.82)                 \\ \cline{2-5} 
                                & 10min              & 100,000                          & 1.50\% \qquad($-$19.91\%)                 & 0.82 \qquad($-$10.88)                 \\ \cline{2-5} 
                                & 10min              & 1,000,000                        & 0.10\% \qquad($-$2.04\%)                  & 0.33 \qquad($-$6.72)                  \\ \cline{2-5} 
                                & 30min              & 50,000                           & 1.65\% \qquad($-$12.81\%)                 & 0.81 \qquad($-$6.31)                  \\ \cline{2-5} 
                                & 30min              & 100,000                          & 1.14\% \qquad($-$6.10\%)                  & 1.16 \qquad($-$6.24)                  \\ \cline{2-5} 
                                & 30min              & 1,000,000                        & 0.01\% \qquad($-$0.71\%)                  & 0.04 \qquad($-$2.14)                  \\ \cline{2-5} 
                                & 60min              & 50,000                           & 0.58\% \qquad($-$6.80\%)                  & 0.41 \qquad($-$4.80)                  \\ \cline{2-5} 
                                & 60min              & 100,000                          & 0.12\% \qquad($-$3.57\%)                  & 0.24 \qquad($-$7.10)                  \\ \cline{2-5} 
                                & 60min              & 1,000,000                        & $-$0.02\% \qquad($-$0.39\%)               & $-$0.06 \qquad($-$1.03)               \\ \cline{2-5} 
                                & 120min             & 50,000                           & 0.17\%  \qquad($-$3.67\%)                 & 0.21 \qquad($-$4.56)                  \\ \cline{2-5} 
                                & 120min             & 100,000                          & 0.15\%  \qquad($-$1.77\%)                 & 0.22 \qquad($-$2.66)                  \\ \cline{2-5} 
                                & 120min             & 1,000,000                        & $-$0.00\%  \qquad($-$0.20\%)              & $-$0.01 \qquad($-$0.45)               \\ \cline{2-5} 
                                & 360min             & 50,000                           & 0.20\%  \qquad($-$1.27\%)                 & 0.21 \qquad($-$1.34)                  \\ \cline{2-5} 
                                & 360min             & 100,000                          & 0.18\%  \qquad($-$0.56\%)                 & 0.20 \qquad($-$0.62)                  \\ \cline{2-5} 
                                & 360min             & 1,000,000                        & 0.04\%  \qquad($-$0.04\%)                 & 0.06 \qquad($-$0.07)                  \\ \hline
\end{tabular}
    \caption{Performance of the LSTM strategy on test set of market data.}
    \label{DL:table:real}
\end{table}

\appendix
\chapter{Appendix for Chapter \ref{ch:paper}}
\section{Proof of Theorem \ref{thm}}\label{sec:annex_proof}

Recall that for each fixed values of  $N$ and $j$
\begin{equation}
    \nu^{\star,j,N}\left(t,\tilde{y},Z,S\right)=-\frac{1}{\eta\,\zeta_{N}^{j}}A_{j,N}(t)\,\tilde y+\frac{1}{2\,\eta\,\zeta_{N}^{j}}B_{j,N}(t)(S-Z)\ ,
\end{equation}
where
\begin{equation}
    \zeta_{N}^{j}\coloneqq \frac{1}{\kappa}\left(Z^N_j\right)^{3/2}\,,
\end{equation}
and
\begin{equation}
    \begin{split}
        &A_{j,N}(t)\coloneqq A_{\zeta_j^N}(t)=\sqrt{\phi\,\eta\,\zeta_{N}^{j}}\tanh\left(\frac{\sqrt{\phi}}{\sqrt{\eta\,\zeta_{N}^{j}}}t+\arctanh\left(-\frac{\alpha}{\sqrt{\phi\,\eta\,\zeta_{N}^{j}}}\right)\right)\ ,\\
        &B_{j,N}(t)\coloneqq B_{\zeta_j^N}(t) = \int_{0}^{t}\beta \exp\left(\int_{s}^{t} \left(\beta-\frac{1}{\eta\,\zeta_{N}^{j}}A_{j,N}(u)\right)du\right)ds\ .
    \end{split}
\end{equation}

Moreover, recall that 
\begin{equation}
    \begin{split}
        A(t,Z) =& \sqrt{\frac{\phi\,\eta\,Z^{3/2}}{\kappa}}\tanh\left(\frac{\sqrt{\phi\,\kappa}}{\sqrt{\eta\,Z^{3/2}}}t+\arctanh\left(-\frac{\alpha\,\sqrt{\kappa}}{\sqrt{\phi\,\eta\,Z^{3/2}}}\right)\right)\ ,\\
        B(t,Z) =& \int_{0}^{t}\beta \exp\left(\int_{s}^{t} \left(\beta-\frac{\kappa}{\eta\,Z^{3/2}}A(u,Z)\right)du\right)ds\ .\\
    \end{split}
\end{equation}

To prove \eqref{eq:inequality}\,, take $(t,\tilde{y},S)$ and write
\begin{equation}
    \begin{split}
        \big|\nu^{\star,j,N}\big(t,\tilde{y}&,Z^N_{j+1},S\big)-\nu^{\star,j+1,N}\big(t,\tilde{y},Z^N_{j+1},S\big)\big| \\ =&\left|-\frac{1}{\eta\,\zeta_{N}^{j}}A_{j,N}(t)\,\tilde y
        + \frac{1}{\eta\,\zeta_{N}^{j+1}}A_{j+1,N}(t)\,\tilde y + (S-Z^N_{j+1})\left(\frac{1}{2\,\eta\,\zeta_{N}^{j}}B_{j,N}(t)-\frac{1}{2\,\eta\,\zeta_{N}^{j+1}}B_{j+1,N}(t)\right)\right|\\
        \leq&\,\frac{|\tilde{y}|}{\eta}\left|-\frac{1}{\zeta_{N}^{j}}A_{j,N}(t)
        + \frac{1}{\zeta_{N}^{j+1}}A_{j+1,N}(t) \right|+\frac{\left|S\right|+\overline{Z}}{\eta}\, \left| \frac{1}{2\,\zeta_{N}^{j}}B_{j,N}(t)-\frac{1}{2\,\zeta_{N}^{j+1}}B_{j+1,N}(t)\right|\\
        =&\,\frac{|\tilde{y}|}{\eta}\left|-\frac{\kappa}{\left(Z^N_j\right)^{3/2}}A\left(t,Z^N_j\right)
        + \frac{\kappa}{\left(Z^N_{j+1}\right)^{3/2}}A\left(t,Z^N_{j+1}\right) \right|\\
        &+\frac{\left|S\right|+\overline{Z}}{2\,\eta}\, \left| \frac{\kappa}{\left(Z^N_j\right)^{3/2}}B\left(t,Z^N_j\right)-\frac{\kappa}{\left(Z^N_{j+1}\right)^{3/2}}B\left(t,Z^N_{j+1}\right)\right|\,.
    \end{split}
\end{equation}

Observe that for a fixed $t\in[0,T]$ the functions
\begin{equation}
    \begin{split}
        Z\mapsto\frac{\kappa}{Z^{3/2}}A\left(t,Z\right)\, \ \ \text{and} \ \ 
        Z\mapsto\frac{\kappa}{Z^{3/2}}B\left(t,Z\right)
    \end{split}
\end{equation}
are uniformly continuous on $\left[\underline{Z},\overline{Z}\right]$ because they are both compositions of continuous functions defined over a closed interval. By definition of the partition in \eqref{eq:partitionedConvexity}, $\left|Z^N_{j} - Z^N_{j+1}\right| = 1/N $ so for each $\varepsilon>0$ there exists $N\in\N$ such that
\begin{equation}
    \max_{j=1,\dots,N}\big|\nu^{\star,j,N}\big(t,\tilde{y},Z^N_{j+1},S\big)-\nu^{\star,j+1,N}\big(t,\tilde{y},Z^N_{j+1},S\big)\big|\leq\varepsilon\,.
\end{equation}

To prove that $\{\tilde{\nu}^{\star,N}\}$ converges uniformly to $\tilde{\nu}^{\star}$ in $ [0,T]\times\R\times\left[\underline{Z},\overline{Z}\right]\times\R $\,, take $(t,\tilde{y},Z,S)\in[0,T]\times\R^{2}\times\left[\underline{Z},\overline{Z}\right]\times\R\,, $ and $N\in\N\,,$ and observe that there exists $\overline{j}\in\{1,\dots,N\}$ such that $Z\in\left[Z_{\overline{j}}^N,Z^N_{\overline{j}+1}\right)$ and thus
\begin{equation}
    \begin{split}
        \big|\tilde{\nu}^{*}_{N}\big(t,\tilde{y}&,Z,S\big)-\tilde{\nu}^{*}\big(t,\tilde{y},Z,S\big)\big| \\
        =&\big|\tilde{\nu}^{*}_{\overline{j},N}\big(t,\tilde{y},Z,S\big)-\tilde{\nu}^{*}\big(t,\tilde{y},Z,S\big)\big|\\
        =&\left|-\frac{1}{\eta\,\zeta_{N}^{\overline{j}}}A_{\overline{j},N}(t)\,\tilde y
        + \frac{\kappa}{\eta\,Z^{3/2}}A(t,Z)\,\tilde y + (S-Z)\left(\frac{1}{2\,\eta\,\zeta_{N}^{\overline{j}}}B_{\overline{j},N}(t)-\frac{\kappa}{2\,\eta\,Z^{3/2}}B(t,Z)\right)\right|\\
        \leq&\,\frac{|\tilde{y}|\,\kappa} {\eta\,\underline{Z}^{3/2}}\left|-A\left(t,Z^N_{\overline{j}}\right)
        + A\left(t,Z\right) \right|+ \frac{\left|S\right|+\overline{Z}} {2\,\eta\,\underline{Z}^{3/2}}\,\kappa\, \left| B\left(t,Z^N_{\overline{j}}\right)-B\left(t,Z\right)\right|\,.
    \end{split}
\end{equation}

The uniform convergence of $\{\tilde{\nu}^{*}_{N}\}$ to $\tilde{\nu}^*$ follows from the uniform continuity of $A(t,Z)$ and $B(t,Z)$ on $[0,T]\times\left[\underline{Z},\overline{Z}\right]\,.$
\qed

\section{Example for the liquidation strategy \label{Example one run liquidation}}
In this appendix, we describe the parameters and strategy performance for a specific run of the liquidation strategy. Assume the LT will start trading at noon on 16 March 2022, so she uses the data between noon 15 March 2022 and noon 16 March to estimate model parameters. 

For the 24 hours before noon 16 March 202, there are, on average,  one liquidity taking order every 13 seconds in the liquid pool and one every 360 seconds in the illiquid pool; i.e.,  the time steps in the regressions \eqref{eq: discrete price dyn} are $\Delta t = 13$ for ETH/USDC and $\Delta t = 360$ for ETH/DAI. Table \ref{table:params1} shows parameter estimates. 

{\footnotesize
    \begin{table}[H] 
        \begin{center}
            \begin{tabular}{c  c  c} 
                \hline 
                & ETH/USDC & ETH/DAI \\ [0.5ex] 
                \hline
                $\hat \sigma$ & $0.045\ \textrm{day}^{- 1/2}$ & $0.053\ \textrm{day}^{- 1/2}$ \\ [0.5ex] 
                $\hat\gamma$ & $0.034\ \textrm{day}^{- 1/2}$& $0.027\ \textrm{day}^{- 1/2}$ \\ [0.5ex]
                $\hat\beta$ & $657.9\ \textrm{day}^{-1}$ & $14.78\ \textrm{day}^{-1}$  \\  [0.5ex]
                \hline 
            \end{tabular}
        \end{center}
        \caption {Parameter estimates for dynamics of $Z$ and $S$ with data between noon 15 March 2022 and noon 16 March 2022.}
        \label{table:params1}
    \end{table}
}

The parameter $\eta$ of the execution costs in \eqref{eq:execratesApproxCPMM_nu} is also set to $13\, \textrm{seconds} \, = \, 17.3\times 10^{-5} \,\textrm{days}$ and $360\, \textrm{seconds} \,=\, 41 \times 10^{-4}\, \textrm{days}$ for the liquid and illiquid pool, respectively. The number of observed transactions in the in-sample data is approximately 238,039 ETH and 4,031 ETH in the liquid and illiquid pool, respectively. Thus the LTs' target is to liquidate 14,877 and 1,007 units of ETH within $2$ and $12$ hours in the ETH/USDC and ETH/DAI pools, respectively. Table \ref{table:params2} summarises all the parameters used to run our strategy.

{\footnotesize
    \begin{table}[H]
        \begin{center}
            \begin{tabular}{c  r  r} 
                \hline 
                & ETH/USDC & ETH/DAI \\ [0.5ex] 
                \hline
                $\kappa_0$ & 22,561,783 & 1,666,175 \\[0.5ex] 
                $\tilde y_0$ & 14,877 ETH & 1,007 ETH\\ [0.5ex]
                $\tilde S_0$ & 2,689.2$\  \textrm{USDC}$ & 2,686.09$\  \textrm{DAI}$ \\ [0.5ex]
                $\tilde Z_0$ & 2,690.77$\  \textrm{USDC}$ & 2,694.04$\  \textrm{DAI}$ \\ [0.5ex]
                $\eta$ & $17.3\times 10^{-5}$ $\ \textrm{days}$ & $41 \times 10^{-4}$ $\ \textrm{days}$ \\
                \hline 
            \end{tabular}
        \end{center}
        \caption {Values of model parameters.}
        \label{table:params2}
    \end{table}
}

Figure \ref{fig:backtest_USDC_DAI} shows the instantaneous and oracle rates and the inventories of the strategies during the execution, for both ETH/USDC and ETH/DAI. Figure \ref{fig:backtest_USDC_DAI} clearly showcases the difference between the strategies. In particular, the liquidation strategy is adaptive and trades on the difference between the two rates $S$ and $Z$ during the liquidation programme. Figure \ref{fig:backtest_USDC_DAI_speed} shows how the difference $S_t - Z_t$ drives the trading speed $\nu_t.$ The oracle rate is used as a predictive signal for future moves of the instantaneous rate.
\begin{figure}
    \centering
    \subfloat[\textbf{Top}: Out-of-sample instantaneous and oracle rates for the pair ETH/USDC. \textbf{Bottom}: Inventory process $\tilde y$ for the optimal, TWAP, and single order strategies.]  {{\includegraphics{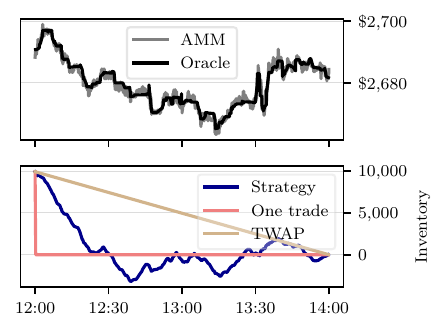} }}%
    \qquad
    \subfloat[\textbf{Top}: Out-of-sample instantaneous and oracle rates for the pair ETH/DAI. \textbf{Bottom}: Inventory process $\tilde y$ for the optimal, TWAP, and single order strategies.] {{\includegraphics{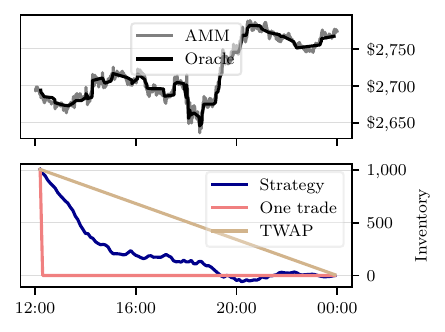} }}%
    \caption{Liquidation strategies starting at noon on 16 March 2022.}%
    \label{fig:backtest_USDC_DAI}%
\end{figure}

\begin{figure}
    \centering
    \subfloat[\textbf{Top}: Difference between the oracle and the instantaneous rate for ETH/USDC. \textbf{Bottom}: $-\nu_t.$] {{\includegraphics{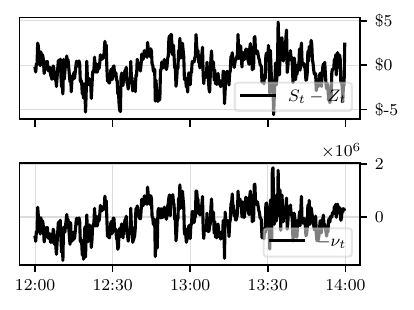} }}%
    \qquad
    \subfloat[\textbf{Top}: Difference between the oracle and the instantaneous rate for ETH/DAI. \textbf{Bottom}: $-\nu_t.$] {{\includegraphics{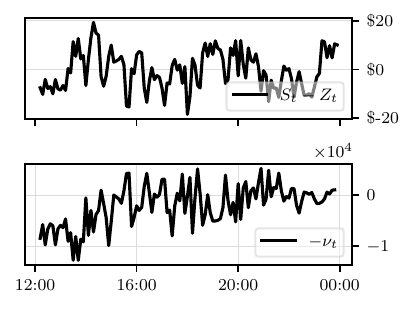} }}%
    \caption{Trading speed.}%
    \label{fig:backtest_USDC_DAI_speed}%
\end{figure}

\section{Model I with stochastic convexity cost \label{sec:APX:model}}


We consider the same problem as that of Section \ref{sec:model:optimal} where an LT  exchanges a large position in asset $Y$ into asset $X$ in a CPMM or executes a statistical arbitrage over a period of time $[0,T]$, where $T>0\,$. She uses the rate $Z$ in \eqref{eq:ZProcess} from the pool and the rate $S$ in \eqref{eq:SProcess} from  another more liquid exchange. The depth $\kappa$ of the pool is constant and the LT liquidates a position $\tilde{y}_0 \in \R$ in asset $Y$. Her wealth is valued in terms of asset $X$ and has dynamics \eqref{eq:wealth}, where the execution cost is stochastic and its dynamics are known. During the trading programme, the LT trades at the speed $(\nu_t)_{t \in [0,T]}$, so the inventory $(\tilde y_t)_{t \in [0,T]}$ evolves as in \eqref{eq:ytildeProcess_modelI}, where we do not restrict the speed to be positive, and for simplicity, trading fees are set to zero. 

The LT maximises her expected terminal wealth in units of $X$ while penalising inventory. The set of admissible strategies is defined in \eqref{def:admissibleset_t_modelI} and the LT's  performance criterion is a function $u^{\nu}\colon[0,T] \times \R \times \R \times \R_{++} \times\R_{++} \rightarrow \R\ $ defined in  \eqref{eq:perfcriteria_modelI}.  The value function $u:[0,T] \times \R \times \R \times \R_{++} \times\R_{++} \rightarrow \R$ of the LT is given by
\begin{align}
    \label{eq:ANX:valuefunc}
    u(t,\tilde{x},\tilde{y},Z,S)=\underset{\nu\in\mathcal{A}}{\sup}\{u^{\nu}(t,\tilde{x},\tilde{y},Z,S)\}\ .
\end{align}

The value function solves the Hamilton--Jacobi--Bellman (HJB) equation\begin{equation}\label{eq:ANX:hjbu}
    \begin{split}
        0=\,&\partial_{t}w-\phi\,\tilde{y}^{2}+\beta\,\left(S-Z\right)\,\partial_{Z}w+\frac{1}{2}\,\gamma^{2}\,Z^{2}\,\partial_{ZZ}w+\frac{1}{2}\,\sigma^{2}\,S^{2}\,\partial_{SS}w\\
        &+\sup_{\nu\in\R}\Bigg(\left(\nu\, Z-\frac{\eta}{\kappa}\,Z^{3/2}\,\nu^{2}\right)\partial_{\tilde{x}}w-\nu\,\partial_{\tilde{y}}w\Bigg)\ ,
    \end{split}
\end{equation}
with terminal condition \begin{align} \label{eq:ANX:termcondu}
    w(T,\tilde{x},\tilde{y},Z,S)=\tilde{x}+\tilde{y}\,Z-\alpha\,\tilde{y}^{2}\ .
\end{align}

The form of the terminal condition \eqref{eq:ANX:termcondu} suggests the ansatz
\begin{align}
    \label{eq:ANX:ansatz1}
    w(T,\tilde{x},\tilde{y},Z,S)=\tilde{x}+\tilde{y}\,Z+\theta(t,\tilde{y},Z,S)\ .
\end{align}
Now replace \eqref{eq:ANX:ansatz1} into \eqref{eq:ANX:hjbu} to obtain the PDE
\begin{equation}\label{eq:ANX:hjbtheta1}
    \begin{split}
        0\,=\,&\partial_{t}\theta-\phi\,\tilde{y}^{2}+\beta\,\left(S-Z\right)\,\left(\tilde{y}+\partial_{Z}\theta\right)+\frac{1}{2}\,\gamma^{2}\,Z^{2}\,\partial_{ZZ}\theta+\frac{1}{2}\,\sigma^{2}\,S^{2}\,\partial_{SS}\theta\\
        &+\sup_{\nu\in\R}\Bigg(-\frac{\eta}{\kappa}\,Z^{3/2}\,\nu^{2}-\nu\,\partial_{\tilde{y}}\theta\Bigg)\ ,
    \end{split}
\end{equation}
defined over $[0,T)\times \R \times \R^\star \times \R$, with terminal condition
\begin{equation}\label{eq:ANX:terminal1}
    \theta(T,\tilde{y},Z,S)= -\alpha\,\tilde{y}^2\ .
\end{equation}

The first two terms on the right-hand side of \eqref{eq:ANX:ansatz1} are the mark-to-market value of the LT's holdings and the last term is the additional value that the LT obtains by following the optimal strategy.  Next, solve the first order condition in \eqref{eq:ANX:hjbtheta1} to obtain the optimal trading speed in feedback form
\begin{equation}\label{eq:ANX:optimalspeed}
    \nu^{\star}=-\frac{\kappa}{2\,\eta}\,Z^{-3/2}\,\partial_{\tilde{y}}\theta\ ,
\end{equation}
and substitute \eqref{eq:ANX:optimalspeed} into \eqref{eq:ANX:hjbtheta1} to write
\begin{equation}\label{eq:ANX:hjbtheta2}
    \partial_{t}\theta=-\phi\,\tilde{y}^{2}+\beta\,\left(S-Z\right)\,\left(\tilde{y}+\partial_{Z}\theta\right)+\frac{1}{2}\,\gamma^{2}\,Z^{2}\,\partial_{ZZ}\theta+\frac{1}{2}\,\sigma^{2}\,S^{2}\,\partial_{SS}\theta+\frac{\kappa}{4\,\eta}\,Z^{-3/2}\,\partial_{\tilde{y}}\theta^{2}\ .
\end{equation}
Finally, we propose the ansatz
\begin{equation}
    \label{eq:ANX:ansatz2}
    \theta(t,\tilde{y},Z,S)=\theta_{2}(t,Z,S)\,\tilde{y}^{2}+\theta_{1}(t,Z,S)\,\tilde{y}+\theta_{0}(t,Z,S)\ .
\end{equation}
Now substitute \eqref{eq:ANX:ansatz2} into \eqref{eq:ANX:hjbtheta2} to obtain the system of PDEs
\begin{equation}\label{eq:ANX:system}
    \left\{
    \begin{aligned}
        &-\partial_{t}\theta_{2}= -\phi+\beta\,\left(S-Z\right)\,\partial_{Z}\theta_{2}+\frac{1}{2}\,\gamma^{2}\,Z^{2}\,\partial_{ZZ}\theta_{2}+\frac{1}{2}\,\sigma^{2}\,S^{2}\,\partial_{SS}\theta_{2}+\frac{\kappa}{\eta}\,Z^{-3/2}\,\theta_{2}^{2}\ ,\\
        &-\partial_{t}\theta_{1}=  \beta\,\left(S-Z\right)\left(1+\partial_{Z}\theta_{1}\right)+\frac{1}{2}\,\gamma^{2}\,Z^{2}\,\partial_{ZZ}\theta_{1}+\frac{1}{2}\,\sigma^{2}\,S^{2}\,\partial_{SS}\theta_{1}+\frac{\kappa}{\eta}\,Z^{-3/2}\,\theta_{2}\,\theta_{1}\ ,\\
        &-\partial_{t}\theta_{0}=  \beta\,\left(S-Z\right)\,\partial_{Z}\theta_{0}+\frac{1}{2}\,\gamma^{2}\,Z^{2}\,\partial_{ZZ}\theta_{0}+\frac{1}{2}\,\sigma^{2}\,S^{2}\,\partial_{SS}\theta_{0}+\frac{\kappa}{4\,\eta}\,Z^{-3/2}\,\theta_{1}^{2}\ ,
    \end{aligned}
    \right.
\end{equation}
defined over $[0,T)\times\R^\star \times \R $\ , with terminal condition 
\begin{equation}\label{eq:ANX:system_TC}
    \theta_{2}(T,Z,S)=-\alpha\ ,\qquad\theta_{1}(T,Z,S)=0\ ,\qquad\theta_{0}(T,Z,S)=0\ .
\end{equation}

The optimal strategy in feedback form \eqref{eq:ANX:optimalspeed} is given by
\begin{equation}\label{eq:ANX:optimalspeed2}
    \nu^{\star}=-\frac{\kappa}{2\,\eta}\,Z^{-3/2}\,\left(2\,\theta_2\,\tilde y+\theta_1\right)\ .
\end{equation}

The next result provides a-priori bounds for the solution to the system of PDEs \eqref{eq:ANX:system}.
\begin{thm} \label{thm:ANX:existence}
    Assume there exist $\theta_0\in C^{1,2,2}([0,T]\times\R_{++}\times\R_{++})$, $\theta_1\in C^{1,2,2}([0,T]\times\R_{++}\times\R_{++})$, and $\theta_2\in C^{1,2,2}([0,T]\times\R_{++}\times\R_{++})$ solution to the system of PDEs \eqref{eq:ANX:system} with terminal condition \eqref{eq:ANX:system_TC}. Define $\theta$ in \eqref{eq:ANX:ansatz2} and define $w$ in \eqref{eq:ANX:ansatz1}. Assume that for all  $\left(t,\tilde{x},\tilde{y},Z,\kappa \right) \in [0,T] \times \R \times \R \times \R_{++} \times \R_{++}$ we have $u^{\nu}(t,\tilde{x},\tilde{y},Z,\kappa ) \leq w\left(t,\tilde{x},\tilde{y},Z,\kappa\right),$ where $u^\nu$ is defined in \eqref{eq:perfcriteria_modelI}, and that equality is obtained for the optimal control $\left(\nu^{\star}\right)_{t\in[0, T]}$ in feedback form in \eqref{eq:ANX:optimalspeed2}. Then $\theta_0$, $\theta_1$, and $\theta_2$ have the following bounds for all $\left(t,Z,S\right) \in [0,T]\times\R_{++}\times\R_{++}$:
    \begin{equation}
        \begin{split}
            &\begin{cases}
                \theta_{0}\left(t,Z,S\right) \geq 0\\
                \theta_{0}\left(t,Z,S\right)\leq A(t)\,S^{2}+\frac{1}{2}B(t)\,S\,Z+C(t) \,Z^{2}-Z+\E\left[Z_{T}\right]\,,
            \end{cases}\\
            &\begin{cases}
                \theta_{1}\left(t,Z,S\right) \geq -\left(\alpha+\phi\left(T-t\right)\right)-A(t)\,S^{2}-\frac{1}{2}B(t)\,S\,Z-C(t)\,Z^{2}\\
                \theta_{1}\left(t,Z,S\right)\leq A(t)\,S^{2}+\frac{1}{2}B(t)\,S\,Z+C(t)\,Z^{2}+\alpha+\phi\left(T-t\right)\,,
            \end{cases}\\
            &\begin{cases}
                \theta_{2}\left(t,Z,S\right) \geq -\left(\alpha+\phi\left(T-t\right)\right)\\
                \theta_{2}\left(t,Z,S\right)\leq A(t)\,S^{2}+\frac{1}{2}B(t)\,S\,Z+C(t)\,Z^{2}\,,
            \end{cases}
        \end{split}
    \end{equation}
    where $\E\left[Z_{T}\right]$ is in \eqref{eq:ANX:EZT} and $A,$ $B$, and $C$ are in \eqref{eq:ANX:proof_odesystem_merton_sol}.
\end{thm}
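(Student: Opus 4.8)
The plan is to sandwich the value function $u=w$ (the identity $w=u=\sup_{\nu}u^{\nu}$ being exactly the content of the standing assumption) between two explicit quadratics in the inventory $\tilde y$: one produced by a suboptimal strategy, the other by a relaxation that discards execution costs. Substituting the ansatz \eqref{eq:ANX:ansatz1}--\eqref{eq:ANX:ansatz2} turns each comparison into a polynomial inequality in $\tilde y$ whose coefficients, matched term by term, yield the bounds on $\theta_0,\theta_1,\theta_2$.

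For the lower bounds I would test $u$ against the admissible strategy $\nu\equiv0$. Using \eqref{eq:perfcriteria_modelI}, \eqref{eq:ytildeProcess_modelI}, the martingale property $\E_{t,Z,S}[S_s]=S$ under \eqref{eq:SProcess}, and the mean reversion \eqref{eq:ZProcess} to compute $\E_{t,Z,S}[Z_T]$ in closed form (this is \eqref{eq:ANX:EZT}), one gets
\[
u(t,\tilde x,\tilde y,Z,S)\;\ge\;\tilde x+\tilde y\,\E_{t,Z,S}[Z_T]-\bigl(\alpha+\phi(T-t)\bigr)\tilde y^{2}.
\]
After substituting the ansatz this reads: the quadratic $\bigl(\theta_{2}+\alpha+\phi(T-t)\bigr)\tilde y^{2}+\bigl(\theta_{1}+Z-\E[Z_T]\bigr)\tilde y+\theta_{0}\ge0$ for every $\tilde y\in\R$. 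Nonnegativity of a real quadratic forces its leading coefficient and its value at $\tilde y=0$ to be nonnegative, giving at once $\theta_{2}\ge-(\alpha+\phi(T-t))$ and $\theta_{0}\ge0$, while its nonpositive discriminant records the coupling $\bigl(\theta_{1}+Z-\E[Z_T]\bigr)^{2}\le4\bigl(\theta_{2}+\alpha+\phi(T-t)\bigr)\theta_{0}$ that I reuse for $\theta_{1}$.

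For the upper bounds I would compare $u$ with the relaxed problem obtained by deleting the convex, hence nonnegative, execution-cost term in \eqref{eq:wealth}: since $\tilde x_T$ can only increase when this friction is removed, $u^{\nu}\le\E[\tilde x_T^{\mathrm{nc}}+\tilde y_T Z_T-\alpha\tilde y_T^{2}-\phi\int_t^T\tilde y_s^{2}\diff s]$ for every $\nu$. Once the friction is absent the mark-to-market value $\tilde x+\tilde y Z$ evolves as $\tilde y\,\diff Z$, so the right-hand side equals $\tilde x+\tilde y Z+\E_{t,Z,S}[\int_t^T(\tilde y_s\,\beta(S_s-Z_s)-\phi\,\tilde y_s^{2})\,\diff s-\alpha\tilde y_T^{2}]$. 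The pointwise inequality $y\,\beta(S-Z)-\phi y^{2}\le\tfrac{\beta^{2}}{4\phi}(S-Z)^{2}$ together with $-\alpha\tilde y_T^{2}\le0$ then gives, without needing attainment, $u\le\tilde x+\tilde y Z+\Psi$ with $\Psi(t,Z,S)=\E_{t,Z,S}[\int_t^T\tfrac{\beta^{2}}{4\phi}(S_s-Z_s)^{2}\,\diff s]$. As $(S,Z)$ have finite second moments on $[0,T]$ under \eqref{eq:SProcess}--\eqref{eq:ZProcess}, $\Psi$ is finite and, by Feynman--Kac, $\Psi=A(t)S^{2}+\tfrac12 B(t)SZ+C(t)Z^{2}$, where $A,B,C$ solve the linear Merton-type system \eqref{eq:ANX:proof_odesystem_merton_sol}. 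Reading $u\le\tilde x+\tilde y Z+\Psi$ as a polynomial inequality in $\tilde y$ forces $\theta_{2}\le0$ and $\theta_{0}\le\Psi$; since $\Psi\ge0$ this yields the stated $\theta_{2}\le\Psi$, and feeding $\theta_{2}\le0$, $0\le\theta_{0}\le\Psi$ into the discriminant inequality, followed by the step $2\sqrt{(\alpha+\phi(T-t))\Psi}\le(\alpha+\phi(T-t))+\Psi$, delivers the two-sided bound on $\theta_{1}$; the $\theta_{0}$ upper bound follows by combining $\theta_{0}\le\Psi$ with the explicit correction $\E[Z_T]-Z$.

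The main obstacle is the upper-bound side. The nonlinear term $\tfrac{\kappa}{\eta}Z^{-3/2}\theta_{2}^{2}$ in \eqref{eq:ANX:system} enters with a positive sign, so a naive comparison or Feynman--Kac argument controls $\theta_{2}$ only from below; the upper estimate must instead be extracted from the cost-free relaxation. This requires proving the integrability that makes $\Psi$ and $\E[Z_T]$ finite and the system \eqref{eq:ANX:proof_odesystem_merton_sol} well posed on $[0,T]$, and then the careful bookkeeping that converts the mixed linear-and-quadratic inequalities in $\tilde y$ into the precise forms stated for $\theta_{0},\theta_{1},\theta_{2}$ (in particular reconciling the $\E[Z_T]-Z$ corrections with the quadratic $\Psi$). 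The lower bounds, by contrast, come essentially for free once the $\nu\equiv0$ comparison is in place.
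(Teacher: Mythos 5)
Your proposal is correct and follows essentially the same route as the paper: sandwich $\theta$ between the quadratic in $\tilde y$ produced by the zero-trading strategy (lower bound, with $\E[Z_T]$ computed as in \eqref{eq:ANX:EZT}) and the quadratic produced by deleting the execution cost and terminal penalty (upper bound), then read off the bounds on $\theta_0,\theta_1,\theta_2$ by comparing coefficients. Two local differences are worth recording. First, where the paper relaxes to a genuine Merton control problem over the inventory class \eqref{def:ANX:proof_admissibleset_t_model}, solves its HJB \eqref{def:ANX:proof_HJB_u} with a quadratic ansatz, and checks admissibility of the optimiser, you shortcut this with the pointwise bound $y\,\beta(S-Z)-\phi\,y^{2}\le \beta^{2}(S-Z)^{2}/(4\phi)$ followed by Feynman--Kac to identify $\Psi$ with $A(t)S^{2}+\tfrac12 B(t)SZ+C(t)Z^{2}$; this is the same computation in disguise, but it needs no attainment or verification step, and it yields the sharper intermediate fact $\theta_2\le 0$. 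Second, your discriminant-plus-AM--GM route to $\theta_1$ produces bounds centred at $\E[Z_T]-Z$, namely $\left|\theta_1-\left(\E[Z_T]-Z\right)\right|\le \alpha+\phi(T-t)+\Psi$, which is a valid a-priori bound but not literally the one stated, since $\E[Z_T]-Z$ can have either sign; to land exactly on the stated symmetric bounds, evaluate your upper inequality $\theta_2\tilde y^{2}+\theta_1\tilde y+\theta_0\le\Psi$ at $\tilde y=\pm1$ and combine with $\theta_2\ge-\left(\alpha+\phi(T-t)\right)$ and $\theta_0\ge0$, which is what the paper's $\tilde y=0,\pm1$ substitutions accomplish. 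For the same reason you should drop the closing claim that the $\theta_0$ bound "follows by combining $\theta_0\le\Psi$ with the correction $\E[Z_T]-Z$": because that correction can be negative, nothing is gained by appending it, and $\theta_0\le\Psi$ (your bound, and the one the paper's own proof establishes at $\tilde y=0$, see \eqref{eq:ANX:temp2}) is the clean statement.
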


\begin{proof}
    We provide a-priori bounds for the solutions $\theta_2,$ $\theta_1,$ $\theta_0$ of \eqref{eq:ANX:system}.

    \paragraph{\underline{Lower bound for $\theta$}} By definition of our control problem in \eqref{eq:ANX:valuefunc} we know that for any control $\underline \nu\in\mathcal{A}$, we have
    \begin{align*}
        u^{\underline \nu}(t,\tilde{x},\tilde{y},Z,S) \leq \sup_{\nu\in\mathcal{A}}u^{\nu}(t,\tilde{x},\tilde{y},Z,S)\ .
    \end{align*}
    Use the definition of $\theta$ in \eqref{eq:ANX:ansatz1} and the form of the performance criterion to write
    \begin{align*}
        \E\left[\tilde{x}_{T}^{\underline \nu}+\tilde{y}_{T}^{\underline\nu}\,Z_T-\alpha\left(\tilde{y}_{T}^{\underline\nu}\right)^{2}-\phi\int_{t}^{T}\left(\tilde{y}_{s}^{\underline\nu}\right)^{2}ds\right] \leq \tilde{x}+\tilde{y}\,Z+\theta(t,\tilde{y},Z,S)\ .
    \end{align*}
    
    Next, consider a sub-optimal strategy consisting in keeping a constant inventory with no trading, i.e., a strategy defined by $\underline\nu_{s}=0$ for each $s\geq t\ .$ Then the inventory process is $\left(\tilde{y}^{\underline \nu}_{s}\right)_{s\geq t}=\tilde{y}.$ Use the inequality above to write
    \begin{equation}
        \tilde{x}+\tilde{y}\,Z+\theta(t,\tilde{y},Z,S) \geq \E\left[\tilde{x}+\tilde{y}\,Z-\tilde{y}\left(Z-Z_{T}\right)-\alpha\,\tilde{y}^{2}-\phi\,\tilde{y}^{2}\left(T-t\right)\right],
    \end{equation}
    and conclude that
    \begin{equation}
        \label{eq:ANX:proof_lowerbound}
        \theta(t,\tilde{y},Z,S)\geq-\tilde{y}\,Z-\tilde{y}\E\left[Z_{T}\right]-\alpha\,\tilde{y}^{2}-\phi\,\tilde{y}^{2}\left(T-t\right)>-\infty\,,
    \end{equation}
    which provides a lower bound for $\theta$\ . The last inequality in \eqref{eq:ANX:proof_lowerbound} is a consequence of
    \begin{equation}
        \E
        \begin{bmatrix}
            S_T\\ Z_T 
        \end{bmatrix} = 
        \begin{pmatrix}
            S\\ Z
        \end{pmatrix}e^{A(T-t)}\,, \quad \text{where} \quad A = 
        \begin{pmatrix}
            0 & 0 \\ -\theta & \theta 
        \end{pmatrix}\,,
    \end{equation}
    which simplifies to
    \begin{equation} \label{eq:ANX:EZT}
        \E\left[Z_T\right] = Z\,e^{-\theta(T-t)}+\theta\,S\,\int_t^Te^{-\theta(T-s)}ds = Z\,e^{-\theta(T-t)}+S\,\left(1-e^{-\theta\,(T-t)}\right)\,.
    \end{equation}
    
    \paragraph{\underline{Upper bound for $\theta$}} Use integration by parts to write, for all $(t, V, \tilde y, Z, S),$
    \begin{equation}
        \label{eq:ANX:proof_upperbound_0}
        \begin{split}
            \tilde{x}+\tilde{y}\,Z+\theta(t,\tilde{y},Z,S) = & \underset{\nu\in\mathcal{A}_{t}}{\sup}\E\left[\tilde{x}_{T}+\tilde{y}_{T}Z-\alpha\,\tilde{y}_{T}^{2}-\phi\,\int_{t}^{T}\tilde{y}_{s}^{2}\,ds\right] \\
            = & \tilde{x}+\tilde{y}\,Z+\underset{\nu\in\mathcal{A}_t}{\sup}\E\left[\int_{t}^{T}\tilde{y}_{s}\,dZ_{s}-\frac{\eta}{\kappa}\,\int_{t}^{T}Z_{s}^{3/2}\,v_{s}^{2}\,ds-\alpha\,\tilde{y}_{T}^{2}-\phi\,\int_{t}^{T}\tilde{y}_{s}^{2}\,ds\right] \\
            \leq & \tilde{x}+\tilde{y}\,Z+\underset{\nu\in\mathcal{A}_t}{\sup}\E\left[\int_{t}^{T}\tilde{y}_{s}\,dZ_{s}-\phi\,\int_{t}^{T}\tilde{y}_{s}^{2}\,ds\right]\, ,
        \end{split}
    \end{equation}
    where the first equality is by definition of our control problem and $\theta,$ the second equality uses integration by parts, and the last inequality holds because the terms $\frac{\eta}{\kappa}\int_{t}^{T}Z_{s}^{3/2}\,v_{s}^{2}\,ds$ and $\alpha\,\tilde{y}_{T}^{2}$ are always positive.
    
    Let $t \,\in\, [0,T] . \ $ Define the set 
    \begin{equation}\label{def:ANX:proof_admissibleset_t_model}
        \mathcal A^{M}_t = \left\{ \left(\tilde y_s^{M}\right)_{s \in [t,T]},\ \R\textrm{--valued},\ \F\textrm{--adapted and such that } \int_t^T \left|\tilde y_u^{M}\right|^2 du < +\infty, \ \ \P\textrm{--a.s.}  \right\}.
    \end{equation}
    Clearly, one has
    \begin{align*}
        \underset{\nu\in\mathcal{A}_t}{\sup}\E\left[\int_{t}^{T}\tilde{y}_{s}\,dZ_{s}-\phi\int_{t}^{T}\tilde{y}_{s}^{2}ds\right] \leq \underset{\tilde y^M\in\mathcal A^{M}_t}{\sup}\E\left[\int_{t}^{T}\tilde{y}_{s}^M\,dZ_{s}-\phi\int_{t}^{T}\left(\tilde y_s^{M}\right)\,ds\right].
    \end{align*}

    The term on the right-hand side of the inequality is a classical Merton problem, where the rate $Z$ follows the dynamics in \eqref{eq:ZProcess}\,. We introduce the wealth process $\left(\tilde{x}_{t}^M\right)_{t\in[0,T]}$ with dynamics
    \begin{align} \label{ANX:proof_merton_problem_dynamics}
        d\tilde{x}_{t}^M&=\tilde{y}_{t}^M\,dZ_{t}\, ,
    \end{align}
    In the Merton problem, the LT controls her inventory $\tilde{y}^M$\ , and her strategy solves the optimization problem
    \begin{equation}\label{eq:ANX:proof_merton_problem}
        \begin{split}
            \overline u (t,\tilde{x},Z,S) =\sup_{\tilde{y}\in\mathcal{A}_t^M}\E\left[\tilde{x}^M_T-\phi\,\int_{t}^{T}\left(\tilde{y}_{s}^M\right)^{2}\,ds\right]    =\underset{\tilde{y}\in A^{M}_t}{\sup}\E\left[\int_{t}^{T}\tilde{y}_{s}^M\,dZ_{s}-\phi\,\int_{t}^{T}\,\left(\tilde{y}_{s}^M\right)^{2}\,ds\right]\,.
        \end{split}
    \end{equation}
    We introduce the  value function $\overline w$ associated with the optimisation problem \eqref{eq:ANX:proof_merton_problem} and which solves the  HJB
    \begin{align} \label{def:ANX:proof_HJB_u}
        0 =\ & \partial_{t}\overline w+\beta\,\left(S-Z\right)\,\partial_{Z}\overline w+\frac{1}{2}\,\gamma^{2}\,Z^{2}\,\partial_{ZZ}\overline w+\frac{1}{2}\,\sigma^{2}\,S^{2}\,\partial_{SS}\overline w \\
        & + \underset{\tilde{y}^M\in\R}{\sup}\Bigg(\beta\,\left(S-Z\right)\,\tilde{y}^M\,\partial_{\tilde{x}}\overline w+\frac{1}{2}\,\gamma^{2}\,Z^{2}\,\left(\tilde{y}^M\right)^{2}\,\partial_{\tilde{x}\tilde{x}}\overline w+\gamma^{2}\,Z^{2}\tilde{y}\,\partial_{\tilde{x}Z}\overline w-\phi\,\left(\tilde{y}^M\right)^{2}\Bigg)\,,
    \end{align}
    with terminal condition $\overline w\left(T,\tilde{x}^M,Z,S\right)=\tilde{x}^M.$ Use the ansatz $\overline w(t,\tilde{x}^M,Z,S)=\tilde{x}^M+\overline{\theta}(t,Z,S)$ to obtain the PDE 
    \begin{align} \label{ANX:proof_hjb_theta_merton}
        0=\partial_{t}\overline{\theta}+\beta\,\left(S-Z\right)\,\partial_{Z}\overline{\theta}+\frac{1}{2}\,\gamma^{2}\,Z^{2}\,\partial_{ZZ}\overline{\theta}+\frac{1}{2}\,\sigma^{2}\,S^{2}\,\partial_{SS}\overline{\theta}+\frac{\beta^{2}\left(S-Z\right)^{2}}{4\phi}\,,
    \end{align}
    with terminal condition $\overline{\theta}(T,Z,S)=0\ .$ The optimal control is
    \begin{align}
        \label{eq:ANX:proof_optimal_control_merton}
        \tilde{y}^{M\star}=\frac{\beta\,\left(S-Z\right)}{2\,\phi}\,.
    \end{align}
    Use the ansatz
    \begin{align}
        \label{eq:ANX:proof_ansatz2_merton}
        \overline{\theta}(t,Z,S)=&\,A(t)\,S^{2}+\frac{1}{2}\,B(t)\,S\,Z+C(t)\,Z^{2}+D(t)\,S+E(t)\,Z\,,
    \end{align}
    to obtain the following system of ODEs:
    \begin{equation}
        \label{eq:ANX:proof_odesystem_merton}
        \left\{
        \begin{aligned}
            -A'(t)&= \sigma^{2}\,A(t)+\frac{\beta^{2}}{4\,\phi}+\frac{1}{2}\,\beta\, B(t)\,,\\
            -B'(t)&= -\beta\, B(t)-\frac{\beta^{2}}{\phi}+4\,\beta \,C(t)\,,\\
            -C'(t)&=  \left(\gamma^{2}-2\,\beta\right)\,C(t)+\frac{\beta^{2}}{4\,\phi}\,,\\
            -D'(t)&= \beta\, E(t)\,,\\
            -E'(t) &= -\beta\, E(t)\,,
        \end{aligned}
        \right.
    \end{equation}
    with terminal conditions $A(T) = B(T) = C(T) = D(T) = E(T) = 0.$ The solution is
    \begin{equation}
        \label{eq:ANX:proof_odesystem_merton_sol}
        \left\{
        \begin{aligned}
            A(t)&= \int_{t}^{T}e^{\sigma^{2}\,\left(u-t\right)}\,(\frac{1}{2}\,\beta B(u)+\frac{\beta^{2}}{4\,\phi})\,du\,,\\
            B(t)&= \int_{t}^{T}e^{-\beta\,\left(u-t\right)}(4\,\beta\, C(u)-\frac{\beta^{2}}{\phi})\,du\,,\\
            C(t)&= \frac{-\beta^{2}}{4\,\phi\,\left(\gamma^{2}-2\,\beta\right)}\left(1-e^{\left(\gamma^{2}-2\,\beta\right)\,\left(T-t\right)}\right)\,,\\
            D(t)&=0\,,\\
            E(t)&=0\,.\\
        \end{aligned}
        \right.
    \end{equation}
    
    Note that we obtain a classical solution to the Merton problem so standard results apply. In particular, one only needs to verify that the optimal control is admissible, which is clear from the optimal control \eqref{eq:ANX:proof_optimal_control_merton} and the admissible set \eqref{def:ANX:proof_admissibleset_t_model}. Finally, write the inequalities in \eqref{eq:ANX:proof_lowerbound} and \eqref{eq:ANX:proof_upperbound_0} as
    \begin{equation}
        \label{eq:ANX:proof_bounds}
        A(t)\,S^{2}+\frac{1}{2}\,B(t)\,S\,Z+C(t)\,Z^{2}\geq\theta(t,\tilde{y},Z,S)\geq-\tilde{y}\,\left(-Z+\E\left[Z_{T}\right]\right)-\left(\alpha+\phi\,\left(T-t\right)\right)\,\tilde{y}^{2}\,.
    \end{equation}

    \paragraph{\underline{Lower and upper bounds for $\theta_0,$ $\theta_1,$ and $\theta_2$}} From the à-priori bounds on $\theta$ given by the original control problem, we deduce à-priori bounds on the solutions $\theta_0,$ $\theta_1,$ and $\theta_2$ of the system \eqref{eq:ANX:system}. Recall that
    \begin{equation}\label{eq:ANX:temp1}
        \theta(t,\tilde{y},Z,S)=\theta_{2}(t,Z,S)\,\tilde{y}^{2}+\theta_{1}(t,Z,S)\,\tilde{y}+\theta_{0}(t,Z,S)\, .
    \end{equation} 
    
    First, notice that the lower bound is in the form of a quadratic polynomial in $\tilde{y}$, so $\theta_{2}\left(t,Z,S\right)\geq-\left(\alpha+\phi\left(T-t\right)\right).$ Next, for the upper bound, see that it holds for any value of the inventory $\tilde{y}$. In particular, observe that for $\tilde{y}=0$ we find that
    \begin{equation}\label{eq:ANX:temp2}
        A(t)\,S^{2}+\frac{1}{2}B(t)\,S Z+C(t)\,Z^2\geq\theta_{0}\left(t,Z,S\right)\geq0 \,.
    \end{equation}
    Moreover, if we let $\tilde{y}=1$ and $\tilde{y}=-1$ in \eqref{eq:ANX:proof_bounds} one obtains
    \begin{equation}
        2A(t)\,S^{2}+B(t)\,S\,Z+2C(t)\,Z^{2}\geq2\theta_{0}\left(t,Z,S\right)+2\theta_{2}(t,Z,S)\geq -2\left(\alpha+\phi\,\left(T-t\right)\right)\,.
    \end{equation}
    Finally, use \eqref{eq:ANX:temp2} to write
    \begin{equation}
        \label{eq:ANX:boundstheta2}
        A(t)\,S^{2}+\frac12 B(t)\,S\,Z+C(t)\,Z^2 \geq\theta_{2}\left(t,Z,S\right)\geq-\left(\alpha+\phi\left(T-t\right)\right).
    \end{equation}
    Next, substitute $\tilde{y}=1$ into \eqref{eq:ANX:proof_bounds} to write
    \begin{equation}
        \begin{split}
            \theta_{1}(t,Z,S)\geq&-Z+\E\left[Z_{T}\right]-\left(\alpha+\phi\left(T-t\right)\right)-\theta_{2}\left(t,Z,S\right)-\theta_{0}(t,Z,S)\\
            \geq &-Z+\E\left[Z_{T}\right]-\left(\alpha+\phi\left(T-t\right)\right)-\left(A(t)\,S^{2}+\frac{1}{2}B(t)\,S\,Z+C(t)\,Z^{2}\right)\ ,
        \end{split}
    \end{equation}
    and
    \begin{equation}
        \begin{split}
            \theta_{1}(t,Z,S)\leq&A(t)\,S^{2}+\frac{1}{2}B(t)\,S\,Z+C(t)\,Z^{2}-\theta_{2}\left(t,Z,S\right)-\theta_{0}(t,Z,S)\\
            \leq&A(t)\,S^{2}+\frac{1}{2}B(t)\,S\,Z+C(t)\,Z^{2}+\alpha+\phi\left(T-t\right)\ .
        \end{split}
    \end{equation}
    
    Thus, provided $\theta$ exists, then $\theta_{0}$, $\theta_{1}$, $\theta_{2}$ have à-priori upper and lower bounds. The bounds are all at most linear in time, and quadratic in $\{S, \ Z\}.$
    
\end{proof}

\chapter{Appendix to Chapter \ref{ch:pl} }

\section{Proof of Proposition \ref{prop:1} \label{sec:proofs:hjb1}}

To solve the problem \eqref{LP:eq:valuefunc}, we introduce an equivalent control problem. First, define the process $\left(\tilde \pi_t\right)_{t\in[0,T]} = \left( \pi_t - \eta_t\right)_{t\in[0,T]}$ with dynamics $$d\tilde{\pi}_{t}=\Gamma\,\left(\overline{\pi}-\tilde{\pi}_{t}\right)\diff t+\psi\,\sqrt{\tilde{\pi}_{t}}\,\diff B_{t}\,,$$ where $\tilde \pi_0 = \pi_0 - \eta_0$ and $\eta$ is in \eqref{eq:eta stochastic}. 

We introduce the performance criterion $\tilde u^{\delta}\colon[0,T] \times \R^4 \rightarrow \R$ given by
\begin{align}
\label{eq:perfcriteria_mu_tilde}
    \tilde u^{\delta}(t,v,z, \pi,\mu)=\mathbbm{E}_{t,v,z, \tilde \pi,\mu}\left[\log\left(V_{T}^{\delta}\right)\right]\, ,
\end{align}
and the value function $\tilde u:[0,T] \times \R^4 \rightarrow \R$ given by
\begin{equation}
\label{eq:valuefunc_mu_tilde}
    \tilde u(t,v,z,\tilde \pi,\mu)=\underset{\delta\in\mathcal{A}}{\sup} \, u^{\delta}(t,v,z,\pi,\mu)\,.
\end{equation}
Clearly, the problems \eqref{eq:valuefunc_mu_tilde} and \eqref{LP:eq:valuefunc} are equivalent, and the value functions satisfy $u(t,v,z,\pi,\mu) = \tilde u(t,v,z,\tilde \pi,\mu)$ for all $\left(t,v, z, \pi, \mu\right)\in[0,T] \times \R^4$ and for all $\tilde \pi = \pi - \eta \in \R$, where $\eta = \frac{\sigma^{2}}{8}-\frac{\mu}{4}\left(\mu-\frac{\sigma^{2}}{2}\right)+\frac{\varepsilon}{4}\,.$

The value function in \eqref{eq:valuefunc_mu_tilde} admits the dynamic programming principle, so it satisfies the HJB equation
\begin{align}
\label{eq:HJB_mu}
0=&\,\partial_{t}w+\frac{1}{2}\sigma^{2}\,z^{2}\,\partial_{zz}w+\mu\,Z\,\partial_{z}w+\Gamma\left(\overline{\pi}-\tilde{\pi}\right)\partial_{\tilde{\pi}}w+\frac{1}{2}\,\psi^{2}\,\tilde{\pi}\,\partial_{\tilde{\pi}\tilde{\pi}}w+\mathcal{L}^{\mu}w\\
&
+\underset{\delta\in\mathbbm{R}^{+}}{\sup}\Bigg(\frac{1}{\delta}\left(4\,\tilde{\pi}+4\,\eta-\frac{\sigma^{2}}{2}\right)v\,\partial_{v}w+\mu\,\rho\left(\delta,\mu\right)\,v\,\partial_{v}w+\frac{1}{2}\,\sigma^{2}\,\rho\left(\delta,\mu\right)^{2}\,v^{2}\,\partial_{vv}w
\\
&\qquad\quad\quad\quad-\frac{\gamma}{\delta^{2}}\,v\,\partial_{v}w+\sigma^{2}\,\rho\left(\delta,\mu\right)\,v\,z\,\partial_{vz}w\Bigg)\,,
\end{align}
with terminal condition 
\begin{align}
\label{eq:tchjb_mu}
w(T,v,z,\tilde{\pi},\mu)=\log\left(v\right), \quad \forall \left(v,z,\tilde{\pi},\mu\right) \in \R^4\,,
\end{align}
where $\mathcal L^\mu$ is the infinitesimal generator of $\mu$.

To study the HJB in \eqref{eq:HJB_mu}, use the ansatz
\begin{align}
\label{LP:eq:ansatz1}
w\left(t,v,z,\tilde{\pi},\mu\right)=\log\left(v\right)+\theta\left(t,z,\tilde{\pi},\mu\right)\,,
\end{align}
to obtain the HJB
\begin{align}
\label{eq:HJBtheta_mu}
0=&\,\partial_{t}\theta+\frac{1}{2}\sigma^{2}\,z^{2}\,\partial_{zz}\theta+\mu\,Z\,\partial_{z}\theta+\Gamma\left(\overline{\pi}-\tilde{\pi}\right)\partial_{\tilde{\pi}}w+\frac{1}{2}\,\psi^{2}\,\tilde{\pi}\,\partial_{\tilde{\pi}\tilde{\pi}}w+\frac{\mu}{2}-\frac{1}{8}\,\sigma^{2}
\\&+\mathcal{L}^{\mu}\theta+\underset{\delta\in\mathbbm{R}^{+}}{\sup}\Bigg(\frac{1}{\delta}\left(4\,\tilde{\pi}+4\,\eta-\frac{\sigma^{2}}{2}\right)+\frac{\mu^{2}}{\delta}-\frac{1}{2}\,\sigma^{2}\,\left(\frac{\mu^{2}}{\delta^{2}}+\frac{\mu}{\delta}\right)-\frac{\gamma}{\delta^{2}}\Bigg)\,,
\end{align}
with terminal condition 
\begin{align}
\label{eq:hjbthetatc_mu}
\theta\left(T, z, \tilde \pi, \mu\right) = 0\,, \quad \forall\, (z, \tilde \pi, \mu)\in\R^3\,.
\end{align}

The supremum in the HJB \eqref{eq:HJBtheta_mu} is attained at 
$$\delta^{\star}=\frac{2\,\gamma+\mu^{2}\,\sigma^{2}}{4\,\tilde{\pi}+4\,\eta-\frac{\sigma^{2}}{2}+\mu\left(\mu-\frac{\sigma^{2}}{2}\right)}=\frac{2\,\gamma+\mu^{2}\,\sigma^{2}}{4\,\tilde{\pi}+\varepsilon}\,.$$ 
Thus, \eqref{eq:HJBtheta_mu} becomes
\begin{align}
\label{eq:HJB2_mu}
0=&\,\partial_{t}\theta+\frac{1}{2}\sigma^{2}\,z^{2}\,\partial_{zz}\theta+\mu\,Z\,\partial_{z}\theta+\Gamma\left(\overline{\pi}-\tilde{\pi}\right)\partial_{\tilde{\pi}}\theta+\frac{1}{2}\,\psi^{2}\,\tilde{\pi}\,\partial_{\tilde{\pi}\tilde{\pi}}\theta+\frac{\mu}{2}-\frac{\sigma^{2}}{8}+\mathcal{L}^{\mu}\theta+\frac{1}{2}\frac{\left(4\,\tilde{\pi}+\varepsilon\right)^{2}}{2\,\gamma+\mu^{2}\,\sigma^{2}}\,.
\end{align}

Next, substitute the ansatz
\begin{align}
\label{eq:ansatz2_mu}
\theta\left(t,z,\tilde{\pi},\mu\right)=&\,A\left(t,\mu\right)z^{2}+B\left(t,\mu\right)\tilde{\pi}\,z+C\left(t,\mu\right)\tilde{\pi}^{2}\\&+D\left(t,\mu\right)z+E\left(t,\mu\right)\tilde{\pi}+F\left(t,\mu\right)\,,
\end{align}
in \eqref{eq:HJBtheta_mu}, collect the terms in $Z$ and $\tilde \pi$, and write the following system of PDEs:
\begin{equation*}
\left\{
\begin{aligned}
    \left(\partial_{t}+\mathcal{L}^{\mu}\right)A\left(t,\mu\right)= & -\sigma^{2}\,A\left(t,\mu\right)-2\,\mu\,A\left(t,\mu\right)\:,\\
\left(\partial_{t}+\mathcal{L}^{\mu}\right)B\left(t,\mu\right)= & -\mu\,B\left(t,\mu\right)+\Gamma B\left(t,\mu\right)\:,\\
\left(\partial_{t}+\mathcal{L}^{\mu}\right)C\left(t,\mu\right)= & 2\,C\left(t,\mu\right)\Gamma-\frac{8}{2\,\gamma+\mu^{2}\,\sigma^{2}}\:,\\
\left(\partial_{t}+\mathcal{L}^{\mu}\right)D\left(t,\mu\right)= & -\mu\,D\left(t,\mu\right)-\Gamma\,\overline{\pi}\,B\left(t,\mu\right)\:,\\
\left(\partial_{t}+\mathcal{L}^{\mu}\right)E\left(t,\mu\right)= & -2\,\Gamma\,\overline{\pi}\,C\left(t,\mu\right)-\psi^{2}\,C\left(t,\mu\right)+\Gamma\,E\left(t,\mu\right)-\frac{4\,\varepsilon}{2\,\gamma+\sigma^{2}\,\mu^{2}}\:,\\
\left(\partial_{t}+\mathcal{L}^{\mu}\right)F\left(t,\mu\right)= & -\Gamma\,\overline{\pi}\,E\left(t,\mu\right)+\psi^{2}\,\eta\,C\left(t,\mu\right)-\frac{1}{2}\frac{\varepsilon^{2}}{2\,\gamma+\sigma^{2}\,\mu^{2}}-\frac{\mu}{2}+\frac{\sigma^{2}}{8}\:,
\end{aligned}
\right.
\end{equation*}
with terminal conditions $A(T,\mu)=B(T,\mu)=C(T,\mu)=D(T,\mu)=E(T,\mu)=F(T,\mu)=0$ for all $\mu \in \R\,.$

First, note that the PDEs in $A$, $B$, and $D$ admit the unique solutions $A = B = D = 0\,.$ Next, we solve the PDE in $C\,.$ Use Itô's lemma to write
$$C\left(T,\mu_{T}\right)=C\left(t,\mu_{t}\right)+\int_{t}^{T}\left(\partial_{t}+\mathcal{L}^{\mu}\right)C\left(s,\mu_{s}\right)\diff s\,.$$
Next, replace $\left(\partial_{t}+\mathcal{L}^{\mu}\right)C\left(s,\mu_{s}\right)$ with $2\,C\left(s,\mu_{s}\right)\Gamma-\frac{8}{2\,\gamma+\mu_{s}^{2}\,\sigma^{2}}$ to obtain $$C\left(T,\mu_{T}\right)=C\left(t,\mu_{t}\right)+2\,\Gamma\int_{t}^{T}C\left(s,\mu_{s}\right)\diff s-\int_{t}^{T}\frac{8}{2\,\gamma+\mu_{s}^{2}\,\sigma^{2}}\,\diff s\,.$$
Take expectations to get the equation $$C\left(t,\mu_{t}\right)=	\E_{t,\mu}\left[-2\,\Gamma\int_{t}^{T}C\left(s,\mu_{s}\right)\diff s+\int_{t}^{T}\frac{8}{2\,\gamma+\mu_{s}^{2}\,\sigma^{2}}\,\diff s\right]\,.$$
Now consider the candidate solution function $$\hat{C}\left(t,\mu_{t}\right)=\E_{t,\mu}\left[\,\int_{t}^{T}\frac{8}{2\,\gamma+\mu_{s}^{2}\,\sigma^{2}}\exp\left(-2\,\Gamma\left(s-t\right)\right)\,\diff s\right]$$ and write \begin{align*}
    &\E_{t,\mu}\left[-2\,\Gamma\int_{t}^{T}\hat{C}\left(s,\mu_{s}\right)\diff s+\int_{t}^{T}\frac{8}{2\,\gamma+\mu_{s}^{2}\,\sigma^{2}}\,\diff s\right]\\=&\E_{t,\mu}\left[-2\,\Gamma\int_{t}^{T}\E_{s,\mu}\left[\,\int_{s}^{T}\frac{8}{2\,\gamma+\mu_{u}^{2}\,\sigma^{2}}\exp\left(-2\,\Gamma\left(u-s\right)\right)\,\diff u\right]\diff s+\int_{t}^{T}\frac{8}{2\,\gamma+\mu_{s}^{2}\,\sigma^{2}}\,\diff s\right]\\=&\E_{t,\mu}\left[\,\int_{t}^{T}\frac{8}{2\,\gamma+\mu_{s}^{2}\,\sigma^{2}}\exp\left(-2\,\Gamma\left(s-t\right)\right)\,\diff s\right]\,.
\end{align*}
Thus, $\hat C$ is a solution to the equation in $C$ and by uniqueness of solutions, we conclude that $C = \hat C\,.$

Follow the same steps as above to obtain the solution $$E\left(t,\mu\right)=\E_{t,\mu}\left[\,\int_{t}^{T}\left(\left(2\,\Gamma\,\overline{\pi}+\psi^{2}\right)C\left(s,\mu\right)+\frac{4\,\varepsilon}{2\,\gamma+\sigma^{2}\,\mu_{s}^{2}}\right)\exp\left(-\Gamma\left(s-t\right)\right)\,\diff s\right]$$ to the PDE in $E$, and the solution $$F\left(t,\mu\right)=\E_{t,\mu}\left[\,\int_{t}^{T}\left(\Gamma\,\overline{\pi}\,E\left(s,\mu_{s}\right)+\psi^{2}\,\eta_s\,C\left(s,\mu_{s}\right)-\frac{1}{2}\frac{\varepsilon^{2}}{2\,\gamma+\sigma^{2}\,\mu_{s}^{2}}-\frac{\mu_{s}}{2}+\frac{\sigma^{2}}{8}\right)\diff s\right]\,$$ to the PDE in $F\,,$ where $\eta_s = \frac{\sigma^{2}}{8}-\frac{\mu_s}{4}\left(\mu_s-\frac{\sigma^{2}}{2}\right)+\frac{\varepsilon}{4}\,,$ which proves the result.  \qed

\section{Proof of Theorem \ref{thm:verif} \label{sec:proofs:hjb2} }
Proposition \ref{prop:1} provides a classical solution to \eqref{eq:HJB_mu}. Therefore, standard results apply and  showing that \eqref{eq:optimalspeed} is an admissible control is enough to prove that \eqref{eq:hjbsol} is the value function \eqref{LP:eq:valuefunc}. Specifically, use the form of the optimal control $\delta^{\star}$ in \eqref{eq:optimalspeed} {to obtain
\begin{equation}
    0<\frac{1}{\delta_{s}^{\star}}=\frac{\tilde\pi_s+\varepsilon}{\sigma^{2}\mu_s^{2}+2\,\gamma}\leq\frac{\tilde\pi_s+\varepsilon}{2\,\gamma}\,,\qquad \forall s\in[t,T]\,,
\end{equation}
where $\tilde \pi_s = \pi_s - \eta_s,$ thus $\delta^{\star}$ is an admissible control.} \qed

\chapter{Appendix for Chapter \ref{ch:deep}}
\section{Moving average convergence/divergence}\label{DL:ax:macd}
In this section, we formalise the \textit{moving average convergence/divergence} (MACD) signal, which is one of the features that we mention in Section \ref{DL:sec:feat} and use to train the LSTM network in Section \ref{DL:sec:results_real}. The MACD signal is a trading indicator commonly used in technical analysis; see \cite{baz2015dissecting,cartea2023bandits}. 

First, we recall the definition of exponential moving average. Let $\left\{x_t\right\}_{t\in\N}$ be a real valued time series and let $s$ be a positive natural number. The \textit{exponential moving average} $\left\{ E^{(s)}\left(x_t\right) \right\}_{t\in\N}$ of $\left\{x_t\right\}_{t\in\N}$ with span $s$ is given by
\begin{equation}
    \left\{
    \begin{aligned}
        E^{(s)}\left(x_0\right) &\coloneqq x_0\,, \\
        E^{(s)}\left(x_t\right) &\coloneqq \frac{s-1}{s+1}\,E^{(s)}\left(x_{t-1}\right) + \frac{2}{s+1}\,x_t\,.
    \end{aligned}
    \right.
\end{equation}
Then, we define the MACD signal $\left\{ I\left(x_t\right) \right\}_{t\in\N}$ as
\begin{equation}
    I\left(x_t\right) \coloneqq E^{(12)}\left(x_t\right) - E^{(26)}\left(x_t\right)\,.
\end{equation}
Finally, for each $n\in\N$, the MACD $I^{(n)}\left(x_t\right)$ with period $n$ is
\begin{equation}
    I^{(n)}\left(x_t\right) \coloneqq E^{(n)}\left(I\left(x_t\right)\right)\,.
\end{equation}

\addcontentsline{toc}{chapter}{Bibliography}
\bibliography{references}        
\bibliographystyle{elsarticle-harv}  
	
\end{document}